\documentclass[a4paper,onecolumn,superscriptaddress,11pt, accepted=2019-07-16]{quantumarticle}

\pdfoutput=1

\usepackage[T1]{fontenc} 
\usepackage[english]{babel} 
\usepackage[utf8]{inputenc}

\usepackage[numbers,sort&compress]{natbib}

\usepackage[table,usenames,dvipsnames]{xcolor}
\usepackage[colorlinks=true,
			hypertexnames=false
			]{hyperref}
\PassOptionsToPackage{linktocpage}{hyperref} 
\usepackage{amsthm} 
\usepackage{paralist}
\usepackage{array}

\makeatletter 
\newcommand{\thickhline}{%
	\noalign {\ifnum 0=`}\fi \hrule height 1pt
	\futurelet \reserved@a \@xhline
}
\newcolumntype{"}{@{\hskip\tabcolsep\vrule width 1pt\hskip\tabcolsep}}
\makeatother

\newcommand{\myparagraph}[1]{\paragraph*{#1}\hspace{-.8em}:\ }

\usepackage[caption=false, justification=justified]{subfig}
\makeatletter
\@fpsep1.1\textheight
\makeatother

\newcommand{\mySubfloatCaption}[1]{%
	\noindent\refstepcounter{subfigure}{\footnotesize (\thesubfigure)\ \hspace{+.0em}#1}%
	}

\usepackage{pgf,tikz}
\usetikzlibrary{
  positioning,
  fadings,
  decorations.markings
  }


\newcommand{\capstr}[1]{\textbf{#1}}

\usepackage{amssymb} 
\usepackage{mathtools} 
\usepackage{amsmath} 
\usepackage{amsfonts}
\usepackage{mathrsfs} 
\usepackage{dsfont}
\usepackage[vcentermath]{youngtab}

\usepackage{silence}
\WarningFilter{caption}{Unsupported document class (or package) detected}







\newtheorem{theorem}{Theorem}
\newtheorem{corollary}[theorem]{Corollary}
\newtheorem{lemma}[theorem]{Lemma}
\newtheorem{proposition}[theorem]{Proposition}
\newtheorem{definition}[theorem]{Definition}
\newtheorem{observation}[theorem]{Observation}

\newcommand{\myleft}{\mathopen{}\mathclose\bgroup\left}
\newcommand{\myright}{\aftergroup\egroup\right}
\newcommand{\e}{\ensuremath\mathrm{e}}

\newcommand{\id}{\ensuremath\mathrm{id}}

\DeclareMathOperator{\Tr}{Tr}

\DeclareMathOperator{\rank}{rank}

\DeclareMathOperator*{\argmin}{arg\,min}

\newcommand{\fro}{\mathrm{F}}

\DeclareMathOperator{\cone}{\operatorname{cone}}
\DeclareMathOperator{\lin}{\operatorname{lin}}

\DeclareMathOperator{\Pos}{Pos}
\DeclareMathOperator{\CPT}{CPT}
\DeclareMathOperator{\Herm}{Herm}
\DeclareMathOperator{\HT}{HT}
\DeclareMathOperator{\TP}{TP}
\newcommand{\Sym}{\mathrm{Sym}}

\DeclareMathOperator{\U}{U}

\DeclareMathOperator{\LandauO}{\mathrm{O}}
\DeclareMathOperator{\tLandauO}{\tilde{\mathrm{O}}}
\DeclareMathOperator{\LandauOmega}{\Omega}

\makeatletter 
\newcommand\complexity@possiblymakesmaller[1]{#1} 
\newcommand\complexity@fontcommand{\mathsf} 
\newcommand{\ComplexityFont}[1]{%
{\ensuremath{\complexity@possiblymakesmaller{\complexity@fontcommand{#1}}}}
}
\makeatother

\newcommand{\sharpP}{\#\ComplexityFont{P}}

\newcommand{\CC}{\mathbb{C}}
\newcommand{\RR}{\mathbb{R}}

\newcommand{\ZZ}{\mathbb{Z}}

\newcommand{\FF}{\mathbb{F}}
\newcommand{\1}{\mathds{1}}
\newcommand{\EE}{\mathbb{E}}
\newcommand{\PP}{\mathbb{P}}

\newcommand{\mc}[1]{\mathcal{#1}}
\newcommand{\X}{\mc{X}}
\newcommand{\Y}{\mc{Y}}
\newcommand{\V}{\mc{V}}
\newcommand{\W}{\mc{W}}
\newcommand{\F}{\mc{F}}
\newcommand{\K}{\mc{K}}
\newcommand{\mcM}{\mc{M}}
\newcommand{\mcL}{\mc{L}}
\newcommand{\mcR}{\mc{R}}

\newcommand{\G}{\mc{G}}

\newcommand{\kw}[1]{\frac{1}{#1}}

\newcommand{\argdot}{{\,\cdot\,}}

\providecommand{\DC}{\operatorname{\mathscr{D}}} 
\newcommand{\ad}{^\dagger}
\renewcommand{\L}{\operatorname{\mathrm{L}}}
\newcommand{\LL}{\operatorname{\mathbb{L}}}
\newcommand{\M}{\operatorname{\mathbb{L}}}

\newcommand{\dunion}{\,\dot{\cup}\,}

\newcommand{\norm}[1]{\left\Vert #1 \right\Vert} 
\newcommand{\normn}[1]{\lVert #1 \rVert} 
\newcommand{\normb}[1]{\bigl\Vert #1 \bigr\Vert} 
\newcommand{\normB}[1]{\Bigl\Vert #1 \Bigr\Vert} 
\newcommand{\snorm}[1]{\norm{#1}_\infty} 

\newcommand{\snormb}[1]{\normb{#1}_\infty}
\newcommand{\snormB}[1]{\normB{#1}_\infty}
\newcommand{\tnorm}[1]{\norm{#1}_{1}} 

\newcommand{\tnormb}[1]{\normb{#1}_{1}}
\newcommand{\TrNorm}[1]{\norm{#1}_{1}} 

\newcommand{\pNorm}[1]{\norm{#1}_p} 

\newcommand{\fnorm}[1]{\norm{#1}_\fro} 

\newcommand{\fnormb}[1]{\normb{#1}_\fro}
\newcommand{\dnorm}[1]{\norm{#1}_\diamond} 


\newcommand{\lTwoNorm}[1]{\norm{#1}_{\ell_2}} 

\newcommand{\lqNorm}[1]{\norm{#1}_{\ell_q}} 

\newcommand{\TwoNorm}[1]{\norm{#1}_{2}} 
\newcommand{\TwoNormn}[1]{\normn{#1}_{2}}
\newcommand{\TwoNormb}[1]{\normb{#1}_{2}}

\newcommand{\ket}[1]{\left.\left|{#1}\right.\right\rangle}

\newcommand{\bra}[1]{\left.\left\langle{#1}\right.\right|}

\newcommand{\braket}[2]{\left\langle #1 \middle| #2 \right\rangle}

\newcommand{\ketbra}[2]{\ket{#1} \!\! \bra{#2}}

\newcommand{\sandwich}[3]
  {\left\langle  #1 \right| #2 \left| #3 \right\rangle}

\renewcommand{\Pr}{\operatorname{\PP}} 
\newcommand{\Ev}{\operatorname{\EE}} 
\newcommand{\A}{\mc{A}} 
\newcommand{\ev}{e} 
\providecommand{\DC}{\operatorname{\mathscr{D}}} 
\newcommand{\st}{\mathrm{subject\ to}}
\newcommand{\Haar}{\mathrm{Haar}}

\renewcommand{\r}{_{|r}}
\renewcommand{\c}{_{\neg r}}
\newcommand{\rec}{\mathrm{rec}}
\newcommand{\Toff}{\mathrm{Toff}}
\newcommand{\dep}{\mathrm{dep}}
\newcommand{\eps}{\mathrm{eps}}

\newcommand{\SetCoordinates}[1]{
	\path (#1.west) ++ (0,\d em) coordinate (#1oli);
	\path (#1.west) ++ (0,-\d em) coordinate (#1uli);
	\path (#1.east) ++ (0,\d em) coordinate (#1ore);
	\path (#1.east) ++ (0,-\d em) coordinate (#1ure);
}
	
\definecolor{niceblue}{rgb}{0.33,0.5,0.8}%
\colorlet{mygreen}{OliveGreen!90!blue}%
\tikzset{%
	sbox/.style = {draw, rounded corners = .5ex,%
		minimum height = 1.6em,%
		minimum width = 1.8em},
	blau/.style = {top color=niceblue!12,%
		bottom color=niceblue!90},
	Bbox/.style = {sbox, blau, inner sep = 1pt},
	leg/.style = {rounded corners = 1ex,thick,postaction={decorate},
					decoration = {markings,mark=at position 0.5 with {\arrow{>}} }
					},
	Leg/.style = {rounded corners = 1ex,thick,postaction={decorate},
		decoration = {markings,mark=at position 0.2 with {\arrow{>}} }
	},
	dir/.style = {gray,thick},
}%
\newcommand{\Jamiolkowski}{Jamio{\l}kowski}

 \makeatletter 
  \hypersetup{pdftitle = {Guaranteed recovery of quantum processes from few measurements},
	     pdfauthor = {Martin Kliesch, Richard Kueng, Jens Eisert, David Gross},
 	     pdfsubject = {Compressed sensing, quantum physics},
 	     pdfkeywords = {Compressive sensing, 
			    quantum process tomography, 
			    quantum channel tomography, 
			    tensor recovery, 
			    convex relaxation, 
			    low-rank matrix recovery, 
			    diamond norm,
			    randomized benchmarking, 
			    quantum gate, 
			    null space property, 
			    descent cone, 
			    bowling scheme, 
			    minimum conic singular value, 
			    recovery guarantee,
			    tensor network
			    }
 	    }
 \makeatother

\newcommand{\fu}{Dahlem Center for Complex Quantum Systems, Freie Universit\"{a}t Berlin, Germany}
\newcommand{\futwo}{Department of Mathematics and Computer Science, Freie Universit\"{a}t Berlin, Germany}
\newcommand{\ug}{Institute of Theoretical Physics and Astrophysics, University of Gda\'{n}sk, Poland}
\newcommand{\hhu}{Institute for Theoretical Physics,
	Heinrich Heine University D{\"u}sseldorf, 
	Germany
}
\newcommand{\uc}{Institute for Theoretical Physics, University of Cologne, Germany}
\newcommand{\syd}{Centre for Engineered Quantum Systems, School of Physics,
The University of Sydney, Australia}

\newcommand{\caltech}{Institute for Quantum Information and Matter, California Institute of Technology, USA}

\newcommand{\hzb}{Helmholtz-Zentrum Berlin f{\"u}r Materialien und Energie, Germany}

\begin{document}
\title{Guaranteed recovery of quantum processes from few measurements} 

\author[MK]{M.\ Kliesch}
\affiliation{\hhu}
\affiliation{\ug}
\email{science@mkliesch.eu}
\homepage{www.mkliesch.eu}
\orcid{0000-0002-8009-0549}

\author{R.\ Kueng}
\affiliation{\uc}
\affiliation{\fu}
\affiliation{\caltech}

\author{J.\ Eisert}
\affiliation{\fu}
\affiliation{\futwo}
\affiliation{\hzb}

\author{D.\ Gross}
\affiliation{\uc}
\affiliation{\syd}

\begin{abstract}
	Quantum process tomography is the task of reconstructing unknown quantum channels from measured data.
	In this work, we introduce compressed sensing-based methods that facilitate the reconstruction of quantum channels of low Kraus rank. 
	Our main contribution is the analysis of a natural measurement model for this task: 
	We assume that data is obtained by sending pure states into the channel and measuring expectation values on the output.
	Neither ancillary systems nor coherent operations across multiple channel uses are required.
	Most previous results on compressed process reconstruction reduce the problem to quantum state tomography on the channel's Choi matrix. 
	While this ansatz yields recovery guarantees from an essentially minimal number of measurements, physical implementations of such schemes would typically involve ancillary systems.
	A priori, it is unclear whether a measurement model tailored directly to quantum process tomography might require more measurements.
	We establish that this is not the case.
	
	Technically, we prove recovery guarantees for three different reconstruction algorithms. 
	The reconstructions are based on a trace, diamond, and $\ell_2$-norm minimization, respectively. 
	Our recovery guarantees are uniform in the sense that with one random choice of measurement settings all quantum channels can be recovered equally well. 
	Moreover, stability against arbitrary measurement noise and robustness against violations of the low-rank assumption is guaranteed. 
	Numerical studies demonstrate the feasibility of the approach.
	\\
\end{abstract}

\maketitle

{
\hypersetup{linkcolor=black} 
\tableofcontents
}

\newpage

\section{Introduction}
Recent years have seen significant advances in the precise control of quantum systems. 
Complex quantum states of systems with an increasing number of degrees of freedom can be prepared 
and manipulated with high accuracy. 
In this development, it is important to have tools at hand that allow for a complete characterization of state or process that are actually being realized in a given experimental setup.
The task of reconstructing quantum states or process \cite{QuantumProcessTomography,PhysRevA.77.032322} from experimental data is variously called quantum state or quantum process \emph{tomography}, \emph{estimation}, or \emph{recovery}.

The precise characterization of processes is highly relevant, e.g., in the quest for scalable quantum computers. 
The stringent requirements of the error correction threshold -- and the adverse scaling of the error correction overhead in terms of the noise -- make it necessary that implementations of quantum gates match their specification extremely closely.
We note that full quantum process tomography is distinct from certification protocols or coarser characterization schemes like randomized benchmarking (which reports only a single number: a certain \emph{error rate}).
While this makes the latter type of protocols much cheaper to implement, only process tomography allows one to understand in precisely which way a quantum gate deviates from its specification. 

The task of quantum process tomography -- important as it is -- comes at a high price: This is an 
unfavourable scaling of the necessary effort with the system size. 
To learn an unknown unstructured process acting on an $n$-dimensional
quantum systems, $m \sim n^4$ expectation values are required. 
However, common processes exhibit additional structure. Most importantly, quantum gates correspond to unitary processes, which are quantum channels with Kraus rank equal to one.
Compressed sensing techniques allow one to estimate channels with (approximately) low Kraus rank from 
significantly fewer measurements than naively required. It is imperative to make use of this structure.

\subsection{Motivation of our measurement model}
The most general quantum process tomography setting allows for ``coherent measurements'' of an unknown channel $T$. 
Here, $m$ identical copies of $T$ are available simultaneously. At the same time,
only part of the input state may be sent through the copies of $T$, while another part potentially
entangled with the former is left unchanged. 
This in turn allows for performing measurements on 
$T^{\otimes m}\otimes \id_{n^m}$ --- where $\id_{n^m}$ denotes the 
identity map on a $n^m$-dimensional additional ancillary Hilbert space that may be available -- rather than ``sequential'' measurements on single copies of $T$.
This includes the possibility of choosing global, entangled input states $\rho_{\textrm{global}}$, and likewise performing correlated measurements on the respective output states 
$(T^{\otimes m}\otimes \id_{n^m}) (\rho_{\textrm{global}})$. 

While potentially powerful, coherent measurements are arguably impractical. 
This is in particular true when attempting to diagnose errors in implementations of a quantum gates, as these are exactly the building blocks of circuits that would be used to prepare the entangled input state in the first place.

Avoiding this drawback, we adopt an experimentally feasible ``sequential'' measurement model. What is more, we resort to the situation
where the local input states $\rho_{\textrm{loc}}$ are transmitted via individual copies of the unknown channel $T$
in their entirety. 
This setting is often referred to as ``ancilla-free'' or ``direct'' process tomography.
Data acquisition is then concluded by performing measurements on the individual output states $T(\rho_{\textrm{loc}})$.

We employ the following natural measurement setup:  the unknown quantum channel $T$ is applied 
to pure input states 
$\ketbra{\psi_i}{\psi_i}$. Subsequently, the expectation value of an observable $A_i$ on the output is measured.
Those expectation values are estimated by means of a suitably frequent repetition of the prescription choosing the
same measurement setting.  
Repeating this procedure $m$ times for different input states $\ketbra{\psi_i}{\psi_i}$ and observables $A_i$ results in a \emph{measurement vector} of the form
\begin{equation}\label{eq:measurement_intro}
	y_i \approx \Tr \bigl[ A_i T (\ketbra{\psi_i}{\psi_i}) \bigr] ,\quad 1 \leq i \leq m. 
\end{equation}
For our theoretical recovery guarantees, the input states and observables are assumed to be sampled independently from certain ensembles -- see below for details.
We restrict attention to pure input states, as they are expected to yield the most information about $T$, and also because the preparation of pure input states is easily possible in most experimental setups.
We expect it to be straight-forward to generalize this work to take mixed input states into account.

\subsection{Our contribution}
We investigate several channel reconstruction protocols that exploit the measured channel to have an (approximately) low Kraus rank. 
Inspired by compressed sensing, these protocols require considerably fewer measurements of the form \eqref{eq:measurement_intro}
than traditional process tomography schemes and contain computationally tractable reconstruction algorithms.
We provide rigorous recovery guarantees featuring very desirable properties (uniform, stable, and robust). 
This analytical work is complemented by a comprehensive numerical analysis of the reconstruction protocols. 
In contrast to previous compressed sensing results that are applicable to process tomography, our reconstruction protocols specifically address the natural process tomography setup described above. 
Our focus is on a particularly simple reconstruction: 
\begin{compactenum}[i)]
\item \label{item:CPTfit}
A  least squares fit of the observed data subject to an additional positivity constraint. 
We call this approach \emph{CPT-fit}. 
A distinct advantage of the CPT-fit is that the algorithm does not depend on any parameters. 
In particular, no estimate of the true rank or the noise strength is required.
This stands in contrast to most compressed sensing-based based reconstruction schemes.
Also, the CPT-fit appears to be particularly robust against violations of the assumption of low Kraus rank. 
\end{compactenum}

\noindent In order to link and compare the CPT-fit to more established low-rank matrix reconstruction methods we also investigate  other reconstruction protocols. 
Their analytical analysis is also the basis for the recovery guarantees of the CPT-fit. 
\begin{compactenum}[i)]\setcounter{enumi}{1}
\item \label{item:TrNorm} 
The second reconstruction method resembles a typical low-rank matrix reconstruction protocol: minimize the trace norm subject to convex constraints that take into account acquired data of the form \eqref{eq:measurement_intro}.
\item \label{item:CTrNorm} 
The third recovery algorithm closely resembles the previous one, but contains trace preservation as an additional (linear) constraint.
\item \label{item:dnorm}
Otherwise similar to algorithm \ref{item:TrNorm}), this reconstruction protocol replaces the constrained trace norm minimization by a diamond norm minimization. The constraints remain unchanged.
The diamond norm is a well-motivated distance measure for quantum channels. 
A priori, this does not imply in any way that the diamond norm is also a good choice as a regularizing function for channel estimation.
However, Ref.~\cite{KliKueEis16} provides strong analytical and numerical arguments for why one should expect the diamond norm to outperform the trace norm as a regularizer in this setting. The numerical studies conducted in this paper lend further credence to this claim.

\item \label{item:Cdnorm}
Similar to algorithm \ref{item:dnorm}), but with an additional trace preservation constraint.

\end{compactenum}

We provide rigorous performance guarantees for the approaches \ref{item:CPTfit}), \ref{item:CTrNorm}) and \ref{item:Cdnorm}). 
The guarantees are valid for measurement setups where the  pure input states $\ket{\psi_i}$ and the eigenbases of the observables $A_i$ are drawn at random from certain ensembles. 
Technically, we require the state vectors and the eigenbases to approximately form \emph{4-designs} in a precise sense. 
Corresponding ensembles of unitaries can be generated efficiently using random quantum circuits \cite{BraHarHor12,BraHarHor16PRL,NakHirKoa17,HarMeh18} or fluctuating Hamiltonians \cite{OnoBueKli17}. 
Haar-random vectors have this property, but recent results on the fourth moments of the Clifford group \cite{ZhuKueGra16,HelWalWeh16} suggest that there are more structured and explicit ensembles with similar properties.

Even if a particular practical setup fails to show the (approximate) 4-design property, our results are still a relevant proof of concept.
Indeed, in contrast to previous compressed sensing approaches, our reconstruction guarantees show that the specific mathematical structure present in process tomography does not imply that a larger number of measurement settings is required.

What is more, we provide numerical simulations showing that the recovery algorithms work well for a number of measurement models not having the 4-design property. This includes the paradigmatic case of Pauli-type measurements.

We demonstrate our channel reconstruction procedures on generic quantum channels of varying Kraus rank, as well as the Toffoli gate. The latter is a three-qubit unitary quantum gate that is highly relevant in quantum processing.
Moreover, it has been experimentally implemented in various architectures \cite{CorPriMaa98,LanBarAlm09,MonKimHan09,FedSteBau12}.
Finally, we also comment that our reconstruction protocols can also be applied in the setting of bosonic and fermionic linear optical circuits. 

\subsection{Related work}
Recent years have seen considerable advance of compressed sensing
tools for the tomographic reconstruction of low-rank
quantum states \cite{FlaGroLiu12,ShaKosMoh11,PhysRevA.90.012110,KimLiu15, KueRauTer15}. 
To some extent, these findings have also been adapted to cover process tomography. 
Such compressed sensing approaches allow to reconstruct processes of low Kraus rank from much fewer
expectation values than a naive dimension count would suggest. 

Our trace norm minimization (algorithms \ref{item:TrNorm}) and \ref{item:CTrNorm})) builds on a by now well-established method for low-rank matrix reconstruction \cite{FlaGroLiu12}.
With the diamond norm minimization (algorithms \ref{item:dnorm}) and \ref{item:Cdnorm})) we continue a line of research that has been started in
Ref.\ \cite{KliKueEis16} and support it with rigorous performance guarantees.

The CPT-fit has been suggested in Ref.\ \cite{PhysRevA.90.012110}
and numerically compared to full tomography and compressed sensing in Ref.\ 
\cite{RodVeiBar14}. 
These works provide numerical studies that demonstrate the capability of such a reconstruction procedure. 
We expand on these ideas by conducting additional numerical experiments. But, more importantly, we also prove rigorous recovery guarantees for the CPT-fit. 
Rigorous recovery guarantees for a similar conic fitting reconstructions in 
(i) a low-rank matrix recovery setup and a certain class of measurements \cite{KabKueRau15} and 
(ii) a sparse vector recovery setup with certain binary measurements \cite{KueJun18}
have been proven recently. 

Ref.\ \cite{FlaGroLiu12} presents a process tomography protocol that is based on low-rank matrix reconstruction with random Pauli measurements \cite{GroLiuFla10,Gro11,Liu11}. 
These low rank recovery guarantees can be applied to the Choi-Jamio{\l}kowski representation of a quantum channel, since the rank of this matrix representation equals the Kraus rank of the original channel.
On first sight, such an approach requires the use of an ancilla in order to implement the Choi-Jamio{\l}kowski representation physically in a concrete application \cite{AltBranJef03}. 
However, Ref.\ \cite{FlaGroLiu12} also provides a more direct implementation of their protocol that does not require any ancillas.
Valid for multi-qubit processes this trick exploits the tensor-product structure of (multi-qubit) Pauli operators. It allows for effectively measuring the Pauli expectation value of a Choi matrix
by performing several natural channel measurements of the form \eqref{eq:measurement_intro} and evaluating particular linear combinations thereof in a classical post-processing step.
The demerit of this approach is that the
number of individual channel measurements required to evaluate a single Pauli expectation value scales with the dimension of the underlying Hilbert space.
The measurement model studied here does not require such a coarse-graining: every natural measurement \eqref{eq:measurement_intro} itself already corresponds to a valid measurement instance.
This considerably reduces the number of different measurement settings that are required in order to acquire sufficient data.

Randomized benchmarking has also been adapted to allow for process tomography \cite{KimSilRya14,KimLiu15,PhysRevLett.121.170502}. 
Ref.\ \cite{KimSilRya14} adopted randomized benchmarking techniques to obtain a process tomography protocol that is robust towards state preparation and measurement errors (SPAM). 
This is a distinct advantage over other protocols that do not share this additional feature. 
However, the trade of between an increase of the number of channel uses and the gained robustness towards SPAM errors remains unclear.   

Finally, Ref.\ \cite{ShaKosMoh11} also considered process tomography via compressed sensing techniques. However, this method is somewhat different from the other approaches presented here: Instead of assuming low Kraus rank, they consider processes that are element-wise sparse with respect to a known basis. Prior knowledge of this sparsifying basis is a necessary prerequisite for this approach. In turn, techniques from traditional compressed sensing \cite{CanRomTao06,Don06} are applied (rather than low-rank matrix reconstruction protocols) to reconstruct such processes from a number of measurements that is proportional to the sparsity and depends only logarithmically on the ambient system sizes. While requiring only very few measurement settings, the main disadvantage of this approach are stronger model assumptions (sparsity and knowledge of the sparsifying basis) and measurements that may be challenging to implement in practice.

\subsection{Experimental considerations}
In several physically important platforms,
process tomography has been experimentally realized \cite{YamNeeLuc10,AltBranJef03,QuantumDot}, 
both in the direct and hence ancilla-free \cite{QuantumDot,YamNeeLuc10} and 
ancilla-based \cite{AltBranJef03} reading of the task.
These works make use of different preparations and measurements.
It should be clear, however, that in several physical
architectures, random measurements of the type discussed here
can readily experimentally be implemented. 

Specifically, Haar
random unitary maps have been realized in a 16-dimensional Hilbert 
space associated with the $6\mathrm{S}_{1/2}$ 
ground state of $^{133}\mathrm{Cs}$ atoms \cite{Jessen}. This
has been achieved by
suitably exploiting a time-dependent Hamiltonian evolution giving rise to Haar
random unitaries, making use of
methods of quantum control. Such random unitaries have been put to use
in a tomographic protocol, in an approach of performing quantum tomography
based on Haar random unitaries that builds upon 
earlier theoretical ideas laid out in Ref.\ \cite{PhysRevA.81.032126}. The same idea 
of generating Haar random unitaries using suitable time-dependent Hamiltonians
should readily apply
to systems of trapped ions \cite{BlaRoo12}, where a suitable 
type of control can be achieved with present technology. Indeed, in a different context, 
methods of optimal control have readily been applied to optimize quantum gates for trapped ions~\cite{OptimalControlWunderlich}.
    
Random unitaries, albeit not
Haar distributed, that allow for quantum state tomography of quantum many-body systems
have been theoretically considered \cite{OhligerTomography}, making use of operations only that 
can be considered basically feasible in experiments with cold atoms in optical lattices \cite{BloDalNas12},
such as optical super-lattices and time-of-flight measurements.

In integrated linear optical architectures \cite{RevModPhys.79.135,Carolan711}, Haar-random circuits 
are readily conceivable \cite{OBrienRandom}. 
It is this type of setting in which the methods presented here are most applicable, 
even though they apply to any physical system in which unitary $4$-designs are feasible.
This includes settings in which random circuits \cite{BraHarHor16,NakHirKoa17}
or randomly time-fluctuating dynamics \cite{OnoBueKli17} can be used to generate approximate unitary designs.

\subsection{Notation and terminology}
In this section we introduce some basic preliminaries needed to understand our main results. 
For any integer $n \in \ZZ^+$ we use the notation $[n]\coloneqq \{1,2,\dots, n\}$. 
The space of linear operators on a vector space $\V$ is denoted by $\L(\V)$. 

If $\V$ is a Hilbert space, we often denote vectors by $\ket{\psi} \in \V$ and its adjoints (conjugate transposes) by $\bra \psi$. 
So the projector onto a normalized $\ket \psi$ is $\ketbra \psi \psi$. 
Moreover, the real vector space of self-adjoint operators is denoted by $\Herm(\V)$. 

\begin{figure}
	\centering%
	\leavevmode%
	\beginpgfgraphicnamed{fig1_J}%

	\definecolor{niceblue}{rgb}{0.33,0.5,0.8}%
	\tikzset{%
		sbox/.style = {draw, rounded corners = .5ex,%
			minimum height = 1.8\baselineskip,%
			minimum width = 2.2em},
		blau/.style = {top color=niceblue!12,%
			bottom color=niceblue!90},
		Bbox/.style = {sbox, blau},
		leg/.style = {rounded corners = .5ex,thick},
		dir/.style = {gray,thick}
	}
	\newcommand{\Mbox}[1]{%
		\node (M) [Bbox] at (0,0) {#1};%
		\def\d{.4}
		\def\len{.5}
		\def\lenx{.4}%
		\path (M.west) ++ (0,\d\baselineskip) coordinate (Moli);%
		\path (M.west) ++ (0,-\d\baselineskip) coordinate (Muli);%
		\path (M.east) ++ (0,\d\baselineskip) coordinate (More);%
		\path (M.east) ++ (0,-\d\baselineskip) coordinate (Mure);%
	}
	\begin{tikzpicture}[node distance = 1ex]
	\node (M1) {
		\begin{tikzpicture}
		\Mbox{$M$}
		\draw [leg] (More) -- ++(\len,0) node (Xo) [near end, above] {$\V^\ast$};
		\draw [leg] (Mure) -- ++(\len,0) node (Xu) [near end, below] {$\V$};
		\draw [leg] (Moli) -- ++(-\len,0) node (Yo) [near end, above] {$\W$};
		\draw [leg] (Muli) -- ++(-\len,0) node (Yu) [near end, below] {$\W^\ast$};
		\draw [dir, ->] (Xu.south east) ++(0,-1ex) coordinate (ure) -- (ure-|Yu.west);
		\end{tikzpicture}    
	};
	\node (mapstoJ) [right = of M1, inner sep = 0, yshift = .5ex] {\Large{$\mapsto$}};
	\path (mapstoJ.north) node[anchor = south, inner sep = 2pt] {$J$};
	\node (M2) [right = of mapstoJ]{
		\begin{tikzpicture}
		\Mbox{$M$}
		\draw [leg] (More) -- ++(\lenx,0) --++(0,\len) node (Xo) [near end, right] {$\V^\ast$};
		\draw [leg] (Mure) -- ++(\lenx,0) --++(0,-\len) node (Xu) [near end, right] {$\V$};
		\draw [leg] (Moli) -- ++(-\lenx,0) --++(0,\len) node (Yo) [near end, left] {$\W$};
		\draw [leg] (Muli) -- ++(-\lenx,0) --++(0,-\len) node (Yu) [near end, left] {$\W^\ast$};
		\draw [dir, <-] (Xo.north east) ++(.1ex,0) coordinate (ore) -- (ore|-Xu.south);
		\end{tikzpicture}
	};
	   \node (mapsto) [right = of M2, inner xsep = .8em] {\Large{$\mapsto$}};
	   \path (mapsto.north) node (TrY) [anchor=south, inner sep = 0] {$\Tr_\W$};
	   \node (M3) [right = of mapsto]{
	     \begin{tikzpicture}
	     \Mbox{$M$}
	     \draw [leg] (Moli) -- ++(-\lenx,0) |- (Muli);
	     \draw [leg] (More) -- ++(\lenx,0) --++(0,\len) node (Xo) [near end, right] {$\V^\ast$};
	     \draw [leg] (Mure) -- ++(\lenx,0) --++(0,-\len) node (Xu) [near end, right] {$\V$};
	     \draw [dir, <-] (Xo.north east) ++(.1ex,0) coordinate (ore) -- (ore|-Xu.south);
	     \end{tikzpicture}
	   };
\end{tikzpicture}
\endpgfgraphicnamed
\caption{The Choi-{\Jamiolkowski} isomorphism and partial trace in terms of tensor network diagrams (explained in Figure~\ref{fig:TNs}).
	\newline
	\capstr{Left:} 
	Order-$4$ tensor $M\in \L(\L(\V) \to \L(\W))$ as a map from $\L(\V)\cong \V\otimes \V^\ast$ to $\L(\W) \cong \W\otimes \W^\ast$. \newline
	\capstr{Middle:} 
	Its Choi-matrix $J(M)$ as an operator on $\W^\ast \otimes \V\cong \W \otimes \V$. 
	\newline
	\capstr{Right:}
	Partial trace $\Tr_1[J(M)]$ of the Choi matrix $J(M)$. 
	This operator corresponds to the functional $\rho \mapsto \Tr[M(\rho)]$. 
}
\label{fig:JasTN}
\end{figure}

\subsubsection{Maps on operators}
We denote the space of linear \emph{maps} taking operators to operators by $\LL(\CC^n) \coloneqq \L(\L(\CC^n))$. 
The Choi-Jamio{\l}kowski isomorphism \cite{Cho75,Jam72} 
$J: \LL(\CC^n) \to \L(\CC^n \otimes \CC^n)$ takes such maps to operators on a tensor product space and is given by (see also Figure~\ref{fig:JasTN} for a tensor network description)
\begin{equation}\label{eq:choi}
	\begin{aligned}
		J(M)\coloneqq 
		\sum_{i,j=1}^{\dim(\V)} M(\ketbra i j) \otimes \ketbra i j  \, ,
	\end{aligned}
\end{equation}
where $\left\{ |i \rangle \right\}_{i=1}^{\mathrm{dim}(\V)}$ is an (arbitrary) orthonormal basis of $\V$.
Maps $M\in \LL(\CC^n)$ can have different properties:
$M$ is called \emph{hermiticity preserving} if it satisfies 
$M(A^\dagger) = M(A)^\dagger$ for all operators $A\in \L(\CC^n)$.
This is equivalent to demanding that $J(M)$ is Hermitian, i.e., $J(M) \in \Herm(\L(\CC^n))$.
$M$ is called \emph{trace preserving} if $\Tr[M(A)] = \Tr[A]$ for all $A\in \L(\CC^n)$ or, 
equivalently, if $\Tr_1[J(M)] = \1$, where 
$\Tr_1: \L(\CC^n \otimes \CC^n) \to \L(\CC^n)$ denotes the partial trace over the fist tensor factor.
The affine subspace of hermiticity and trace preserving maps is denoted by $\HT(\CC^n) \subset \LL(\CC^n)$. 
The identity map $\id_k \in \LL(\CC^k)$ is a particularly simple element of $\HT(\CC^n)$.
Moreover, $M \in \LL(\CC^n)$ is \emph{completely positive} if for all $k\geq 1$ one has that 
$
	(M\otimes \id_k)(A)
$
is positive semidefinite for every positive semidefinite operator $A \in \L(\CC^n \otimes \CC^k)$. We will use the shorthand notation $A \succeq 0$ to indicate positive-semidefiniteness of $A$.
In fact, a map $M$ is completely positive if and only if  	$(M\otimes \id_n)(A) \succeq 0$ ($k=n$).
This in turn is equivalent to $J(M) \succeq 0$.

The convex subset of completely positive and trace preserving (CPT) maps is denoted by $\CPT(\CC^n) \subset \LL(\CC^n)$. 
Importantly, these maps take density matrices to density matrices, even when applied to subsystems. 
Therefore, they are also called \emph{quantum channels} and are a very general description of quantum processes, e.g.\ dynamics.

\subsubsection{Norms}
For $q\in \left[1,\infty \right[ $, the \emph{$\ell_q$-norm} of a vector $v \in \CC^n$ is 
\begin{equation}
	\lqNorm{v} \coloneqq \left(\sum_{j=1}^n |v_j|^q\right)^{1/q}
\end{equation}
and
the $\ell_\infty$-norm is $\norm{v}_{\ell_\infty} \coloneqq \max_{j\in [n]} |v_j|$.
The \emph{Schatten $q$-norm} $\norm{A}_q$ of an operator $A$
corresponds to the $\ell_q$-norm of $A$'s singular values arranged in an $n$-dimensional vector.
Here, we will specifically encounter the following Schatten norms, 
\begin{align}
\TrNorm{A} &= \Tr\bigl[\sqrt{AA^\dagger}\bigr] && \text{(trace norm),}
\\
\TwoNorm{A} &= \sqrt{\Tr[AA\ad]} && \text{(Frobenius norm),}
\\
\snorm{A} &= \max_{\ket{\psi}} \frac{\lTwoNorm{A\ket{\psi}}}{\lTwoNorm{\ket{\psi}}}
&& \text{(spectral norm).} 
\end{align}
Note that rank-one projectors $\ketbra \psi \psi$ are unit normalized with respect to any Schatten-$p$ norm.

The spectral norm is also an example of an induced norm. This concept can be generalized to norms on maps
$M\in \LL(\CC^n)$. In particular,
\begin{equation}
\norm{M}_{1\to 1} \coloneqq \max_{A} \frac{\TrNorm{M(A)}}{\TrNorm{A}} \, .
\end{equation}
is the induced trace norm.
The \emph{diamond norm} is a stabilized version of this norm:
\begin{equation}
	\dnorm{M} \coloneqq \norm{M \otimes \id_n}_{1\to 1} \, .
\end{equation}
This is a particularly meaningful and widely appreciated distance measure for 
quantum processes \cite{GilLanLie05}. 
Perhaps surprisingly, it can be calculated efficiently \cite{Wat09,BenTa09,Wat12}
as a semidefinite program.
Regarding the reconstruction of structured maps, the diamond norm can serve as a convex surrogate for low Kraus rank \cite{KliKueEis16}.
This is the case for channel reconstruction and two of the five reconstruction protocols presented in this work build on this idea.

\subsubsection{Spherical and unitary designs}
\label{sec:designs}
The probability measures on the unitary group, respectively, the unit complex sphere that is invariant under multiplication with any unitary is called \emph{Haar measure} and is the natural uniform measure on these sets. 
A \emph{complex projective $t$-design} \cite{DelGoeSei77,RenBluRob04,AmbEme07} is a probability distribution $\mu$ on the unit sphere that reproduces the moments of the Haar measure up to order $t$ in both $| \psi \rangle$ and $\langle \psi|$:
\begin{equation}\label{eq:Def:SphericalDesign}
	\EE_{\ket{\psi} \sim \mu} 	\bigl[ (\ketbra\psi\psi )^{\otimes t} \bigr]
	=
	\EE_{\ket{\psi} \sim \Haar} \bigl[ (\ketbra\psi\psi )^{\otimes t} \bigr] \, .
\end{equation}
Similarly, 
a unitary design \cite{GroAudEis07,DanCleEme09} is a probability distribution $\nu$ on the unitary group satisfying 
\begin{equation}\label{eq:Def:UnitaryDesign}
	\EE_{U \sim \nu} 		\bigl[ U^{\otimes t}  X U^{\dagger \otimes t} \bigr]
	=
	\EE_{U \sim \Haar} 	\bigl[ U^{\otimes t}  X U^{\dagger \otimes t} \bigr]
\end{equation}
for all operators $X \in \L(\CC^n)^{\otimes t}$. 
Typically, designs are considered to be uniform distributions over finite sets. 

\subsubsection{Measurement terminology}
In the compressed sensing literature, a single \emph{measurement} is usually given by an inner product, which is a Hilbert-Schmidt inner product
$\Tr [ A T (\ketbra{\psi}{\psi}) ]$ in our case. 
In quantum physics terminology, $(\ket{\psi},A)$ corresponds to a \emph{measurement setting}, whereas a corresponding \emph{measurement} would be the von-Neumann measurement of the observable $A$ in the state $T(\ketbra{\psi}{\psi})$ giving rise to an expectation value $\Tr [ A T (\ketbra{\psi}{\psi}) ]$ in the limit of infinitely many repetitions. 
In order to also account for finite sample errors we always allow for additive noise on the obtained expectation values $\Tr [ A T (\ketbra{\psi}{\psi}) ]$.

\begin{figure}
\centering
\leavevmode
%
	\beginpgfgraphicnamed{fig2_measurement}%
	\definecolor{niceblue}{rgb}{0.33,0.5,0.8}%
	\colorlet{mygreen}{OliveGreen!90!blue}%
	\tikzset{%
		decoration = {markings,%
			mark=at position 0.6 with {\arrow{>}} },%
		sbox/.style = {draw, rounded corners = .5ex,%
			minimum height = 1.8\baselineskip,%
			minimum width = 2.2em},
		blau/.style = {top color=niceblue!12,%
					   bottom color=niceblue!90},
		Bbox/.style = {sbox, rot, inner sep = 1pt},
		leg/.style = {rounded corners = .5ex,thick,postaction={decorate}},
		dir/.style = {gray,thick},
		rot/.style = {top color=red!10,%
					  bottom color=red!90},
		gruen/.style = {top color=mygreen!10,%
						bottom color=mygreen!70},
		Tbox/.style = {sbox,blau,%
					   minimum height = 4.5\baselineskip,%
					   minimum width = 4.2em},%
	}%
	\begin{tikzpicture}
		\def\len{.5}
		\node (T) [Tbox] {\large{$T$}};
		\path (T.north east) ++ (\len,0) node (psi) [Bbox,anchor = north west]{$\ket\psi$};
		\path (T.south east) ++ (\len,0) node (psiad) [Bbox,anchor = south west]{$\bra\psi$};
		\path (T.north west) ++ (-\len,0) node (U) [Bbox,anchor = north east]{$U$};
		\path (T.south west) ++ (-\len,0) node (Uad) [Bbox,anchor = south east]{$U^\ast$};
		\path (U.west) ++(-\len,0) coordinate (links);
		\node (A) [Bbox,gruen, anchor = east] at (links|-T) {$A_0$};
		\draw [leg] (psi) -- (psi-|T.east);
		\draw [leg] (psiad-|T.east) -- (psiad);
		\draw [leg] (T.west|-U) -- (U);
		\draw [leg] (Uad) -- (Uad-|T.west);
		\draw [leg] (U.west) -| (A.north);
		\draw [leg] (A.south) |- (Uad.west);
	\end{tikzpicture}    
\endpgfgraphicnamed
\caption{The tensorial structure of one measurement setting: 
	$T$ is mapped to $\Tr\bigl[ UA_0U\ad T(\ketbra \psi \psi)\bigr]$. 
}
\label{fig:TNmeasurement}
\end{figure}

\section{Results}\label{sec:results} 
In this section, we explicitly specify our measurement model of natural measurements, 
explain how we computationally reconstruct the quantum channels from the measurements, 
state our recovery guarantees, discuss their stability and robustness properties, discuss our numerics on Pauli-type measurements, and derive a sample complexity upper bound from our recovery guarantees. 

\subsection{Measurement model}
\label{sec:MeasurementModel}
We consider the task of reconstructing a quantum channel $T \in \CPT (\CC^n)$ from measurement data of the form \eqref{eq:measurement_intro}:
The unknown channel receives pure states $\ketbra {\psi_i} {\psi_i}$ as input and subsequently expectation values of observables $A_i$ are measured for the output state; see Figure~\ref{fig:TNmeasurement} for a pictorial description.
This procedure is repeated $m$ times with different measurement settings (input states and observables).
Hence, the entire measurement process leads to a \emph{measurement vector}
\begin{equation}\label{eq:MeasurementMap}
y = \A(T) + e \ \in \RR^m,
\end{equation}
with single expectation values
\begin{equation}\label{eq:MeasurementModel}
	\A(T)_i = \Tr[A_i T(\ketbra{\psi_i}{\psi_i}) ] + e_i.
\end{equation}
The vector $e\in \RR^m$ denotes additive noise present in the measurement process. 
In contrast to previous approaches \cite{Liu11,FlaGroLiu12}, no prior assumptions on the nature of this noise corruption are required. 

Throughout this work, we consider instances of \emph{random} measurements $\left\{A_i,\ket{\psi_i} \right\}_{i=1}^m$ that are independent instances of a measurement \emph{ensemble} 
$\left(A, \ket{\psi} \right)$
that meets the following requirements:

\begin{definition}[4-generic measurement ensemble] \label{def:measurement_ensemble}
We call a measurement ensemble $\left(A, \ket{\psi} \right)$ with observable~$A$ and state~$\ket{\psi}$ \emph{4-generic} if it fulfils the following criteria:
\begin{compactenum}[i)]
\item The distribution of $\ket{\psi}$ in $\CC^n$ is a spherical $4$-design \eqref{eq:Def:SphericalDesign}, i.e., it reproduces the first four moments of the unitarily invariant (Haar) measure on the complex unit sphere.
\item $A=U A_0 U^\dagger$, where $A_0 \in \Herm(\CC^n)$ is fixed and $U$ in $\U(n)$ is chosen from a unitary $4$-design 
\eqref{eq:Def:UnitaryDesign}, i.e., reproduces the first four moments of the Haar measure.
\end{compactenum}

The measurement ensemble is called \emph{normalized 4-generic} if the observables are traceless and normalized in spectral norm, i.e. $\Tr[A_0] = 0$ and $\snorm{A_0} = 1$. 

Corresponding expectation values $\Tr[A T(\ketbra \psi \psi)]$ of a quantum channel $T \in \CPT(\CC^n)$ are referred to as \emph{4-generic measurement} and \emph{normalized 4-generic measurement}, respectively.
\end{definition}

This definition encapsulates a variety of process measurement ensembles. 
In particular, the \emph{generic} measurement ensemble ($\ket{\psi}$ and $U$ are Haar random states and unitaries, respectively) meets the requirements by definition.
However, our recovery guarantees do not require such a strong notion of randomness in the measurement design: $4$-designs are sufficient. 
In fact, it is sufficient when the $4$-designs conditions are only fulfilled approximately, which we discuss and show in Section~\ref{sec:approximate}. 

We also expect that the actual channel recovery works similarly well if $\ket{\psi}$ and $U$ are chosen from other distributions, provided that they cover the complex unit sphere and $U(n)$ ``evenly'' enough 
(see Section~\ref{sec:outlook} for a further discussion). 
We confirm this expectation with numerical simulations with Pauli-type measurements, see Section~\ref{sec:OtherMeasurements}. 

\subsection{Reconstructions}
In this section, we lay out how a quantum channel $T$ can be reconstructed from data of the form
\eqref{eq:MeasurementMap} for measurement settings given by $\A$. We
put an emphasis on describing the fitting method $T^{\ell_2}$ referred to as \emph{CPT-fit}. 
To complement this
approach, we also investigate the reconstruction methods
$T^\ast_\eta $ and $T^{\ast c}_\eta $ as versions of low-rank matrix reconstruction as well as 
$T^\diamond_\eta $ and $T^{\diamond c}_\eta $ as reconstruction methods based on the diamond norm.

To start with the former, 
we minimize the square loss under the model constrained that $T$ is a quantum process to obtain
\begin{align}\label{eq:CPTRec}
T^{\ell_2}
\coloneqq 
\argmin\{ \lTwoNorm{\A(T) - y} : \, T\in \CPT(\CC^n)\} \, . 
\end{align}
This minimization is essentially a fit under a CPT-constraint and we will call it \emph{CPT-fit}.  
It has been suggested and numerically investigated in Ref.\ \cite{PhysRevA.90.012110}. 
This approach makes use of the measurement data $y$ alone. 

More common low-rank matrix reconstruction methods in compressed sensing use the trace norm as a so-called regularizer to favour low-rank solutions \cite{CanRec09,RecFazPar10,Gro11,CanPla11}. 
These reconstructions do not only make use of the data $y$, but require an a-priori bound $\eta>0$ on the noise $e$, 
\begin{equation}
\lTwoNorm{e} \leq \eta \label{eq:noise_bound} \, .
\end{equation}
Applied to the Choi matrix $J(T)$ of a quantum channel $T$, the usual trace norm regularization leads to the reconstructions 
\begin{align}
T^\ast_\eta 
&\coloneqq
\argmin\{ \tnorm{J(T)}: T\in \LL(\CC^n),\ \lTwoNorm{\A(T) - y}\leq \eta\} 
\label{eq:TrNormRec}
\, ,
\\
T^{\ast c}_\eta 
&\coloneqq
\argmin\{ \tnorm{J(T)}: T\in \HT(\CC^n),\ \lTwoNorm{\A(T) - y}\leq \eta\} 
\label{eq:CTrNormRec}
\, .
\end{align}
These two approaches are very similar. 
However, the second one contains an additional constraint that enforces hermicity and trace preservation.

The diamond norm is well-known in quantum information theory as a measure of distinguishability of quantum channels by expectation values \cite[Chapter~11]{KitSheVya02}, \cite{Wat11} and can practically be calculated and minimized using a semidefinite program \cite{Wat12}. 
For non-obvious reasons, it can also be used as a reguralizer for the reconstruction of certain maps on operators \cite{KliKueEis16}, leading to the reconstructions
\begin{align}
T^\diamond_\eta 
&\coloneqq
\argmin\{ \dnorm{T}: T\in \LL(\CC^n),\ \lTwoNorm{\A(T) - y}\leq \eta\} 
\label{eq:dNormRec}
\, ,
\\
T^{\diamond c}_\eta 
&\coloneqq
\argmin\{ \dnorm{T}: T\in \HT(\CC^n),\ \lTwoNorm{\A(T) - y}\leq \eta\} 
\label{eq:CdNormRec}
\, , 
\end{align}
where $\eta\geq 0$ is again an anticipated bound on the measurement noise. 

Superior performance over simple inversion methods is expected for channels with small Kraus rank with 
unitary channels being the most extreme case. 
Perhaps surprisingly at first sight, the CPT-fit performs equally well as the other reconstruction methods based on the trace \eqref{eq:TrNormRec}, \eqref{eq:CTrNormRec} or diamond norm \eqref{eq:dNormRec}, \eqref{eq:CdNormRec} minimization. 
However, a quantum channel $T$ has a Choi matrix $J(T)$ with constant trace norm $\tnorm{J(T)} = n$. 
Hence, one can omit the trace norm minimization in the the program \eqref{eq:CTrNormRec} when the additional constraint $J(T) \succeq 0$ is enforced, i.e.\  $T \in \CPT(\CC^n)$. 
Minimizing the square loss $\lTwoNorm{\A(T) - y}$ instead then yields the CPT-fit \eqref{eq:CPTRec}. 

\begin{figure}
\newgeometry{margin = 1.6cm}
\captionsetup[subfigure]{width=.45\linewidth}
\hspace{-1.5cm}
\subfloat[][
		Success rate of the reconstruction over the number of measurements $m$ for 
		Kraus rank $r=\rank(J(T_0))=2$. \newline
		The rate is the number of trails out of $100$ with small errors 
		$\TwoNormn{J(T^{\rec}-T_0)} \leq 10^{-5}$, for reconstructed channels $T^{\rec}$.
		\hfill\label{sfig:Noiseless:a}%
	]{%
  	\includegraphics[width=.5\linewidth]{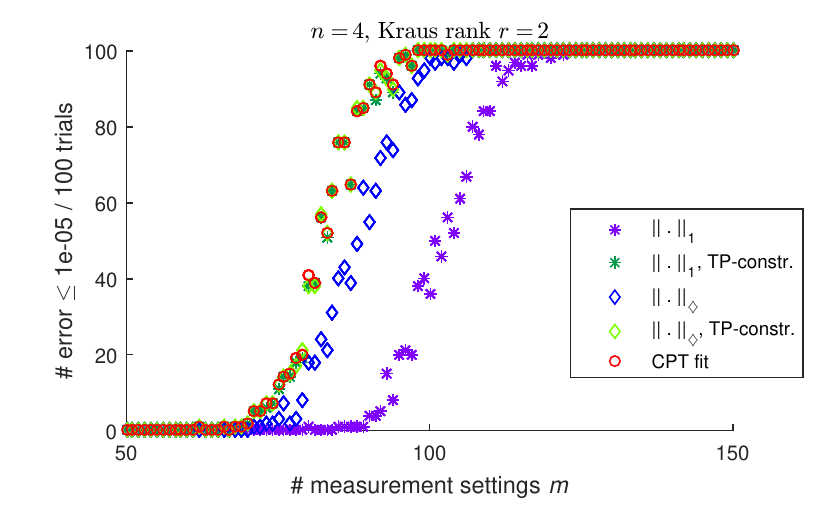}%
	}
\subfloat[][
		Average CPU time required for the reconstruction in~\subref{sfig:Noiseless:a} over the number of measurements $m$.
		\hfill\label{sfig:Noiseless:b}
		]{%
  		\includegraphics[width=.5\linewidth]{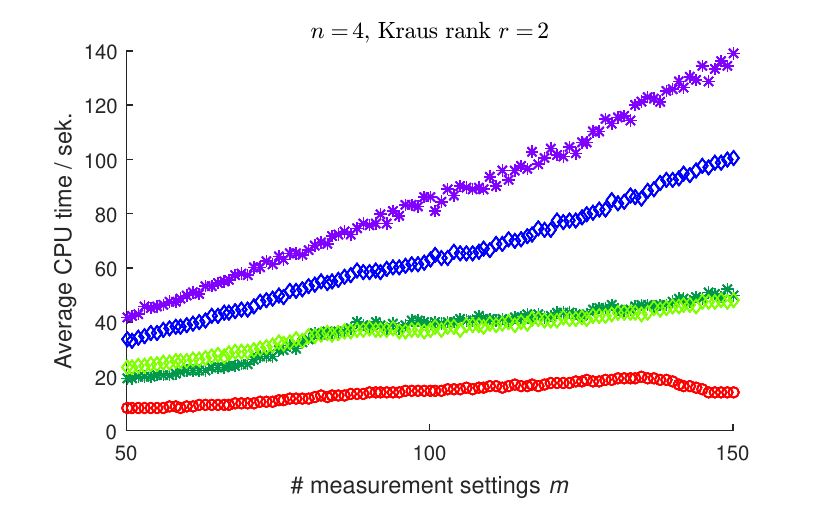}%
	}
\restoregeometry
	\caption{Retrieval of random quantum channels $T_0 \in \CPT(\CC^4)$ for vanishing noise $e=0$. 
		The measurement settings $\A$ are drawn i.i.d.\ in each trial. 
		The parameter $\eta$ in the reconstructions \eqref{eq:TrNormRec}, \eqref{eq:CTrNormRec}, \eqref{eq:CdNormRec} and \eqref{eq:CdNormRec} is chosen to be ten times machine precision. 
	}
	\label{fig:Noiseless}
\end{figure}

We mention once more that existing recovery guarantees for low-rank matrix reconstruction \cite{RecFazPar10,Gro11,Liu11,Kue15} do not apply to the measurement model considered here. 
Our measurement setting \eqref{eq:MeasurementModel} carries a certain product structure 
(see Figure~\ref{fig:TNmeasurement}) that is more restrictive than the structure covered in prior works.

\subsection{Noiseless case}
For sake of clarity, we first discuss the noiseless case ($e=0$) and will discuss stability and robustness in Section~\ref{sec:StabilityAndRobustness}. 
\subsubsection{Recovery guarantees I} 
We have recovery guarantees for the minimizations \eqref{eq:CTrNormRec}, \eqref{eq:CdNormRec}, and \eqref{eq:CPTRec}, where trace preservation features as a constraint. 

\begin{theorem}[Uniform recovery guarantees (noiseless case)]
	\label{thm:Noiseless}
	Fix $r \leq n^2$ and suppose that $\A : \CPT(\CC^n) \to \RR^m$ contains
\begin{equation}
	m \geq C \, r\, n^2\ln(n) \, ,\label{eq:mScaling}
\end{equation}
	4-generic measurements which are chosen independently from the measurement ensemble introduced in Definition~\ref{def:measurement_ensemble}. 
	Then, with probability at least 	$1-\e^{-\lambda m}$, 
	all $T_0 \in \CPT(\CC^n)$ with Kraus rank at most $r$ may be reconstructed exactly from noiseless measurements $y=\mathcal{A}(T_0)$. 

	This exact reconstruction is achieved by each of the minimizations \eqref{eq:CTrNormRec}, \eqref{eq:CdNormRec}, or \eqref{eq:CPTRec}. 
	Here, $C$ and $\lambda$ are universal constants (which can, in principle, be extracted explicitly from the proof).
\end{theorem}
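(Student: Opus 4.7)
The plan is to prove all three recovery statements by establishing a single, uniform lower bound on the \emph{minimum conic singular value} of the measurement map $\A$ restricted to the descent cones of the respective regularizers at the true channel $T_0$. Concretely, for the trace norm, the diamond norm, and (in the CPT-fit case) the indicator of $\CPT(\CC^n)$, uniqueness of the minimizer from noiseless data is equivalent to the statement that $\A(H)\neq 0$ for every nonzero $H$ in the corresponding descent cone. Hence a bound of the form $\lTwoNormn{\A(H)} \geq \kappa\sqrt{m}\,\fnormn{H}$, valid with high probability and uniformly in the cone, simultaneously implies all three guarantees. The CPT-fit drops out of the trace-norm analysis for free: $\tnormn{J(T)}=n$ is constant on $\CPT(\CC^n)$, so the CPT-fit is equivalent to trace-norm minimization restricted to the CPT set, whose feasibility cone at $T_0$ is contained in the trace-norm descent cone. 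The hermiticity- and trace-preservation constraint in \eqref{eq:CTrNormRec} and \eqref{eq:CdNormRec} plays the analogous role of cutting out degenerate directions, which is exactly why the theorem addresses the constrained versions only.

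To produce the cone lower bound I would invoke Mendelson's small-ball method (the bowling scheme). It requires two ingredients: a pointwise anti-concentration estimate
\[
  \inf_{H}\, \Pr\!\bigl[\,|\langle M, H\rangle| \geq \xi\, \fnormn{H}\,\bigr] \geq 2p ,
\]
where $M = (\ketbra\psi\psi)^T \otimes A$ is the Choi-picture measurement operator and the infimum runs over all directions $H$ in the descent cone, together with an upper bound on the mean empirical width $W_m = \EE\, \sup_{H}\bigl\langle \sum_i \varepsilon_i M_i, H\bigr\rangle$ over the same set. Once $\xi p\sqrt{m} \gtrsim W_m$, the small-ball inequality delivers the cone lower bound with high probability.

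For the anti-concentration I would apply Paley--Zygmund to the non-negative random variable $|\langle M, H\rangle|^2$. Its second and fourth moments are polynomials of degree $2$ and $4$ in $\ket\psi$ and in $U$, so the $4$-design hypotheses allow replacing both expectations by the corresponding Haar integrals and evaluating them via Schur--Weyl expansions in permutation operators; the normalizations $\Tr[A_0]=0$ and $\snormn{A_0}=1$ should then yield a second moment comparable to $\fnormn{H}^2/n^2$ with bounded kurtosis across the hermiticity-and-trace-preserving subspace, hence $\xi p \gtrsim 1/n$. For the width of the trace-norm cone, I would combine the standard inclusion
\[
\mathcal{D}(\tnormn{\argdot}, J(T_0)) \cap B_F \subseteq \{H : \tnormn{H} \leq 2\sqrt{2r}\,\fnormn{H}\}
\]
with the duality $\tnormn{H} = \sup_{\snormn{X}\leq 1}\langle X, H\rangle$ to reduce to $W_m \lesssim \sqrt{r}\,\EE\,\snormB{\sum_i \varepsilon_i M_i}$, and then apply matrix Khintchine, using that the variance proxy $\bigl\|\EE \sum_i M_i^2\bigr\|_\infty$ scales like $m/n$ by the same Haar integration. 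The diamond-norm cone is handled by importing the descent-cone comparison from Ref.~\cite{KliKueEis16}, which shows that at a Kraus-rank-$r$ channel the diamond-norm descent cone is contained in the trace-norm one up to constants. Putting the pieces together transforms the small-ball condition into $m \gtrsim rn^2 \log n$, matching the claim.

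The step I expect to be the main technical obstacle is the \emph{uniformity} of the small-ball estimate. Paley--Zygmund is intrinsically pointwise in $H$, whereas one needs a single pair $(\xi, p)$ that is independent of both the rank-$r$ channel $T_0$ and the direction $H$ in its descent cone. Establishing such uniformity amounts to lower bounding a certain quartic form in $H$ produced by the $4$-design integration, and verifying that no hermiticity-and-trace-preservation-compatible direction causes it to collapse. This is the place where the $4$-design hypothesis is used most crucially and where any relaxation to approximate designs or to structured Pauli-type ensembles would incur the main loss.
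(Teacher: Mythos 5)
Your proposal follows essentially the same route as the paper: the core of the paper's argument is exactly a Mendelson small-ball lower bound on the minimum conic singular value over the cone of effectively-rank-$r$ trace-annihilating maps, with the anti-concentration obtained from Paley--Zygmund using second and fourth moments computed via Schur--Weyl/$4$-design integration (where the uniformity issue you flag is resolved because both moments depend on the direction only through $\TwoNormn{M(\1)}^2+\fnormn{M}^2$, via a tensor-network norm bound), the width controlled by H\"older plus a matrix Khintchine/Chernoff estimate, and the diamond-norm case imported from the descent-cone comparison of Ref.~\cite{KliKueEis16}. The only organizational difference is that the paper channels the trace-norm and CPT-fit guarantees through a null space property (to additionally obtain robustness to model mismatch in the stronger theorems), whereas you use descent cones throughout; for the noiseless statement the two are equivalent consequences of the same cone bound.
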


This statement follows from more general results, namely Theorems~\ref{thm:TrNorm}, \ref{thm:CPTfit}, and~\ref{thm:dNorm} below. 
Stability and robustness are discussed with Theorems~\ref{thm:dNormStability} and~\ref{thm:robustness}. 

\myparagraph{Remarks}
\begin{compactenum}[i)]
\item 
	The number of real parameters required to specify a general hermiticity preserving map $T_0 \in \LL(\CC^n)$ is 
	$n^4$. 
	If said map is also trace preserving, then $n^4-n^2$ many parameters are required. 
	The same holds for $T_0\in \CPT(\CC^n)$. 
	If the Kraus rank of $T_0$ is $\rank(J(T_0))=r$ then $T_0$ is compressible in the sense that an order of $r n^2$ parameters suffices to describe $T_0$. 
	Therefore, up to the factor $\ln(n)$, the scaling \eqref{eq:mScaling} of the number of measurements is optimal. 
\item 
	The scaling \eqref{eq:mScaling} also contains a constant $C$ that we have not bounded explicitly. 
	Numerically, we find for $n=4,8$ that $m\geq 5 \, r\, n^2$ generic (Haar-random) measurements are sufficient for recovery, see Sections~\ref{sub:noiseless_numerics} and~\ref{sec:ToffoliNumerics}. 

	Such an approach is very common in compressed sensing: 
	Recovery guarantees, often with unknown constants and logarithmic factors in the scaling of the number of measurements, ensure the functioning of the reconstruction procedure also in the limit of large dimensions. 
	The precise number of measurements and the expected errors in a given setting are often determined numerically. 
	Here, also more special measurement settings can practically be considered. 
	In Section~\ref{sec:OtherMeasurements}, we numerically investigate Pauli-type measurements, which are not covered by our recovery guarantees. 
\item 
	The measurements are required to be given by exact $4$-designs. 
	This condition can be relaxed so that it is sufficient if the $4$-design conditions are fulfilled only approximately. 
	We prove that the recovery guarantees still hold for certain $\epsilon/n^4$-approximate $4$-designs in Theorem~\ref{thm:approximate}.
	One can use quantum pseudorandomness generation with random quantum circuits \cite{BraHarHor16,BraHarHor16PRL,NakHirKoa17,HarMeh18} or fluctuating Hamiltonians \cite{OnoBueKli17} to generate $\epsilon$-approximate designs. 
	In both cases, $\epsilon$ becomes exponentially small in the generation time (circuit length/runtime), i.e., $\epsilon/n^4$ can be made small efficiently. 
	In this case, the measurement map needs to be obtained from the gate sequences or the randomly fluctuating classical parameters of the Hamiltonian, respectively. 

\end{compactenum}

\subsubsection{Numerical demonstration} \label{sub:noiseless_numerics}
Our variational reconstructions
can be recast as standard convex optimization problems (see also Appendix~\ref{sec:SDPs}). 
These can be solved computationally efficiently\footnote{with computation time scaling polynomially in the dimension} and also practically using standard software such as CVX \cite{cvx,GraBoy08}. 
For the reconstructions~\eqref{eq:CPTRec}, \eqref{eq:CTrNormRec}, and~\eqref{eq:CdNormRec} that have trace preservation as a constraint we have proven that a number of
\begin{equation} \label{eq:mScalingDiscussion}
	m \geq m_0 \coloneqq C\, r\, n^2 \ln(n)
\end{equation}
measurement settings
is sufficient to recover quantum channels of Kraus rank $r$. 
We expect similar reconstruction properties for the unconstrained trace and diamond norm 
reconstructions~\eqref{eq:TrNormRec} and~\eqref{eq:dNormRec}, although our proofs do not directly carry over to that case. 
Indeed, low-rank matrices can often be recovered via trace norm minimization from a number of measurements with such an essentially optimal scaling \cite{RecFazPar10,Gro11,Liu11,KueRauTer15,KabKueRau15}. 

Different reconstruction procedures with the optimal scaling \eqref{eq:mScalingDiscussion} often have a different constant $C$.
For instance, Ref.~\cite{Tro15} shows that a constant of size $C \simeq 6$ provably suffices for Gaussian measurement matrices (even without the $ln(n)$ factor). 
On the other hand, the results presented in Refs.\ \cite{Gro11,Liu11} are valid for more structured Pauli measurements and require a much larger constant.
Numerical studies typically highlight a similar behaviour.

Exploiting additional prior information -- such as the fact that quantum channels preserve both hermiticity and traces -- in the algorithmic reconstructions \eqref{eq:CTrNormRec}, \eqref{eq:CdNormRec} can only lead to an improvement.
In fact, these constraints also facilitate the mathematical analysis. 
Besides being able to prove recovery guarantees, we also observe the benefit of such additional constraints numerically:
Figure~\ref{sfig:Noiseless:a} shows the \emph{recovery rates} of different approaches.

This recovery rate is determined as follows. 
We say that a channel $T_0$ is successfully reconstructed if the reconstructed channel $T^\rec$ is close to the original one: 
$\TwoNorm{J(T^\rec-T_0)} \leq 10^{-5}$. 
We have chosen $10^{-5}$ as threshold because we have observed the reconstruction errors to be typically well separated from this value. 
Varying it changes the curves in the plot only slightly. 
The precision of the convex optimization software CVX~\cite{cvx,GraBoy08} in terms of the machine precision $\eps = 2^{-52} \approx 2.2\cdot 10^{-16}$ is $\sqrt{\eps} \approx 1.4\cdot 10^{-8}$. This somewhat limits the choice of thresholds. 
In order to obtain the rate shown in the plots, 
we run $100$ independent instances and obtain the rate as the percentage of trials with a successful reconstruction.

Our numerical studies are based on generic measurement ensembles (Haar random input states, Haar random unitaries) and we have chosen $A_0$ to be a diagonal $n \times n$ matrix whose spectrum evenly covers the interval $[-1,1]$. 

Interestingly, the unconstrained diamond norm minimization \eqref{eq:dNormRec} performs almost as good as its constrained counterpart \eqref{eq:CdNormRec}. 
The heuristic reason for this behaviour is that the diamond norm ``favours'' maps that satisfy a certain trace preservation condition \cite{KliKueEis16,MicKliKue18}, which is fulfilled for CPT maps. 
Moreover, the so-called descent cone at CPT-maps of the diamond norm is contained in an intersection of descent cones of nuclear norms \cite{KliKueEis16}. 
If this containment is strict, it also leads to an improved recovery recovery guarantee. 

Arguably the simplest reconstruction procedure is the fit under the CPT constraint \eqref{eq:CPTRec}. 
Here, we observe that this protocol achieves a rate that is very similar to the other constrained procedures~\eqref{eq:CTrNormRec}, \eqref{eq:CdNormRec}.  
Moreover, the CPT-fit \eqref{eq:CPTRec} clearly has the fastest computation time in our simple implementation in CVX~\cite{cvx,GraBoy08} with Mosek~7.1 as a solver, see Figure~\ref{sfig:Noiseless:b}. 

The required number of measurement settings~\eqref{eq:mScalingDiscussion} depends linearly on the Kraus rank $r$ of the channels to be reconstructed.
This dependence is confirmed for small $r$ in our numerics for Hilbert space dimension $n=4$, see Figure~\ref{fig:RankT}. 
For dimension $n=4$ the Kraus rank is $r \leq n^2 =16$ and for $r = 16$,
 $m= \dim_{\RR}(\HT(\CC^n)) = n^4-n^2 = 252$ 
of measurement values are required for the reconstruction. 
Here, $\dim_{\RR}(\HT(\CC^n))$ denotes the real dimension of the affine space $\HT(\CC^n)$. 
This observation explains the non-linear behaviour for larger~$r$. 

\begin{figure}
	\centering
	\includegraphics[width = .65\linewidth]{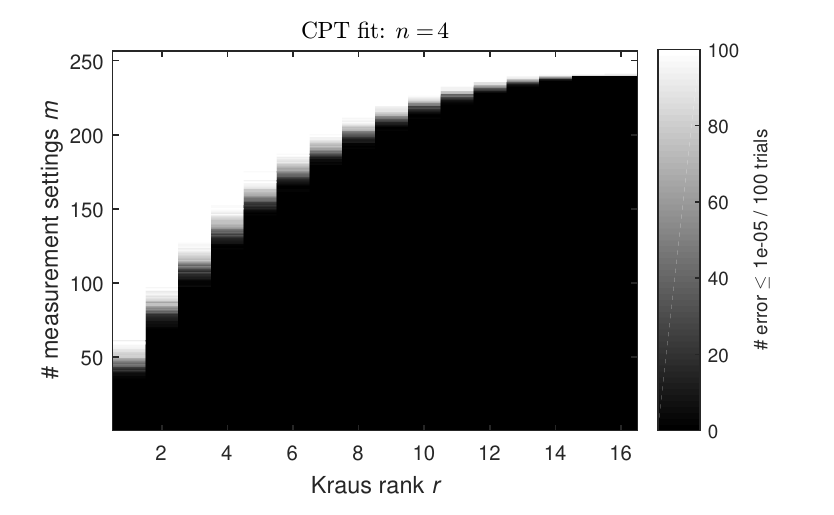}
	\caption{
		Reconstruction of random quantum channels $T_0 \in \CPT(\CC^4)$ with different Kraus ranks $r=\rank(J(T_0))$ from the CPT-fit \eqref{eq:CPTRec}. 
		The white region corresponds to $100\%$ observed recovery and the black one to $0\%$. 
		Parameters and measurements are as in Figure~\ref{fig:Noiseless}.
	}
	\label{fig:RankT}
\end{figure}

\subsection{Stability and robustness}\label{sec:StabilityAndRobustness}
We consider two types of errors: 
i) the measurement errors $e$ already indicated in the measurement model \eqref{eq:MeasurementModel} and 
ii) model mismatches capturing violations of the assumption 
$\rank(J(T_0)) \leq r$ on the Kraus rank. 

For this purpose, we need a technical definition of optimal low-rank approximations of quantum channels. 
We fix a maximum rank $r$.
For any operator $X \in \Herm(\CC^d)$ we define
\begin{align}
X\r &\coloneqq \argmin \bigl\{ \tnorm{X-Y}: \ Y \in \Herm(\CC^d) , \ \rank(Y)\leq r \bigr\} \, ,
\\
X\c &\coloneqq X-X\r \, .
\end{align}
The low-rank approximation $X\r$ and, hence, also $X\c$ has a simple formula:
we use the eigenvalue decomposition $X = \sum_{i=1}^d x_i \ketbra{x_i}{x_i}$ with $|x_1| \geq |x_2|\geq \dots\geq |x_d|$. 
Then 
\begin{equation}
	X\r = \sum_{i=1}^r x_i \ketbra{x_i}{x_i} \, .
\end{equation}

This construction is inherited by linear maps on operators. 
Specifically, for $T\in \CPT(\CC^n)$ we set  
\begin{equation}\label{eq:tail_def}
	J(T\r) \coloneqq J(T)\r \quad \text{and}\qquad J(T\c) \coloneqq J(T)\c \, . 
\end{equation}
The error term $T\c$ is called the \emph{model mismatch}. 
In turn, we relax our model assumption of low Kraus rank and allow for small model mismatches. 

A reconstruction is called \emph{stable} if it tolerates measurement errors and it is called \emph{robust} if it tolerates model mismatches. 

\subsubsection{Recovery guarantees II} 
We prove all reconstructions from Theorem~\ref{thm:Noiseless} to be stable against measurement noise. 
Moreover, we prove the trace norm minimization \eqref{eq:CTrNormRec} and the CPT-fit \eqref{eq:CPTRec} also to be robust against model mismatches. 

\begin{theorem}[Stability of the diamond norm reconstruction \eqref{eq:CdNormRec}]
	\label{thm:dNormStability}
	Consider normalized 4-generic measurements from Definition~\ref{def:measurement_ensemble}. 
	Then, under the hypotheses of Theorem~\ref{thm:Noiseless}, reconstruction via the constrained diamond norm minimization \eqref{eq:CdNormRec} is stable towards additive noise in the measurements $y = \A (T_0)+e$: 
	for any $\eta \geq \lTwoNorm{e}$, the associated reconstruction error obeys
		\begin{equation}\label{eq:ErrorBoundStability}
		\TwoNorm{ J(T_0-T^{\diamond c}_\eta) }
		\leq  
		\tilde c\, \frac{n^2}{\sqrt{m}\, \TwoNormn{A_0}} \, \eta \, .
	\end{equation}
	The constants $C$, $\lambda$, and $\tilde c$ only depend on each other. 
\end{theorem}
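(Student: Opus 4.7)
The plan is to follow the standard compressed-sensing stability template: combine feasibility of the ground truth, a tube constraint, and a lower bound on the minimum conic singular value of the measurement operator $\A$ restricted to an appropriate cone of directions. Since $\lTwoNorm{e}\leq\eta$, the ground truth $T_0$ is feasible for the program \eqref{eq:CdNormRec}; optimality therefore yields $\dnorm{T^{\diamond c}_\eta} \leq \dnorm{T_0}$, so the error $\Delta := T^{\diamond c}_\eta - T_0$ lies in the descent cone of the diamond norm at $T_0$, intersected with the tangent space of $\HT(\CC^n)$ (the linear space of hermiticity- and trace-annihilating maps). The triangle inequality additionally furnishes the tube constraint $\lTwoNorm{\A(\Delta)} \leq \lTwoNorm{\A(T^{\diamond c}_\eta)-y} + \lTwoNorm{\A(T_0)-y} \leq 2\eta$.

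The decisive ingredient is a minimum conic singular value bound of the form
\[
\lTwoNorm{\A(X)} \;\geq\; c\,\frac{\sqrt{m}\,\TwoNorm{A_0}}{n^2}\,\TwoNorm{J(X)},
\]
valid with probability at least $1-\mathrm{e}^{-\lambda m}$ uniformly over all nonzero $X$ in that restricted cone, and uniformly over all ground truths $T_0 \in \CPT(\CC^n)$ of Kraus rank at most $r$. Combined with $\lTwoNorm{\A(\Delta)}\leq 2\eta$, this immediately produces $\TwoNorm{J(\Delta)} \leq 2\eta\, n^2/(c\sqrt{m}\,\TwoNorm{A_0})$, which is the claim with $\tilde c = 2/c$. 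This conic singular value bound is essentially what the probabilistic core of the proof of Theorem~\ref{thm:Noiseless} must already produce for the diamond-norm protocol: its strict positivity is precisely the condition implying exact recovery at $\eta=0$. Our task here is thus to read off its quantitative scaling rather than merely its positivity.

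To establish the inequality I would invoke Mendelson's small-ball method (Tropp's bowling scheme). Two ingredients are needed: (i) a Paley--Zygmund-type lower bound on the marginal small-ball function $\Pr\bigl[\,\lvert\Tr[A\, X(\ketbra\psi\psi)]\rvert \geq \xi\,\TwoNorm{J(X)}\,\bigr]$, and (ii) a Gaussian empirical width bound for the restricted cone. For (i), the normalized $4$-generic hypothesis is tailor-made: second and fourth moments of $\Tr[A\,X(\ketbra\psi\psi)]$ may be computed exactly from the $4$-design property, producing a small-ball function of order $\TwoNorm{A_0}^2/n^4$, and the normalization $\snorm{A_0}=1$ controls the subexponential tails needed to upgrade this to a uniform bound. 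For (ii), the geometry of the diamond-norm descent cone at a low-Kraus-rank CPT map, analysed in Ref.\ \cite{KliKueEis16}, yields a mean width of order $\sqrt{rn^2\log n}$, consistent with the sample complexity \eqref{eq:mScaling}. The main obstacle is tracking the prefactor $n^2/(\sqrt{m}\,\TwoNorm{A_0})$ through the small-ball computation: a single measurement contributes order $\TwoNorm{A_0}^2/n^4$ to $\EE[\lTwoNorm{\A(X)}^2]/\TwoNorm{J(X)}^2$, and the concentration step must accumulate $m$ such contributions without losing polynomial factors in $n$ or introducing undesirable dependence on $r$ beyond what is already absorbed in the width estimate (ii).
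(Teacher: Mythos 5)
Your proposal is correct and follows essentially the same route as the paper: feasibility of $T_0$ plus the tube constraint, then Tropp's descent-cone error bound combined with a minimum conic singular value estimate obtained from Mendelson's small-ball method, where the marginal tail function is controlled by Paley--Zygmund with the exact second and fourth moments supplied by the $4$-design property, and the empirical width is of order $\sqrt{r\,n^2\ln n}$. The only cosmetic difference is that the paper first proves the bound for the constrained trace-norm program (whose descent cone obeys the H\"older-type inequality $\tnorm{J(M)}\leq 2\sqrt{r}\,\TwoNorm{J(M)}$) and then transfers it to the diamond-norm program via the descent-cone containment of Ref.~\cite{KliKueEis16}, rather than running the bowling scheme directly on the diamond-norm descent cone as you phrase it --- the underlying estimates are identical.
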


A more general version of this theorem -- which allows for quantifying the noise strength by any $\ell_q$-norm -- is also true, see 
Corollary~\ref{thm:dNorm} below. 

\begin{theorem}[Stability and robustness, trace norm minimization \eqref{eq:CTrNormRec} and CPT-fit \eqref{eq:CPTRec}]
	\label{thm:robustness}
	Under the hypotheses of Theorem~\ref{thm:Noiseless}, reconstruction via
	the CPT-fit \eqref{eq:CPTRec} is both stable towards additive noise corruption and robust with respect to relaxing the model assumption of Kraus rank $r$:
\begin{equation}\label{eq:FitError}
		\TwoNorm{ J(T_0-T^{\ell_2}) }
		\leq \\
		4 \tnorm{J({T_0}\c)}  
		+ \tilde c \, \frac{n^2}{ \sqrt{m} \, \TwoNormb{A_0} } \lTwoNorm{e}
		\, .
\end{equation}
This performance guarantee is also valid for the constrained trace norm minimization \eqref{eq:CTrNormRec} if one replaces $\| e \|_{\ell_2}$ by the optimization parameter $\eta$, provided that $\eta \geq \| e \|_{\ell_2}$

Once more, the constants $C$, $\lambda$, and $\tilde c$ again only depend on each other. 
\end{theorem}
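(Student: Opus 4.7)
\textbf{Proof proposal for Theorem~\ref{thm:robustness}.} My plan is to follow the standard descent-cone / minimum conic singular value approach from compressed sensing, adapting it to both reconstructions simultaneously. The crucial ingredient is a lower bound of the form
\begin{equation}\label{eq:mcsv_plan}
\lTwoNorm{\A(M)} \;\geq\; c\,\frac{\sqrt{m}\,\TwoNormb{A_0}}{n^2}\,\TwoNorm{J(M)}
\end{equation}
valid with high probability simultaneously for all $M\in\HT(\CC^n)$ whose Choi matrix $J(M)$ lies in (a neighborhood of) the descent cone of the trace norm at any rank-$r$ hermiticity-preserving Choi matrix. This is exactly the conic singular value estimate that drives the noiseless recovery result of Theorem~\ref{thm:Noiseless}; I will assume it as an output of the proof of that theorem (essentially an application of Mendelson's small-ball / Tropp-type bound combined with the $4$-design moment computations).

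Granted \eqref{eq:mcsv_plan}, the argument for the constrained trace norm minimization $T^{\ast c}_\eta$ proceeds by the now-classical Cand\`es--Plan decomposition. First, feasibility of $T_0$ together with optimality of $T^{\ast c}_\eta$ gives $\lTwoNorm{\A(T_0-T^{\ast c}_\eta)}\leq 2\eta$. Second, with $H\coloneqq J(T^{\ast c}_\eta)-J(T_0)$, I decompose $H=H_T+H_{T^\perp}$ relative to the tangent space $T$ at the best rank-$r$ approximation $J(T_0)_{|r}$, so that $\rank(H_T)\leq 2r$ and $H_{T^\perp}$ has support orthogonal to $J(T_0)_{|r}$. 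Using $\tnorm{J(T_0)_{|r}+H_{T^\perp}}=\tnorm{J(T_0)_{|r}}+\tnorm{H_{T^\perp}}$ and the optimality $\tnorm{J(T^{\ast c}_\eta)}\leq \tnorm{J(T_0)}$, a triangle-inequality chase yields the cone condition
\begin{equation}\label{eq:cone_cond}
\tnorm{H_{T^\perp}}\;\leq\;\tnorm{H_T}+2\,\tnorm{J(T_0)\c}.
\end{equation}
Splitting $H_{T^\perp}$ into blocks of rank $2r$ ordered by singular value and using $\tnorm{\cdot}\leq\sqrt{2r}\,\TwoNorm{\cdot}$ together with \eqref{eq:cone_cond} shows that $H$ lies in an enlarged descent cone on which \eqref{eq:mcsv_plan} still applies (up to a harmless constant). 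Combining \eqref{eq:mcsv_plan} with $\lTwoNorm{\A(H)}\leq 2\eta$ and the oracle inequality for $\tnorm{J(T_0)\c}$ then gives the claimed bound with $\lTwoNorm{e}$ replaced by $\eta$.

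The CPT-fit is handled by the same machinery, with one simplification that is particular to quantum channels. Since $T_0,T^{\ell_2}\in\CPT(\CC^n)$, their Choi matrices are positive semidefinite with trace equal to $n$, hence $\tnorm{J(T^{\ell_2})}=n=\tnorm{J(T_0)}$ \emph{automatically}. This gives the analogue of the trace-norm optimality inequality for free, so the decomposition leading to \eqref{eq:cone_cond} and the subsequent cone argument carry over verbatim. For the data-fit part, optimality of $T^{\ell_2}$ together with feasibility of $T_0$ yields $\lTwoNorm{\A(T^{\ell_2})-y}\leq\lTwoNorm{\A(T_0)-y}=\lTwoNorm{e}$, and the triangle inequality gives $\lTwoNorm{\A(T_0-T^{\ell_2})}\leq 2\lTwoNorm{e}$. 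Substituting into \eqref{eq:mcsv_plan} delivers \eqref{eq:FitError}.

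The step I expect to be the main obstacle is verifying that the event on which \eqref{eq:mcsv_plan} holds is \emph{uniform} over the entire cone (not just for a single rank-$r$ Choi matrix) with the polynomial dimensional dependence $n^2/\sqrt{m}$ stated in the theorem, rather than a worse scaling coming from a naive $\varepsilon$-net over rank-$r$ Choi matrices. Ensuring that the $4$-design hypothesis genuinely suffices to control the second moment $\EE\lTwoNorm{\A(M)}^2$ and the small-ball probability with the correct normalization by $\TwoNorm{A_0}$ is the technical heart of the argument; once this is in place, the rest of the proof is a clean adaptation of the standard compressed sensing oracle inequality.
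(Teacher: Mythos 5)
Your overall strategy is sound and rests on the same technical engine as the paper -- the lower bound on the minimum conic singular value of $\A$ over the cone of effectively rank-$r$, trace-annihilating maps (Theorem~\ref{thm:lambda_min}, proved via Mendelson's small-ball method and the $4$-design moment computations) -- and your treatment of the two minimizations is exactly right: $\tnormb{J(T^{\ell_2})}=\tnormb{J(T_0)}=n$ holds automatically for the CPT-fit, $\lTwoNorm{\A(T_0-T^{\ell_2})}\leq 2\lTwoNorm{e}$ follows from optimality, and $\lTwoNorm{\A(T_0-T^{\ast c}_\eta)}\leq 2\eta$ from feasibility. Where you diverge from the paper is in how the conic bound is converted into a \emph{robust} oracle inequality. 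You propose the Cand\`es--Plan route: tangent-space decomposition $H=H_T+H_{T^\perp}$, the perturbed cone condition with the additive $2\tnormb{J(T_0)\c}$ term, and block-splitting of $H_{T^\perp}$. The paper instead packages the same information as a robust null space property (Definition~\ref{def:NSP}): it shows that every $X$ either lies in the cone $K$ of \eqref{eq:def:K} (where Theorem~\ref{thm:lambda_min} supplies the $\tau\lqNorm{\A(X)}$ term) or satisfies $\TwoNorm{X\r}\leq\frac{\mu}{\sqrt r}\tnorm{X\c}$ trivially, and then invokes the off-the-shelf NSP-to-recovery conversion of Theorem~\ref{prop:NSP_recovery_guarantee}. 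The NSP route buys robustness with no extra probabilistic work and handles all Schatten-$p$ and $\ell_q$ norms at once; your route is more self-contained but requires re-deriving the oracle inequality by hand.

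The one step you should not wave through is the claim that the perturbed cone condition places $H$ ``in an enlarged descent cone on which the conic singular value estimate still applies up to a harmless constant.'' The set of matrices satisfying $\tnorm{H_{T^\perp}}\leq\tnorm{H_T}+2\tnormb{J(T_0)\c}$ is \emph{not} a cone -- the additive mismatch term destroys scale invariance -- so a minimum \emph{conic} singular value bound cannot be applied to it directly. The standard block-splitting repair instead requires an \emph{upper} RIP-type bound on $\lTwoNorm{\A(H_i)}$ for the tail blocks, which Mendelson's small-ball method does not provide and which the paper never establishes for these structured rank-deficient measurement matrices $A_i\otimes(\ketbra{\psi_i}{\psi_i})^T$. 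The fix that stays within the paper's toolkit is a case distinction: if $\tnorm{H_T}\geq 2\tnormb{J(T_0)\c}$ then $H$ genuinely lies in the cone $K$ with $c_\mu$ an absolute constant and the conic bound applies; otherwise $\tnorm{H}\lesssim\tnormb{J(T_0)\c}$ already, so $\TwoNorm{H}$ is controlled by the model-mismatch term alone. This dichotomy is precisely what the paper's NSP formulation \eqref{eq:NSP} encodes, so your argument closes once you make it explicit -- but as written the key robustness step is not justified.
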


The model mismatch is quantified by the trace norm of the Choi matrix. 
This is a relatively simple error measure that upper bounds the operationally relevant diamond norm. 

Theorem~\ref{thm:robustness} summarizes two results -- Theorems~\ref{thm:TrNorm} and~\ref{thm:CPTfit} below -- that are slightly more general: 
they allow for quantifying the reconstruction error~\eqref{eq:FitError} and~ by any Schatten-$p$ norm with $p \in [1,2]$. 
Moreover, the noise strength may be characterized by any $\ell_q$-norm: $\eta~\geq~\|e\|_{\ell_q}$.

Naturally, one would expect the statements of the recovery guarantees from Theorem~\ref{thm:robustness} to be invariant under i) a simultaneous rescaling of the noise strength $\eta$ and the measured observables and ii) under adding a multiple of the identity operator to the observables and shifting the measurement data accordingly. 
The following corollary makes that expectation precise. 

\begin{corollary}[Normalization of the measurements]
Similar statements as in Theorems~\ref{thm:dNormStability} and \ref{thm:robustness} hold also in the case of 4-generic measurements,
provided that $\eta$ is replaced by $\eta /\snorm{A_0}$ and $\TwoNorm{A_0}$ is replaced by the $2$-norm of the traceless part of $A_0$.
\end{corollary}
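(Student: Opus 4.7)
The plan is to reduce the general 4-generic case to the normalized 4-generic one -- which is already handled by Theorems~\ref{thm:dNormStability} and~\ref{thm:robustness} -- by an affine reparametrization of the observable and of the data.

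First, I decompose $A_0 = A_0^{\mathrm{tl}} + (\Tr[A_0]/n)\mathds{1}$ into its traceless part and a multiple of the identity. For any trace-preserving $T$, the identity component contributes the known constant $\Tr[A_0]/n$ to each measurement $\Tr[A_i\, T(\ketbra{\psi_i}{\psi_i})]$, independent of $T$. Subtracting this shift from every entry of $y$ leaves the noise vector $e$ unchanged and produces new data corresponding to the traceless observable $A_0^{\mathrm{tl}}$. Because the feasible sets of $T_\eta^{\diamond c}$, $T_\eta^{\ast c}$ and $T^{\ell_2}$ all lie inside $\HT(\CC^n)$, and their objectives (diamond norm, trace norm of the Choi matrix, least-squares residual) are insensitive to a constant shift of $y$, all three minimizers are unchanged.

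Next, I rescale to unit spectral norm by setting $\tilde A_0 \coloneqq A_0^{\mathrm{tl}}/\snormn{A_0^{\mathrm{tl}}}$ and dividing the shifted data by $\snormn{A_0^{\mathrm{tl}}}$. The ensemble $(\tilde A = U\tilde A_0 U^\dagger,\ket{\psi})$ is then normalized 4-generic in the sense of Definition~\ref{def:measurement_ensemble}; the rescaled noise satisfies $\lTwoNorm{\tilde e} = \lTwoNorm{e}/\snormn{A_0^{\mathrm{tl}}}$; and the feasibility-ball radius transforms from $\eta$ to $\tilde\eta \coloneqq \eta/\snormn{A_0^{\mathrm{tl}}}$. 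This affine isomorphism again preserves the argmin of every reconstruction. Applying Theorems~\ref{thm:dNormStability} and~\ref{thm:robustness} to the normalized problem, using $\TwoNormn{\tilde A_0} = \TwoNormn{A_0^{\mathrm{tl}}}/\snormn{A_0^{\mathrm{tl}}}$ in the denominator and $\tilde\eta$ in the numerator, yields the error bounds with $\TwoNormn{A_0^{\mathrm{tl}}}$ in place of $\TwoNormn{A_0}$ and with $\eta$ effectively rescaled by $\snormn{A_0^{\mathrm{tl}}}^{-1}$. The elementary comparison $\snormn{A_0^{\mathrm{tl}}} \leq 2\snormn{A_0}$ (triangle inequality) lets one replace $\snormn{A_0^{\mathrm{tl}}}$ by $\snormn{A_0}$ at the cost of a universal multiplicative factor, which is absorbed into the constants $C$, $\lambda$, $\tilde c$. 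This produces exactly the two substitutions stated in the corollary.

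The argument is pure bookkeeping on top of the already-proved theorems; it introduces no new analytic content. The only step worth spelling out with some care is the invariance claim: one must verify that subtracting a constant vector from $y$ and then rescaling both $y$ and $A_0$ by the same positive scalar does not alter the argmin of any of the three convex programs when their feasibility sets are restricted to $\HT(\CC^n)$. For the diamond-norm and trace-norm minimizations this is transparent, since the objective never sees $y$. For the CPT-fit \eqref{eq:CPTRec}, the squared residual transforms by an overall positive factor, which leaves the minimizer unchanged. Beyond this elementary check, the corollary follows immediately from Theorems~\ref{thm:dNormStability} and~\ref{thm:robustness}.
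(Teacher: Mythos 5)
Your overall strategy---center the observable, rescale to unit spectral norm, track how the data, the noise, and the feasibility radius transform, and check that none of the three argmins moves---is exactly the paper's route: it is what Observations~\ref{obs:centering} and~\ref{obs:normalizing} are set up for, and the invariance checks you spell out (the objectives of \eqref{eq:CdNormRec} and \eqref{eq:CTrNormRec} never see $y$; on the trace-preserving feasible set the CPT-fit residual is only shifted by a $T$-independent constant and multiplied by a positive scalar) are the right ones. The gap is in the final bookkeeping. After your rescaling, the normalized theorems give a bound proportional to $\tilde\eta/\TwoNormn{\tilde A_0}$ with $\tilde A_0 = A_0^{\mathrm{tl}}/\snormn{A_0^{\mathrm{tl}}}$ and $\tilde\eta=\eta/\snormn{A_0^{\mathrm{tl}}}$; substituting both, the two factors of $\snormn{A_0^{\mathrm{tl}}}$ cancel and the reduction actually delivers
\begin{equation*}
\TwoNormn{J(T_0-T^{\rec})}\;\leq\;\tilde c\,\frac{n^2}{\sqrt{m}\,\TwoNormn{A_0^{\mathrm{tl}}}}\,\eta\,,
\end{equation*}
i.e.\ only the replacement $\TwoNormn{A_0}\to\TwoNormn{A_0^{\mathrm{tl}}}$ survives. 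Your claim that the bound simultaneously acquires $\TwoNormn{A_0^{\mathrm{tl}}}$ in the denominator \emph{and} a rescaled $\eta/\snormn{A_0^{\mathrm{tl}}}$ in the numerator double-counts the normalization.

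The patch you then propose is also not available: to convert the (nonexistent, but even if present) factor $1/\snormn{A_0^{\mathrm{tl}}}$ into $c/\snormn{A_0}$ inside an upper bound you would need $\snormn{A_0}\leq c\,\snormn{A_0^{\mathrm{tl}}}$, which is the reverse of the triangle inequality you invoke and fails badly for observables close to a multiple of the identity (e.g.\ $A_0=\1+\epsilon\sigma_z$ has $\snormn{A_0}\approx 1$ while $\snormn{A_0^{\mathrm{tl}}}=\epsilon$). The corollary's literal form, with the extra $1/\snormn{A_0}$, is the one recorded in the general Theorems~\ref{thm:TrNorm}--\ref{thm:TrNorm2}, and it arises there because Observation~\ref{obs:normalizing} rescales by $\snormn{A_0}$ itself, so that $\eta/\snormn{A_0}$ enters directly as the rescaled noise level of that reduction; it is not something you can recover afterwards from a bound phrased in terms of $\snormn{A_0^{\mathrm{tl}}}$ via a norm comparison. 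Note also that your clean output $\eta/\TwoNormn{A_0^{\mathrm{tl}}}$ coincides with the corollary's two substitutions only when $\snormn{A_0}=1$, so the discrepancy is not merely a constant.
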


\myparagraph{Remarks}
\begin{compactenum}[i)]
\item 
	Theorem~\ref{thm:robustness} guarantees approximate reconstructions for any quantum channel $T_0 \in \CPT(\CC^n)$ without requiring an a priori rank constraint. 
	The deviation ${T_0}\c$ from a linear map of low Kraus-type rank, i.e., the model mismatch, enters the error bound linearly. 
	Note that we measure the model mismatch w.r.t.\ all rank-$r$ linear maps, rather than w.r.t.\ only the rank-$r$ quantum channels. This gives a more favorable error bound. 
	Such a robustness is desirable for practical applications where the model assumption of low Kraus rank is typically only approximately true. 

\item 
	For the constrained norm minimizations~\eqref{eq:CdNormRec} and~\eqref{eq:CTrNormRec}, a prior error threshold is required. 
	This additional model selection task is important in actual applications and can be a non-trivial task; see 
	c.f., Ref.\ \cite{SteRioCut16}.
	The CPT-fit \eqref{eq:CPTRec} has the distinct advantage that the reconstruction can be done without such a prior noise estimation.

\item 
  Theorems~\ref{thm:TrNorm} and~\ref{thm:CPTfit} quantify the reconstruction error in terms of Schatten $p$-norms. 
  One may obtain diamond norm estimates from these results via the well-known inequalities between the diamond norm and the Schatten norms (c.f.\ e.g.\ Ref.~\cite{MicKliKue18}).
  A direct derivation of diamond norm bounds does not seem to be achievable with the proof methods we employ, which rely heavily on properties of the various Schatten norms.

\item 
  We emphasize that the dimensional scaling of the error bound in Eq~\eqref{eq:FitError} is a consequence of the normalization we adopted. 
  To make this precise, it is useful to view these statements as results about stably reconstructing (particular) Choi matrices $J(T) \in \L(\CC^n \otimes \CC^n)\cong \L ( \CC^{n^2})$. Rewriting the single expectation values \eqref{eq:MeasurementModel} as
\begin{equation}\label{eq:MeasurementModelMatrix}
y_i = \Tr \Bigl[ A_i \otimes \left( \ketbra{\psi_i}{\psi_i} \right)^T J (T) \Bigr] + e_i 
\end{equation}
reveals that all individual measurement matrices $M_i = A_i \otimes \left( \ketbra {\psi_i} {\psi_i} \right)^T \in \L ( \CC^{n^2})$ have constant Frobenius norm
\begin{equation}
\TwoNorm{M_i} = 
\TwoNorm{ A_i \otimes \left( \ketbra{\psi_i}{\psi_i} \right)^T } = \TwoNorm{A_0} \, .
\end{equation}
Choosing a particular normalization influences the signal-to-noise ratio (SNR) in \eqref{eq:MeasurementModelMatrix} and, in turn, the stability assertions. 
In order to avoid these ambiguities, it is useful to rewrite Eq.~\eqref{eq:FitError} for vanishing model mismatch as
\begin{equation} \label{eq:stability_renormalized}
\TwoNorm{ J(T_0-T^{\ast c}_\eta) }
\leq 
\tilde{c} \frac{\sqrt{m} \, d }{\sum_{i=1}^m \TwoNorm{M_i} } \| e \|_{\ell_2}
\end{equation}
with $d \coloneqq \dim\bigl( \CC^{n^2} \bigr)$.
Note that the sum scales like $m$. 
Up to our knowledge, all stable low-rank matrix reconstruction guarantees are essentially\footnote{up to log-factors} of this form!
This in particular includes Gaussian measurement ensembles \cite[Theorem 2.3]{CanPla11}, random Pauli matrix measurements \cite[Proposition 2.3]{Liu11} and outer products of standard Gaussian vectors \cite[Theorem 2]{KueRauTer15}.

\item Regarding sample complexity, an order of $r n^5 \ln (n)/\epsilon^2$ independent channel evaluations (``samples'') are required to achieve a reconstruction error of at most $\epsilon$ in Frobenius norm; see Section~\ref{sec:SampleComplexity} below.
We expect this scaling to be close to optimal up to $\ln (n)$-factors (at least for channels with very low Kraus rank $r$).
Evidence for this is provided by the discussion above: the stability guarantees in Theorem~\ref{thm:dNormStability} and Theorem~\ref{thm:robustness} essentially match the best existing results on stable low-rank matrix reconstruction.
Among these are two that are applicable to the related problem of quantum state tomography: random Pauli measurements \cite{Liu11} and outer products of standard Gaussian vectors \cite{KueRauTer15}.

The sample complexity of the former approach has been determined in Ref.\ \cite{FlaGroLiu12}. Moreover, it has been shown to be close to optimal in the sense that it reproduces a fundamental lower bound -- valid for any tomographic procedure based on Pauli measurements -- up to a single $\ln (n)$-factor.

Fundamental lower bounds on the sample complexity achievable by any tomographic procedure have been derived in Ref.\ \cite{HaaHarJi15}. Said work also determines the sample complexity associated with measuring outer products of standard Gaussian vectors.
Interestingly, this sample complexity matches the fundamental lower bound up to a factor of $r \ln (n)$, where $r$ denotes the rank of the density operator in question. Thus, state tomography via low-rank matrix recovery from outer products of Gaussian vectors is close to optimal, at least for states that have very low rank.

We expect that similar results are true for quantum process tomography via low-rank matrix reconstruction of Choi matrices. 
However, while conceptually similar, the results regarding sample complexities of state tomography procedures \cite{FlaGroLiu12,HaaHarJi15} are not directly applicable to process tomography. 
We intend to address such an extension in future work.
\end{compactenum}

\subsubsection{Numerical example: Reconstructing the Toffoli gate}
\label{sec:ToffoliNumerics}
The Toffoli gate is a three qubit gate that has drawn a lot of attention from theorists as well as experimentalists. 
It is universal for classical computation 
and, together with the Hadamard gate, also for quantum computation 
\cite{Shi02}. 
Moreover, it has played an important role in the theory of gate sets \cite{BarBenCle95} and
can help to significantly reduce the number of gates in quantum algorithms, 
such as in quantum error correction. 
It also has been implemented experimentally in 
nuclear magnetic resonance \cite{CorPriMaa98},
linear optics \cite{LanBarAlm09},
trapped ions \cite{MonKimHan09}, and in 
superconducting circuits \cite{FedSteBau12}.
Moreover, the reconstruction using the CPT-fit \eqref{eq:CPTRec} has already been compared to full tomography \cite{RodVeiBar14}. 
Hence, the Toffoli gate is a good candidate to benchmark our process tomography schemes.

We demonstrate uniform recovery, i.e., 
we draw a fixed number of measurement settings at random and keep them fixed throughout the simulation in Figure~\ref{fig:noisy}. 
Then, we always use the first $m$ of them for reconstructions with $m$ settings. 

Since reconstruction via the unconstrained trace norm minimization \eqref{eq:TrNormRec} is numerically inferior to the other reconstructions (Figure~\ref{fig:Noiseless}) we will not investigate it in all simulations. 

There are two types of error sources: 
(i) imperfect measurements give rise to measurement noise $e \in \RR^m$ in our model \eqref{eq:MeasurementModel} and
(ii) the implemented channel $T_0 \in \CPT(\CC^n)$ could have violated the model assumption of a low Kraus rank.
Here, we confirm our analytic stability result towards both error sources numerically.

Our theorems put minimal assumptions on the potential noise corruption $e\in \RR^m$. 
In particular, it does not need to follow a specific statistical model. 
For our numerical analysis, however, we draw the measurement noise $e$ i.i.d.\ from a zero-mean Gaussian distribution and scale it so that the noise strength $\lTwoNorm{e}$ has the desired value (Figure~\ref{fig:noisy}, left). 
A Gaussian error model frequently occurs in practice:
When estimating expectation values from observed frequencies such an error model arises naturally in the limit of many measurements per setting. 
In practice, 
the parameter $\eta$ needs to be estimated for the reconstructions \eqref{eq:TrNormRec}, \eqref{eq:CTrNormRec}, \eqref{eq:CdNormRec} and \eqref{eq:CdNormRec}. 
Here, one could take the smallest $\eta$ so that the reconstructions succeeds. 
The CPT-fit \eqref{eq:CPTRec} has the advantage of not requiring such an estimation.

We reconstruct $T_\Toff$ from $m=320$ \emph{noisy} measurements with different values of $\lTwoNorm{e}$ without model mismatch ($\lambda = 0$), see Figure~\ref{fig:noisy}(left). 
For the trace and diamond norm reconstructions~\eqref{eq:CTrNormRec}, \eqref{eq:dNormRec}, and~\eqref{eq:CdNormRec} we set the error parameter $\eta = \lTwoNorm{e} + 10\, \mathrm{eps}$, where 
$\mathrm{eps}=2^{-52}\approx 2 \cdot 10^{-16}$ is the machine precision. 
As predicted by Theorems~\ref{thm:dNormStability} and~\ref{thm:robustness}, 
the reconstruction error $\TwoNorm{J(T^\rec - T_0)}$ 
scales linearly in the noise strength $\lTwoNorm{e}$, see Figure~\ref{sfig:noisy:b}. 
Here, the CPT-fit \eqref{eq:CPTRec} has the smallest reconstruction error. 
If one further increases the number of measurements, then the reconstruction error $\TwoNorm{J(T^\rec - T_0)}$ decreases further, as guaranteed by Theorems~\ref{thm:dNormStability} and~\ref{thm:robustness}.

In order to demonstrate robustness of our reconstructions, we set $T_0$ to be a convex combination of the Toffoli gate $T_\Toff$ and a completely depolarizing channel $T_\dep (\rho) = n^{-1}{\Tr (\rho)}\1$, 
\begin{equation}\label{eq:toffoli_depolarizing}
T_0 = (1-\lambda) T_\Toff + \lambda\, T_\dep \, , 
\end{equation}
where $\lambda \in [0,1]$.
The depolarizing channel corresponds to a physically relevant error model that maximally violates our model assumption of low Kraus rank; see, e.g., Ref.\  \cite[Chapter 8.3.4]{NieChu10}. 

We test the reconstruction of $T_0$ for different values of $\lambda \in [0,1]$.
For the sake of clarity, we completely suppress additive noise ($e=0$) and set the error threshold $\eta=10\, \eps$.
The results are presented in Figure~\ref{fig:noisy}(right) and demonstrate the robustness guaranteed by Theorem~\ref{thm:robustness}. 
The reconstruction error $\TwoNorm{J(T^\rec - T_0)}$ 
depends roughly linearly $\lambda$. 
Here, the diamond norm minimizations \eqref{eq:dNormRec} and \eqref{eq:CdNormRec}, perform worse than the constrained trace norm minimization \eqref{eq:CTrNormRec} and the CPT-fit \eqref{eq:CPTRec}.

In the case of measurement noise, the reconstruction error $\TwoNorm{J(T^\rec - T_0)}$ approaches the optimal value $\lTwoNorm{e}$ in the limit of large $m$, see Figure~\ref{sfig:noisy:a}(left).
In the case of a model mismatch, the reconstruction error decreases below the mismatch parameter $\lambda$ and vanishes if $m$ approaches the full dimension of $\CPT(\CC^n)$, see Figure~\ref{sfig:noisy:a}(right). 

We find it worthwhile to share one interesting numerical observation on the unconstrained diamond norm reconstruction~\eqref{eq:dNormRec}. 
For this purpose we assume that $T_0$ is a quantum channel, i.e. $T_0 \in \CPT(\CC^n)$, so that $\dnorm{T_0}=1$. 
Then the optimal value $\dnorm{T^\rec}$ of the reconstruction seems to decrease with the reconstruction error 
$\TwoNorm{J(T^\rec - T_0)}$, see Figure~\ref{sfig:noisy:c}. 
Hence, this reconstruction procedure does not only yield good approximations to the measured channel $T_0$, but its optimal value also provides some indication of the reconstruction error's size.

One can exploit this observation by using both the CPT-fit \eqref{eq:CPTRec} and the unconstrained diamond norm reconstruction \eqref{eq:dNormRec}.
The CPT-fit is the fastest reconstruction procedure and yields the smallest error. Complementing this, the diamond norm minimization provides an indication of what the error might be. 
In the CPT-fit, the reconstructed map $T^\rec$ is always a quantum channel, i.e., $T^\rec \in \CPT(\CC^n)$. 
In contrast, the solution of the diamond norm minimization can, in principle, be any map in $\LL(\CC^n)$. 
The latter also holds true for the unconstrained trace norm reconstruction \eqref{eq:TrNormRec}, but we could not observe a similar feature of its minimum value.

The error bounds from Theorem~\ref{thm:robustness} suggest that observables with larger Frobenius norm have a better noise suppression, as $\TwoNorm{A_0}$ appears in the denominator in the error bound~\eqref{eq:FitError}. 
We also tested this behaviour numerically in order to demonstrate that it is not just a proof artifact, but an actual feature.
We choose $A_0$ to have $r_A$ many non-zero eigenvalues, which we evenly distribute in the interval $[-1,1]$. 
These $A_0$ have a Frobenius norm in the interval $[1,2]$. 
Figure~\ref{fig:rankA} shows the average reconstruction error of the Toffoli gate for non-uniform measurements in dependence of $\TwoNorm{A_0}$ and noise strength $\lTwoNorm{e}=0.1$. 
This numerical analysis demonstrates that the reconstruction error can indeed be reduced with increasing $\TwoNorm{A_0}$. 

\clearpage
\newgeometry{margin = 2.2cm}
	\hspace{-.8cm}
	\begin{minipage}{\linewidth}
	\captionsetup{type=figure}
	\def\mywidth{.5}
	\small
	\stepcounter{figure}
	\begin{minipage}{\linewidth}
	\begin{tabular}{p{\mywidth\linewidth} p{\mywidth\linewidth}}
		\hspace{-.5cm}
		\includegraphics[width = 1.1\linewidth]{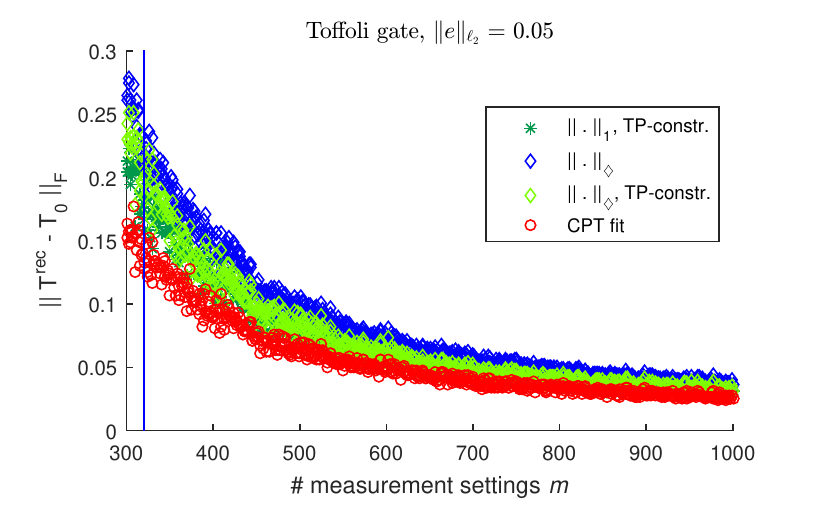}
		&
		\hspace{-.5cm}
		\includegraphics[width = 1.1\linewidth]{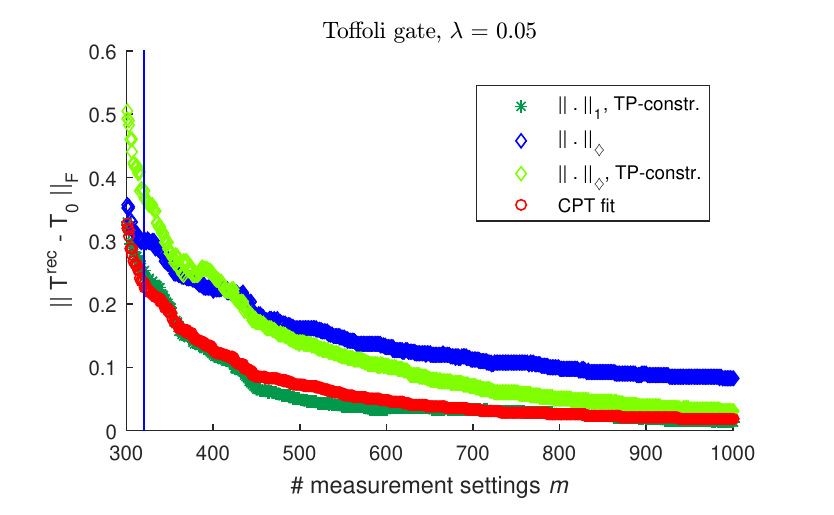}		
	\end{tabular}
	\\[-.3em]
	\mySubfloatCaption{%
		\label{sfig:noisy:a}
		The reconstruction error over the number of measurement settings $m$ for fixed measurement noise $\lTwoNorm{e} = 0.05$ (left) and fixed model mismatch $\lambda = 0.05$ (right), respectively, in a uniform recovery setting. 
		}
	\\[.8em]
	\begin{tabular}{p{\mywidth\linewidth} p{\mywidth\linewidth}}
		\hspace{-.5cm}
		\includegraphics[width = 1.1\linewidth]{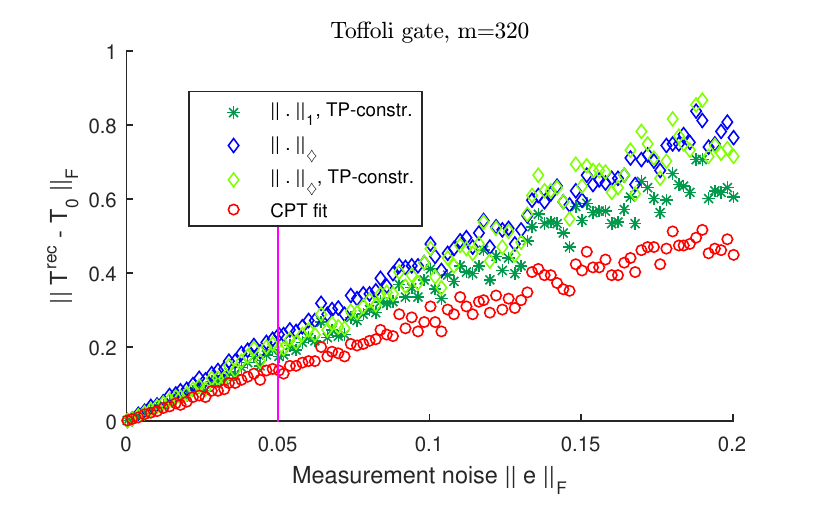}
		&
		\hspace{-.5cm}
		\includegraphics[width = 1.1\linewidth]{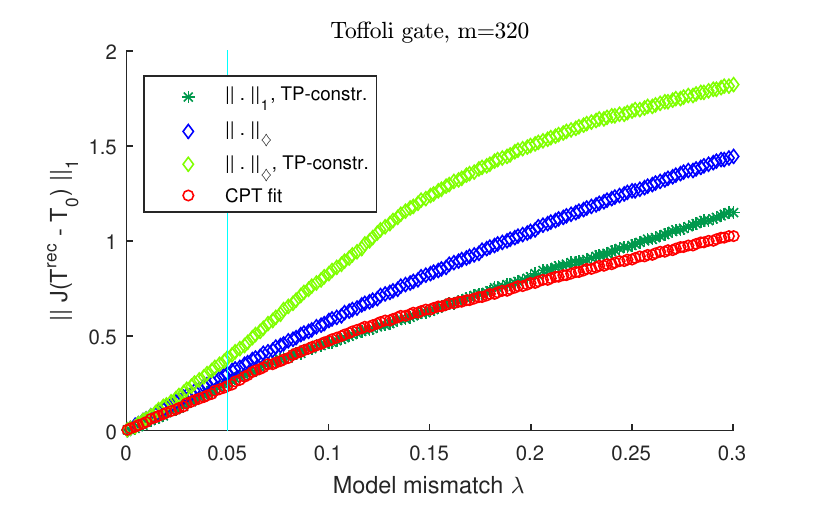}
	\end{tabular}
	\\[-.3em]
	\mySubfloatCaption{%
		\label{sfig:noisy:b}
		The reconstruction error over the measurement noise $\lTwoNorm{e}$ and model mismatch $\lambda$, respectively, for $m=320$ fixed measurement settings. 
		}
	\\[.8em]
	\noindent\hspace{-.05\linewidth}
	\begin{tabular}{p{\mywidth\linewidth} p{\mywidth\linewidth}}
		\hspace{-.5cm}
		\includegraphics[width = 1.1\linewidth]{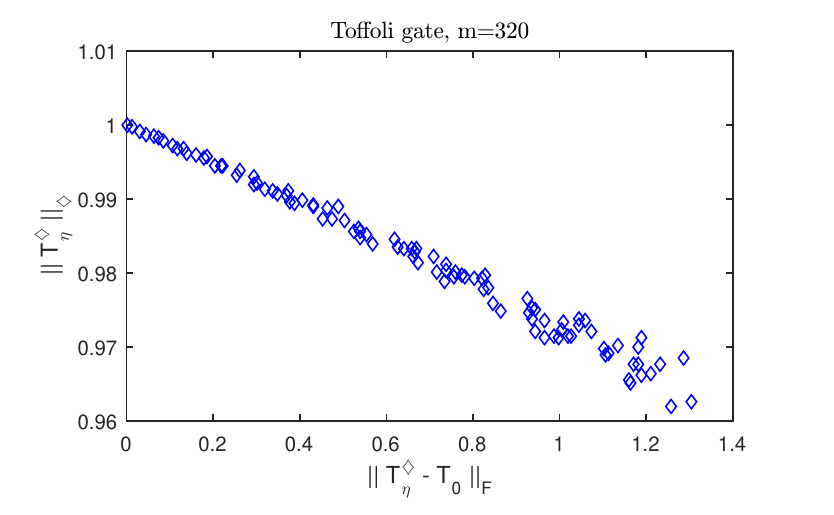}		
		&
		\hspace{-.5cm}
		\includegraphics[width = 1.1\linewidth]{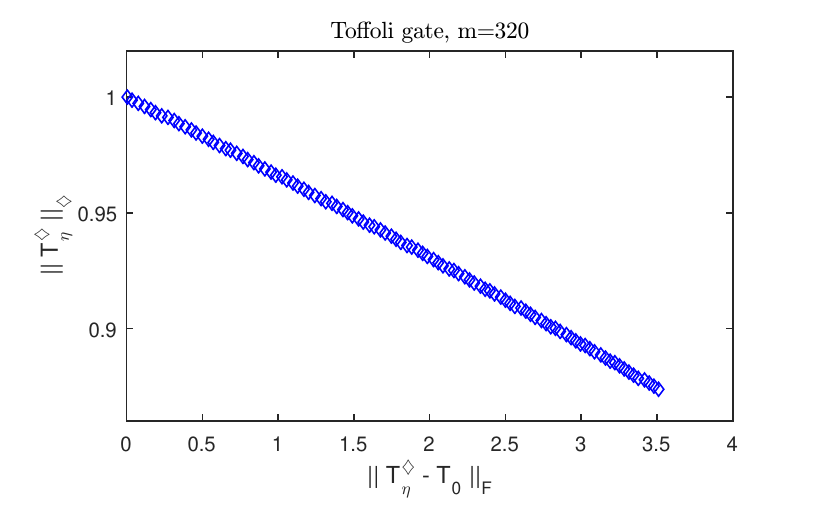}		
	\end{tabular}
	\\[-.3em]
	\mySubfloatCaption{%
		\label{sfig:noisy:c}%
 		The optimal value $\dnorm{T^\diamond_\eta}$ from the minimization \eqref{eq:dNormRec} over the reconstruction error $\TwoNorm{J(T^\diamond_\eta-T_0)}$ achieved by the unconstrained diamond norm minimization \eqref{eq:dNormRec} used in (b).
		}
	\\[.8em]
	\end{minipage}
	\\
	\addtocounter{figure}{-1}
	\normalsize
	\captionof{figure}{%
		\label{fig:noisy}
		Uniform recovery of the three qubit Toffoli gate $T_0 \in \CPT(\CC^8)$ in imperfect settings. 
		In the perfect setting $m=320$ measurement settings are sufficient for reconstruction w.h.p.\ while the total dimension is $\dim(\CPT(\CC^8)) = 4032$. 
		\newline
		\capstr{Left}: 
		Reconstruction of $T_0=T_\Toff$ from $y = \A(T_0)+e$ and $\A$ with measurement noise $e\in \RR^m$ being drawn uniformly from a scaled sphere and without model mismatch ($\lambda =0$). 
		The parameter $\eta$ is chosen to be $10$ times machine precision plus the chosen noise strength~$\lTwoNorm{e}$. 
		\newline
		\capstr{Right}:
		Reconstruction of $T_0$ from $y = \A(T)$ and $\A$ with
		$T_0 = (1-\lambda)\,  T_{\Toff} + \lambda \, T_{\mathrm{dep}}$, where the model mismatch 
		is caused by the completely depolarizing channel $T_{\mathrm{dep}}$.
	}
	\end{minipage}
\restoregeometry
\clearpage

\begin{figure}
	\centering
	\includegraphics[width = .65\linewidth]{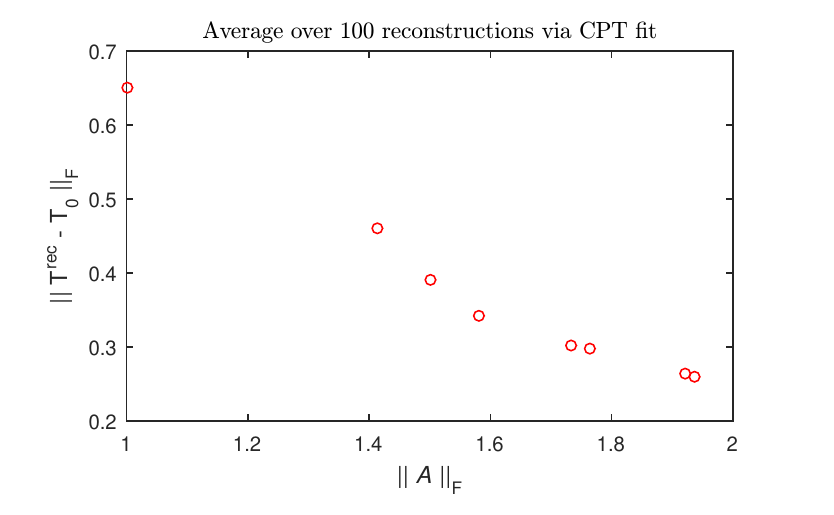}%
	\caption{The plot shows the average reconstruction error $\TrNorm{J(T^{\ell_2} - T_0)}$ for observables with different rank. 
	The observables' non-zero eigenvalues cover evenly the interval $[-1,1]$, giving rise to a Frobenius norm increasing monotonically with the rank. 
	Plotted are CPT-fit \eqref{eq:CPTRec} reconstructions of the Toffoli gate from $m=320$ i.i.d.\ measurements. 
	The noise is scaled to $\lTwoNorm{e} = 0.1$. 
	\newline 
	The plot highlights advantageous stability properties of non-degenerate -- and thus high rank -- observables. 
	}\label{fig:rankA}
\end{figure}

\subsection{Pauli measurements}
\label{sec:OtherMeasurements}
Our recovery guarantees hold for 4-generic measurements. 
However, these measurements can be challenging to implement in many experimental situations. 
So, how do our recovery schemes perform for more more restricted measurement ensembles? 
In this section we numerically investigate two practically relevant measurement scenario of Pauli-type measurements. 

For quantum state tomography \cite{FlaGroLiu12}, 
Pauli measurements are proven to satisfy the so-called \emph{restricted isometry property} for rank-$r$ matrices in $\L(\CC^d)$ for a number of measurement settings scaling as 
$r\, d\, \log(d)^6$ \cite{Liu11}. 

Motivated by this strong statement, we numerically investigate process measurements that are inspired by a Pauli setting.
We denote the set of Pauli strings by $\mc P \subset \U(2^L)$, i.e., the set of operators  
$P = \sigma^{(1)}\otimes \sigma^{(2)} \otimes \dots \otimes \sigma^{(L)}$ with Pauli matrices
$\sigma^{(i)} \in \{\1, \sigma_x, \sigma_y, \sigma_z\}$ for $i \in [L]$. 
We write $P \sim \mc P$ for a Pauli string that is drawn uniformly at random from $\mc P$. 
Then we choose the measurements as 
\begin{equation} \label{eq:PauliMeasurement}
	y_j \coloneqq \Tr[P_j T_0(\ketbra{\psi_j}{\psi_j})] \, ,
\end{equation}
where observables $P_j$ and input states $\ket{\psi_j}$ are i.i.d.\ selected as follows. 
Each $P_j \sim \mc P$ is a uniformly drawn Pauli string 
and each state vector $\ket \psi_j$ is a tensor product of uniformly i.i.d.\ drawn eigenvectors of random Pauli operators $\{\sigma_x, \sigma_y, \sigma_z\}$ (hence, an eigenvector of the corresponding Pauli string). 

Numerically, we observe that for random unitary quantum channels our reconstructions perform very similar for these Pauli measurements and the generic (Haar-random) measurements (not shown in the plots). 
However, for non-generic channels we observe that the two types of measurements lead to different reconstruction behaviours, see Figure~\ref{fig:Pauli}:
For the reconstruction of the Toffoli gate, more Pauli-type-measurements than generic measurements are required in the case of unconstrained trace norm regularization~\eqref{eq:TrNormRec}. 
In contrast, fewer Pauli-measurements are required for the other regularizations. 

We have also observed that reconstructions from Pauli-measurements have similar stability properties as the ones in the generic case (not shown in the plots). 

The identity quantum channel $T_0=\id$ is an extreme case in the sense that it is sparse in any basis and commutes with all transformations. 
For generic measurements we have observed that it has the same reconstruction behaviour as Haar-random unitary $T$ channels. 
In comparison, in case of Pauli measurements, the unconstrained trace norm reconstruction works worse and the other reconstructions better (not shown in the plots). 

\begin{figure}
\newgeometry{margin = 2cm}
	\def\mywidth{.5}
	\hspace{-2cm}
	\begin{tabular}{p{\mywidth\linewidth} p{\mywidth\linewidth}}
		\includegraphics[width = 1.1\linewidth]{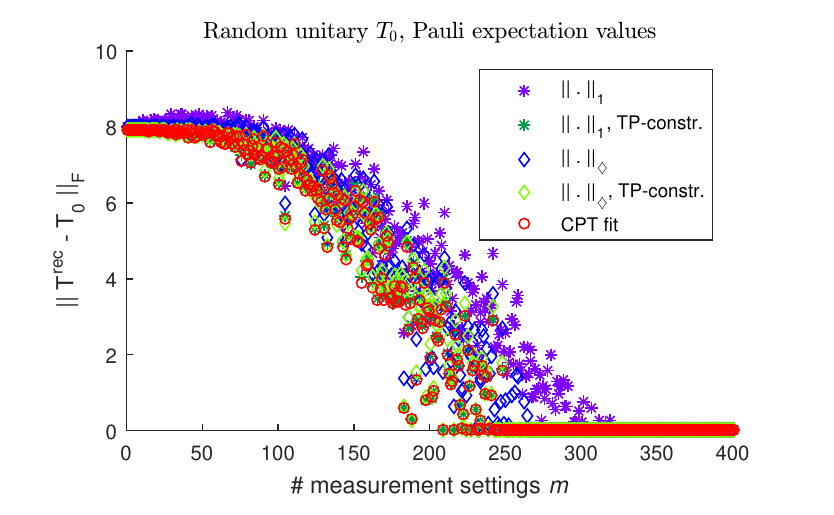}
		&  
		\includegraphics[width = 1.1\linewidth]{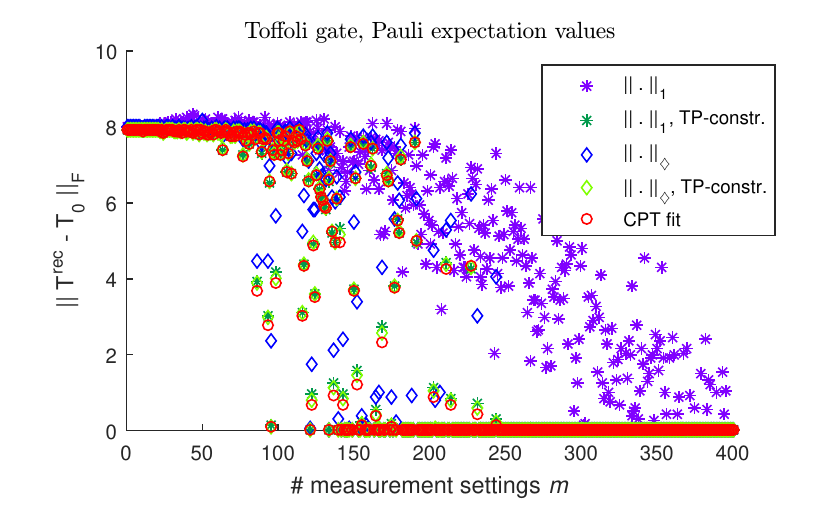}
	\end{tabular} 
\restoregeometry
	\caption{
		Pauli-type measurements: 
		Comparison of the reconstruction errors of generic (Haar-random) unitary quantum channels (left) and the Toffoli gate (right). 
		}\label{fig:Pauli}
\end{figure}

\subsection{Sample complexity}
\label{sec:SampleComplexity}
The expectation values in our measurements need to be estimated from finite samples leading to a reconstruction error $e$. 
Assume we want to estimate the measured channel $T_0 \in \CPT(\CC^n)$ in Frobenius norm up to an error 
$\lTwoNorm{e} \leq \epsilon$. 
Then the \emph{sample complexity} is the scaling of the optimal number of measurements in the ideal setting, i.e., without model mismatch or measurement errors. 

We use the Landau symbols $\LandauO$ and $\LandauOmega$ to denote the usual asymptotic upper and lower bounds, respectively. 
The Landau symbol $\tLandauO$ denotes the same scaling as $\LandauO$ up to log-factors. 

The error of an empirical mean of  
$\Tr[A_i T_0(\ketbra{\psi_i}{\psi_i})]$ form $\ell$ samples scales as $\LandauO(1/\sqrt{\ell})$ w.h.p.\ 
(due to, e.g., Chebyshev's inequality). 
This gives rise to an error vector $e$ bounded as 
\begin{equation}
\lTwoNorm{e}\in \LandauO(\sqrt{m/\ell}) \, .
\end{equation}
Hence, the total estimation error bounded as 
\begin{equation}
\epsilon \leq \frac{n^2\lTwoNorm{e}}{\sqrt{m} \TwoNormb{A_0}}
\end{equation} 
is small w.h.p.\ for $\ell\geq \ell_0$ with some $\ell_0$ being bounded as 
$\ell_0 \in \LandauO\Bigl(\frac{n^4}{\TwoNormn{ A_0}^2} \epsilon^2 \Bigr)$. 
For 
$\TwoNormb{A_0} \in \LandauOmega(\sqrt n)$ 
this yields a total sample complexity scaling as 
$\tLandauO(r n^2) \LandauO(n^3/\epsilon^2) = \tLandauO(rn^5/\epsilon^2)$.

\subsection{Applications to fault tolerant quantum computation}
The techniques presented here also have applications for fault tolerant quantum computation. 
Threshold theorems \cite{Kit97,KniLafZur98,AhaBen97} give a theoretical guarantee that quantum computers can be built in principle if the noise strength is below some threshold value. 
The strength of the errors needs to be quantified in diamond norm distance which is not directly accessible. Instead one typically evaluates average error rates using direct fidelity estimation \cite{FlaLiu11}, or randomized benchmarking, see, e.g., Ref.~\cite{EmeAliZyc05}.
However, these two error measures can differ by orders of magnitude \cite{WalFla14}. This, in particular, is the case for coherent error sources, such as unitary over/under-rotations \cite{KueLonDoh15}, where the diamond distance is proportional to the square root of the average error rate.
Thus, achieving fault-tolerance thresholds in the presence of unitary noise requires an exceedingly high error control (around $10^{-8}$ for typical threshold levels of a few times $10^{-4}$).
In contrast, other noise sources typically imply a much more favourable (linear) relation between both error measures.

From a practical perspective, these results are encouraging: Unlike incoherent noise, coherent noise effects can typically be corrected. 
A necessary subroutine for achieving this goal is to accurately estimate these error channels.
Our results considerably simplify this task, in particular for unitary errors which have Kraus rank one. They provide estimation techniques that require considerably fewer measurements than traditional process tomography protocols.

\section{Analytical details and proofs}
Our analytical results build on by now well-established mathematical proof techniques for low-rank matrix reconstruction \cite{Tro15,KueRauTer15,KabKueRau15}. 
The main technical contribution of this work is to extend these techniques to natural  measurements whose structure deviates from less structured measurement matrices common in low-rank matrix reconstruction.

Starting point of our analysis is a geometric approach to low-rank matrix recovery presented in Ref.\ \cite{Tro15}. It relates the reconstruction error from a constrained trace norm minimization to 
a certain quantity associated with the measurement map $\A$: the \emph{minimum conic singular value}; see Definition~\ref{def:conic_SV} below.
We bound this quantity by invoking Mendelson's small ball method \cite{Men14,KolMen14} -- a strong probabilistic tool that depends on certain concentration properties of the measurement ensemble. 
For 4-generic measurements these are derived using representation theory of the unitary group and general bounds on tensor network contractions besides probabilistic bounds commonly used on low-rank matrix reconstruction. 

This geometric analysis results in a reconstruction guarantee for the constrained trace norm minimization \eqref{eq:CTrNormRec} that is stable towards additive noise corruption. 
In turn, the geometric arguments provided in Ref.~\cite{KliKueEis16} suggest a strengthening of the obtained error bounds, if one replaces the trace norm by the diamond norm. 
Theorem~\ref{thm:dNormStability} -- or, more generally: Theorem~\ref{thm:TrNorm2} and Corollary~\ref{thm:dNorm} -- are consequences of such an approach. 

Our second main result -- Theorem~\ref{thm:robustness} -- assures stability towards noise corruption as well as robustness with respect to violating the model assumption of low Kraus rank.
This further improvement is achieved via establishing a robust version of the \emph{null space property} for 4-generic measurements. We refer to Section~\ref{sub:nsp} for a brief introduction.
The technical prerequisites for such an approach are largely identical to the ones associated with the more geometric framework outlined above. Mendelson's small ball method, in particular, is again applicable. 
Hence, relatively little additional effort is required for this approach which has the added benefit of ensuring robustness towards model mismatches. 

The remainder of this section is organized as follows:
After some preliminaries, we prove a bound on the minimum conic singular value for the case of our measurement setting in Section~\ref{sec:OurLambdaMin}. 
The proof used a general bound on tensor networks, which we state and prove in Section~\ref{sec:TNbound}. 
In Section~\ref{sec:ProofsThms}, we state and prove general versions of our main theorems. 
Finally, in Section~\ref{sec:LinarOptics}, we show that our results also hold for a quantum linear optical setting.

\subsection{Preliminaries}
Before we come the to proofs we 
introduce some helpful notation, 
discuss the minimum conic singular value and a useful bound to it, 
introduce the null space property and a subsequent recovery guarantee, and
explain the use of representation theory for bounding certain moments.

\subsubsection{Notation}
A \emph{vectorization} of a tensor is a vector containing all its elements. 
The \emph{Frobenius norm} $\fnorm{T}$ of a tensor $T$ is defined to be the $\ell_2$-norm of some vectorization of $T$. 
Importantly, for an operator $A \in \L(\CC^n)$ it holds that $\fnorm{A}=\TwoNorm{A}$. 

The permutation group on $k$ elements is denoted by $S_k$ and the unitary group by $\U(n) \subset \L(\CC^n)$. 
Its representation on $(\CC^n)^{\otimes k}$ is denoted by $R$, so that for $\sigma \in S_k$ the representation $R(\sigma) \in U(n^k)$ acts on $(\CC^n)^{\otimes k}$ by permuting the $k$ tensor copies according to $\sigma$. 

\myparagraph{Affine spaces}
In order to deal with the constraint that the reconstructed maps are trace preserving we need the affine version of the usual reconstruction problem \cite{Tro15} in compressed sensing. 

It is helpful to extend the basic algebraic operations to sets:
E.g., we denote the Minkowski sum of subsets $S$ and $R$ of some vector space by 
$S+R \coloneqq \{s+r: s\in S, r\in R\}$. 
Similarly, we define $-R$ and $s+R$ for some $s \in S$. 

An \emph{affine space} $\V$ over the field $\RR$ is a subset of a vector space over $\RR$ such that for all $\lambda \in \RR$ and $x,y\in \V$ one has $(1-\lambda) x + \lambda y\in \V$. 
Then $\V_0 \coloneqq \V-\V$ is a vector space and $\V = x+\V_0$ for any $x \in \V$. 
The linear span $\lin (\V)$ of $\V$ is another vector space containing both $\V$ and $\V_0$.

A map $A: \V \to \W$ between affine spaces $\V$ and $\W$ is called \emph{affine} if for all $\lambda\in \RR$ and $x,y\in \V$ one has $A((1-\lambda) x + \lambda y) = (1-\lambda) Ax + \lambda Ay$.

Given an affine map $\A : \V \to \RR^m$ (such as the measurement map defined in \eqref{eq:MeasurementMap}) one can extend it linearly to $\lin (\V)$. 
This extension will also be denoted by $\A$.

\myparagraph{Maps and operators}
Any map $M \in \LL(\CC^n)$ and operators $A, \rho \in \L(\CC^n)$ satisfy the identity
\begin{equation}\label{eq:ChoiMeasurement}
	\Tr[A\, M(\rho)] 
	= 
	\Tr[A \otimes \rho^T J(M)] \, .
\end{equation}
We will use this identity to write expectation values of the type $\Tr[A T(\rho)]$  in terms of the Choi matrix $J(T)$.  

By $M\ad$ we denote the Hilbert-Schmidt adjoint of $M \in \LL(\CC^n)$ and by $
M^{\star}$ the map which obeys $M(X)^\dagger = M^\star(X^\dagger)$ for all $X \in \L(\CC^n)$. 
These two involutions satisfy $M^{\dagger\star}=M^{\star\dagger}$. 
Moreover, we use the notation
\begin{equation}
	M^{k,l} \coloneqq M^{\otimes k} \otimes M^{\star \otimes l} .
\end{equation}

We note in passing that CPT maps $T \in \CPT(\CC^n)$ have diamond norm $\dnorm{T}=1$ and Choi matrices are normalized as $\TrNorm{J(T)} = n$. 

It will also be helpful to define the set containing all Choi matrices
\begin{equation}
	J\HT(\CC^n) = \{ X \in \Herm(\CC^n \otimes \CC^n): \ \Tr_1(X) = n\,\1_n\} \, ,
\end{equation}
where $\Tr_1$ denotes the partial trace over the first tensor factor. 

\subsubsection{Normalization and centralization of the observables}
We assume the fixed observable $A_0$ to be normalized ($\snorm{A_0}=1$) and centered (traceless): $\Tr [A_0]=0$.
Such restrictions somewhat simplify the technical analysis; the following observations highlight that little generalization is lost by imposing them:

\begin{observation}[Uncentered observables]\label{obs:centering}
Fix $A_0 \in \mathrm{Herm}(\CC^n)$ and let
	\begin{equation}
		\tilde A_0 \coloneqq A_0 - \Tr[A_0] \1 / n
	\end{equation}
	be the traceless part of $A_0$. Then, the associated 4-generic measurement maps $\A$ and $\tilde{\A}$ introduced in Definition~\ref{def:measurement_ensemble} do not necessarily coincide.
However, the algorithmic reconstructions $T^\rec$ and $\tilde{T}^\rec$ of any trace-preserving map $T$ nonetheless coincide for all reconstruction procedures \eqref{eq:TrNormRec}, \eqref{eq:CTrNormRec}, \eqref{eq:dNormRec}, \eqref{eq:CdNormRec}, or~\eqref{eq:CPTRec}.
\end{observation}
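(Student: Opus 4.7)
The plan is to reduce the identity $T^\rec = \tilde T^\rec$ to a simple affine relation between the two measurement maps on the affine subspace of trace-preserving maps, and then exploit this to show that the two convex programs have the same minimizer. First, I would write $A_0 = \tilde A_0 + (\Tr[A_0]/n)\1$ and use the fact that unitary conjugation fixes the identity, so $A_i = \tilde A_i + c\1$ with $c = \Tr[A_0]/n$ for every measurement index $i$. This yields the pointwise identity
\[
\Tr[A_i T(\ketbra{\psi_i}{\psi_i})] = \Tr[\tilde A_i T(\ketbra{\psi_i}{\psi_i})] + c\, \Tr[T(\ketbra{\psi_i}{\psi_i})]
\]
for every $T \in \LL(\CC^n)$. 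Restricting to $T \in \HT(\CC^n)$ collapses the second summand to $c$, giving $\A(T) = \tilde\A(T) + c\mathbf{1}_m$ on $\HT(\CC^n)$, where $\mathbf{1}_m \in \RR^m$ is the all-ones vector.

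Next, I would apply this relation to $T = T_0$ to see that the observed data satisfy $y - \tilde y = c\mathbf{1}_m$, where $y = \A(T_0) + e$ and $\tilde y = \tilde\A(T_0) + e$ are the data vectors obtained with the two ensembles. Subtracting the two identities gives $\A(T) - y = \tilde\A(T) - \tilde y$ for every $T \in \HT(\CC^n)$. Hence the feasibility constraints $\lTwoNorm{\A(T)-y} \leq \eta$ and $\lTwoNorm{\tilde\A(T)-\tilde y} \leq \eta$ carve out identical subsets of $\HT(\CC^n)$, and the $\ell_2$-residuals themselves coincide on this subspace. For the constrained programs \eqref{eq:CTrNormRec}, \eqref{eq:CdNormRec}, and \eqref{eq:CPTRec}, whose feasible sets already lie inside $\HT(\CC^n)$ (or $\CPT(\CC^n) \subset \HT(\CC^n)$), and whose objectives $\tnormn{J(\cdot)}$, $\dnormn{\cdot}$, or $\lTwonormn{\A(\cdot)-y}$ are either independent of $A_0$ or exactly these matching residuals, the minimizers of both problems coincide.

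For the unconstrained reconstructions \eqref{eq:TrNormRec} and \eqref{eq:dNormRec}, the reduction is not immediate: a candidate $T \in \LL(\CC^n)\setminus\HT(\CC^n)$ contributes an extra term $c\,\tau(T)$ with $\tau_i(T) = \Tr[T(\ketbra{\psi_i}{\psi_i})]$, which depends nontrivially on $T$. The plan here is to argue that both minimizations may be restricted to $\HT(\CC^n)$ without loss: $T_0$ itself is feasible and trace-preserving, and any non-trace-preserving feasible point can be replaced by a trace-preserving one with equal or smaller trace/diamond norm (by a suitable trace-preserving projection and contractivity of the respective norm under such a projection). Once this reduction is established, the argument from the constrained case applies verbatim. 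The main obstacle I expect is precisely this reduction step for the unconstrained minimizations; the underlying algebraic identity is elementary and the constrained cases follow by inspection.
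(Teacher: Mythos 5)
Your argument for the constrained procedures is exactly the paper's: for trace-preserving $T$ one has $\Tr[T(\ketbra{\psi_i}{\psi_i})]=1$ and $\Tr[A_i]=\Tr[A_0]$, so $\tilde\A(T)=\A(T)-c\,(1,\dots,1)$ with $c=\Tr[A_0]/n$, hence $\A(T)-y=\tilde\A(T)-\tilde y$ on $\HT(\CC^n)$, and the programs \eqref{eq:CTrNormRec}, \eqref{eq:CdNormRec} and \eqref{eq:CPTRec} --- whose feasible sets lie in $\HT(\CC^n)$ (resp.\ $\CPT(\CC^n)$) and whose objectives either do not involve $\A$ or are precisely this residual --- are literally the same optimization problems. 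Up to that point your proposal and the paper's proof coincide, and that part is complete.

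Where you diverge is that you explicitly flag the unconstrained programs \eqref{eq:TrNormRec} and \eqref{eq:dNormRec}, whose feasible sets live in all of $\LL(\CC^n)$ and genuinely differ, because the correction term $c\,\Tr[T(\ketbra{\psi_i}{\psi_i})]$ is $T$-dependent off the trace-preserving slice. You are right that this needs an argument; the paper's proof does not give one --- it simply asserts that ``the corresponding minimizations are the same'' after establishing the identity only for trace-preserving maps. However, your proposed repair does not close the gap. It requires a map onto trace-preserving elements that simultaneously (i) does not increase $\tnorm{J(\cdot)}$ resp.\ $\dnorm{\cdot}$ and (ii) sends feasible points to feasible points. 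Neither is available: the metric projection onto the affine space of trace-preserving maps is not contractive for the nuclear or the diamond norm, and even if the objective did not increase, projecting changes the residual $\A(T)-y$, so feasibility is not preserved. Moreover, if such a reduction worked it would essentially identify the unconstrained and constrained minimizers, which sits uneasily with the paper's own numerics in Figure~\ref{fig:Noiseless}, where \eqref{eq:TrNormRec} and \eqref{eq:CTrNormRec} behave visibly differently. So: for the three constrained procedures your proof is correct and identical to the paper's; for the two unconstrained ones you have correctly located a real gap, but your sketch would not fill it --- and, to be fair, the paper's own one-line conclusion does not fill it either (harmlessly so, since the Observation is only invoked later for the constrained reconstructions).
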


\begin{proof} 
Since any $T\in \CPT(\CC^n)$ is trace preserving, we have that
\begin{align}\nonumber
\tilde \A(T)_j
&= \Tr[ \tilde A_j T(\ketbra{\psi_j}{\psi_j}) ]
\\
&= 
\Tr[ A_j T(\ketbra{\psi_j}{\psi_j}) ] - \Tr[A_j]/n 
\\ \nonumber 
&= \A(T)_j - \Tr[A_j]/n . 
\end{align}
Together with the definition of $y$ and $\tilde y$ this implies
\begin{equation}
\A(T) - y  
= 
\tilde \A(T) - \tilde y \, .
\end{equation}
But this means that the corresponding minimizations are the same. 
\end{proof}

The following observation says that the observable's spectral norm suppresses the noise. 

\begin{observation}[Unnormalized observables]\label{obs:normalizing}
Let $\A$ be a 4-generic measurement and define $\hat{\A}:=\snorm{A_0}^{-1} \A$. Then the algorithmic reconstructions associated with measurements
\begin{equation}
y = \A (T_0) + e \quad \textrm{and} \quad \hat{y} = \hat{\A}(T_0) + \frac{e}{\snorm{A_0}}
\end{equation}
coincide for all five reconstruction procedures: \eqref{eq:TrNormRec}, \eqref{eq:CTrNormRec}, \eqref{eq:dNormRec}, \eqref{eq:CdNormRec} and~\eqref{eq:CPTRec}.
\end{observation}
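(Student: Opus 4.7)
The plan is a routine rescaling argument, which I would split according to whether the reconstruction procedure has a noise parameter $\eta$ or not. The parameter-free CPT-fit \eqref{eq:CPTRec} can be treated separately from the four constrained minimizations \eqref{eq:TrNormRec}, \eqref{eq:CTrNormRec}, \eqref{eq:dNormRec}, \eqref{eq:CdNormRec}, and in both cases the key input is the same elementary identity relating the two residuals.

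The first step is to observe that, by linearity of $\A$, for every $T \in \LL(\CC^n)$ one has $\hat{\A}(T) - \hat{y} = \snorm{A_0}^{-1}(\A(T) - y)$, and therefore
\begin{equation}
\lTwoNorm{\hat{\A}(T) - \hat{y}} = \snorm{A_0}^{-1}\, \lTwoNorm{\A(T) - y}.
\end{equation}
For the CPT-fit this immediately implies that the objectives of the two minimizations differ only by the positive multiplicative constant $\snorm{A_0}^{-1}$, which leaves the argmin invariant; the feasible set $\CPT(\CC^n)$ is unaffected by the rescaling, so $\hat{T}^{\ell_2} = T^{\ell_2}$.

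For each of the four constrained minimizations, the objective (a trace or diamond norm of $J(T)$ or $T$) does not depend on the measurement normalization at all, and only the feasible set changes with the data. The constraint $\lTwoNorm{\A(T) - y} \leq \eta$ is, by the identity above, equivalent to $\lTwoNorm{\hat{\A}(T) - \hat{y}} \leq \eta/\snorm{A_0}$. Thus the reconstruction obtained from $(\A, y)$ with noise parameter $\eta$ coincides with the reconstruction obtained from $(\hat{\A}, \hat{y})$ with the rescaled parameter $\eta/\snorm{A_0}$, exactly matching the rescaling $e \mapsto e/\snorm{A_0}$ of the noise vector.

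There is no nontrivial obstacle here; the whole content of the observation is the bookkeeping identification of how the noise-tolerance parameter must be rescaled when $\snorm{A_0}$ is factored out of the measurement map. Its purpose in the subsequent analysis is to justify assuming without loss of generality that $\snorm{A_0} = 1$ when proving the stability bounds of Theorems~\ref{thm:dNormStability} and~\ref{thm:robustness}.
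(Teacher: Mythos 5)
Your proof is correct and is exactly the rescaling argument the paper intends; the paper in fact omits a proof of this observation entirely, treating it as immediate. You also correctly identify the one point of substance -- that for the four constrained minimizations the noise parameter must be rescaled to $\eta/\snorm{A_0}$ -- which is consistent with how the paper uses the observation (cf.\ the corollary on normalization of the measurements and the $\eta/\snorm{A_0}$ factors in Theorems~\ref{thm:TrNorm} and~\ref{thm:TrNorm2}).
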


\subsubsection{Minimum conic singular value}
The usual minimum conic singular value of a map $\A$ can be written variationally as the right hand side (RHS) of Eq.~\eqref{eq:lambdamin} with $K$ being the full space and $q=2$. 
In order for $\A$ to be invertible, the minimum conic singular value needs to be positive. 
Moreover, for fixed spectral norm of $\A$, the larger the minimum conic singular of $\A$ the more stable $\A$ can be inverted. 
An extension of these basic concepts can be extended to realm of convex analysis, which motivates the following Definition. 
\begin{definition}[$\ell_q$-minimum conic singular value]\label{def:conic_SV}
	Consider an affine space $\V$ where $\lin (\V)$ is equipped with a norm $\norm{\argdot}$. 
	Let $\A : \V \to \RR^m$ be an affine linear map and $K\subset \V_0$ be a cone. 
	The minimum singular value of $\A$ w.r.t.\ $K$, measured in $\ell_q$-norm with $q \geq 1$, is 
	\begin{equation}
		\lambda_{\min{}}(\A;K; \ell_q) 
		\coloneqq
		\inf_{u \in K} \frac{\lqNorm{\A u }}{\norm{u}}  \, , 
\label{eq:lambdamin}
	\end{equation}
	where $\A$ has been extended to $\lin (\V)$. 
\end{definition}

Typically, one chooses $q=2$ in order to define the minimum conic singular value \cite{Tro15}. Here, we opt for a more general definition that allows for adjusting stability guarantees towards noise to different types of noise models.

The conic singular value of our measurement map $\A$ (see Eq.~\eqref{eq:MeasurementModel}) w.r.t.\ a cone $K \subset \LL(\CC^n)$ can be written as
\begin{equation}
	\lambda_{\min{}}(\A;K; \ell_q) 
	= 
	\inf_{M \in E} \left(\sum_{i=1}^m \Tr[A_i M(\ketbra{\psi_i}{\psi_i})]^q \right)^{\frac{1}{q}}
\end{equation}
with $E = \{M \in K: \fnorm{M}=1\}$. 

We use a method established by Mendelson \cite{Men14,KolMen14} in order to bound the minimum conic singular value. 
As suggested in Ref.~\cite[Remark 5.1]{KabKueRau15}, 
we use a generalization of Tropp's version {\cite[Proposition~5.1]{Tro15}}.

\begin{lemma}[Bound on $\lambda_{\min{}}$]
	\label{lem:Mendelson}
	Let $E\subset \HT(\CC^n)$ be a cone of maps and 
	$(\ket{\psi_i}, A_i)_{i\in [m]}$ be an i.i.d.\ measurement settings. 
	Define the \emph{marginal tail function}
	\begin{equation}\label{eq:Q_def}
		Q_{\xi} \coloneqq \inf_{M \in E} \Pr\bigl[ \bigl|\Tr[A_i M(\ketbra{\psi_i}{\psi_i})]\bigr| \geq \xi\bigr]
	\end{equation}
	(the same for all $i$) and \emph{mean empirical width}
	\begin{equation}\label{eq:W_def}
		W_m(E) \coloneqq 
		\Ev \sup_{M \in E} \kw{\sqrt m}\sum_{i=1}^m \epsilon_i \Tr[A_i M(\ketbra{\psi_i}{\psi_i})] \, ,
	\end{equation}
	where $\epsilon_i \in \{-1,1\}$ are independent uniformly random signs. 
	Then, for any $\xi>0$, $\lambda>0$, and $q\geq 1$  
	\begin{equation}\label{eq:mendelson} 
	\inf_{M \in E} \left(\sum_{i=1}^m \left| \Tr[A_i M(\ketbra{\psi_i}{\psi_i})] \right|^q \right)^{\frac{1}{q}} 
	\geq m^{\frac{1}{q}-\frac{1}{2}} \left( \xi \sqrt m Q_{2\xi} -2 W_m - \xi \lambda \right)
	\end{equation}
	with probability at least $1-e^{-\lambda^2/2}$. 
\end{lemma}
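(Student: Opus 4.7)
The plan is to follow Tropp's formulation of Mendelson's small-ball method and extend it from $q=2$ to general $q\geq 1$ via an elementary power-mean inequality. The key preliminary observation is that for any $q\geq 1$ and any vector $t\in\RR^m$, the power-mean inequality gives
\begin{equation*}
\Bigl(\sum_{i=1}^m |t_i|^q\Bigr)^{1/q} \geq m^{1/q-1} \sum_{i=1}^m |t_i|,
\end{equation*}
which is exactly what accounts for the pre-factor $m^{1/q-1/2}$ in the claimed bound once combined with the $\ell_1$-level estimate below. Applying this together with the trivial bound $|t_i|\geq \xi\,\mathbf{1}_{\{|t_i|\geq \xi\}}$ to $t_i = \Tr[A_i M(\ketbra{\psi_i}{\psi_i})]$ reduces the lemma to showing
\begin{equation*}
\inf_{M\in E} N_\xi(M) \;\geq\; m\,Q_{2\xi} - \frac{2\sqrt{m}}{\xi}\,W_m(E) - \sqrt{m}\,\lambda, \qquad N_\xi(M):=\#\{i:|t_i(M)|\geq\xi\},
\end{equation*}
with probability at least $1-\e^{-\lambda^2/2}$.

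I would establish the displayed inequality through the three standard substeps of the small-ball method. (i) Fix a $1/\xi$-Lipschitz bump $\phi_\xi:\RR\to[0,1]$ with $\phi_\xi(0)=0$ sandwiched as $\mathbf{1}_{\{|t|\geq 2\xi\}}\leq \phi_\xi(t)\leq \mathbf{1}_{\{|t|\geq \xi\}}$, so that $N_\xi(M)\geq \sum_i\phi_\xi(t_i(M))$ and $\Ev\phi_\xi(t_i(M))\geq Q_{2\xi}$ for every $M\in E$. (ii) The function $F:=\sup_{M\in E}|\sum_i(\phi_\xi(t_i(M))-\Ev\phi_\xi(t_i(M)))|$ depends on the i.i.d.\ samples $(A_i,\ket{\psi_i})$ with bounded-differences constants at most $1$ (since $\phi_\xi$ takes values in $[0,1]$); McDiarmid's inequality therefore yields $F\leq \Ev F+\sqrt{m}\,\lambda$ with probability $\geq 1-\e^{-\lambda^2/2}$. (iii) The expectation is controlled by symmetrization followed by the Rademacher contraction principle, exploiting that $\phi_\xi$ is $(1/\xi)$-Lipschitz with $\phi_\xi(0)=0$:
\begin{equation*}
\Ev F \;\leq\; 2\,\Ev \sup_{M\in E}\Bigl|\sum_i \epsilon_i\,\phi_\xi(t_i(M))\Bigr| \;\leq\; \frac{2}{\xi}\,\Ev \sup_{M\in E}\Bigl|\sum_i \epsilon_i\,t_i(M)\Bigr| \;=\; \frac{2\sqrt{m}}{\xi}\,W_m(E).
\end{equation*}

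Combining substeps (i)-(iii) gives the announced lower bound on $\inf_{M\in E}N_\xi(M)$ and, multiplied by $\xi\,m^{1/q-1}$, delivers the lemma. The main obstacle is not the reduction from general $q$ to the $\ell_1$-level bound, which is purely elementary, but the uniform control over the cone $E$ required in substeps (ii) and (iii): one must verify that the contraction principle applies to the symmetrized supremum in the desired form (it does, up to the classical factor of two) and that the bounded-differences constants propagate correctly to the supremum. Both facts are standard but need to be tracked carefully to yield the clean constants appearing in the statement.
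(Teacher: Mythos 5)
Your proposal is correct and follows essentially the same route as the paper: the $\ell_q$ bound is deduced from the $\ell_1$-level small-ball estimate via the power-mean inequality $\lqNorm{t} \geq m^{1/q-1}\norm{t}_{\ell_1}$, exactly as in the paper's proof. The only difference is that the paper obtains the $\ell_1$-level estimate by simply citing that the proof of Tropp's Proposition~5.1 already yields it (following Remark~5.1 of Kabanava et al.), whereas you unpack that argument explicitly (soft indicator, bounded differences, symmetrization and contraction); to get the clean constant $2W_m$ you should work with the one-sided supremum $\sup_{M}\sum_i(\Ev\phi_\xi - \phi_\xi)$ rather than $\sup_M\lvert\cdot\rvert$, since the contraction principle for absolute-value suprema carries an extra factor of two.
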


\begin{proof}
Following Ref.~\cite[Remark 5.1]{KabKueRau15}, we point out that the proof of Ref.~\cite[Proposition~5.1]{Tro15} actually implies the stronger statement 
\begin{equation}
	\frac{1}{\sqrt{m}}\inf_{M \in E}  \sum_{i=1}^m \left| \Tr[A_i M(\ketbra{\psi_i}{\psi_i})] \right| 
	\geq \xi \sqrt m Q_{2\xi} -2 W_m - \xi \lambda \, .
\end{equation}
Using that 
$\norm{v}_{\ell_1} \leq m^{1/q} \lqNorm{v}$ 
for any $v \in \CC^m$  results in
\eqref{eq:mendelson} 
for any $q \geq 1$. 
\end{proof}

\subsubsection{Null space property}
\label{sub:nsp}
If an operator $X$ is of rank $r$ one has 
$\TrNorm{X} \leq \sqrt{r} \TwoNorm{X}$. 
Here, we rely on the idea to take a similar inequality to define the notion of effectively rank-$r$ elements. 
This notion of effective low rank is often enough for low-rank matrix reconstruction. 
Additionally, one can take into account violations of $X$ being of low rank. 
This idea is formalized with the \emph{null space property} (NSP). 
We use a version of Kabanava et al.~\cite{KabKueRau15} adjusted to our setting of subspaces, 
which uses the rank-$r$ approximation $X\r$ to $X$ from Eq.~\eqref{eq:tail_def}. 

\begin{definition}[Robust NSP for subspaces]\label{def:NSP}
	Fix a subspace $\mc V_0 \subset \Herm(\CC^d)$,
	$r\in \ZZ_+$, and $q \geq 1$. 
	We say that a linear map $\A : \Herm(\CC^d) \to \RR^m$ 
	satisfies the \emph{($\mc V_0$, $r$, $\ell_q$)-NSP with parameters
	$\mu \in \left]0,1\right[$ and $\tau>0$}
	if for all $X \in \mc V_0$ 
	\begin{equation}\label{eq:NSP}
		\TwoNormb{X\r} \leq \frac{\mu}{\sqrt{r}} \tnorm{X\c} +\tau \norm{\A(X)}_{\ell_q}.
	\end{equation}
\end{definition}

We will use the following result from Ref.~\cite{KabKueRau15} in our setting. 
It yields recovery guarantees based on the NSP. 

\begin{theorem}
	\label{prop:NSP_recovery_guarantee}
	Let $\mc V \subset \Herm(\CC^d)$ be an affine space and set $\mc V_0 \coloneqq \mc V-\mc V$ and
	fix $p \in [1,2]$. 
	If a linear map $\mc A: \Herm(\CC^d) \to \RR^m$ satisfies a ($\mc V_0$, $r$, $\ell_q$)-NSP with parameters
	$\mu \in \left]0,1\right[$ and $\tau>0$ 
	then 
	\begin{equation}
	\pNorm{X-Y}
	\leq
	\frac{(1+\mu)^2}{ (1-\mu) r^{1-1/p}} 
		\Bigl(\TrNorm{Y} - \TrNorm{X} + 2\TrNorm{X\c} \Bigr)
	+\tau\, r^{1/p-1/2}\, \frac{3+\mu}{1-\mu} \lqNorm{\A(X-Y)} \, .
	\end{equation}
	for all $X,Y\in \mc V$. 
\end{theorem}

This is a straightforward adaptation of the proof of \cite[Theorem~3.2]{KabKueRau15}: 
there the NSP is applied twice, each time to $\TwoNormb{(X-Y)\r}$. 
By assumption $(X-Y) \in \V_0$ and the subspace NSP handles this case. 

\subsubsection{Tools from representation theory}
In this section we simplify the $k$-th moments of the random variables 
$U A U\ad$ and $\ketbra \psi \psi$, 
where $U$ and $\ket \psi$ are drawn independently from the Haar measures. 
These derivations will be used when we bound the moments of 
$\Tr[U A \,U\ad \, M(\ketbra \psi \psi)]$ for maps $M \in \LL(\CC^n)$. 

In order to compute expectation values over the unitary group of the type
\begin{equation}\label{eq:def_k-th_moments}
	E\coloneqq \Ev_{U\sim \Haar}[(UAU\ad)^{\otimes k}]
\end{equation}
we will use some basic facts from representation theory. 
Specifically, we use the decomposition from Schur Weyl duality
\begin{equation}
	(\CC^n)^{\otimes k} \cong \bigoplus_\lambda \pi^k_\lambda \otimes \rho^n_\lambda
\end{equation}
into irreducible representations (irreps) $\rho^n_\lambda$ and $\pi^k_\lambda$ of the unitary group $\U(n)$ and the symmetric group $S_k$, respectively. 
The irreps are labelled by Young diagrams $\lambda$ with $k$ boxes and at most $n$ rows. 
Since $[U^{\otimes k}, E]=0$ for all $U \in \U(d)$, a famous theorem due to Schur (see, e.g., \cite[Theorem~4.2.10]{GooWal09}) implies that $E$ can be written as 
$E = \bigoplus_\lambda \1\otimes Y_\lambda$, 
where $Y_\lambda\in \rho^n_\lambda$. 
But we also have that $[\sigma,E]=0$ for all $\sigma\in S_k$ and Schur's Lemma implies that
\begin{equation}\label{eq:decompose_U_average}
	\Ev_{U\sim \Haar}[(UAU\ad)^{\otimes k}] = \sum_\lambda a_\lambda P_\lambda \, ,
\end{equation}
where each $a_\lambda \in \CC$ and each $P_\lambda$ is the projection onto $\pi^k_\lambda \otimes \rho^n_\lambda$. 
The coefficients can be calculated as the expansion coefficients
\begin{align}
a_\lambda = \frac{\Tr[A^{\otimes k}P_\lambda]}{\Tr[P_\lambda]} \, .
\end{align}

This argument also yields that 
\begin{align}
F\coloneqq\Ev_{\ket{\psi}\sim \Haar}[\ketbra{\psi}{\psi}^{\otimes k}] 
=  \sum_\lambda b_\lambda P_\lambda \, .
\end{align}
The coefficients are
$b_\lambda \propto \Tr[P_\lambda F] $ 
and we have
$\ketbra \psi \psi^{\otimes k} P_{\Sym^k} = \ketbra \psi \psi^{\otimes k}$ 
and $P_\lambda P_{\lambda'} = 0$ for $\lambda \neq \lambda'$. 
Together with $\Tr[F]=1$ we obtain that 
\begin{align}\label{eq:Sym_k}
\Ev_{\ket{\psi}\sim \Haar}[\ketbra{\psi}{\psi}^{\otimes k}]=\frac{1}{\Tr[P_{\Sym^k}]}\, P_{\Sym^k} \, ,
\end{align}
where $P_{\Sym^k}$ is the projector onto the fully symmetric subspace in $(\CC^n)^{\otimes k}$. 

\myparagraph{First moments}
Setting $k=1$ and using that $\Tr[A^2] = \TwoNorm{A}^2$ and $\ketbra{\psi}{\psi}^2=\ketbra{\psi}{\psi}$ yields that 
\begin{equation}\label{eq:EvUAU}
	\Ev_{U\sim \Haar}[U A U^\dagger] 
	= \frac{\Tr[A]}{n}\, \1
	.
\end{equation}
and
\begin{align}\label{eq:Epsipsi}
\Ev_{\ket{\psi}\sim \Haar} [(\ketbra\psi\psi)^2] 
= \kw{n}\,\1 \, .
\end{align}

\myparagraph{Second moments}
The \emph{flip operator} $\FF \in \L(\CC^n \otimes \CC^n)$ is given by the linear extension of 
\begin{equation}
	\FF \ket{\psi}\otimes \ket{\phi} \coloneqq\ket{\phi}\otimes \ket{\psi}\, .
\end{equation}
The projector onto the symmetric subspace of $\CC^n \otimes \CC^n$ can be written as 
$P_{\Sym^2} = \kw 2 \left( \1 + \FF\right)$. 
The dimension of the fully symmetric subspace in $\CC^n \otimes \CC^n$ 
is 
\begin{equation}
\begin{aligned}
\Tr[P_{\Sym^2}]
&= 
\Tr[\1]/2 + \Tr[\FF]/2
= 
n^2/2 + n/2 
\\
&= 
n(n+1)/2 \, .
\end{aligned}
\end{equation}
Together we obtain
\begin{equation}\label{eq:psi_second_moment}
	\Ev \left[ \left( |\psi \rangle \langle \psi | \right)^{\otimes 2} \right] 
	= 
	\frac{2}{n(n+1)} P_{\Sym^2} 
	= 
	\frac{1}{n(n+1)} \left( \1 + \FF \right) \, .
\end{equation}

In order to derive the second moment of $U A U\ad$ we also need the projector onto the anti-symmetric subspace $P_{\wedge^2} = \kw 2\left( \1 - \FF\right)$. 
From Eq.~\eqref{eq:decompose_U_average} and $\Tr[\1]=n^2$ and $\Tr[\FF] = n$ we obtain
\begin{equation}
		E=  \frac{2 \Tr \bigl[ P_{\Sym^2} A \bigr]}{n(n+1)} P_{\Sym^2} 
		+ \frac{2 \Tr \bigl[ P_{\wedge^2} A \bigr])}{n(n-1)} P_{\wedge^2} \, .
\end{equation}
Evaluating the remaining traces in a basis or using tensor network diagrams yields 
	\begin{equation}\label{eq:U_second_moment}
		\Ev \left[ \left( U A U\ad \right)^{\otimes 2} \right] 
		= 
		\frac{\Tr(A)^2+\| A \|_2^2 }{n(n+1)} P_{\Sym^2} 
		+ \frac{\Tr(A)^2 - \| A \|_2^2 }{n(n-1)} P_{\wedge^2} \, .
	\end{equation}

\subsection{Our bound on the minimum conic singular value}\label{sec:OurLambdaMin}
The following theorem is the main technical result of this work. 

\begin{theorem}[Minimum conic singular value bound]\hfill
	\label{thm:lambda_min}
	Let $(A_i, \ket{\psi_i})_{i \in [m]}$ be a normalized 4-generic measurement ensemble (Definition~\ref{def:measurement_ensemble}). 
	Denote by $\A: \HT(\CC^n) \to \RR^m$ the linear map given by the components
	\begin{equation}
		\A(M)_i \coloneqq \Tr\bigl[A_i M(\ketbra{\psi_i}{\psi_i})\bigr] \, .
	\end{equation}
	Moreover, for $c_\mu>0$ denote by
	\begin{equation}\label{eq:def:K}
		K\coloneqq
		\Bigl\{M \in \HT(\CC^n)_0: \TrNorm{J(M)} \leq c_\mu \sqrt{r} \TwoNorm{J(M)} \Bigr\} ,
	\end{equation}
	the cone of trace-annihilating maps of ``effective Kraus-rank'' at most $r$. 
	Then, for the constant $c$ from Lemma~\ref{lem:our_lower_tail_bound}, for any 
	$c_\lambda \in{} ]0,c[$, $q \geq 1$
	and any 
	\begin{equation} \label{eq:m_bound}
		m > m_0 \coloneqq 125\,\e\cdot\left(\frac{c_\mu}{c-c_\lambda}\right)^2 \, r \ln(n)n(n+1)
	\end{equation}
	the $\ell_q$-minimum conic singular value of $\A$ is lower bounded as
	\begin{equation}
		\inf_{M \in K} \frac{\lqNorm{\A(M)}}{\fnorm{M}}
		\geq
		\frac{c-c_\lambda}{5}
		\TwoNorm{A}
		\frac{m^{\frac{1}{q}} \left( 1 - \sqrt{\frac{m_0}{m}} \right)}{n(n+1)} 
	\end{equation}
	with probability at least 
	$1-\e^{-c_\lambda^2 m/5}$ 
	over $\A$. 
\end{theorem}

\begin{proof}[Proof of Theorem~\ref{thm:lambda_min}]
	We choose $E$ in Lemma~\ref{lem:Mendelson} to be $E_{\nu}$ from Lemma~\ref{lem:W}. 
	Inserting the bounds from Lemmas~\ref{lem:our_lower_tail_bound} and~\ref{lem:W} into Eq.~\eqref{eq:mendelson} implies
	\begin{equation}
		\begin{split}
			\inf_{M \in E_\nu}&m^{\frac{1}{2}-\frac{1}{q}} \lqNorm{\A(M)}
			\geq \\& \qquad
			c \sqrt m\, \xi
			\left(1-\frac
			{(2\xi)^2 (n-1)\,n\,(n+1)^2} 
			{2\TwoNorm{A}^2\nu^2}
			\right)^2
			-
			2\frac{c_\mu\sqrt{6\,\e\ln(n)r}}{n} \, \nu\TwoNorm{A} 
			-\xi \lambda
		\end{split}
	\end{equation}
	with probability at least $1-\e^{-\lambda^2/2}$ over $\A$. 
	Choosing $\nu=\nu_0\coloneqq n\sqrt{n(n+1)}/\TwoNorm{A}$ 
	and using that $(n+1)(n-1)/n^2 \leq 1$ 
	yields 
	\begin{equation}
			\inf_{M \in E_{\nu_0}}
			m^{\frac{1}{2}-\frac{1}{q}} \lqNorm{\A(M)}
			\geq
			c \sqrt m \,
			\xi \left(1- 2\xi^2 \right)^2
			-
			 c_\mu \sqrt{6\,\e\ln(n)n(n+1)\,r}
			-\xi \lambda \, .
	\end{equation}
	Next, choosing $\xi = 1/\sqrt{10}$ 
	yields a maximum value greater than $1/5$ for $\xi \left(1- 2\xi^2 \right)^2$.
	This choice yields the bound
	\begin{equation}
			\inf_{M \in E_{\nu_0}}
			m^{\frac{1}{2}-\frac{1}{q}} \lqNorm{\A(M)}
			\geq 
			\frac{c}{5} \sqrt m 
			-
			c_\mu \sqrt{6\,\e\ln(n)n(n+1)\,r}
			-\frac{\lambda}{\sqrt{10}}\, .
	\end{equation}
	Furthermore, we set 
	$\lambda = \frac{c_\lambda \sqrt{10}}{5} \sqrt{m}$ 
	with some constant $0<c_\lambda<c$ to obtain
	\begin{equation}
			\inf_{M \in E_{\nu_0}}
			m^{\frac{1}{2}-\frac{1}{q}} \lqNorm{\A(M)}
			\geq 
			\frac{c-c_\lambda}{5} \sqrt m 
			-
			 c_\mu \sqrt{6\,\e\ln(n)n(n+1)\, r}
			\, .
	\end{equation}
	So, we choose 
	$m>m_0$ with 
	\begin{equation}
		\sqrt{m_0}
		\coloneqq 
		\frac{5\sqrt{6\,e}\,c_\mu}{c-c_\lambda} \sqrt{\ln(n)n(n+1)\, r}
	\end{equation}
	in order guarantee that the infimum yields a positive value. 
	This choice leads to
	\begin{align}
	\inf_{M \in E_{\nu_0}}\lqNorm{\A(M)}
	&\geq 
	\frac{c-c_\lambda}{5} m^{\frac{1}{q}} \left(1 -\sqrt{\frac{m_0}{m}} \right) 
	\, .
	\end{align}
	
	As the infimum over $E_{\nu}$ is homogeneous in $\nu$ (i.e., ``proportional'' to all $\nu\geq 0$), 
	we obtain for the cone 
	$E = \bigcup_{\nu\geq 0} E_{\nu}$ generated by $E_{\nu_0}$ that
	\begin{equation}
	\begin{aligned}
	\inf_{M \in E} \frac{\lqNorm{\A(M)}}{\fnorm{M}}
	&=
	\inf_{M \in E_{\nu_0}}\lqNorm{\A(M)} /\nu_0
	\\
	&=
	\inf_{M \in E_{\nu_0}}\lqNorm{\A(M)} \frac{\TwoNorm{A}}{n(n+1)}
	\\
	&\geq
	\frac{c-c_\lambda}{5}
	\TwoNorm{A}
	\frac{m^{\frac{1}{q}} \left( 1- \sqrt{\frac{m_0}{m}} \right)}{n(n+1)} \, ,
	\end{aligned}	
	\end{equation}
	where we have remembered the choice of $\nu_0$ form the beginning of the proof. 
	This bound holds with probability at least 
	$1-\e^{-\lambda^2/2}
	=
	1-\e^{-c_\lambda^2 m/5}$ 
	over $\A$ for any $c_\lambda \in{} ]0,c[$. 
\end{proof}

\subsubsection{Upper bound on the mean empirical width \texorpdfstring{$W_m$}{}}
We prove an upper bound on $W_m$ defined in Eq.~\eqref{eq:W_def} for $E$ being a slice of the cone of effectively rank-$r$ maps. 

\begin{lemma}[Bound on $W_m$]\label{lem:W}	
	For $r \in \ZZ^+$, $\nu>0$, and $c_\mu>0$ let
	\begin{equation}\label{eq:lem_K_def}
	K\coloneqq
	\Bigl\{M \in \HT(\CC^n)_0: \TrNorm{J(M)} \leq c_\mu \sqrt{r} \TwoNorm{J(M)} \Bigr\} 
	\end{equation}
	and
	\begin{equation}
	E_\nu \coloneqq
	\{M \in K: \fnorm{M}=\nu \} .
	\end{equation}
	For $m\geq 2n^2 \ln(n)$ let 
	$(A_i, \ket{\psi_i})_{i \in [m]}$ 
	be a normalized 4-generic measurement ensemble 
	(Definition~\ref{def:measurement_ensemble}). 
	Then the mean empirical width \eqref{eq:W_def} is bounded as
	\begin{equation}
		W_m(E_{\nu})\leq 
		\frac{c_\mu \sqrt{6\, \e \ln(n)\,r}}{n} \, \nu\TwoNorm{A} \, .
	\end{equation}
\end{lemma}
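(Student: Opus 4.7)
The plan is to reduce the Rademacher-width bound to a matrix-concentration estimate on the operator norm of a random sum of independent self-adjoint matrices, and then exploit the effective low-rank constraint encoded in the cone $K$ via Schatten-norm duality.

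First, I use the Choi identity~\eqref{eq:ChoiMeasurement} to rewrite each summand as a Hilbert--Schmidt inner product,
\begin{equation*}
	\Tr[A_i M(\ketbra{\psi_i}{\psi_i})] = \Tr\!\bigl[ (A_i \otimes \ketbra{\psi_i}{\psi_i}^T)\, J(M)\bigr],
\end{equation*}
and set $Z \coloneqq \tfrac{1}{\sqrt m}\sum_{i=1}^m \epsilon_i\, A_i \otimes \ketbra{\psi_i}{\psi_i}^T \in \Herm(\CC^{n^2})$. Since $J$ is an isometry in Frobenius norm, $\TwoNorm{J(M)} = \fnorm{M} = \nu$ for $M \in E_\nu$, and the defining constraint of $K$ in~\eqref{eq:lem_K_def} gives $\TrNorm{J(M)} \le c_\mu \sqrt{r}\, \nu$. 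H\"older's inequality for Schatten norms then yields
\begin{equation*}
	\sup_{M \in E_\nu} \frac{1}{\sqrt m}\sum_{i=1}^m \epsilon_i \Tr[A_i M(\ketbra{\psi_i}{\psi_i})]
	\;\le\; \snorm{Z}\, \sup_{M\in E_\nu}\TrNorm{J(M)}
	\;\le\; c_\mu \sqrt{r}\,\nu\,\snorm{Z},
\end{equation*}
so that $W_m(E_\nu) \le c_\mu \sqrt{r}\,\nu\,\EE \snorm{Z}$.

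Second, I control $\EE\snorm{Z}$ by matrix concentration. The summands $X_i = A_i \otimes \ketbra{\psi_i}{\psi_i}^T$ are independent self-adjoint matrices on $\CC^{n^2}$ satisfying $\snorm{X_i} = \snorm{A_0}\cdot 1 = 1$ almost surely (by the normalization hypothesis). The matrix variance is
\begin{equation*}
	v \coloneqq \Bigl\|\sum_{i=1}^m \EE[X_i^2]\Bigr\|_\infty
	= m\,\bigl\|\EE[A^2]\otimes \EE[\ketbra{\psi}{\psi}]\bigr\|_\infty,
\end{equation*}
where I used $(\ketbra{\psi}{\psi}^T)^2 = \ketbra{\psi}{\psi}^T$. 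Since $4$-designs subsume $1$-designs, the moment formulas~\eqref{eq:EvUAU} and~\eqref{eq:Epsipsi} give $\EE[A^2] = \TwoNorm{A_0}^2 \1/n$ and $\EE[\ketbra{\psi}{\psi}] = \1/n$, hence $v = m\TwoNorm{A}^2/n^2$. Symmetric matrix Bernstein (or noncommutative Khintchine) then yields $\EE\snormn{\sum_i \epsilon_i X_i} \lesssim \sqrt{v\ln(2n^2)} + \ln(2n^2)$, and dividing by $\sqrt m$ produces a bound on $\EE\snorm{Z}$ of order $\TwoNorm{A}\sqrt{\ln n}/n$, which multiplied by $c_\mu\sqrt{r}\,\nu$ gives exactly the claimed form.

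The principal obstacle is tracking the constants precisely enough to arrive at the stated $\sqrt{6\e}$, and in particular ensuring that the a.s.\ Bernstein remainder term (of order $\snorm{A_0}\ln(n^2)/\sqrt m$, which is dimension-free but grows with the ambient dimension via the logarithm) is dominated by the subgaussian variance term. This is exactly what the hypothesis $m \ge 2n^2 \ln n$ buys: it forces $\sqrt{m\ln n}/n \gtrsim \ln n$, so the variance term controls the total expected width and the numerical prefactor can be optimised to $\sqrt{6\e}$ after collecting the logarithmic factors.
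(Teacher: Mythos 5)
Your proposal is correct and follows essentially the same route as the paper: the Choi identity plus Schatten--H\"older duality reduces $W_m(E_\nu)$ to $c_\mu\sqrt{r}\,\nu\,\EE\snorm{Z}$, and the operator norm of the Rademacher series is then controlled by matrix concentration with variance parameter $m\TwoNorm{A}^2/n^2$ and almost-sure bound $1$, with the hypothesis $m\ge 2n^2\ln(n)$ absorbing the remainder term. The only (cosmetic) difference is that you invoke the matrix Bernstein expectation bound in one shot, whereas the paper chains a noncommutative Khintchine inequality, Jensen's inequality, and the matrix Chernoff bound to arrive at the same estimate; note that Khintchine alone would not suffice without those extra steps, so of your two suggested tools it is the Bernstein route that is actually self-contained.
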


\begin{proof}
By definition \eqref{eq:W_def}, $W_m$ reads as
\begin{align}
W_m(E_{\nu}) 
&=
\Ev \sup_{M \in E_{\nu} } \kw{\sqrt m}\sum_{i=1}^m \epsilon_i \Tr[U_iA\,U_i\ad M(\ketbra{\psi_i}{\psi_i})] .
\end{align}
Since $W_m(E_{\nu}) = \nu W_m(E)$ with $E \coloneqq E_{1}$, 
it is enough to prove the lemma for $\nu=1$. 

With 
\begin{equation}\label{eq:Hi}
	H_i \coloneqq  (U_iA\,U_i\ad) \otimes (\ketbra{\psi_i}{\psi_i}^T) 
	\in \L(\CC^n\otimes \CC^n)
\end{equation}
and using the identity \eqref{eq:ChoiMeasurement}, 
the ``expectation value'' can also be written as
\begin{equation}
	\Tr[U_iA\,U_i\ad M(\ketbra{\psi_i}{\psi_i})]
	=
	\Tr[H_i\,J(M)] \, .
\end{equation}
We define 
\begin{equation}\label{eq:H}
	H\coloneqq \kw{\sqrt m}\sum_{i=1}^m \epsilon_i H_i
\end{equation}
to arrive at the compact form
\begin{equation}
	W_m(E) =  \Ev \sup_{M \in E} \Tr[HJ(M)] \, .
\end{equation}
The application of H\"older's inequality 
yields
\begin{equation}
	W_m(E) \leq  \Ev \sup_{M \in E} \snorm{H} \TrNorm{J(M)} \, .
\end{equation}
Using the definition \eqref{eq:lem_K_def} of $K$ and that 
$\TwoNorm{J(M)}=\fnorm{M}=\nu=1$ we obtain 
\begin{equation}\label{eq:descent_cone_Hoelder}
	W_m(E)\leq 
	c_\mu \sqrt{r} \Ev \snorm{H} \, .
\end{equation}

In order to bound $\Ev \snorm{H}$ we proceed similarly as in the proof of 
Ref.~\cite[Proposition~13]{KueRauTer15}. 
By applying a non-commutative Khintchine inequality (see Theorem~\ref{thm:khintchine} in Appendix~\ref{sec:Khintchine}) to \eqref{eq:H} we obtain 
\begin{equation}
	\Ev_{\epsilon_i}[\snorm{H}] 
	\leq 
	\sqrt{\frac{2\ln(2n^2)}{m}} \Ev\, \snormB{\left(\sum_{i=1}^mH_i^2\right)^{1/2}} . 
\end{equation}
Thanks to Jensen's inequality, $\Ev \snormb{\sqrt{X}} \leq \Ev \sqrt{\snorm{X}} \leq \sqrt{\Ev \snorm{X}}$ and, hence, 
\begin{equation}\label{eq:jensen}
	\Ev_{\epsilon_i}[\snorm{H}] 
	\leq 
	\sqrt{\frac{2\ln(2n^2)}{m}} \left(\Ev \, \snormB{\sum_{i=1}^mH_i^2}\right)^{1/2} 
	.
\end{equation}
The Matrix Chernoff Bound \cite[Theorem~5.1.1]{Tro12} implies that
\begin{equation}\label{eq:chernoff}
	\Ev \, \snormB{\sum_{i=1}^mH_i^2}
	\leq 
	\frac{\e^\theta-1}{\theta} \, m\,  
	\snormb{\Ev\bigl[H_1^2\bigr]}
	+ \sup_{H_1}\snorm{H_1}\frac{\ln(n^2)}{\theta} \, ,
\end{equation}
where we have already used that $\{H_i\}$ are i.i.d.\ operators.
With the averages~\eqref{eq:EvUAU} and~\eqref{eq:Epsipsi} we find that
$H_1$ (defined in Eq.~\eqref{eq:Hi}) satisfies
\begin{align}\label{eq:H1_squared}
\snormb{\Ev\bigl[H_1^2\bigr]}
=&
\snormB{\frac{\Tr[A^2]}{n} \1}\; \snormB{\kw n \, \1}
=\frac{\TwoNorm{A}^2}{n^2} .
\end{align}
Moreover, $H_1$ always satisfies 
\begin{equation}
\snorm{H_1} =  \snorm{A}\snormb{\ketbra \psi \psi^T}=1 \, ,
\end{equation}
as $A$ and $\ket \psi$ are both normalized. 
Hence,
\begin{equation}
	\Ev \, \snormB{\sum_{i=1}^mH_i^2}
	\leq 
	\frac{\e^\theta-1}{\theta} 
		\, m \,
		\, \frac{\TwoNorm{A}^2}{n^2} 
	+ \frac{\ln(n^2)}{\theta} 
	\, .
\end{equation}
Choosing $\theta = 1$ and proceeding from Eq.~\eqref{eq:jensen} we obtain
\begin{equation}
	\Ev[\snorm{H}] 
	\leq 
	\sqrt{\frac{2\ln(2n^2)}{m}} 
	\left(
	(\e-1)\, 
	m\, 
	\frac{\TwoNorm{A}^2}{n^2}  
	+ 
	2 \ln(n)
	\right)^{1/2} .
\end{equation}
With $\TwoNorm{A}\geq \snorm{A}=1$ and the assumptions $m\geq 2n^2 \ln(n)$
and $\ln\bigl(\sqrt 2 n\bigr) \leq \frac 3 2 \ln(n)$
we obtain
\begin{equation}
\begin{aligned}
\Ev[\snorm{H}] 
&\leq 
\sqrt{4\ln\bigl(\sqrt{2}\,n\bigr)}\, \frac{\TwoNorm{A}}{n}
	\left((\e-1) + \frac{2 n^2 \ln(n)}{m} \right)^{1/2}
\\
&\leq
\frac{\sqrt{6\, \e \ln(n)}}{n} \, \TwoNorm{A} \, .
\end{aligned}	
\end{equation}
Finally, with Eq.~\eqref{eq:descent_cone_Hoelder},
\begin{equation}
	W_m(E)\leq \frac{c_\mu \sqrt{6\,\e \ln(n)\,r\, }}{n} \, \TwoNorm{A} \, ,
\end{equation}
which proves Lemma~\ref{lem:W}.
\end{proof}

\subsubsection{Lower bound on the marginal tail function \texorpdfstring{$Q_\xi$}{}}
In this section, we prove a lower bound on the marginal tail function \eqref{eq:Q_def}. 

\begin{lemma}[Lower tail bound]\label{lem:our_lower_tail_bound}
 	Let $(A_i, \ket{\psi_i})_{i \in [m]}$ be a normalized 4-generic measurement ensemble (Definition~\ref{def:measurement_ensemble}).
	For some trace annihilating $M \in \TP(\CC^n)_0$ define the random variable
	\begin{equation}\label{eq:def_S}
		S \coloneqq \Tr[U A \,U\ad \, M(\ketbra \psi \psi)] \, .
	\end{equation}	
	Then $S$ satisfies 
	\begin{align}
	\Pr\bigl[|S| \geq \xi \bigr]
	&\geq
	c \left(1-\frac
	{\xi^2 (n-1)\,n\,(n+1)^2} 
	{2\TwoNorm{A}^2\Bigl(\TwoNorm{M(\1)}^2 + \fnorm{M}^2\Bigr)}
	\right)^2
	\end{align}
	for all $\xi>0$, 
	where $c$ is an absolute constant.
\end{lemma}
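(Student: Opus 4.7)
The plan is to deduce this anti-concentration statement from the Paley--Zygmund inequality applied to the non-negative random variable $S^2$. In its standard form, this inequality reads
\begin{equation*}
\Pr\!\left[ S^2 \geq \xi^2 \right] \geq \left( 1 - \frac{\xi^2}{\EE[S^2]} \right)^{\!2} \frac{\EE[S^2]^2}{\EE[S^4]}
\end{equation*}
whenever $\xi^2 \leq \EE[S^2]$. Hence it suffices to compute $\EE[S^2]$ exactly and to upper-bound $\EE[S^4]$ by an absolute constant multiple of $\EE[S^2]^2$; the claimed prefactor $c$ will then be this ratio, and the factor $\tfrac12$ in the parenthesis is absorbed into $c$ via a routine choice of constants (e.g.\ replacing $\xi$ by $\xi/\sqrt{2}$ when invoking Paley--Zygmund).

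For the second moment I would write $S^2 = \Tr_{12}\!\bigl[ (UAU^\dagger)^{\otimes 2}\, (M\otimes M)(|\psi\rangle\!\langle\psi|^{\otimes 2})\bigr]$ and take expectations, using that the ensemble is 4-generic to invoke the exact $2$-design formulas already derived: $\EE_\psi[|\psi\rangle\!\langle\psi|^{\otimes 2}] = (1+\FF)/[n(n+1)]$ and, since $\Tr A = 0$ in the normalized setting,
\begin{equation*}
\EE_U[(UAU^\dagger)^{\otimes 2}] = \frac{\TwoNorm{A}^2}{n^2-1}\left(\FF - \frac{1}{n}\1\right).
\end{equation*}
Using $\Tr[\FF (X\otimes Y)] = \Tr[XY]$, the trace-annihilating property $\Tr M(X) = 0$, and the identity $\Tr[\FF(M\otimes M)(\FF)] = \fnorm{M}^2$ valid for hermiticity-preserving $M$, all terms proportional to $\Tr M(\cdot)$ drop out and a short contraction yields
\begin{equation*}
\EE[S^2] = \frac{\TwoNorm{A}^2\bigl( \TwoNorm{M(\1)}^2 + \fnorm{M}^2 \bigr)}{n(n-1)(n+1)^2} ,
\end{equation*}
which is exactly the inverse of the expression appearing in the parenthesis of the claim.

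For the fourth moment the same strategy applies, but now with four tensor copies: $\EE_\psi[|\psi\rangle\!\langle\psi|^{\otimes 4}] = P_{\Sym^4}/\binom{n+3}{4}$ and $\EE_U[(UAU^\dagger)^{\otimes 4}] = \sum_\lambda a_\lambda P_\lambda$, where $\lambda$ ranges over partitions of $4$ and the coefficients $a_\lambda = \Tr[A^{\otimes 4} P_\lambda]/\Tr[P_\lambda]$ are controlled via $\snorm{A}=1$ and $\Tr A =0$. Substituting both averages into $\EE[S^4] = \Tr_{1\cdots 4}\bigl[\EE_U[(UAU^\dagger)^{\otimes 4}]\, (M^{\otimes 4})(\EE_\psi[|\psi\rangle\!\langle\psi|^{\otimes 4}])\bigr]$ produces a finite sum of tensor-network traces. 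Each surviving term has the schematic form $\Tr\bigl[R(\sigma)\, M^{\otimes 4}(R(\tau))\bigr]$ for some $\sigma,\tau \in S_4$, and can be estimated by the general tensor-network contraction bound from Section~\ref{sec:TNbound} together with the bounds $\TwoNorm{M(\1)}, \fnorm{M}$.

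The main obstacle is precisely this fourth-moment estimation: there are five Young diagrams contributing to $\EE_U[(UAU^\dagger)^{\otimes 4}]$ and, after expanding $P_{\Sym^4}$ over permutations, a large combinatorial zoo of tensor-network contractions has to be organized. The trace-annihilating property of $M$ and the traceless-ness of $A$ will kill any term in which an open loop corresponds to a free trace of $A$ or $M(\cdot)$; the remaining terms can be grouped so that each is bounded by a constant multiple of $\bigl(\TwoNorm{M(\1)}^2 + \fnorm{M}^2\bigr)^2 \TwoNorm{A}^4 / [n(n-1)(n+1)^2]^2$, which is precisely $\EE[S^2]^2$ up to an absolute constant. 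Plugging the resulting ratio $\EE[S^2]^2/\EE[S^4] \geq c$ back into Paley--Zygmund then completes the proof.
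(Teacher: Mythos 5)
Your proposal follows essentially the same route as the paper's proof: Paley--Zygmund applied to $S^2$, an exact second-moment computation from the $2$-design formulas for $\EE[\ketbra\psi\psi^{\otimes 2}]$ and $\EE[(UAU\ad)^{\otimes 2}]$, and a fourth-moment upper bound obtained by expanding $\EE[(UAU\ad)^{\otimes 4}]$ over the five $S_4$-isotypic projectors and controlling each resulting contraction $\Tr[R(\sigma)M^{2,2}(R(\tau))]$ with the general tensor-network bound of Proposition~\ref{prop:TNbound}. The only deviations are cosmetic (writing $M^{\otimes 4}$ instead of $M^{\otimes 2}\otimes M^{\star\otimes 2}$, which coincide for hermiticity-preserving $M$, and a factor-of-$2$ normalization of $\EE[S^2]$ that only rescales $\xi$ and the constants), so the approach is correct and matches the paper.
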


We will prove this lemma using the Paley-Zygmund inequality. 
This inequality states that for every non-negative random variable $Z\geq 0$ and parameter 
$\theta \in [0,1]$ 
\begin{equation}\label{eq:PaleyZygmund}
	\Pr\bigl[Z > \theta \Ev[Z]\bigr] \geq (1-\theta)^2 \frac{\Ev[Z]^2}{\Ev\bigl[Z^2\bigr]} \, .
\end{equation}
We will choose $Z = |S|^2$ so that we can use a lower bound on the second moment and an upper bound on the fourth moment of $S$. 

\begin{proof}[Proof of Lemma~\ref{lem:our_lower_tail_bound}]
	As typically done \cite{Tro15}, we use the Paley-Zygmund inequality \eqref{eq:PaleyZygmund} to establish the lower tail bound,
	\begin{equation}
	\begin{aligned}
	\Pr\bigl[|S| \geq \xi \bigr]
	&=
	\Pr\bigl[|S|^2 \geq \xi^2 \bigr] 
	\\
	&\geq
	\left(1-\frac{\xi^2}{\Ev[|S|^2]}\right)^2\,
	\frac{\Ev[|S|^2]^2}{\Ev[|S|^4]} \, .
	\label{eq:Paley-Zygmund}
	\end{aligned}		
	\end{equation}	
	Inserting the fourth moment bound~\eqref{eq:fourth_moment_bound} and the second moment \eqref{eq:full_second_moment} 
	into \eqref{eq:Paley-Zygmund} 
	we obtain for another absolute constant $c>0$ that
	\begin{equation}
		\begin{aligned}
			\Pr\bigl[|S| \geq \xi \bigr]  
			&\geq
			c
			\left(1-\frac{\xi^2}{\Ev[|S|^2]}\right)^2
			\frac{(n-1)^2 n^2 (n+1)^2(n+2)(n+3)}{(n-1)^2n^2(n+1)^4} 
			\\
			&\geq
			c \left(1-\frac
			{\xi^2 (n-1)\,n\,(n+1)^2} 
			{2\TwoNorm{A}^2\Bigl(\TwoNorm{M(\1)}^2 + \fnorm{M}^2\Bigr)}
			\right)^2 , 
		\end{aligned}
	\end{equation}
	where \eqref{eq:full_second_moment} has been used again in the second step. 
	This bound proves Lemma~\ref{lem:our_lower_tail_bound}. 
\end{proof}

Actually, we can fully calculate the second moment without any assumptions on $\Tr[A]$. 

\begin{lemma}[Second moment]
 	Let 
	$(A_i, \ket{\psi_i})_{i \in [m]}$ be a 4-generic measurement ensemble (Definition~\ref{def:measurement_ensemble}) and $M \in \HT(\CC^n)_0$ be a trace annihilating map. 
	
	Then $S \coloneqq \Tr[U A \,U\ad \, M(\ketbra \psi \psi)]$ has the second moment
	\begin{equation}\label{eq:full_second_moment}
	\Ev\bigl[|S|^2\bigr]
	= 
	\frac{2\TwoNorm{A}^2\, n -2\Tr[A]^2}{(n-1)\,n^2\,(n+1)^2} 
	  \Bigl(\TwoNorm{M(\1)}^2 + \fnorm{M}^2\Bigr)  \, .
	\end{equation}
\end{lemma}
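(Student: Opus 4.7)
The plan is to expand $|S|^2$, exploit independence of $U$ and $\ket\psi$, and reduce everything to traces of products involving the second-moment operators already computed in Eqs.~\eqref{eq:U_second_moment} and~\eqref{eq:psi_second_moment}. Since $M$ is hermiticity preserving and $UAU^\dagger$, $\ketbra\psi\psi$ are Hermitian, $S$ is real, so $|S|^2=S^2$. The basic rewrite is
\begin{equation*}
    S^2 = \Tr\!\bigl[(UAU^\dagger)^{\otimes 2}\,(M\otimes M)(\ketbra\psi\psi^{\otimes 2})\bigr],
\end{equation*}
which, using independence, gives
\begin{equation*}
    \EE[S^2] = \Tr\!\Bigl[\EE\bigl[(UAU^\dagger)^{\otimes 2}\bigr]\;(M\otimes M)\bigl(\EE[\ketbra\psi\psi^{\otimes 2}]\bigr)\Bigr].
\end{equation*}

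Next I would substitute the two second-moment formulas. By~\eqref{eq:psi_second_moment}, the $\psi$-average equals $\tfrac{1}{n(n+1)}(\1+\FF)$, so one needs to evaluate $(M\otimes M)(\1\otimes\1)=M(\1)\otimes M(\1)$ and $(M\otimes M)(\FF)$. By~\eqref{eq:U_second_moment}, the $U$-average is a linear combination of $P_{\Sym^2}$ and $P_{\wedge^2}$, so after expansion the calculation reduces to the four numbers
\begin{equation*}
    \Tr[P_{\Sym^2/\wedge^2}(M(\1)\otimes M(\1))],\qquad \Tr[P_{\Sym^2/\wedge^2}(M\otimes M)(\FF)].
\end{equation*}
Using $P_{\Sym^2/\wedge^2}=\tfrac12(\1\pm \FF)$ together with the identity $\Tr[\FF(X\otimes Y)]=\Tr[XY]$, each of these can be written in terms of $\Tr[M(\1)]$, $\Tr[M(\1)^2]$, $\Tr[(M\otimes M)(\FF)]$, and $\Tr[\FF(M\otimes M)(\FF)]$. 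Trace-annihilation kills the two traces that contain $M(\1)$ without a square and also kills $\Tr[(M\otimes M)(\FF)]=\sum_{i,j}\Tr[M(\ketbra i j)]\Tr[M(\ketbra j i)]$. Expanding $\FF=\sum_{i,j}\ketbra i j\otimes\ketbra j i$ and using hermiticity preservation of $M$ (i.e.\ $M_{ba,ji}=\overline{M_{ab,ij}}$) gives the clean identity
\begin{equation*}
    \Tr\bigl[\FF(M\otimes M)(\FF)\bigr]=\fnorm{M}^2.
\end{equation*}

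Collecting all contributions leaves
\begin{equation*}
    \EE[S^2]=\frac{\TwoNorm{M(\1)}^2+\fnorm{M}^2}{2\,n(n+1)}\left[\frac{\Tr(A)^2+\TwoNorm{A}^2}{n(n+1)}-\frac{\Tr(A)^2-\TwoNorm{A}^2}{n(n-1)}\right],
\end{equation*}
and a one-line algebraic simplification of the bracket to $\tfrac{2(n\TwoNorm{A}^2-\Tr(A)^2)}{n(n-1)(n+1)}$ produces~\eqref{eq:full_second_moment}. The only real obstacle is bookkeeping the four Sym/antisym traces correctly, in particular the identity $\Tr[\FF(M\otimes M)(\FF)]=\fnorm{M}^2$, which is really the statement that the Choi representation of hermiticity-preserving $M$ has Frobenius norm equal to $\fnorm{M}$; everything else is routine linear algebra that exploits trace-annihilation to kill the $\Tr(A)^2$-independent cross terms.
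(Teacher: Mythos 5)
Your proposal follows essentially the same route as the paper's proof: expand $|S|^2$ as a trace over two tensor copies, use independence of $U$ and $\ket\psi$ to insert the second-moment operators \eqref{eq:U_second_moment} and \eqref{eq:psi_second_moment}, and reduce everything to the four traces of $(\1$ or $\FF)$ against $(M\otimes M)(\1$ or $\FF)$, of which only $\Tr[\FF (M\otimes M)(\1)]=\TwoNorm{M(\1)}^2$ and $\Tr[\FF (M\otimes M)(\FF)]=\fnorm{M}^2$ survive trace annihilation. The paper writes $M\otimes M^\star$ rather than $M\otimes M$, but $M^\star=M$ for hermiticity-preserving maps, so this is cosmetic, and your component-wise derivation of $\Tr[\FF(M\otimes M)(\FF)]=\fnorm{M}^2$ is a legitimate alternative to the paper's Choi-matrix/tensor-network argument. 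One point deserves attention: your penultimate display is correct, but it simplifies to \emph{half} of \eqref{eq:full_second_moment}, not to \eqref{eq:full_second_moment} itself. The discrepancy is not yours: the paper's own derivation silently drops the factor $\tfrac12$ arising from $P_{\Sym^2}=\tfrac12(\1+\FF)$ and $P_{\wedge^2}=\tfrac12(\1-\FF)$ when passing from the line with prefactor $\tfrac{1}{2n^2(n+1)}$ to the one with $\tfrac{1}{n^2(n+1)}$. A sanity check with $n=2$, $A=\sigma_z$ and $M(X)=\Tr[\sigma_z X]\,\sigma_z$ (so $M(\1)=0$, $\fnorm{M}^2=4$) gives $\Ev[S^2]=\Ev_U\bigl[\Tr[U\sigma_zU^\dagger\sigma_z]^2\bigr]\cdot\Ev_{\psi}\bigl[\sandwich{\psi}{\sigma_z}{\psi}^2\bigr]=\tfrac43\cdot\tfrac13=\tfrac49$, matching your expression, whereas \eqref{eq:full_second_moment} yields $\tfrac89$. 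Since the second moment enters the subsequent Paley--Zygmund argument only up to absolute constants, nothing downstream is affected, but strictly the right-hand side of \eqref{eq:full_second_moment} should be halved.
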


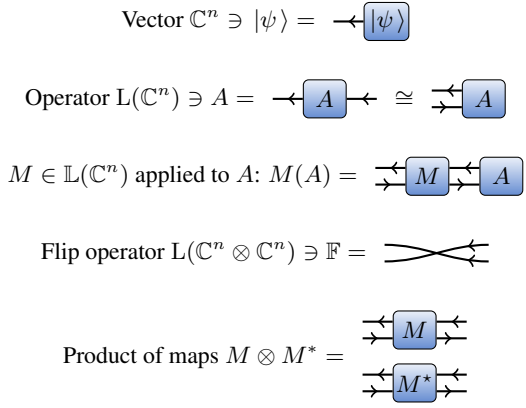
\begin{figure}[t]
	\centering
	\def\d{.35}
	\def\ydist{.2}
	\leavevmode
	\beginpgfgraphicnamed{fig9_TNs}%
	\begin{tikzpicture}
	\def\len{.4}%
	\node (vector){
	\begin{tikzpicture}
	\node (TNpsi) {\tikz{
			\node (psi) [Bbox] {$\ket\psi$};
			\draw [leg] (psi.west) --++(-\len,0);
		}};
	\path (TNpsi.west) ++(0,0) node [anchor = east] {Vector $\CC^n \ni \ket\psi = $};
	\end{tikzpicture}
	};
	\path (vector.south) ++ (0,-\ydist\baselineskip) node (operator) [anchor = north]{
	\begin{tikzpicture}
	\node (TNA) {\tikz{
			\node (A) [Bbox] {$A$};
			\draw [leg] (A.east) ++(\len,0) --(A.east);
			\draw [leg] (A.west) --++(-\len,0);
		}};
	\path (TNA.west) ++(0,0) node [anchor = east] {Operator $\L(\CC^n) \ni A = $};
	\path (TNA.east) ++(0,0) node (eq) [anchor = west] {$\cong$};
	\path (eq.east) ++(0,0) node (eq) [anchor = west] {\tikz{
			\node (A) [Bbox] {$A$};
			\SetCoordinates{A}
			\draw [leg] (Aoli) --++(-\len,0);
			\draw [leg] (Auli) ++(-\len,0)--(Auli);
		}};
	\end{tikzpicture}    
	};
	\path (operator.south) ++ (0,-\ydist\baselineskip) node (M) [anchor = north]{
	\begin{tikzpicture}
	\node (TNA) {\tikz{
			\node (A) [Bbox] {$A$};
			\path (A.west) ++(-\len,0) node (M) [Bbox, anchor = east] {$M$};
			\SetCoordinates{M}
			\SetCoordinates{A}
			\draw [leg] (Moli) --++(-\len,0);
			\draw [leg] (Muli) ++(-\len,0)--(Muli);
			\draw [leg] (Aoli) --(More);
			\draw [leg] (Mure) --(Auli);
		}};
	\path (TNA.west) ++(0,0) node [anchor = east] {$M \in \LL(\CC^n)$ applied to $A$:  $M(A) = $};
	\end{tikzpicture}    
	};
	\path (M.south) ++ (0,-\ydist\baselineskip) node (flip) [anchor = north]{
	\begin{tikzpicture}
	\node (TNF) {\tikz{
			\node (F) [sbox, draw = none]{};
			\SetCoordinates{F}
			\path (Fore) ++ (\len,0) coordinate (ORe);
			\path (Fuli) ++ (-\len,0) coordinate (ULi);
			\path (Fure) ++ (\len,0) coordinate (URe);
			\path (Foli) ++ (-\len,0) coordinate (OLi);				
			\draw [Leg] (ORe) to[out = 180, in = 0]  (ULi);
			\draw [Leg] (URe) to[out = 180, in = 0]  (OLi);
		}};
	\path (TNF.west) ++(0,0) node [anchor = east] {Flip operator $\L(\CC^n\otimes \CC^n) \ni \FF = $};
	\end{tikzpicture}  	
	};
	\path (flip.south) ++ (0,-\ydist\baselineskip) node (product) [anchor = north]{
	\begin{tikzpicture}
	\node (FMF) {
		\begin{tikzpicture}
		\def\llen{.6}
		\def\slen{.3}
		\node (M) [Bbox] at (0,0) {$M$};
		\path (M.south) ++(0,-.5\baselineskip) node (Mad) [Bbox, anchor = north]{$M^\star$};
		\SetCoordinates{M}
		\SetCoordinates{Mad}
		\draw [leg] (Madure) --++(\len,0);
		\draw [leg] (Madore) ++(\len,0) --(Madore);
		\draw [leg] (Maduli) ++(-\len,0)--(Maduli);
		\draw [leg] (Madoli) --++(-\len,0);
		\draw [leg] (Mure) --++(\len,0);
		\draw [leg] (More) ++(\len,0) --(More);
		\draw [leg] (Muli) ++(-\len,0)--(Muli);
		\draw [leg] (Moli) --++(-\len,0);
	\end{tikzpicture} 
	};
	\path (FMF.west) ++(0,0) node [anchor = east] {Product of maps $M \otimes M^\ast=$};
	\end{tikzpicture}   
	};
	\end{tikzpicture}
	\endpgfgraphicnamed
	\caption{Tensor network diagrams: tensors are denoted by boxes with one line for each index. 
		Contraction of two indices corresponds to connection of the corresponding lines. 
		Examples: 
		A vector $\ket \psi$, vectorization of an operator $A$, $M \in \LL(\CC^n)$ applied to that vectorization, the non-vectorized version of the flip operator $\FF$, and $M^{1,1} = M \otimes M^{\star}$. 
	}
	\label{fig:TNs}
\end{figure}
				
\begin{proof}
	First, we will derive some identities for certain traces containing $\1$, $\FF$, and $M$. 
	As $M$ is trace annihilating, i.e., $M\ad(\1)=0$, we obtain
	\begin{equation}
		\Tr[\1 M^{1,1}(\1)] = \Tr[M^{1,1\dagger}(\1)] = 0
	\end{equation}
	and
	\begin{equation}
		\Tr[\1 M^{1,1}(\FF)] = \Tr[M^{1,1\dagger}(\1) \FF]=0 \, .
	\end{equation}
	Moreover, using ``the swap-trick'' $\Tr[\FF (A \otimes B) ] = \Tr[AB]$ we obtain
	\begin{equation}
	\begin{aligned}
	\Tr[\FF M^{1,1}(\1)] &= \Tr\bigl[\FF \bigl(M(\1)\otimes M^\star(\1)\bigr)\bigr] 
	\\
	&= \Tr[M(\1)M^\star(\1)] 
	\\
	&= \Tr[M(\1)M(\1)\ad]  
	\\
	&= \TwoNorm{M(\1)}^2 \, .
	\end{aligned}
	\end{equation}
	The last of these identities is
	\begin{equation}
	\begin{aligned}
	\Tr[\FF M^{1,1}(\FF)] &= \Tr[\FF (M\otimes M^\ast)(\FF)]
	\\&=
	\begin{matrix}
	\leavevmode
	\beginpgfgraphicnamed{figProofLemma15}%
	\begin{tikzpicture}[anchor=base, baseline=-.7em]
		\def\d{.35}
		\def\llen{.6}%
		\def\slen{.3}%
		\node (M) [Bbox,anchor = south] {$M$};
		\path (M.south) ++(0,-1em) node (Mad) [Bbox, anchor = north]{$M^\star$};
		\SetCoordinates{M}
		\SetCoordinates{Mad}
		\draw [leg] (Madure) --++(\llen,0)  |- (More);
		\draw [leg] (Mure)   --++(\slen,0)  |- (Madore);
		\draw [leg] (Moli)   --++(-\llen,0) |- (Maduli);
		\draw [leg] (Madoli) --++(-\slen,0) |- (Muli);
		\path (Mure) ++(\slen,0)  ++ (11pt,0)  coordinate (anc);
	    \path (Madoli) ++(-\slen,0) ++ (-11pt,0) coordinate (anc);
	\end{tikzpicture} 
	\endpgfgraphicnamed
	\end{matrix}
	\\
	&=\Tr[J(M)J(M^\star)] \label{eq:M11fro}
	\\
	&= \Tr[J(M)J(M)\ad] 
	\\
	&= \TwoNorm{J(M)}^2 = \fnorm{M}^2 \, ,
	\end{aligned}
	\end{equation}
	see Figure~\ref{fig:TNs} for an explanation of the tensor network diagram. 
	
	Next, using the expressions for the second moments of $\ketbra \psi \psi$ and $UAU\ad$, 
	\eqref{eq:psi_second_moment} and \eqref{eq:U_second_moment},
	we obtain
	\begin{widetext}
		\begin{equation}
		\begin{aligned}
		\Ev[|S|^2] 
		&= 
		\Tr\bigl[\bigl(UA\, U\ad M(\ketbra \psi \psi)\bigr)
		\otimes \bigl(UA\, U\ad M^\star(\ketbra \psi \psi)\bigr) \bigr]
		\\
		&= \Tr\Bigl[
		\Ev\bigr[(UAU\ad)^{\otimes 2}\bigl] M^{1,1}
		\Bigl( \Ev\bigl[\ketbra{\psi}{\psi}^{\otimes 2}\bigr] \Bigr)
		\Bigr]
		\\
		&=\kw{2n^2(n+1)}\sum_\pm\frac{\Tr(A)^2 \pm \Tr(A^2)}{n\pm 1} \Tr\Bigl[(\1\pm\FF) M^{1,1} (\1+\FF ) \Bigr]
		\\
		&=
		\kw{2n^2(n+1)}
		\sum_\pm\frac{\pm\Tr(A)^2+\Tr(A^2)}{n\pm 1} \Tr\Bigl[\FF M^{1,1} (\1+\FF ) \Bigr] \, .
		\end{aligned}
		\end{equation}
		We finish the proof by using the traces containing $M$ from the beginning and that 
		$\Tr[A^2] = \TwoNorm{A}^2$,
		\begin{equation}
		\begin{aligned}
		\Ev[|S|^2]
		&=
		\frac{1}{\,n^2\,(n+1)} 
		\Bigl(\frac{\Tr[A]^2+\Tr[A^2]}{n+1} + \frac{-\Tr[A]^2+\Tr[A^2]}{n-1}\Bigr)
		\Bigl(\TwoNorm{M(\1)}^2 + \fnorm{M}^2\Bigr)
		\\
		&= 
		\frac{2\TwoNorm{A}^2\, n -2\Tr[A]^2}{(n-1)\,n^2\,(n+1)^2} 
		\Bigl(\TwoNorm{M(\1)}^2 + \fnorm{M}^2\Bigr) \, .
		\end{aligned}
		\end{equation}
		\qedhere 
	\end{widetext}
\end{proof}

\begin{lemma}[Upper bound on the fourth moment]\label{lem:FourthMoment}
The random variable $S$ from Lemma~\ref{lem:our_lower_tail_bound} has a fourth moment bounded as
	\begin{equation}\label{eq:fourth_moment_bound}
	\Ev[|S|^4] \leq c_3 
	\frac{ \TwoNorm{A}^4 \Bigl(\TwoNorm{M(\1)}^2+\fnorm{M}^2\Bigr)^2 }
	{(n-1)^2 n^2 (n+1)^2(n+2)(n+3)} \, ,
	\end{equation}
	where $c_3$ is an absolute constant. 
\end{lemma}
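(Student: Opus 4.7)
The plan is to extend the approach used for the second moment computation in the previous lemma to fourth order. Since $M$ is hermiticity-preserving and $A$ is Hermitian, $S$ is real, so $|S|^4 = S^4$. Iterating the identity $\Tr[BM(\rho)]\,\Tr[BM^\star(\rho)] = \Tr[B^{\otimes 2} M^{1,1}(\rho^{\otimes 2})]$, we have
$$
|S|^4 = \Tr\Bigl[ (UAU\ad)^{\otimes 4}\, M^{2,2}\bigl((\ketbra\psi\psi)^{\otimes 4}\bigr) \Bigr].
$$
The 4-design hypothesis makes the $U$- and $\ket\psi$-expectations agree with their Haar versions at order four, and independence factorizes them, so
$$
\Ev[|S|^4] = \Tr\Bigl[ \Ev[(UAU\ad)^{\otimes 4}]\, M^{2,2}\bigl(\Ev[(\ketbra\psi\psi)^{\otimes 4}]\bigr) \Bigr].
$$

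For the state moment, \eqref{eq:Sym_k} yields $\Ev[(\ketbra\psi\psi)^{\otimes 4}] = \tfrac{1}{\Tr[P_{\Sym^4}]}P_{\Sym^4}$ with $\Tr[P_{\Sym^4}] = \binom{n+3}{4}$, and $P_{\Sym^4} = \tfrac{1}{24}\sum_{\sigma\in S_4} R(\sigma)$. For the observable moment, Schur-Weyl duality as used in \eqref{eq:decompose_U_average} gives $\Ev[(UAU\ad)^{\otimes 4}] = \sum_\lambda a_\lambda P_\lambda$ over Young diagrams with four boxes, with coefficients $a_\lambda = \Tr[A^{\otimes 4} P_\lambda]/\Tr[P_\lambda]$ satisfying $|a_\lambda| \lesssim \TwoNorm{A}^4/\Tr[P_\lambda]$ via Cauchy-Schwarz. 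Each $P_\lambda$ is a fixed rational linear combination of the $R(\sigma)$. After substitution, $\Ev[|S|^4]$ reduces to a finite explicit combination of tensor network contractions of the form $\Tr[R(\sigma)\, M^{2,2}(R(\tau))]$ with $\sigma,\tau \in S_4$, and the dimensional prefactors combine to precisely $1/\bigl((n-1)^2 n^2 (n+1)^2 (n+2)(n+3)\bigr)$ -- the $(n+2)(n+3)$ coming from $\Tr[P_{\Sym^4}]$ and the remaining factors from the projector normalizations attached to the Young diagrams, matching the second-moment denominator squared up to the symmetric-subspace correction.

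Next, the trace-annihilating hypothesis $M\ad(\1) = 0$ kills most of these contractions. Each $\Tr[R(\sigma)\, M^{2,2}(R(\tau))]$ admits a diagrammatic description (Figure~\ref{fig:TNs}) in which the four legs of $M^{2,2}$ are contracted among themselves and against cycles of $\sigma,\tau$. Whenever a leg of one of the $M$- or $M^\star$-copies is closed into a loop by itself, the resulting factor equals $\Tr[M\ad(\1)\cdots] = 0$, exactly as the terms $\Tr[\1\,M^{1,1}(\1)]$ and $\Tr[\1\,M^{1,1}(\FF)]$ vanished in the second-moment lemma. The surviving contractions are precisely those in which every $M$-leg is paired nontrivially either to another $M$-leg (producing a factor $\Tr[\FF M^{1,1}(\FF)] = \fnorm{M}^2$) or to a copy of $\FF$ returning its image on the identity (producing $\Tr[\FF M^{1,1}(\1)] = \TwoNorm{M(\1)}^2$).

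The last step is to bound each surviving contraction by a product of four such elementary factors, using the general tensor-network inequality proved in Section~\ref{sec:TNbound}. Every surviving diagram contributes a product of two ``elementary'' pairings, each of which is either a $\TwoNorm{M(\1)}^2$ or a $\fnorm{M}^2$, so that the total bound is a positive linear combination of $\TwoNorm{M(\1)}^4$, $\TwoNorm{M(\1)}^2\fnorm{M}^2$, and $\fnorm{M}^4$, which collects into $c_3'\bigl(\TwoNorm{M(\1)}^2 + \fnorm{M}^2\bigr)^2$ for a combinatorial constant $c_3'$ depending only on $|S_4|$ and the five Young diagrams. Combining this with the $\TwoNorm{A}^4$ bound on the $a_\lambda$ and the dimensional prefactors produces the claim with an absolute constant $c_3$. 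The main obstacle is the bookkeeping: enumerating which of the roughly $5\cdot|S_4|^2$ candidate contractions survive the trace-annihilating condition, and verifying that their sum really does reorganize into $(\TwoNorm{M(\1)}^2 + \fnorm{M}^2)^2$ rather than into a strictly looser bound -- a step that is combinatorially far richer than the explicit evaluation done at second order, but structurally identical and handled uniformly by the tensor-network lemma.
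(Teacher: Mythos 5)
Your overall route is the paper's: factorize $\Ev[|S|^4]$ into the fourth Haar moments of $UAU\ad$ and $\ketbra\psi\psi$, decompose the former via Schur--Weyl into $\sum_\lambda a_\lambda P_\lambda$, replace the latter by $P_{\Sym^4}/\Tr[P_{\Sym^4}]$, reduce to contractions $\Tr[R(\sigma)M^{2,2}(R(\tau))]$, use the trace-annihilating property of $M$ to kill the dangerous pieces, and bound the survivors by the tensor-network inequality of Proposition~\ref{prop:TNbound}. Two remarks, one of which is a genuine gap.

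The gap is your bound $|a_\lambda|\lesssim\TwoNorm{A}^4/\Tr[P_\lambda]$ ``via Cauchy--Schwarz.'' Cauchy--Schwarz applied to the Hilbert--Schmidt inner product gives only $|\Tr[A^{\otimes 4}P_\lambda]|\leq\TwoNorm{A^{\otimes 4}}\TwoNorm{P_\lambda}=\TwoNorm{A}^4\sqrt{\Tr[P_\lambda]}$, hence $|a_\lambda|\leq\TwoNorm{A}^4/\sqrt{\Tr[P_\lambda]}$, which is weaker by a factor of order $\sqrt{d_\lambda(n)}\sim n^2$. That loss is fatal downstream: the Paley--Zygmund step in Lemma~\ref{lem:our_lower_tail_bound} needs $\Ev[|S|^2]^2/\Ev[|S|^4]$ bounded below by an absolute constant, and an extra $n^2$ in $\Ev[|S|^4]$ would make that ratio decay as $1/n^2$. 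The correct argument (and the one the paper uses) expands each $P_\lambda$ into permutation operators $R(\sigma)$ and evaluates $\Tr[A^{\otimes 4}R(\sigma)]$ cycle by cycle: each such trace is a product $\prod_c\Tr[A^{k_c}]$ over cycles, and the normalization hypothesis $\Tr[A_0]=0$ --- which your proposal never invokes for $A$ --- kills every conjugacy class except $2^2$ and $4^1$, leaving only $\Tr[A^2]^2$ and $\Tr[A^4]$, both bounded by $\TwoNorm{A}^4$ with no dimension factor. Only then does $|a_\lambda|\lesssim\TwoNorm{A}^4/d_\lambda(n)\sim\TwoNorm{A}^4/n^4$ follow. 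Separately, your worry about verifying that the surviving contractions ``reorganize into $(\TwoNorm{M(\1)}^2+\fnorm{M}^2)^2$ rather than a strictly looser bound'' is unnecessary: Proposition~\ref{prop:TNbound} bounds each individual contraction by the product of the Frobenius norms of its constituent tensors (drawn from $M$, $M^\star$, $M(\1)$, $M^\star(\1)$ once the trace-annihilating pieces vanish), giving $\max\{\TwoNorm{M(\1)}^4,\fnorm{M}^4\}\leq(\TwoNorm{M(\1)}^2+\fnorm{M}^2)^2$ uniformly, with no enumeration of diagrams required.
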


The proof of this lemma uses facts about the symmetric group $S_4$ all of which are stated in Appendix~\ref{sec:S4}. 

\begin{proof}
The fourth moment is
\begin{equation}
\begin{aligned}
\Ev[|S|^4] 
&= 
\Tr\bigl[\bigl(UA U\ad M(\ketbra \psi \psi)\bigr)^{\otimes 2}
\otimes \bigl(UA U\ad M^\star(\ketbra \psi \psi)\bigr)^{\otimes 2} \bigr]
\\
&= \Tr\Bigl[
\Ev\bigr[(UA\,U\ad)^{\otimes 4}\bigl] M^{\otimes 2,2}
\Bigl( \Ev\bigl[\ketbra{\psi}{\psi}^{\otimes 4}\bigr] \Bigr)
\Bigr] \, . \label{eq:momentS4}
\end{aligned}
\end{equation}
According to Eq.~\eqref{eq:Sym_k} the average over $\ket \psi$ yields
\begin{equation}
\Ev\bigl[\ketbra{\psi}{\psi}^{\otimes 4} \bigr]
= 
\kw{d_1(n)}P_{\Sym^4} ,
\end{equation}
where we have used that $\Tr(P_{\Sym^4})= d_1(n)$ with $d_1(n)=n\,(n+1)(n+2)(n+3)/24$
from Eq.~\eqref{eq:d_i}
being the dimension $\Tr(P_{\Sym^4})$ corresponding to the trivial representation $\Sym^4$.

According to Eq.~\eqref{eq:decompose_U_average} we have
\begin{align}
\Ev\bigr[(UA\,U\ad)^{\otimes 4}\bigl]
= \sum_{i=1}^5 a_i \,  P_i \, .
\end{align}
By taking the trace, we obtain the corresponding expansion coefficients
\begin{equation}
	a_i=\frac{\Tr[A^{\otimes 4}\,P_i]}{d_i(n)} 
\end{equation}
where $d_i(n) = \Tr[P_i]$ is also given in \eqref{eq:d_i} and each $P_i$ is the representation of the central minimal projection $p_i$ of $S_4$ (see \eqref{eq:p_i}) on $(\CC^n)^{\otimes 4}$.  
Inserting everything into \eqref{eq:momentS4}, we obtain
\begin{equation}\label{eq:S4_of_p_i}
	\Ev[|S|^4] = 
	\sum_{i=1}^5 a_i\,
	\frac{\Tr\Bigl[ P_i M^{\otimes 2,2} (P_1)\Bigr]}
	{d_1(n)}   \, . 
\end{equation}
For a permutation $\sigma \in S_4$ with representation $R(\sigma)\in \L((\CC^n)^{\otimes 4})$ the Hilbert-Schmidt overlap
$\Tr[A^{\otimes 4}R(\sigma)]$ only depends on the conjugacy class of $\sigma$. 
We denote the conjugacy class containing permutations composed of each $j_i$ disjoint cycles of sizes $\{k_i\}_{i\in [l]}$ by $k_1^{j_1}k_2^{j_2}\dots k_l^{j_l}$, e.g., $2^1 \subset S_4$ are the transpositions. 
One can conclude, e.g., from tensor network diagrams that
\begin{equation}
	\begin{aligned}
		\Tr[ \1\, A^{\otimes 4}] &= \Tr[A]^4=0 \, ,
		\\
		\Tr[ 2^2 A^{\otimes 4}] &= \Tr[A^2]^2,
		\\
		\Tr[ 2^1 A^{\otimes 4}] &= \Tr[A^2]\Tr[A]^2=0 \, ,
		\\
		\Tr[ 4^1 A^{\otimes 4}] &= \Tr[A^4] \, ,
		\\
		\Tr[ 3^1 A^{\otimes 4}] &= \Tr[A^3] \Tr[A] =0\, .
	\end{aligned}
\end{equation}

With 
the sizes of the conjugacy classes \eqref{eq:tab:S4}, 
the minimal projections \eqref{eq:p_i}, and
the dimensional factors \eqref{eq:d_i}
\begin{equation}
	\begin{aligned}
		a_1 &= \frac{3 \Tr[A^2]^2 + 6 \Tr[A^4]}
		{n\, (n+1)(n+2)(n+3)} \, ,
		\\
		a_2 &= \frac{3 \Tr[A^2]^2 - 6 \Tr[A^4]}
		{(n-3)(n-2)(n-1)\, n} \, ,
		\\
		a_3 &= \frac{6 \Tr[A^2]^2}
		{(n-1)\,n^2\,(n+1)} \, ,
		\\
		a_4 &= \frac{- 3 \Tr[A^2]^2 - 6 \Tr[A^4]}
		{(n-1)\,n\,(n+1)(n+2)} \, ,
		\\
		a_5 &= \frac{- 3 \Tr[A^2]^2 + 6 \Tr[A^4]}
		{(n-2)(n-1)\,n\,(n+1)} \, ,
		\, 
	\end{aligned}
\end{equation}
for $ n \geq 4$. 
For $n=3$ we have $P_2=0$ and, hence $a_2 = 0$. 
For $n=2$ we have $P_2=0$ and $P_5=0$ and, hence, $a_2=a_5 = 0$. 
In both cases, the remaining $a_i$ are as stated above.

From the submultiplicativity of the Schatten $2$-norm
follows that 
$\TwoNorm{A^j} \leq \TwoNorm{A}^j$ for $j\in \ZZ^+$.  
Also using the bound
\begin{equation}
	\frac{1}{n-j} \leq \frac{j+1}{n} \qquad \text{for integers}\ n>j\geq 0 
\end{equation}
we obtain 
\begin{equation}\label{eq:a_i-bound}
	|a_i| \leq
	\frac{c_1 \TwoNorm{A}^4}{(n-1)^2\, n\, (n+1)}
\end{equation}
for all $i\in [5]$, where $c_1$ is an absolute constant. 

In order to bound $\Tr[ P_i M^{\otimes 2,2} (P_1)]$ in Eq.~\eqref{eq:S4_of_p_i} we observe that each projection $P_i$ is a linear combination of permutation matrices $\{R(\sigma)\}_{\sigma \in S_4}$, see \eqref{eq:p_i}, where only the permutation matrices $R(\sigma)$ depend on $n$. 
Hence, it is enough to bound $\Tr[R(\sigma) M^{2,2}(R(\tau))]$ for all permutations 
$\sigma, \tau\in S_4$.

Combining \eqref{eq:fourth_moment_bound} from Lemma~\ref{lem:bound_sigmaMtau} below, \eqref{eq:a_i-bound}, and \eqref{eq:S4_of_p_i} yields
\begin{equation}
	\Ev[|S|^4] \leq c_3 
	\frac{ \TwoNorm{A}^4 \Bigl(\TwoNorm{M(\1)}^2+\fnorm{M}^2\Bigr)^2 }
	{(n-1)^2 n^2 (n+1)^2(n+2)(n+3)}
\end{equation}
for some absolute constant $c_3$.  
\end{proof}

\begin{lemma}\label{lem:bound_sigmaMtau}
	For any $M\in\HT(\CC^n)_0$ and $\sigma,\tau\in S_4$
	\begin{equation}
		\bigl|\Tr[R(\sigma) M^{2,2}(R(\tau))]\bigr| 
		\leq 
		c_2 \Bigl(\TwoNorm{M(\1)}^2+\fnorm{M}^2\Bigr)^2
	\end{equation}
	for some absolute constant $c_2$. 
\end{lemma}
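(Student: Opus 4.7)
The plan is to interpret $\Tr[R(\sigma)M^{2,2}(R(\tau))]$ as a closed tensor network whose four ``$M$-blocks'' (two copies of $M$ and two of $M^{\star}$) are wired together by the permutation operators $R(\sigma)$ and $R(\tau)$ acting on the output, respectively input, tensor factors. Decomposing $\sigma$ and $\tau$ into disjoint cycles turns $R(\sigma)$ and $R(\tau)$ into tensor products of flips and identities, so the diagram breaks into topologically distinct shapes parametrised by the joint cycle structure of $(\sigma,\tau)$.

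First I would discard the many vanishing configurations using the trace-annihilation condition $M^{\dagger}(\1)=0$. Concretely, whenever an output leg of an individual $M$-block is closed into a loop attached only to $\1$, the corresponding scalar factor $\Tr[M(X)]$ vanishes. This removes every diagram in which $\sigma$ and $\tau$ leave some block effectively isolated. For the remaining cases, the key step is to invoke the general tensor-network bound established in Section~\ref{sec:TNbound}: iterated Cauchy--Schwarz along a separating cut bounds the magnitude of the contraction by the product of the Frobenius norms of the two resulting sub-diagrams. After a few cuts every surviving sub-diagram reduces either to a single block fed by $\1$, whose Frobenius norm is $\TwoNorm{M(\1)}$, or to a pair $M\otimes M^{\star}$ joined by flips, whose squared Frobenius norm equals $\fnorm{M}^{2}$ (compare Eq.~\eqref{eq:M11fro}).

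Consequently every non-vanishing trace is bounded by a monomial $\TwoNorm{M(\1)}^{a}\fnorm{M}^{b}$ with $a+b=4$ and with $a,b$ even, because $\TwoNorm{M(\1)}$ and $\fnorm{M}$ each consume one $M$ together with one $M^{\star}$, so $(a,b)\in\{(4,0),(2,2),(0,4)\}$. The elementary inequality $x^{a}y^{b}\leq (x^{2}+y^{2})^{(a+b)/2}$ then majorises each such monomial by $\bigl(\TwoNorm{M(\1)}^{2}+\fnorm{M}^{2}\bigr)^{2}$, and taking the worst case over the finitely many pairs $(\sigma,\tau)\in S_{4}\times S_{4}$ produces the absolute constant $c_{2}$.

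The hard part is not any single estimate but the systematic enumeration: one must be confident that the cut argument indeed encompasses every residual diagram and that no subtle contraction yields a quantity genuinely outside the $\{\TwoNorm{M(\1)},\fnorm{M}\}$-vocabulary. The hermiticity-preservation property of $M$, which lets one identify $M^{\star}$-blocks with $M$-blocks up to transposition, together with the symmetry of the setup under simultaneous relabelling of the four tensor factors, cuts the analysis down to one representative diagram per conjugacy class of the pair $(\sigma,\tau)$, so the remaining bookkeeping is a short finite check.
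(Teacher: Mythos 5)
Your proposal follows essentially the same route as the paper's proof: view $\Tr[R(\sigma)M^{2,2}(R(\tau))]$ as a closed tensor network without self-contractions built from the blocks $\{M,M^\star,M(\1),M^\star(\1),\Tr\circ M,\Tr\circ M^\star\}$, kill the $\Tr\circ M$ pieces via trace annihilation, and apply Proposition~\ref{prop:TNbound} to bound the contraction by the product of the blocks' Frobenius norms, each of which is $\fnorm{M}$ or $\TwoNorm{M(\1)}$. The only quibble is that no case enumeration over cycle types is actually needed (the product bound applies uniformly), and your parity claim $(a,b)\in\{(4,0),(2,2),(0,4)\}$ is not quite right since a single block can contribute $\TwoNorm{M(\1)}$ while another contributes $\fnorm{M}$ -- but this is harmless, as $x^a y^b\leq\max\{x,y\}^4\leq(x^2+y^2)^2$ for any $a+b=4$.
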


\begin{proof}
	We will conclude from Proposition~\ref{prop:TNbound} that 
	\begin{equation}\label{eq:fourth_moment_bound_max}
		\bigl|\Tr[R(\sigma) M^{2,2}(R(\tau))]\bigr|
		\leq 
		c_2' \max \bigl\{ \TwoNorm{M(\1)}^4, \fnorm{M}^4\Bigr\} 
	\end{equation}
	for some absolute constant $c_2'$, which implies the lemma. 
	
	We consider 
	$\HT(\CC^n)_0\subset \L(\L(\X) \to \L(\Y))$ as a subspace, where $\X =\CC^n=\Y$ are labels for the input and output Hilbert space. 
	The permutation operator $R(\tau)$ permutes and contracts the indices of $M^{2,2}$ in 
	$\Tr[R(\sigma) M^{2,2}(R(\tau))]$ that correspond to $\X$. 
	Similarly, the permutation operator $R(\sigma)$ permutes the indices of $M^{2,2}$ in 
	$\Tr[R(\sigma) M^{2,2}(R(\tau))]$ that correspond to $\Y$, which are contracted subsequently by $\Tr$. 
	So, no $\X$-index is contracted with a $\Y$-index. 
	Hence, 
	$\Tr[R(\sigma) M^{2,2}(R(\tau))]$ is a contraction without self-contractions of the tensors 
	$\{M, M^\star, M(\1), M^\star(\1), \Tr\circ M, \Tr \circ M^\star\}$. 
	The tensors $\Tr \circ M^\star$ and $\Tr\circ M$ vanish due to maps in $\HT(\CC^n)_0$ being trace annihilating. 
	Moreover, $\fnorm{M} = \fnorm{M^\star}$ and 
	$\TwoNorm{M(\1)} = \TwoNorm{M^\star(\1)}$. 
	Proposition~\ref{prop:TNbound} tells us that arbitrary closed tensor networks without self-contractions are bounded by the product of the Frobenius norms of the single tensors and implies the bound~\eqref{eq:fourth_moment_bound_max}. 
\end{proof}

\subsection{General tensor network bound}
\label{sec:TNbound}
In this section, we prove a simple bound on a fully contracted tensor network, which we used to prove Lemma~\ref{lem:bound_sigmaMtau}. 
A \emph{tensor} is a vector in 
$\CC^{n_1}\otimes \CC^{n_2} \otimes \dots \otimes \CC^{n_K}$, 
where $\X \otimes \Y$ denotes the tensor product of vector spaces $\X$ and $\Y$. 
It is often helpful to identify a tensor with its representation $t$ in terms of a product basis of the canonical bases of $\CC^{n_i}$. 
Then $t$ is given as an array of numbers, i.e., 
$t \in \CC^{n_1\times n_2\times \dots \times n_K}$. 

A \emph{tensor network} is a set of tensors together with a contraction corresponding to pairs of indices where both indices have the same dimension $n_i$. 
Instances of such tensor networks are, e.g., the workhorse of powerful simulation techniques for strongly correlated quantum systems \cite{Sch11}. 
We present a general version of such tensor networks and then prove our bound. 
For this purpose, we use a notation that is completely disjoint from the other sections. 
E.g., $C$ will be a contraction instead of a constant. 

A \emph{pointer} to an index of a tensor in 
$\CC^{n_1}\otimes \CC^{n_2} \otimes \dots \otimes \CC^{n_{K}}$
is just a number $k\in [K]$ used to identify the $k$-th index. 
A \emph{contraction} is a linear map 
\begin{equation}\nonumber
	C : 
	\CC^{n_1}\otimes \CC^{n_2} \otimes \dots \otimes \CC^{n_{K}} 
	\to 
	\CC^{n_{i_1}}\otimes \CC^{n_{i_2}} \otimes \dots \otimes \CC^{n_{I}} 
\end{equation}
given by a set of pairs of pointers 
$P=(\{k_l,k'_l\})_{l \in [L]}$ with even $K-I=2L$ so that 
each number $k\in [K]$ occurs at most once in at most one of the pairs. 
In particular, $k_l \neq k'_l$ for all $l \in [L]$. 
Moreover, the consistency condition $n_{k_l} = n_{k'_l}$ is required to hold for all $l\in [L]$ 
(in many relevant cases the dimensions $n_k$ assume only a very few different values). 
Let $\mc P= \{k_l,k'_l\}_{l \in [L]}$ be the set of pointers occurring in the pairs $P$ and $\mc I \coloneqq [K]\setminus \mc P$ the other pointers. 
Then the contraction $C(t)$ of a tensor 
$t\in \CC^{n_1}\otimes \CC^{n_2} \otimes \dots \otimes \CC^{n_{K}}$ is given by the components
\begin{equation}
		C(t)_{\alpha'_{i_1}, \alpha'_{i_2}, \dots, \alpha'_{i_I}}
		\coloneqq 
		\sum_{\alpha_{k} \in [n_{k}],  \ k \in [K]}
		t_{\alpha_1, \alpha_2, \dots, \alpha_{K}}
		\prod_{l\in \mc P}\delta_{\alpha_{k_l}, \alpha_{k'_l}} 
		\prod_{m \in \mc I} \delta_{\alpha_{m}, \alpha'_m}
		\, ,
\end{equation}
i.e., all index pairs $\{k_l,k'_l\}$ are summed over and the remaining indices are the indices of $C(t)$. 

A \emph{tensor network} is a set of tensors $T=(t^j)_{j\in J}$ with 
$t^j \in \CC^{n_1^j\times n_2^j\times \dots \times n_{K^j}^j}$ 
together with a contraction $C$ of their tensor product $t^1 \otimes t^2 \otimes \dots \otimes t^{J}$, 
where we use the convention 
\begin{equation}\label{eq:tensors}
		(t^1 \otimes t^2 \otimes \dots \otimes t^{J})_{i_1, i_2, \dots, i^{K}}
		=
		t^1_{i_1, i_2, \dots, i_{K^1}}
		t^2_{i_{K^1+1},\dots, i_{K^1+K^2}} \dots 
		t^J_{i_{K-K^J+1},\dots , i_{K}} 
\end{equation}
with $K \coloneqq \sum_{j=1}^J K^j$. 
For short, we write $C(T) \coloneqq C(t^1 \otimes t^2 \otimes \dots \otimes t^{J})$. 
We say that a tensor network $(T,C)$ with tensors $T=(t^j)_{j\in J}$ has a \emph{self-contraction} if there is a tensor $t^j$ 
such that both pointers of one of the pairs $\{k_l,k'_l\}$ defining $C$ point to indices of $t^j$.  
We call a tensor network $(T,C)$ \emph{closed} if $C(T)$ is a number, i.e., if $\mc I$ is empty. 

The relevance of tensor networks comes from the fact that contractions of \emph{certain} tensor networks (such as \emph{matrix product states}) 
can be calculated efficiently, 
while only to store general tensors requires exponentially many parameters in the total number of indices. 
In general, estimating the outcome of a contraction is a \sharpP-hard problem \cite{GhaLanShi15}. 
However, there is a simple and natural upper bound, which might already be known. 
At least for sake of a self-contained presentation we provide a proof below. 

\begin{proposition}[Bound on tensor network contractions]\label{prop:TNbound}
	Let $(T,C)$ be a tensor network with $J\geq 2$ tensors $T=(t^j)_{j\in [J]}$ and contraction $C$ without self-contractions. 
	Then
	\begin{equation}
		\fnorm{C(T)} \leq \prod_{j=1}^J \fnorm{t^j} . 
	\end{equation}
\end{proposition}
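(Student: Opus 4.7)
My plan is to prove the bound by induction on the number of tensors $J$, with the base case $J=2$ handled by Frobenius submultiplicativity of matrix products and the inductive step implemented by merging two connected tensors at a time.

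For the base case $J=2$, I partition the indices of each $t^j$ into two multi-indices: those that are contracted (equivalently, those appearing in some pair of $C$) and those that remain free. Because there are no self-contractions, every contracted index of $t^1$ is paired with an index of $t^2$ and vice versa, so after grouping I can view $t^1$ as a matrix $M_1$ with rows indexed by the free indices of $t^1$ and columns indexed by the contracted (shared) multi-index $\alpha$, and similarly view $t^2$ as a matrix $M_2$ with rows $\alpha$ and columns the free indices of $t^2$. The contraction $C(T)$ then has entries $\sum_\alpha (M_1)_{\beta\alpha}(M_2)_{\alpha\gamma} = (M_1 M_2)_{\beta\gamma}$. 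The Frobenius norm is invariant under reshaping, so $\fnorm{C(T)} = \fnorm{M_1 M_2}$, and the inequality $\fnorm{M_1 M_2} \leq \fnorm{M_1} \fnorm{M_2}$ (a direct consequence of Cauchy–Schwarz applied to each entry) gives the claim. If in addition the two tensors are not connected by any contraction pair, then $C(T) = t^1 \otimes t^2$ and $\fnorm{C(T)} = \fnorm{t^1}\fnorm{t^2}$ trivially.

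For the inductive step with $J \geq 3$, I consider the contraction graph whose vertices are the tensors and whose edges are the pairs in $P$. If this graph is disconnected, the contraction factors as an outer product over the connected components, whose Frobenius norms multiply, so I may apply the inductive hypothesis to each component separately. Otherwise, pick any two tensors $t^i, t^j$ that share at least one contraction pair, and merge them into a single tensor $\tilde t$ by contracting simultaneously all pairs of $P$ whose pointers lie in $\{t^i, t^j\}$. Using the same matricization argument as in the base case (grouping the indices shared between $t^i$ and $t^j$ as the common multi-index), I get $\fnorm{\tilde t} \leq \fnorm{t^i}\fnorm{t^j}$. The resulting tensor network has $J-1$ tensors and an inherited contraction $\tilde C$ consisting of all remaining pairs of $P$, and by the inductive hypothesis $\fnorm{\tilde C(\tilde T)} \leq \fnorm{\tilde t}\prod_{k\neq i,j}\fnorm{t^k}$. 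Combining the two estimates yields the desired bound.

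The only place where care is required is verifying that after merging, the reduced network still has no self-contractions: any remaining pair of $P$ either involves two tensors distinct from both $t^i$ and $t^j$, or connects one of $\{t^i,t^j\}$ to some $t^k$ with $k\notin\{i,j\}$, or would connect $t^i$ to $t^j$; but pairs of the last type have all been absorbed into the merge, so no pair of the reduced network has both pointers on $\tilde t$. This is the key structural observation that makes the induction go through, and it is why one must contract all shared edges of $t^i$ and $t^j$ at once rather than one at a time (a single-edge contraction inside a cycle of the contraction graph would generally create a self-loop on the merged tensor, and the matrix-submultiplicativity step would no longer apply directly).
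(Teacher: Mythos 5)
Your proof is correct, but it takes a genuinely different route from the one in Appendix~\ref{sec:Pf_TN_bound}. You run an induction on $J$ in which two tensors sharing at least one contraction pair are merged at each step, and the only analytic input is the submultiplicativity $\fnorm{M_1 M_2}\leq \fnorm{M_1}\fnorm{M_2}$ of the Frobenius norm together with its invariance under matricization; your closing observation that \emph{all} edges between the chosen pair must be contracted simultaneously -- so that the reduced network again has no self-contractions -- is precisely the point that makes the induction close, and you also correctly dispose of the disconnected case via multiplicativity of $\fnorm{\argdot}$ under outer products. The paper instead reduces to closed networks by doubling, inserts auxiliary identity tensors to reroute any direct contractions between $t^1$ and $t^J$ through the interior, writes the contraction as a sandwich $\sandwich{t^1}{\,\cdot\,}{t^J}$, and then iterates Lemma~\ref{lem:spectral_norm_TN_bound} to control the interior matricizations in \emph{spectral} norm before relaxing $\snorm{\tilde t^j}\leq \fnorm{t^j}$. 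Your argument is more elementary and self-contained, avoiding both the identity-insertion bookkeeping and the separate spectral-norm lemma; the paper's transfer-operator decomposition, on the other hand, yields along the way the sharper intermediate bound $\fnorm{t^1}\bigl(\prod_{j=2}^{J-1}\snorm{\tilde t^j}\bigr)\fnorm{t^J}$ for the chosen matricizations, which can be much smaller when the interior tensors are far from rank one -- although only the all-Frobenius form is actually needed where the proposition is invoked (Lemma~\ref{lem:bound_sigmaMtau}).
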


If a tensor network has a self-contraction this statement can fail to hold. 
Such an example can be easily constructed by taking the tensors so that the trace of an identity matrix occurs. 

Here we briefly sktech the proof of the tensor network bound and provide a full proof in Appendix~\ref{sec:Pf_TN_bound}. 

\begin{proof}[Proof idea of Proposition~\ref{prop:TNbound}]
	Reshaping $t^j$ suitably into matrices $\tilde t^j$, the tensor network $C(T)$ can be rewritten as a matrix product of $\tilde t^j\otimes \1$ sandwiched between vectorizations $\bra{t^1}$ and $\ket{t^J}$ of $t^1$ and $t^J$, 
	\begin{equation}
		C(T) = \sandwich{t^1}{\tilde t^2 \otimes \1 \dots \tilde t^{J-1}\otimes \1}{t^J} \, .
	\end{equation}
	Hence, 
	\begin{equation}
	\begin{aligned}
	|C(T)| &\leq  
	\lTwoNorm{\bra{t^1}} \snorm{\tilde t^2\otimes \1} \dots \snorm{\tilde t^{J-1}\otimes \1} \lTwoNorm{\ket{t^J}}
	\\
	&= \fnorm{t^1} \snorm{\tilde t^2} \dots \snorm{\tilde t^{J-1}} \fnorm{t^J} \, ,
	\end{aligned}	
	\end{equation}
	where we have used that $\snorm{A\otimes B} = \snorm{A}\snorm{B}$, $\snorm{\1}=1$, and that the Frobenius norm is the $\ell_2$-norm of a vectorization. 
	Then the bound $\snorm{\tilde t^j}\leq \TwoNorm{\tilde t^j} = \fnorm{t^j}$ between the Schatten  norms finishes the proof. 
\end{proof}

\subsection{Proofs of the reconstruction theorems}\label{sec:ProofsThms}
In this section we prove generalized versions of the theorems presented in Section~\ref{sec:results} providing recovery guarantees for the constrained trace and diamond norm regularization \eqref{eq:CTrNormRec} and \eqref{eq:dNormRec} and the CPT-fit~\eqref{eq:CPTRec}. 

\subsubsection{Constrained trace norm minimization}
In this section we prove a stable and robust reconstruction guarantee for:
\begin{equation}
T^{\ast c}_{\eta,q} \label{eq:CTrNormRecq} 
= 
\argmin\{ \tnorm{J(T)}: T\in \HT(\CC^n),\ \lqNorm{\A(T) - y}\leq \eta\} 
\end{equation}
with $q \geq 1$. The reconstruction procedure $T^{\ast c}_\eta$ introduced in \eqref{eq:CTrNormRec} is a special case, namely $q=2$, of this class of minimization problems.

\begin{theorem}[Stable and robust reconstruction from trace norm minimization]
	\label{thm:TrNorm}
	Let $\A : \CPT(\CC^n) \to \RR^m$ given by
	$\A(T)_j = \Tr[A_j T(\ketbra{\psi_j}{\psi_j})]$, 
	where 
	$(A_i, \ket{\psi_i})_{i \in [m]}$ is a 4-generic measurement ensemble (Definition~\ref{def:measurement_ensemble})
	and	the observables' traceless part is $\tilde A_0 \coloneqq A_0 - \Tr[A_0] \1/n \neq 0$.

	Then there are constants $C$ and $\lambda$ such that the following holds. 
	Fix $q \geq 1$, some Kraus rank $r$, a number of measurement settings
	\begin{equation}
		m \geq C \, r\, \ln(n) n^2 
	\end{equation}
	and let $p\in [1,2]$. 
	Then, with probability at least 
	$1-\e^{-\lambda m}$, for all $T_0 \in \CPT(\CC^n)$ the solution $T_{\eta,q}^{\ast c}$ of the minimization \eqref{eq:CTrNormRecq} with 
	$y = \A(T_0)+e$ approximates $T_0$ with an error 
	\begin{equation}
			\bigl\| J(T_0-T^{\ast c}_\eta) \bigr\|_p
			\leq \\
			\frac{4}{r^{1-1/p}} \tnorm{J({T_0}\c)}  
			+ 
			\tilde c\, \frac{n^2 r^{1/p-1/2}}{m^{\frac{1}{q}} \TwoNormb{\tilde A}} \, \frac{\eta}{\snorm{A_0}}
	\end{equation}
	provided that $\lqNorm{e} \leq \eta$. 
	The constants $C$, $\lambda$, and $\tilde c$ only depend on each other. 
\end{theorem}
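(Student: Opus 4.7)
The plan is to deduce Theorem~\ref{thm:TrNorm} by combining three ingredients already developed in the paper: the normalization reductions (Observations~\ref{obs:centering} and~\ref{obs:normalizing}), the minimum conic singular value bound of Theorem~\ref{thm:lambda_min}, and the NSP-based recovery guarantee of Theorem~\ref{prop:NSP_recovery_guarantee}. First I would apply Observation~\ref{obs:centering} to replace $A_0$ by its traceless part $\tilde A_0$ (which leaves all reconstructions unchanged) and then Observation~\ref{obs:normalizing} to rescale so that $\snorm{\tilde A_0}=1$. This reduces the theorem to the normalized 4-generic setting, at the cost of replacing the noise budget $\eta$ by $\eta/\snorm{A_0}$.

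The main step is to convert the minimum conic singular value bound into the null space property of Definition~\ref{def:NSP} on $J\HT(\CC^n)_0$. I would use the standard compressed-sensing dichotomy: for each $M \in \HT(\CC^n)_0$ set $X = J(M)$ and split $X = X\r + X\c$. If $\tnorm{X\c} \geq c_0 \sqrt{r}\,\TwoNorm{X\r}$ for a fixed $c_0$, the NSP holds trivially with $\mu=1/c_0$ and no noise term. Otherwise,
\begin{equation*}
\tnorm{J(M)} \leq \tnorm{X\r} + \tnorm{X\c} \leq \Bigl(1+\tfrac{1}{c_0}\Bigr)\sqrt{r}\,\TwoNorm{X} = c_\mu\sqrt{r}\,\fnorm{M},
\end{equation*}
so $M$ lies in the cone $K$ of Theorem~\ref{thm:lambda_min}. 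Applying that theorem gives
\begin{equation*}
\TwoNorm{X\r} \leq \fnorm{M} \leq \frac{5\, n(n+1)}{(c-c_\lambda)\, m^{1/q}\,\TwoNormb{\tilde A_0}}\, \lqNorm{\A(M)},
\end{equation*}
which is exactly the NSP with $\tau = \LandauO\bigl(n^2/(m^{1/q}\TwoNormn{\tilde A_0})\bigr)$ and a $\mu \in{} ]0,1[$ (made small by choosing $c_0$ large). Picking $C$ large enough that $1-\sqrt{m_0/m}$ stays bounded below by a constant then establishes the NSP on the event of probability at least $1-\e^{-\lambda m}$ promised by Theorem~\ref{thm:lambda_min}.

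With the NSP in hand, the error bound follows from Theorem~\ref{prop:NSP_recovery_guarantee} applied to $X = J(T_0)$ and $Y = J(T^{\ast c}_{\eta,q})$. Since $T_0 \in \CPT \subset \HT$ and $\lqNorm{\A(T_0) - y} = \lqNorm{e} \leq \eta$, the map $T_0$ is feasible for~\eqref{eq:CTrNormRecq}, which forces $\TrNorm{Y} \leq \TrNorm{X}$ and collapses the first bracket in the recovery guarantee to $2\TrNorm{J(T_0)\c}$; a triangle inequality bounds $\lqNorm{\A(X-Y)} \leq 2\eta$. Substituting the values of $\tau$ and $\mu$ and undoing the normalization reduction (which reintroduces the factor $1/\snorm{A_0}$ on the $\eta$ term) produces the claimed bound. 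I expect the main obstacle to be the NSP calibration: the constants $c_0$, $c_\lambda$, and the overall threshold $C$ have to be chosen simultaneously so that $\mu$ stays safely below~$1$ while $m_0$ retains the optimal scaling $r n^2 \ln n$ and the failure probability remains exponentially small. The remaining manipulations are essentially bookkeeping.
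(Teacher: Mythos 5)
Your proposal follows the paper's proof essentially verbatim: the same reduction via Observations~\ref{obs:centering} and~\ref{obs:normalizing}, the same two-case dichotomy converting the minimum conic singular value bound of Theorem~\ref{thm:lambda_min} into the null space property, and the same application of Theorem~\ref{prop:NSP_recovery_guarantee} with the feasibility argument $\tnorm{J(T^{\ast c}_{\eta,q})}\leq\tnorm{J(T_0)}$ and the triangle-inequality bound $\lqNorm{\A(T_0-T^{\ast c}_{\eta,q})}\leq 2\eta$. The one bookkeeping slip is that in the second case the cone constant should be $c_\mu = 1+c_0$ (the paper's $\frac{1+\mu}{\mu}$ with $\mu = 1/c_0$) rather than $1+\tfrac{1}{c_0}$; this is absorbed by the constant calibration you already flag, and the paper resolves it by fixing $\mu=\sqrt{5}-2$ so that $(1+\mu)^2/(1-\mu)=2$.
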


\begin{proof}
Thanks to Observations~\ref{obs:centering} and~\ref{obs:normalizing} it is enough to prove the theorem of the case where $A_0$ is traceless (i.e., $A_0 = \tilde A_0$) and normalized to $\snorm{A_0}=1$. 
		
We prove the theorem by establishing the subspace NSP \eqref{eq:NSP} for Choi matrices. 
The main technical ingredient to prove the NSP is the bound on the minimum conic singular value from Theorem~\ref{thm:lambda_min}. 
Then Theorem~\ref{prop:NSP_recovery_guarantee} will give the desired result. 

We prove the subspace NSP with the subspace being $J\HT(\CC^n)_0$, i.e., 
the Choi matrices of hermiticity preserving and trace annihilating maps. 
It is helpful to consider two cases. 
First, operators $X\in J\HT(\CC^n)_0$ satisfying 
$\TwoNormb{X\r}\leq \frac{\mu}{\sqrt r} \tnormb{X\c}$ 
are effectively of high rank and satisfy the NSP by default. 
Thus, it suffices to consider the case where 
$\TwoNormb{X\r} \geq \frac{\mu}{\sqrt r} \tnormb{X\c}$. 
Every such matrix satisfies
\begin{equation}
\begin{aligned}
\tnorm{X}
&\leq 
\tnormb{X\c} + \tnormb{X\r}
\\
&\leq 
\frac{1+\mu}{\mu} \sqrt{r}\, \TwoNormb{X\r} 
\\
&\leq 
 \frac{1+\mu}{\mu} \sqrt{r}\, \TwoNorm{X} \, . \label{eq:rank-r-part_smaller}
\end{aligned}
\end{equation}
Hence, $M = J^{-1}(X)$ is contained in the cone $K$ from Eq.~\eqref{eq:def:K} with 
$c_\mu = \frac{1+\mu}{\mu}$. 
For any $q \geq 1$,  Theorem~\ref{thm:lambda_min} yields the bound
\begin{equation}
	\inf_{M \in K} \frac{\norm{\A(M)}_{\ell_q} }{\fnorm{M} }
	\geq 
	\kw \tau 
\end{equation}
with probability over $\A$ at least $1-\e^{-\lambda m}$, where
\begin{equation}
	1/\tau \geq 
	\tilde C
	\TwoNorm{A}
	\frac{m^{\frac{1}{q}}}{n^2} \
\end{equation}
with $\tilde C$ being a constant only depending on $\lambda$ and $\mu$, and $m\geq m_0$ with $m_0$ given in Eq.~\eqref{eq:m_bound} (with the dependence $m_0 \propto c_\mu^2$). 
This establishes the $(\HT(\CC^n)_0, r, \ell_q)$-NSP for our measurement map $\mc A$ with parameters $\mu$ and $\tau$ given as above. 
Hence, Theorem~\ref{prop:NSP_recovery_guarantee} yields
	\begin{equation}\label{eq:rec_bound_in_pf}
	\begin{split}
	\|J(T_0-T^{\ast c}_{\eta,q})\|_p
	\leq
	\frac{(1+\mu)^2}{(1-\mu)r^{1-1/p}} \Bigl(
	\tnormb{J(T^{\ast c}_{\eta,q})} - &\tnorm{J(T_0)} + 2\tnorm{J({T_0}\c)} \Bigr)
	\\
	&+\tau r^{1/p-1/2}\, \frac{3+\mu}{1-\mu} 
	  \lqNorm{\A(T_0-T^{\ast c}_{\eta,q})}
	\end{split}
	\end{equation}
	for any $\mu \in \left]0,1\right[\,$. 
	The minimization in \eqref{eq:CTrNormRecq} assures $\tnormb{J(T^{\ast c}_{\eta,q})} \leq \tnorm{J(T_0)}$ by construction, because $J(T_0)$ is a feasible point of this optimization problem.
	Moreover,
	\begin{align}
	\lqNorm{\A(T_0-T^{\ast c}_{\eta,q})}
	&\leq 
	\lqNorm{\A(T_0)-y} + \lqNorm{y-\A(T^{\ast c}_{\eta,q})} \nonumber 
	\\
	&\leq 
	2 \eta 
	\label{eq:TwoEtaBound}
	\end{align}
	since $\A(T_0)-y=e$ with $\lqNorm{e} \leq \eta$ and 
	$\lqNorm{y-\A(T^{\ast c}_\eta)} \leq \eta$ due to a constraint in the minimization \eqref{eq:CTrNormRec}.
	Choosing $\mu = \sqrt{5} -2$ leads to $(1+\mu)^2/(1-\mu) = 2$. 
	Simplifying the bound \eqref{eq:rec_bound_in_pf} with these observations 
	completes the proof. 
\end{proof}

\subsubsection{CPT-fit}

In this section, we prove a stable and robust reconstruction guarantee for the following generalization of the CPT-fit \eqref{eq:CPTRec}:
\begin{equation}
T^{\ell_q}
\coloneqq 
\argmin\{ \lqNorm{\A(T) - y} : \, T\in \CPT(\CC^n)\} \, , \label{eq:CPTRecq}
\end{equation}
where $q \geq 1$ arbitrary. Similar to before, the CPT-fit discussed in the introductory section arises from fixing $q=2$.

\begin{theorem}[Stable and robust reconstruction from CPT-fit]\label{thm:CPTfit}
Consider a 4-generic measurement setup as in Theorem~\ref{thm:TrNorm} and the same constants $C$ and $\lambda$. 
Fix $q \geq 1$, some Kraus rank $r$, a number of measurement settings
	\begin{equation}
		 m \geq C \, r\, \ln(n) n^2 
	\end{equation}
and $p\in [1,2]$. 
    Then, with probability at least $1-\e^{-\lambda m}$, for all $T_0 \in \CPT(\CC^n)$ the solution $T^{\ell_q}$ of \eqref{eq:CPTRecq} with $y = \A(T_0)+e$ approximates $T_0$ with an error 
	\begin{equation}
			\normb{ J(T_0-T^{\ell_q}) }_{p}
			\leq \\
			\frac{4}{r^{1-1/p}} \tnorm{J({T_0}\c)} 
			+ \tilde c \frac{n^2  r^{1/p-1/2}}{ m^{\frac{1}{q}} \TwoNorm{\tilde A_0} } \frac{\lqNorm{e}}{\snorm{A_0}}
			\, .
	\end{equation}
	The constants $C$, $\lambda$, and $\tilde c$ again only depend on each other. 
\end{theorem}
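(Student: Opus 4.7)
The strategy I would take for proving Theorem~\ref{thm:CPTfit} is to reduce it directly to the NSP-based argument developed in the proof of Theorem~\ref{thm:TrNorm}, exploiting the simple but crucial fact that every quantum channel has a Choi matrix of \emph{constant} trace norm. Indeed, for any $T \in \CPT(\CC^n)$ the matrix $J(T)$ is positive semidefinite with $\Tr_1[J(T)] = \1$, so $\tnorm{J(T)} = \Tr[J(T)] = n$ is independent of $T$. In particular, the minimizer $T^{\ell_q}$ of \eqref{eq:CPTRecq} automatically satisfies $\tnorm{J(T^{\ell_q})} = \tnorm{J(T_0)}$, which plays the role of the trace-norm inequality that the minimization~\eqref{eq:CTrNormRecq} enforces by design.

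Concretely, I would first invoke Observations~\ref{obs:centering} and~\ref{obs:normalizing} to reduce to the case where $A_0$ is traceless with $\snorm{A_0}=1$; this is legitimate because both reductions only use that the candidate maps are trace preserving, which is guaranteed by the CPT constraint. Next, I would note that the event on which Theorem~\ref{thm:TrNorm} is established---namely, that the minimum conic singular value bound from Theorem~\ref{thm:lambda_min} furnishes the $\ell_q$-NSP for $\A$ on the cone $K$ of~\eqref{eq:def:K} with constants $\mu = \sqrt{5}-2$ and some $\tau$ with $1/\tau \geq \tilde C \TwoNorm{A} m^{1/q}/n^2$---occurs with probability at least $1-\e^{-\lambda m}$ and depends only on $\A$, not on the reconstruction method. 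I condition on this event.

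On this event, I would apply the NSP recovery bound in Theorem~\ref{prop:NSP_recovery_guarantee} with $X = J(T_0)$ and $Y = J(T^{\ell_q})$. This is legal because $T_0 - T^{\ell_q}$ is hermiticity preserving and trace annihilating, so $J(T_0 - T^{\ell_q}) \in J\HT(\CC^n)_0$. Two inputs are then needed: the trace-norm difference $\tnorm{J(T^{\ell_q})} - \tnorm{J(T_0)}$, which vanishes identically by the observation above; and the data-fidelity term, for which optimality of $T^{\ell_q}$ together with feasibility of $T_0$ yields $\lqNorm{\A(T^{\ell_q}) - y} \leq \lqNorm{\A(T_0) - y} = \lqNorm{e}$, and hence by the triangle inequality
\[
\lqNorm{\A(T_0 - T^{\ell_q})} \leq \lqNorm{\A(T_0)-y} + \lqNorm{y - \A(T^{\ell_q})} \leq 2\lqNorm{e}.
\]
Plugging these two bounds into the conclusion of Theorem~\ref{prop:NSP_recovery_guarantee} reproduces the claimed error estimate with exactly the same dependence on $n$, $m$, $r$, $\TwoNorm{\tilde A_0}$, $\snorm{A_0}$, and the universal constants as in Theorem~\ref{thm:TrNorm}.

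I do not anticipate any genuine obstacle: the proof is a clean reuse of existing machinery, and no new probabilistic or representation-theoretic ingredients are required. The only point that warrants care is verifying that $J(T_0 - T^{\ell_q})$ indeed lies in the cone $K$ (or, more precisely, that the NSP of Definition~\ref{def:NSP} applies to it): this splits into the two standard cases $\TwoNorm{J(T_0-T^{\ell_q})\r} \lessgtr \frac{\mu}{\sqrt{r}} \tnorm{J(T_0-T^{\ell_q})\c}$ treated exactly as in the proof of Theorem~\ref{thm:TrNorm}, with the ``effectively low-rank'' case landing in $K$ and the ``effectively high-rank'' case handled trivially by the NSP definition itself.
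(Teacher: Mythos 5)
Your proposal is correct and follows essentially the same route as the paper's own proof: both reduce to the NSP-based argument of Theorem~\ref{thm:TrNorm}, observe that $\tnorm{J(T^{\ell_q})}=\tnorm{J(T_0)}$ since both are Choi matrices of CPT maps (so the trace-norm difference in the bound of Theorem~\ref{prop:NSP_recovery_guarantee} vanishes), and use optimality of $T^{\ell_q}$ together with feasibility of $T_0$ to bound $\lqNorm{\A(T_0-T^{\ell_q})}\leq 2\lqNorm{e}$. No gaps.
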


Before coming to the actual proof we provide some intuition why this theorem is almost a corollary of Theorem~\ref{thm:TrNorm}. 
Adding the positivity constraint $J(T) \geq 0$ to the trace norm minimization \eqref{eq:CTrNormRecq} from Theorem~\ref{thm:TrNorm} can only improve the recovery of CPT maps. 
But as $\tnorm{J(T)} = n$ for all $T \in \CPT(\CC^n)$, the constrained minimization in \eqref{eq:CTrNormRecq} degenerates into a feasibility problem.
Hence, to achieve a further improvement in the reconstruction, it is natural to minimize the residual $\lqNorm{\A(T) - y}$ instead. 
This is exactly what the CPT-fit does. 

\begin{proof}[Proof of Theorem~\ref{thm:CPTfit}]
	This proof is analogous to the proof of Theorem~\ref{thm:TrNorm} with two exceptions: 
	First, $\tnorm{J(T^{\ell_q})}= \tnorm{J(T_0)}=n$, because both are constrained to be Choi matrices of CPT maps. 
	Consequently, their difference vanishes and \eqref{eq:rec_bound_in_pf} becomes 
	\begin{equation}
	\norm{J(T^{\ell_q})}_p
	\leq
	\frac{(1+\mu)^2}{(1-\mu)r^{1-1/p}} \cdot
	2 \tnorm{J({T_0}\c)} 
	\\
	+\tau r^{1/p-1/2}\, \frac{3+\mu}{1-\mu} 
	  \lqNorm{\A(T_0-T^{\ast c}_{\eta,q})} .
	\end{equation}
	Secondly, the minimization \eqref{eq:CPTRecq} assures
	\begin{equation}
	\begin{aligned}
	\lqNorm{\A(T_0-T^{\ell_q}_\eta)} 
	\leq &
	\lqNorm{\A(T^{\ell_q}_\eta)-y} +\lqNorm{y-\A(T_0)} \\
	\leq & 2 \lqNorm{e},
	\end{aligned}
	\end{equation}
	because $T_0$ is a feasible point satisfying $\lqNorm{y-\A(T_0)} = \lqNorm{e}$
	and $T^{\ell_q}$ is the minimizer.
\end{proof}

\subsubsection{Constrained trace and diamond norm minimization}
In Ref.~\cite{KliKueEis16} is shown that certain recovery guarantees for trace norm regularization (such as \eqref{eq:TrNormRec} and \eqref{eq:CTrNormRec}) automatically imply recovery guarantees for the analogous diamond norm regularization (such as \eqref{eq:dNormRec} and \eqref{eq:CdNormRec}). 
This holds when the recovery guarantees can be proven as, e.g., in Ref.~\cite{Tro15} via the descent cone of the reguralizer. 
In this section, we follow this strategy in order to obtain a recovery guarantee for diamond norm regularization in our quantum process tomography setting. 

\myparagraph{An error bound from the descent cone}
For the proof of Theorem~\ref{thm:TrNorm} we use an error bound relying on the so-called descent cone of the function that is minimized in the reconstruction. 
The descent cone of a function is the cone of directions in which the function decreases \cite{Tro15}: 

\begin{definition}[Descent cone]\label{def:DC}
	Let $\V$ be an affine space and $f: \V \to \overline\RR$ be a proper convex function. 
	The \emph{descent cone} of $f$ at a point $x\in \V$ is 
	\begin{equation}
		\DC(f,x) \coloneqq \cone\, \{u\in \V_0 : \ f(x+ u) \leq f(x) \} \, .
	\end{equation}
\end{definition}

The following error bound is the basis for many recovery guarantees from bounds to the conic singular value from Definition~\ref{def:conic_SV}.  
It has been proven in Ref.~\cite{ChaRecPar12} (where the descent cone is given as a tangent cone) and has later been restated by Tropp \cite{Tro15} for optimizations over a vector space. 
One can easily see from its proof that it also holds if one optimizes over an affine space and chooses arbitrary $\ell_q$-norms ($q \geq 1$) to measure the size of the minimum conic singular value.

\begin{proposition}[Error bound for convex recovery, Tropp's version {\cite[Proposition~2.6]{Tro15}}]
	\label{prop:general_reconstruction}
	Let $x_0 \in \V$ be a signal, $\A\in \V \to\RR^m$ be an affine linear measurement map, 
	$y = \A(x_0)+\ev$ a vector of $m$ measurements with additive error $\ev \in\RR^m$. Fix $q\geq 1$ and let $x^f_{\eta,q}$ be the solution of the optimization 
	\begin{equation}\label{eq:general_reconstruction}
		x^f_{\eta,q} = \argmin\{ f(x) \mid x \in \V: \ \lqNorm{\A(x) - y} \leq \eta\}
	\end{equation}
	for a convex function $f: \V \to \overline{\RR}$.
	If $\lqNorm{\ev} \leq \eta$ then
	\begin{equation}
		\fnormb{x^f_\eta - x_0}\leq \frac{2\eta}{\lambda_{\min{}} (\A;\DC(f,x_0),\ell_q)} \, .
	\end{equation}
where $\lambda_{\min{}} (\A;\DC(f,x_0),\ell_q)$ has been defined in \eqref{eq:lambdamin}.
\end{proposition}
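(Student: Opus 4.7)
The plan is to follow the classical ``feasibility $+$ descent-cone $+$ noise triangle inequality'' template that is standard in compressed sensing, taking some extra care because $\V$ is an affine space rather than a vector space. Set $h \coloneqq x^f_{\eta,q} - x_0 \in \V_0$; the goal is to derive a lower bound on $\lambda_{\min{}}(\A;\DC(f,x_0);\ell_q)$ through $h$, and an upper bound on $\lqNorm{\A h}$ through the noise level $\eta$.

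First I would observe that $x_0$ is feasible for the program \eqref{eq:general_reconstruction}: since $y=\A(x_0)+e$ and $\lqNorm{e}\leq \eta$ by assumption, we have $\lqNorm{\A(x_0)-y}=\lqNorm{e}\leq\eta$. Therefore the optimality of $x^f_{\eta,q}$ gives $f(x^f_{\eta,q})\leq f(x_0)$, which is exactly the statement that $h = x^f_{\eta,q}-x_0$ lies in the descent set $\{u\in \V_0: f(x_0+u)\leq f(x_0)\}$, and hence in the descent cone $\DC(f,x_0)$ by Definition~\ref{def:DC}.

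Next, provided $h\neq 0$ (otherwise the claimed bound is trivial), applying the definition of the minimum conic singular value in Definition~\ref{def:conic_SV} to the particular element $h\in \DC(f,x_0)$ yields
\begin{equation}
\lambda_{\min{}}(\A;\DC(f,x_0);\ell_q) \;\leq\; \frac{\lqNorm{\A h}}{\fnorm{h}}\,,
\end{equation}
where I implicitly use that $\A$ has been linearly extended from $\V$ to $\lin(\V)\supset \V_0$, so $\A h = \A(x^f_{\eta,q})-\A(x_0)$ is well-defined. The last ingredient is an upper bound on $\lqNorm{\A h}$. Both $x^f_{\eta,q}$ and $x_0$ are feasible, so the triangle inequality in $\ell_q$ gives
\begin{equation}
\lqNorm{\A h} \;\leq\; \lqNorm{\A(x^f_{\eta,q})-y} + \lqNorm{y-\A(x_0)} \;\leq\; \eta + \lqNorm{e} \;\leq\; 2\eta.
\end{equation}
Rearranging the combined inequality for $\fnorm{h}=\fnorm{x^f_{\eta,q}-x_0}$ yields the advertised bound.

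There is no substantial obstacle here: the statement is essentially a rephrasing of the standard Chandrasekaran--Recht--Parrilo--Willsky / Tropp argument. The only two points requiring mild care are (i) handling the affine structure correctly, namely that $h\in\V_0$ so that $\A$ acts linearly on $h$ after extension to $\lin(\V)$ and that $\DC(f,x_0)$ lives in $\V_0$ by construction; and (ii) the generalization from the customary $\ell_2$-norm to an arbitrary $\ell_q$-norm in the noise budget, which goes through verbatim because both the descent-cone argument and the triangle inequality are $\ell_q$-agnostic. In particular, nothing in the proof uses any property of $\ell_q$ beyond being a norm on $\RR^m$, so the same scheme would even work with any other norm replacing $\lqNorm{\argdot}$ consistently in the definition of $\lambda_{\min{}}$ and in the feasibility constraint.
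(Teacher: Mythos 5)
Your proposal is correct and is precisely the standard descent-cone argument (feasibility of $x_0$, optimality placing $h=x^f_{\eta,q}-x_0$ in $\DC(f,x_0)$, the definition of $\lambda_{\min{}}$, and the triangle inequality giving $\lqNorm{\A h}\leq 2\eta$) that the paper invokes by deferring to Tropp's proof of \cite[Proposition~2.6]{Tro15}, noting only that the affine and $\ell_q$ generalizations go through unchanged. You have simply written out explicitly what the paper cites, including the same two points of care (the affine extension of $\A$ to $\lin(\V)$ and the $\ell_q$-agnosticism of the argument), so there is nothing to add.
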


\begin{proof}
	In the proof of Ref.\ \cite[Proposition~2.6]{Tro15} one uses Definition~\ref{def:conic_SV} of the conic singular value only for element in $\V-\V$.
Generalizing the proof from errors measured in Euclidean norm ($q=2$) to any $\ell_q$-norm is also straightforward, provided that the definition of the minimum conic singular value is properly adjusted.
\end{proof}

\myparagraph{Our recovery guarantee for the diamond norm}
First we prove a weaker version of Theorem~\ref{thm:TrNorm} with a different argument. 
In turn this reconstruction guarantee for minimization~\eqref{eq:CTrNormRecq} will imply a recovery guarantee for the following generalization of diamond norm reconstruction:
\begin{equation}
T^{\diamond c}_{\eta,q} 
=
\argmin\{ \dnorm{T}: T\in \HT(\CC^n),\ \lqNorm{\A(T) - y}\leq \eta\} .
\label{eq:CdNormRecq}
\end{equation}
Note that the minimization~\eqref{eq:CdNormRec} discussed in the main text is a special case of \eqref{eq:CdNormRecq}, where $q=2$.

\begin{theorem}[Stable reconstruction from trace norm minimization]
\label{thm:TrNorm2}
Consider a measurement setting as in Theorem~\ref{thm:TrNorm} with possibly different  constants $C$ and $\lambda$. 
Fix some Kraus rank $r$ and 
	\begin{equation}
		m \geq C \, r\, \ln(n) n^2 \, ,
	\end{equation}
	as well as $q\geq 1$.
	Then, with probability at least $1-\e^{-\lambda m}$, for all 
	$T_0 \in \HT(\CC^n)$ with Kraus rank $\rank(J(T_0))\leq r$ the solution $T^{\ast c}_{\eta,q}$ of the minimization \eqref{eq:CTrNormRecq} with 
	$y = \A(T_0)+e$ approximates $T_0$ with an error 
	\begin{equation}
		\bigl\| J(T_0-T^{\ast c}_\eta) \bigr\|_2
		\leq  
		\tilde c \frac{n^2}{ m^{\frac{1}{q}} \TwoNormb{\tilde A_0} } \, \frac{\eta}{\snorm{A_0}}
	\end{equation}
	provided that $\lqNorm{e} \leq \eta$. 
	The constants $C$, $\lambda$, and $\tilde c$ again only on each other. 
\end{theorem}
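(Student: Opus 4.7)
The plan is to deduce Theorem~\ref{thm:TrNorm2} from the minimum conic singular value bound (Theorem~\ref{thm:lambda_min}) via the generic error bound provided by Proposition~\ref{prop:general_reconstruction}, applied to the trace norm regularizer on the affine subspace $\HT(\CC^n)$. The only nontrivial work is showing that the descent cone $\DC(\tnorm{J(\cdot)}, T_0)$ is contained in the cone $K$ of ``effectively rank-$r$'' trace-annihilating maps defined in~\eqref{eq:def:K}.

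First, by Observations~\ref{obs:centering} and~\ref{obs:normalizing}, I can assume without loss of generality that $A_0$ is traceless and $\snorm{A_0}=1$, so that $\tilde A_0 = A_0$ and the measurement ensemble is normalized 4-generic. Next, I invoke the standard descent-cone-of-nuclear-norm calculation: since $\rank(J(T_0)) \leq r$, write $J(T_0) = U\Sigma V^\dagger$ with $\Sigma$ supported on $r$ singular values and let $\mathcal T$ be the tangent space at $J(T_0)$ to the manifold of rank-$r$ matrices (matrices of the form $UX^\dagger + Y V^\dagger$). For any descent direction $M \in \HT(\CC^n)_0$, the matrix $Z=J(M)$ satisfies $\tnorm{J(T_0)+Z} \leq \tnorm{J(T_0)}$; a direct calculation using the variational characterization of the trace norm then gives $\tnorm{Z_{\mathcal T^\perp}} \leq \tnorm{Z_{\mathcal T}}$, from which
\begin{equation*}
\tnorm{Z} \leq 2 \tnorm{Z_{\mathcal T}} \leq 2\sqrt{2r}\, \TwoNorm{Z_{\mathcal T}} \leq 2\sqrt{2r}\,\TwoNorm{Z} \, .
\end{equation*}
Hence $J(M)$ obeys the bound defining $K$ with $c_\mu = 2\sqrt{2}$, and since $T_0$ is trace preserving and $T^{\ast c}_{\eta,q}$ is constrained to be hermiticity and trace preserving, every descent direction automatically lies in $\HT(\CC^n)_0$. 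Therefore $\DC(\tnorm{J(\cdot)}, T_0) \cap \HT(\CC^n)_0 \subset K$.

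Second, with $c_\mu$ now a fixed absolute constant, Theorem~\ref{thm:lambda_min} applies: provided $m \geq C\, r\, n^2 \ln(n)$ for a suitable constant $C$ (absorbing the dependence on $c_\mu$ and a choice of $c_\lambda \in {]0,c[}$), the $\ell_q$-minimum conic singular value of $\A$ restricted to $K$ satisfies
\begin{equation*}
\lambda_{\min{}}(\A; K; \ell_q) \geq c'\, \TwoNorm{A_0}\, \frac{m^{1/q}}{n(n+1)}
\end{equation*}
with probability at least $1-\e^{-\lambda m}$ for suitable absolute constants $c'$ and $\lambda$. This is the main probabilistic ingredient; the rest of the proof is deterministic on this event.

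Finally, on the event that this bound holds, apply Proposition~\ref{prop:general_reconstruction} with $\V = \HT(\CC^n)$, $f(T) = \tnorm{J(T)}$, and $\A$ as the measurement map. Since $\lqNorm{e} \leq \eta$, the proposition yields
\begin{equation*}
\fnormb{T_0 - T^{\ast c}_{\eta,q}} \leq \frac{2\eta}{\lambda_{\min{}}(\A; K; \ell_q)} \leq \tilde c\, \frac{n^2}{m^{1/q}\, \TwoNorm{A_0}}\, \eta \, ,
\end{equation*}
where we used $n(n+1) \leq 2 n^2$ and absorbed constants into $\tilde c$. Using $\TwoNorm{J(\cdot)} = \fnorm{\cdot}$ and reinstating the $\snorm{A_0}$ factor via Observation~\ref{obs:normalizing} gives the claimed bound. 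The only step that requires substantive technical input is the minimum conic singular value estimate of Theorem~\ref{thm:lambda_min}, which has already been established; the descent cone containment and the application of Proposition~\ref{prop:general_reconstruction} are routine once $c_\mu$ is identified as an absolute constant.
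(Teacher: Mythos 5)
Your proposal is correct and follows essentially the same route as the paper: reduce to traceless normalized $A_0$ via Observations~\ref{obs:centering} and~\ref{obs:normalizing}, show the descent cone of $\tnorm{J(\argdot)}$ at $T_0$ lies in the cone $K$ of Eq.~\eqref{eq:def:K}, invoke Theorem~\ref{thm:lambda_min}, and finish with Proposition~\ref{prop:general_reconstruction}. The only difference is that you re-derive the H\"older-type descent-cone inequality via the standard tangent-space decomposition (obtaining $c_\mu = 2\sqrt{2}$) where the paper cites \cite[Lemma~10]{KueRauTer15} (which gives $c_\mu = 2$); this is immaterial since the constant is absorbed into $C$ and $\tilde c$.
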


\begin{proof}
	Thanks to Observations~\ref{obs:centering} and~\ref{obs:normalizing} it is enough to prove the theorem of the case where $A_0$ is traceless (i.e., $A_0 = \tilde A_0$) and normalized to $\snorm{A_0}=1$. 
	This proof relies on Proposition~\ref{prop:general_reconstruction} where the reconstruction error is bounded in terms of the $\ell_q$-minimum conic singular value of $\A$ w.r.t.\ the descent cone of the trace norm at $T_0$. 
	
	From Ref.\ \cite[Lemma~10]{KueRauTer15} it follows that any $T_0\in \HT(\CC^n)$ with $\rank(J(T_0))\leq r$ and any 
	$M \in \DC(\tnorm{J(\argdot)}, T_0)$ the H\"older-type inequality 
	\begin{equation}
		\tnorm{J(M)} \leq 2\sqrt{r} \TwoNorm{J(M)} 
	\end{equation}	
	is satisfied. 
	Hence, we choose $c_\mu$ in Theorem~\ref{thm:lambda_min} to be $c_\mu = 2$. 
	Then the theorem implies that with probability at least $1-\e^{-\lambda m}$
	the minimum conic singular value $\lambda_{\min{}}$ of $\A$ w.r.t.\ the descent cone $\DC(\tnorm{J(\argdot)}, T_0)$ is bounded as 
	\begin{equation}
		\lambda_{\min{}}\left( \A; \DC(\tnorm{J(\argdot)},T_0); q \right)
		\geq 
		\tilde C\,
		\frac{\TwoNorm{A} m^{\frac{1}{q}}}{n^2} \, \quad \forall q \geq 1 ,
	\end{equation}
	where $\tilde C$ is a constant only depending on $\lambda$. 
	Proposition~\ref{prop:general_reconstruction} finishes the proof. 	
\end{proof}

Now, the results from Ref.~\cite{KliKueEis16}, specifically \cite[Implication~9]{KliKueEis16} tell us that the diamond norm minimization performs at least as well as the trace norm minimization. 

\begin{corollary}
[Stable reconstruction from diamond norm minimization \cite{KliKueEis16}]
\label{thm:dNorm}
Choosing the diamond norm minimization~\eqref{eq:CdNormRec} instead of the trace norm minimization~\eqref{eq:CTrNormRec} in Theorem~\ref{thm:TrNorm2} yields a smaller smaller reconstruction error, 
	\begin{equation}
		\TwoNormb{ J(T_0-T^{\diamond c}_\eta) }
		\leq 
		\TwoNormb{ J(T_0-T^{\ast c}_\eta) } \, .
	\end{equation}
\end{corollary}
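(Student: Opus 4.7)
The plan is to reduce everything to the descent-cone framework and then to invoke a containment result from Ref.~\cite{KliKueEis16} that compares the two regularizers directly at the geometric level.

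First, I would apply Proposition~\ref{prop:general_reconstruction} separately to the two minimizations \eqref{eq:CTrNormRec} and \eqref{eq:CdNormRec}, both viewed as convex programs over the affine space $\HT(\CC^n)$. Using the identity $\fnorm{M}=\TwoNorm{J(M)}$ to rewrite the Tropp-style error bound in terms of the Choi matrix, for any $\eta\geq\lTwoNorm{e}$ this yields
\[
\TwoNorm{J(T_0-T^{\ast c}_\eta)}\leq\frac{2\eta}{\lambda_{\min{}}(\A;\DC(\tnorm{J(\cdot)},T_0);\ell_2)}
\]
together with the analogous estimate in which $\dnorm{\cdot}$ replaces $\tnorm{J(\cdot)}$ inside the descent cone. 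Both bounds are deterministic and hold under the hypotheses already used in Theorem~\ref{thm:TrNorm2}.

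Second, I would invoke the key geometric observation from Ref.~\cite{KliKueEis16}, namely that at every CPT map $T_0$ the descent cone of the diamond norm is contained in the descent cone of the trace norm of the Choi representation,
\[
\DC(\dnorm{\cdot},T_0)\subseteq\DC(\tnorm{J(\cdot)},T_0).
\]
This containment -- the substance of Implication~9 of \cite{KliKueEis16} -- can be used here as a black box; morally, it reflects the fact that the diamond norm controls the nuclear norm of the Choi matrix together with additional partial-trace structure enjoyed by CPT maps, so locally near $T_0$ the sublevel sets of $\dnorm{\cdot}$ are thinner than those of $\tnorm{J(\cdot)}$.

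Finally, since $\lambda_{\min{}}$ is defined as an infimum over the cone in question, the containment above immediately gives a value of $\lambda_{\min{}}$ over the diamond-norm cone that is at least as large as the one over the trace-norm cone. Substituting this comparison into the two error bounds above delivers the claim $\TwoNorm{J(T_0-T^{\diamond c}_\eta)}\leq\TwoNorm{J(T_0-T^{\ast c}_\eta)}$. The main obstacle is the descent-cone containment itself; this, however, has already been established in Ref.~\cite{KliKueEis16}, so once it is invoked the remainder is a one-line consequence of the same Tropp-style framework that underlies Theorem~\ref{thm:TrNorm2}.
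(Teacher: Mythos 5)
Your proposal is correct and follows essentially the same route as the paper: the paper's proof of this corollary consists precisely of invoking Implication~9 of Ref.~\cite{KliKueEis16}, whose content is the descent-cone containment $\DC(\dnorm{\argdot},T_0)\subseteq\DC(\tnorm{J(\argdot)},T_0)$ combined with the Tropp-style error bound of Proposition~\ref{prop:general_reconstruction}, exactly as you spell out. (Strictly speaking this argument compares the two \emph{error bounds} rather than the two realized errors, but that looseness is present in the paper's own formulation of the corollary, not something introduced by your proof.)
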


\subsection{Reconstruction from approximate 4-generic measurements}
\label{sec:approximate}
As stated, our recovery guarantees hold if the input states are drawn from a complex $4$-design and the output states are measured with observables that have unitary $4$-designs as eigenbases. 
Here, we show that for the recovery guarantees to hold $\epsilon/n^4$-approximate $4$-designs are enough, in the sense that the $\epsilon$ only changes the constants in the recovery guarantee but not directly the reconstruction error. 
This significantly lesses the burden for experimental realizations. 
Similar generalization have been proven for low-rank matrix recovery \cite{KueRauTer15} and we will extend those arguments here. 
$\epsilon$-approximate designs can be implemented, e.g., by local random quantum circuits \cite{BraHarHor12,NakHirKoa17,HarMeh18} or fluctuating Hamiltonians \cite{OnoBueKli17}, where the $\epsilon$ can be reduced exponentially quickly. 

For some probability distribution $\mu$ on the sphere in $\CC^n$ we denote the corresponding average of $\ketbra \psi \psi^{\otimes k}$  by 
\begin{equation}
	\psi^{(k)}_\mu \coloneqq \EE_{\ket \psi \sim \mu}\myleft[ \ketbra \psi \psi^{\otimes k} \myright] 
\end{equation}
and set $\psi^{(k)} \coloneqq \psi^{(k)}_\Haar$ to be the uniform average. 
We call $\mu$ and \emph{$\epsilon$-approximate spherical $k$-design} if
\begin{equation}
	\snorm{\psi^{(k)}_\mu - \psi^{(k)}}
	\leq \epsilon \snorm{ \psi^{(k)} }
	= \frac{\epsilon}{\binom{n+k-1}{k}} \, .
\end{equation}
Such an $\epsilon$-approximate $k$-design is also an $\epsilon$-approximate $k'$-design for any 
$k'\leq k$ (see, e.g., \cite[Lemma~16]{KueRauTer15}). 

Analogously, given a probability measure $\nu$ on $U(n)$ we define the $k$-th moment (super)operator by
\begin{equation}
	\G_\nu^{(k)}(X) 
	\coloneqq 
	\EE_{U \sim \nu} \myleft[ U^{\otimes k} X U^{\dagger \otimes k} \myright] 
\end{equation}
and set $\G^{(k)} \coloneqq \G_\Haar^{(k)}$. 
We define $\nu$ to be an \emph{$\epsilon$-approximate unitary $k$-design}
if for all traceless product operators $X$
\begin{equation}
\begin{aligned}
	\snorm{ \G_\nu^{(k)}(X) - \G^{(k)}(X) }
	&\leq 
	\epsilon \snorm{\G^{(k)}(X)} \, . 
\end{aligned}
\end{equation}
This bound means that the traceless part of observables is not changed too much. 
By the embedding $X \mapsto X \otimes \1_n$ it is easy to see that an $\epsilon$-approximate unitary $k$-design is also an $\epsilon$-approximate unitary $k'$-design for all $k'\leq k$. 

There are also other definitions of approximate designs, which are partially discussed, e.g., in Low's PhD thesis \cite[Section~2.2]{Low10}. 
However, for our purposes these definitions are the most natural ones. 

We note that for a traceless product operator $X$
\begin{equation}
	\snorm{\G^{(k)}(X)} \leq \frac{c(k)}{n^k} \TwoNorm{X} \, ,
\end{equation}
which follows from Eq.~\eqref{eq:decompose_U_average}, expanding $P_\lambda$ in terms of permutations $\sigma$, viewing $\Tr[\sigma X]$ as tensor network, and using Proposition~\ref{prop:TNbound}. 
Hence, any $\epsilon$-approximate unitary $k$-design from one of the other definitions \cite[Section~2.2]{Low10} is also an $\epsilon'$-approximate unitary $k$-design according to our definition. 
As usual, there is some dimension factor overhead when going from one definition to another. 

\begin{definition}[$\epsilon$-approximate $4$-generic measurement ensemble] 
\label{def:measurement_ensemble_approximate}
We call a measurement ensemble $\left(A, \ket{\psi} \right)$ with observable~$A$ acting on $\CC^n$ and state~$\ket{\psi}$ in $\CC^n$ \emph{$\epsilon$-approximate 4-generic} if it fulfills the following criteria:
\begin{compactenum}[i)]
	\item 
	The distribution of $\ket{\psi}$ in $\CC^n$ is an $\epsilon$-approximate spherical $4$-design.
	\item 
	$A=U A_0 U^\dagger$, where $A_0 \in \Herm(\CC^n)$ is fixed and $U$ in $\U(n)$ is an 
	$\epsilon$-approximate unitary $4$-design. 
\end{compactenum}
The measurement ensemble is called \emph{normalized $\epsilon$-approximate 4-generic} if the observables are traceless and normalized in spectral norm, i.e. $\Tr[A_0] = 0$ and $\snorm{A_0} = 1$. 
\end{definition}

\begin{theorem}[$\lambda_{\min{}}$ for $\epsilon/n^4$-approximate $4$-generic measurements]
\label{thm:approximate}
Theorem~\ref{thm:lambda_min} also holds for $\epsilon$-approximate $4$-generic measurements with a smaller constant $c>0$ whenever 
$\epsilon \leq \frac{c_4}{n^4}$, where $c_4>0$ is an absolute constant. 
\end{theorem}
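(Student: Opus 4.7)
The plan is to re-run the proof of Theorem~\ref{thm:lambda_min} while tracking how every use of the exact $4$-design hypothesis is affected by replacing it with an $\epsilon$-approximate one. The design property enters the proof in exactly two places: through the lower bound on the marginal tail function $Q_\xi$ in Lemma~\ref{lem:our_lower_tail_bound}, which rests on the exact formulas \eqref{eq:full_second_moment} and \eqref{eq:fourth_moment_bound} for the second and fourth moments of $S = \Tr[UA\,U\ad\, M(\ketbra\psi\psi)]$, and through the upper bound on the mean empirical width $W_m$ in Lemma~\ref{lem:W}, where the exact moments \eqref{eq:EvUAU} and \eqref{eq:Epsipsi} are used to evaluate $\snorm{\Ev[H_1^2]}$ in \eqref{eq:H1_squared}. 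Everything else --- Mendelson's small ball method, Proposition~\ref{prop:TNbound}, H\"older and Khintchine-type inequalities, the matrix Chernoff bound --- is deterministic and immediately reusable.

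My first step would be to express every moment as a contraction of the approximate moment operators $\psi^{(k)}_\mu$ and $\G^{(k)}_\nu(A_0^{\otimes k})$ with a fixed tensor built from $M^{\otimes k,k}$, so that the hypothesis from Definition~\ref{def:measurement_ensemble_approximate} directly yields a spectral-norm perturbation bound on each such contraction. Combining this with H\"older's inequality and with Proposition~\ref{prop:TNbound} to control the $M$-dependent factors should produce multiplicative perturbation estimates of the schematic form
\begin{equation}
\bigl|\Ev_\mu[|S|^2] - \Ev_\Haar[|S|^2]\bigr| \leq c\,\epsilon\, n^4 \cdot \Ev_\Haar[|S|^2]
\end{equation}
and analogously for the fourth moment, where the $n^4$ factor is the dimensional price paid to convert pointwise $\epsilon$-closeness of the moment operators --- which decay as $1/n^k$ --- into multiplicative control on the overall moment. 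The hypothesis $\epsilon \leq c_4/n^4$ would then ensure that both the second-moment lower bound and the fourth-moment upper bound behind Lemma~\ref{lem:our_lower_tail_bound} remain valid up to constants, so the Paley-Zygmund step carries through with only a slightly smaller absolute constant in the lower tail bound.

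The mean empirical width bound of Lemma~\ref{lem:W} is more forgiving: it uses the design assumption only to evaluate $\snorm{\Ev[H_1^2]}$ via \eqref{eq:H1_squared}, requiring just a first moment of $UA\,U\ad$ and a second moment of $\ketbra\psi\psi$. Both are controlled by the $\epsilon$-approximate $4$-design hypothesis (which is automatically an approximate design of any lower order), and a short H\"older-type estimate shows that $\snorm{\Ev_\mu[H_1^2]}$ is within a factor $1 + \LandauO(\epsilon)$ of its Haar value. The remaining ingredients --- the non-commutative Khintchine inequality, the matrix Chernoff bound, Jensen's inequality --- are distribution-agnostic, so the bound on $W_m$ survives with only a slightly degraded absolute constant. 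Reassembling the modified Lemmas~\ref{lem:our_lower_tail_bound} and~\ref{lem:W} through Mendelson's inequality exactly as in the proof of Theorem~\ref{thm:lambda_min} then yields the claim.

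The main obstacle is expected to be the fourth moment perturbation. Lemma~\ref{lem:FourthMoment} is proved by expanding $\Ev[(UAU\ad)^{\otimes 4}]$ as a sum over the five central minimal projections of $S_4$ with dimension-dependent coefficients $a_i$; in the approximate setting, each such projection-weighted sum has to be re-expressed in terms of $\G^{(4)}_\nu(A_0^{\otimes 4})$, and the exact fourth moment is dimensionally suppressed by a factor $n^4$ relative to the squared second moment that sits in the Paley-Zygmund numerator. Consequently, a pointwise $\epsilon$-closeness of moment operators amplifies to a multiplicative error of order $\epsilon\, n^4$ at the level of $\Ev[|S|^4]$, which is precisely why the theorem requires the scaling $\epsilon \leq c_4/n^4$ rather than a dimension-independent $\epsilon$.
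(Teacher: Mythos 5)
Your proposal is correct and follows essentially the same route as the paper: the design hypothesis enters only through the moment computations behind $Q_\xi$ (Lemma~\ref{lem:our_lower_tail_bound}) and the evaluation of $\snorm{\Ev[H_1^2]}$ in Lemma~\ref{lem:W}, and the paper likewise perturbs the contractions $\Tr\bigl[\G^{(l)}_\nu(A^{\otimes l})M_l(\psi^{(l)}_\mu)\bigr]$ via a triangle-inequality decomposition, the operator-norm bounds from the approximate-design definitions, the decay $\snorm{\G^{(l)}(X)}\leq c(l)\TwoNorm{X}/n^l$, and $\snorm{M}^2\leq \TwoNorm{M(\1)}^2+\fnorm{M}^2$, with the fourth moment ($l=4$) being exactly the step that forces $\epsilon\leq c_4/n^4$. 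Your schematic relative error $\epsilon\,n^4$ for the second moment is a (harmless) overestimate of the true factor $\epsilon\,n^2$, but the conclusion and the identification of the binding constraint are the same as in the paper.
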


As a direct consequence, all reconstruction guarantees proven in this work also hold for $\epsilon/n^4$-approximate $4$-generic measurements with different absolute constants. 

The theorem is a consequence of the two following lemmas. 

\begin{lemma}[$Q_\xi$ for $\epsilon/n^4$-approximate $4$-generic measurements]
Lemma~\ref{lem:our_lower_tail_bound} also holds for $\epsilon$-approximate $4$-generic measurements with a smaller constant $c>0$ whenever 
$\epsilon \leq \frac{c_4}{n^4}$, where $c_4>0$ is an absolute constant. 
\end{lemma}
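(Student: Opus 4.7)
The plan is to rerun the Paley--Zygmund argument from the proof of Lemma~\ref{lem:our_lower_tail_bound}, but with the exact Haar averages replaced by their $\epsilon$-approximate counterparts; it then suffices to show that both the second and the fourth moment of $S$ are preserved up to a constant multiplicative factor as long as $\epsilon \leq c_4/n^4$.

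First, I would write both moments as tensor contractions, exactly as in the original proof:
\begin{align*}
\Ev\bigl[|S|^2\bigr] &= \Tr\bigl[E_2\, M^{1,1}(F_2)\bigr], \\
\Ev\bigl[|S|^4\bigr] &= \Tr\bigl[E_4\, M^{2,2}(F_4)\bigr],
\end{align*}
with $E_k \coloneqq \Ev_U[(UAU^\dagger)^{\otimes k}]$ and $F_k \coloneqq \Ev_\psi[(\ketbra{\psi}{\psi})^{\otimes k}]$. Since $A$ is traceless, $A^{\otimes k}$ is a product of traceless operators, so Definition~\ref{def:measurement_ensemble_approximate} yields the spectral-norm estimates $\snorm{E_k - E_k^\Haar} \leq \epsilon\,\snorm{E_k^\Haar}$ and $\snorm{F_k - F_k^\Haar} \leq \epsilon/\binom{n+k-1}{k}$ for $k \in \{2,4\}$.

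Next, I would split the difference between an approximate-design moment and its Haar analogue into three pieces, coming from $\Delta_U \coloneqq E_k - E_k^\Haar$, from $\Delta_\psi \coloneqq F_k - F_k^\Haar$, and from their product, and apply H\"older's inequality $|\Tr[XY]| \leq \snorm{X}\,\tnorm{Y}$ to each. The trace norms that appear are controlled by first expanding $F_k^\Haar$ and $E_k^\Haar$ in the permutation basis (as in Eqs.~\eqref{eq:decompose_U_average} and~\eqref{eq:Sym_k}), then using $\tnorm{\cdot} \leq n^{k/2}\fnorm{\cdot}$ on $\L((\CC^n)^{\otimes k})$, and finally invoking the closed tensor-network bound of Proposition~\ref{prop:TNbound} in the spirit of Lemma~\ref{lem:bound_sigmaMtau}. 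The result is that each error term has the form $\epsilon \cdot \mathrm{poly}(n) \cdot (\text{Haar main term})$, with the polynomial in $n$ of degree at most four in the fourth-moment estimate. Consequently, taking $c_4 > 0$ so small that $\epsilon \leq c_4/n^4$ guarantees
\begin{equation*}
\Ev_{\mathrm{approx}}\bigl[|S|^2\bigr] \geq \tfrac12 \Ev_\Haar\bigl[|S|^2\bigr], \qquad
\Ev_{\mathrm{approx}}\bigl[|S|^4\bigr] \leq 2\, \Ev_\Haar\bigl[|S|^4\bigr].
\end{equation*}
Plugging these into the Paley--Zygmund inequality \eqref{eq:PaleyZygmund} with $Z = |S|^2$ reproduces the tail bound of Lemma~\ref{lem:our_lower_tail_bound} with a strictly smaller, but still absolute, constant $c' > 0$.

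The main obstacle is the dimensional book-keeping in the fourth moment. There $E_4^\Haar$ decomposes as a sum $\sum_{i=1}^5 a_i P_i$ over the central minimal projections of $S_4$, each $P_i$ itself a linear combination of permutation matrices with $n$-dependent coefficients, so each approximate-design error contributes a closed tensor network that must be bounded uniformly across all $\sigma,\tau \in S_4$. It is precisely the worst such polynomial-in-$n$ growth that dictates the threshold $\epsilon \leq c_4/n^4$.
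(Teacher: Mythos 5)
Your proposal is correct and follows essentially the same route as the paper: write the second and fourth moments as $\Tr\bigl[E_l\, M^{k,k}(F_l)\bigr]$, split the deviation from the Haar case into the three cross terms in $\Delta_U$ and $\Delta_\psi$, control each via a H\"older-type bound together with the design definitions and the estimate $\snorm{\G^{(l)}(A^{\otimes l})}\leq c(l)n^{-l}\TwoNorm{A}^l$, and observe that the fourth moment (error $\sim\epsilon\, n^{-4}\TwoNorm{A}^4\snorm{M}^4$ against a main term $\sim n^{-8}\TwoNorm{A}^4(\TwoNorm{M(\1)}^2+\fnorm{M}^2)^2$) is what forces $\epsilon\leq c_4/n^4$, after which Paley--Zygmund gives the tail bound with a smaller absolute constant. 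The only cosmetic difference is your choice of H\"older pairing ($\snorm{\cdot}\,\tnorm{\cdot}$ plus the trace-to-Frobenius conversion and Proposition~\ref{prop:TNbound}) versus the paper's $|\Tr[GM(B)]|\leq n^l\snorm{G}\snorm{M}\snorm{B}$; both yield identical dimensional factors.
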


\begin{proof}
Let $\mu$ and $\nu$ be an $\epsilon$-approximate spherical and unitary $k$-design, respectively.  
We will use that for any operators 
$G,B\in \L(\CC^d)$ and superoperator 
$M\in \M(\CC^d)$
\begin{equation}
	|\Tr[G M(B)]| 
	\leq 
	\TwoNorm{G} \snorm{M} \TwoNorm{B} 
	\leq 
	d \snorm{G} \snorm{M} \snorm{B} \, .
\end{equation}
Moreover, we denote
$M_{2k}\coloneqq M^{k,k} = M^{\otimes k} \otimes M^{\ast \otimes k}$. 
Then, for any traceless and normalized observable $A \in \Herm(\CC^n)$ and $l = 2k$,
\begin{equation}
\begin{aligned}
&\phantom{\leq.}
\myleft| \Tr\myleft[ \G^{(l)}_\nu(A^{\otimes l}) M_l(\psi^{(l)}_\mu) \myright]
		-\Tr\myleft[ \G^{(l)}(A^{\otimes l}) M_l(\psi^{(l)}) \myright] \myright|
\\
&\leq 
	\myleft| \Tr\myleft[ \bigl(\G^{(l)}_\nu(A^{\otimes l}) -\G^{(l)}(A^{\otimes l}) \bigr) M_l\bigl(\psi^{(l)}_\mu-\psi^{(l)}\bigr) \myright] \myright|
	\\&\qquad
	+\myleft| \Tr\myleft[ \G^{(l)}(A^{\otimes l}) M_l\bigl(\psi^{(l)}_\mu-\psi^{(l)}\bigr) \myright] \myright|
	\\& \qquad 
	+ \myleft| \Tr\myleft[ \bigl(\G^{(l)}_\nu(A^{\otimes l}) - \G^{(l)}(A^{\otimes l})\bigr) M_l(\psi^{(l)}) \myright] \myright|
\\
&\leq 
	n^l \snorm{\bigl(\G^{(l)}_\nu(A^{\otimes l})-\G^{(l)}(A^{\otimes l})\bigr)}
		\snorm{M_l}  \snorm{\psi^{(l)}_\mu-\psi^{(l)}}
	\\&\qquad 
	+ n^l \snorm{\G^{(l)}(A^{\otimes l})}  \snorm{M_l}  
		\snorm{\psi^{(l)}_\mu-\psi^{(l)}} 
	\\&\qquad 
	+ n^l \snorm{\G^{(l)}_\nu(A^{\otimes l}) - \G^{(l)}(A^{\otimes l})}
		\snorm{M_l}  \snorm{\psi^{(l)}}
\\
&\leq 
	\frac{c(l)}{n^l} \TwoNorm{A}^l \snorm{M}^l \epsilon 
\, ,
\end{aligned}
\end{equation}
where $c(l)$ is a constant only depending on $l$. 
We note that $\snorm{M}^2 \leq \TwoNorm{M(\1)}^2 + \fnorm{M}^2$ and choose a small enough absolute constant $c_4>0$ and 
\begin{equation}
	\epsilon
	\leq
	\frac{c_4}{n^4} 
\end{equation}
so that the ($l=4$)-th moment \eqref{eq:fourth_moment_bound} and the second moment \eqref{eq:full_second_moment} (with traceless $A$) change only by a constant. 
This proves the lemma. 
\end{proof}

\begin{lemma}[$W_m$ for $\epsilon$-approximate 4-generic measurements]
Lemma~\ref{lem:W} still holds for $\epsilon$-approximate $4$-generic measurements with $\TwoNorm{A}$ replaced by $(1+\epsilon) \TwoNorm{A}$. 
\end{lemma}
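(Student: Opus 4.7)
My plan is to follow the proof of Lemma~\ref{lem:W} verbatim, and identify exactly where the exact Haar second moments enter. Those are the only steps where the design assumption is actually used, so everything else carries over unchanged once we supply approximate analogs.

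Concretely, with $H_i = (U_i A U_i^\dagger)\otimes (\ketbra{\psi_i}{\psi_i})^T$ and $H=m^{-1/2}\sum \epsilon_i H_i$, the proof reduces the estimate to controlling $\Ev \snorm{H}$ via non-commutative Khintchine, Jensen, and Matrix Chernoff. These three steps are distribution-agnostic and go through without modification. Matrix Chernoff yields
\begin{equation}
\Ev \snormB{\sum_{i=1}^m H_i^2} \leq (\e-1)\, m\, \snormb{\Ev[H_1^2]} + 2\ln(n),
\end{equation}
because $\snorm{H_1}\leq \snorm{A_0}\snorm{\ketbra\psi\psi} = 1$ regardless of the distribution (this uses only the support of $\mu$ and $\nu$, not their moments). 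Hence the only quantity that changes is $\snormn{\Ev[H_1^2]}$.

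Since $\ketbra\psi\psi$ is a projector, $H_1^2 = UA_0^2U^\dagger \otimes \ketbra\psi\psi^T$, so
\begin{equation}
	\Ev[H_1^2] = \Ev_\nu[UA_0^2U^\dagger]\,\otimes\, \Ev_\mu[\ketbra\psi\psi]^T.
\end{equation}
Both factors are handled by the approximate $1$-design property (which follows from approximate $4$-designs, as noted in the excerpt). For the state average,
$\snorm{\Ev_\mu[\ketbra\psi\psi]-\1/n}\leq \epsilon/n$ yields $\snorm{\Ev_\mu[\ketbra\psi\psi]}\leq (1+\epsilon)/n$. For the unitary average, I split $A_0^2 = (A_0^2 - \Tr(A_0^2)\1/n) + \Tr(A_0^2)\1/n$; the identity piece is invariant, and applying the approximate $1$-design inequality to the traceless piece gives a deviation bounded by $\epsilon\, \TwoNorm{A_0^2}/n \leq \epsilon\, \TwoNorm{A_0}^2/n$ in spectral norm, so $\snorm{\Ev_\nu[UA_0^2U^\dagger]}\leq (1+\epsilon)\TwoNorm{A_0}^2/n$. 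Multiplying, $\snormn{\Ev[H_1^2]}\leq (1+\epsilon)^2\,\TwoNorm{A_0}^2/n^2$.

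Plugging this into Matrix Chernoff and proceeding exactly as in Lemma~\ref{lem:W} (using $m\geq 2n^2\ln(n)$ to absorb the additive $2\ln(n)$ term, and then $\sqrt{1+\epsilon}\le 1+\epsilon$ for $\epsilon\geq 0$) reproduces the chain of inequalities with $\TwoNorm{A_0}$ replaced by $(1+\epsilon)\TwoNorm{A_0}$. The remaining Hölder step $W_m(E)\leq c_\mu \sqrt{r}\,\Ev\snorm{H}$ is distribution-free, so no further modification is needed. The only mildly delicate point — and the main thing to watch — is the interpretation of the approximate design definition on a non-traceless operator like $A_0^2$; the splitting into traceless plus identity part handled above is the clean way to do this, and avoids any growth beyond a single multiplicative $(1+\epsilon)$ factor.
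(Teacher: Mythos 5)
Your proof is correct and follows essentially the same route as the paper's: both identify that the only step affected is the bound on $\snorm{\EE[H_1^2]}$, factor $H_1^2 = UA_0^2U^\dagger \otimes (\ketbra{\psi}{\psi})^T$, and apply the approximate $1$-design property (inherited from the $4$-design) to each tensor factor via the triangle inequality to pick up a $(1+\epsilon)^2$ factor that becomes $(1+\epsilon)$ after the square root. Your explicit splitting of $A_0^2$ into its traceless part plus a multiple of the identity is in fact slightly more careful than the paper's one-line ``similarly,'' since the paper's approximate-unitary-design definition is stated only for traceless operators.
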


\begin{proof}
The only thing that needs slightly to be changed in the proof of Lemma~\ref{lem:W} is the bound \eqref{eq:H1_squared} on $\snorm{\EE[H_1^2]}$. 
Using the definitions above with $k=1$, we have  
\begin{align}
\snorm{\psi^{(1)}_\mu} 
&\leq 
\snorm{\psi^{(1)}_\mu - \psi^{(1)}} + \snorm{\psi^{(1)}}
\leq 
(1+\epsilon) \snorm{\psi^{(1)}} 
\end{align}
and, similarly, 
\begin{align}
\snorm{\G^{(1)}_\nu(X)} 
\leq 
(1+\epsilon)\snorm{\G^{(1)}(X)} \, . 
\end{align}
Hence, 
\begin{equation}
\snorm{\EE_{\mu,\nu}[H_1^2]}
\leq 
(1+\epsilon)^2 \snorm{\EE[H_1^2]} \, .
\end{equation}
\end{proof}

\subsection{Bosonic and fermionic linear optical circuits}\label{sec:LinarOptics}
It is worth mentioning that the above results largely 
carry over to another important task which is the tomography of bosonic and fermionic circuits. 
For bosonic systems, this refers to  the tomography of linear optical circuits, which play an important role in 
quantum information processing, with such circuits becoming available for a large number of modes
with the advent of integrated optical circuits \cite{RevModPhys.79.135,Carolan711}. 
For fermionic systems, this applies to what is called fermionic linear optics
\cite{FermionicLinearOptics}.
Interestingly, the results laid out above readily apply to both situations, once the objects of interest are
appropriately identified.

For the bosonic setting, consider $n$ modes associated with bosonic annihilation operators $(b_1,\dots, b_n)$
and Hilbert space ${\cal H}_n^B$ in second quantization. The correlation matrix of a quantum state $\rho\in {\cal S}({\cal H}_n^B)$ 
is defined as $C\in \Herm(\CC^n)$ with $C_{j,k}=\Tr(b_j^\dagger b_k \rho)$.
Mode transformations that preserve the boson number are reflected by maps $(b_1,\dots, b_n)\mapsto (c_1,\dots, c_n)$, where
\begin{equation}
	c_j = \sum_{k=1}^n U_{j,k} b_k 
\end{equation}
with $U\in \U(n)$.
Such maps are usually referred to as passive transformations, or linear optical transformations in the 
quantum optical context. Under such mode transformations, correlation matrices transform as
\begin{equation}
	C\mapsto U^\dagger C U \, .
\end{equation}	
Initial correlation matrices reflecting the quantum state 
can hence be taken as 
\begin{equation}
	C= U^\dagger \ketbra 1 1 U \, ,
\end{equation}
 reflecting the situation that mode labelled $1$ is first 
prepared in a Gaussian state
satisfying $\Tr(b_1^\dagger b_1\rho)=1$, while the others are kept in the vacuum
$\Tr(b_j^\dagger b_j\rho)=0$ for $j=2,\dots, n$. 
This is then conjugated by a Haar random unitary
$U\in \U(n)$, giving formally rise to an identical transformation as considered above, with
$\ketbra {\psi_i} {\psi_i}\sim \ketbra \psi \psi$ are i.i.d. realizations of $\ketbra \psi \psi= U^\dagger \ketbra 1 1 U$,
with $U\in \U(n)$ drawn from the Haar measure. Such preparations are optically readily feasible with present technology,
as Gaussian states are particularly accessible with common sources.

Random measurements can again be seen as i.i.d. realizations $A_i\sim  U A U^\dagger$ with $U\in \U(n)$ being Haar random. 
Here $A$ does not take the role of the observable itself, but reflect natural homodyne measurements on the level of
correlation matrices. 
Their expectation values are obtained as $\Tr(A C)$ for correlation matrices $C$. So again, while
the type of measurement is different and the objects involved take an altered physical role, the map realized is formally identical
with
\begin{equation}
	y = \A(T) + e \ \in \RR^m
\end{equation}
with single expectation values
\begin{equation}
	\A(T)_i = \Tr[A_i T(\ketbra{\psi_i}{\psi_i}) ] + e_i\, , 
\end{equation}
for $T(\ketbra{\psi_i}{\psi_i}):= V^\dagger \ketbra{\psi_i}{\psi_i} V$, with $V\in \U(n)$ reflecting the unknown linear process.
In this way, process tomography of the kind discussed here is applicable to the bosonic setting. 
This seems particularly
important with the advent of monolithic bosonic integrated optical devices \cite{Carolan711}, 
as they are, e.g., employed in boson samplers.

Fermionic linear circuits associated with fermionic annihilation operators
$(f_1,\dots, f_n)$ have the same structure (on that level). Again, correlation matrices $C\in \Herm(\CC^n)$ 
transform as 
\begin{equation}
	C\mapsto U^\dagger C U 
\end{equation}
for $U\in \U(n)$, and the same preparations are feasible. 
Here $\ketbra 1 1$
reflects a fermionic Gaussian state in which the first mode contains exactly a single fermion, while the other
$n-1$ modes contain no fermion. 
The same type of measurement is therefore again possible.

\section{Conclusion and outlook}\label{sec:outlook}
We first conclude and then give an outlook towards potential future work. 

\subsection{Conclusion}
We have proven that quantum processes can be reconstructed from an essentially optimal number of expectation values without the requirement of ancillary quantum systems. 
Moreover, by an extensive numerical analysis we have (i) demonstrated the practical feasibility of our approach and (ii) that the reconstruction procedures also work for Pauli-type measurement settings. 
The number of necessary expectation values scales as $\sim r\, n^2 \ln(n)$, where $r$ is the anticipated Kraus rank of the channel. 
The reconstructions are stable against measurement noise and robust against violations of the measured quantum channel having the anticipated Kraus rank. 
In particular, no strict assumptions on the noise level or the Kraus rank are required for a simple fitting procedure (CPT-fit) to be guaranteed to give a good approximation of the measured quantum channel. 
In several physically feasible and realistic setting, the prescriptions laid out give direct
and concrete advice on how to optimally perform quantum process tomography.

\subsection{Outlook}
In this outlook, we present a short outline of several aspects that seem to be interesting starting points for future research. 

\myparagraph{Mixed input states} The first potential generalization concerns the use of pure quantum states in the reconstruction 
procedure.
Numerically, we have observed that our reconstructions work almost equally well when the input states to the channels are mixed.
Finding a recovery guarantee following this observation would be a step towards more practical measurements. 

\myparagraph{The restricted isometry property (RIP) and perspectives for thresholding methods}
The RIP can be adapted to our setting.
A measurement map $\A$ is said to fulfill RIP for Kraus rank $r$ if
\begin{equation}\label{eq:RIP}
	(1-\delta) \fnorm{T} \leq \lTwoNorm{\A(T)} \leq (1+\delta) \fnorm{T}
\end{equation}
for all quantum channels $T \in \CPT(\CC^n)$ with Kraus rank at most $r$. 
The lower RIP bound is --- as in our case --- typically enough to obtain recovery guarantees for optimization procedures. 
But their computational cost is often not optimal. 
For instance, iterative hard \cite{CaiCanShe10,Cha15} and soft \cite{BoyParChu11} thresholding algorithms are faster in many instances. 
But here, recovery guarantees are typically more difficult to prove and also required the upper RIP bound. 
More recently, a non-covex algorithm with a global convergence guarantee has been proposed for quantum state tomography \cite{KyrKalPar18}. 
Analyzing such algorithms in the process tomography setting is an interesting endeavour for future research. 

\myparagraph{Random Clifford unitaries} 
The work \cite{ZhuKueGra16} shows that random Clifford gates are very close to being unitary $4$-designs, and that they provide a precise characterization in terms of irreducible representations of the Clifford group. 
Such tools might be helpful to further relax the requirements on the measurement settings. 
Here, it would be interesting to see if these new insights can be used in order to prove our recovery guarantees with the input states of the channels being random stabilizer states and the bases of the observables random Clifford unitaries. 
This might be particular useful in order to achieve fault tolerance. 

\myparagraph{Diamond norm} 
We have observed numerically (Figure~\ref{fig:noisy}) that the minimum value (the diamond norm of the reconstructed channel) of the unconstrained diamond norm reconstruction \eqref{eq:dNormRec}, \cite{KliKueEis16} is one if the reconstruction error is small and decreases with increasing reconstruction error. 
It would be interesting to also understand this observation analytically. 
In this way one can obtain some confidence about reconstructed quantum channels. 
Of course, also other types of reconstruction certificates would be of great interest. 

\myparagraph{Frequencies and dependent measurements} 
In a typical experiment, one does not only learn the expectation value of an observable $A$, but rather acquires statistics about the POVM associated with its spectral decomposition.
Indeed, it is straightforward to apply our reconstruction procedures to observed frequencies of POVM measurements. 
However, in this work we disregard this finer-grained information.
We have made this decision to avoid technical complications: The various POVM elements associated with any given setting are clearly not independent. However our proof techniques work best for independently drawn measurements and our natural measurements lead to independence in a natural way. 
\\
There are now related theoretical recovery guarantees that can handle some form of dependency -- e.g.\ Refs.~\cite{Vor13,Kue15,acharya2016statistically,acharya2016statistical_analysis,CanLiSol14,GroKraKue15_masked}. 
Applying such techniques to process tomography remains an interesting open problem.
As a first step towards dependent measurements, it would be interesting to numerically compare sample complexities in the cases of our natural measurement setup and the corresponding frequency measurements, with a fair accounting for statistical noise. 

\myparagraph{Sample complexity}
For quantum state tomography, fundamental lower bounds on the sample complexity have been established in Refs.~\cite{HaaHarJi15} and \cite{FlaGroLiu12}. 
They are valid for arbitrary POVM measurements \cite{HaaHarJi15} and measuring Pauli observables \cite{FlaGroLiu12}, respectively. 
Moreover, these works also determine the sample complexity associated with different compressed-sensing based state tomography techniques.
A comparison with the associated fundamental lower bounds shows a close to optimal scaling (at least for low-rank states). 

In contrast to state tomography, very little is known about the sample complexity associated with process tomography. 
A straightforward adaptation of the results \cite{HaaHarJi15} and \cite{FlaGroLiu12} is hindered by:
(i) Typical process tomography measurements -- such as the 4-generic measurements considered here -- have neither a Pauli structure, nor can they be interpreted as state POVMs in a strict sense \cite{Zim08}. 
This makes the task of determining the exact sample complexity associated with a concrete tomographic procedure more difficult. 
(ii) Due to the trace preservation condition, quantum channels are more restricted than quantum states. Suitable packing nets -- a key ingredient in the derivation of the fundamental lower bounds in Refs.\ \cite{HaaHarJi15,FlaGroLiu12} -- 
must fulfill these additional requirements which makes their construction considerably more challenging.
Despite these obstacles, we believe that such a generalization is timely and well-motivated.

\section{Acknowledgements}
We thank I.\ Roth and M.\ Horodecki for helpful discussions. 
MK has been funded by the National Science Centre, Poland (Polonez 2015/19/P/ST2/03001) within the European Union's Horizon 2020 research and innovation programme under the Marie Sk{\l}odowska-Curie grant agreement No 665778.  
RK and DG have received funding from the Deutsche Forschungsgemeinschaft (DFG, German Research
Foundation) under Germany's Excellence Strategy - Cluster of Excellence Matter and Light for Quantum Computing (ML4Q) EXC 2004/1 - 390534769 and Grant ZUK 81, 
the ARO under contract W911NF-14-1-0098 (Quantum Characterization, Verification, and Validation), and
Universities Australia and DAAD's Joint Research Co-operation Scheme (using funds provided by the German Federal Ministry of Education and Research).
RK, DG, and JE have benefitted from the DFG through SPP1798 CoSIP.
Moreover, RK and JE have received funding from
the Templeton Foundation, the ERC (TAQ), the DFG 
(EI 519/9-1, EI 519/7-1, CRC 173, DAEDALUS), and the  MATH+ excellence cluster.
This work has also received funding from the European Union's Horizon 2020
research and innovation programme under grant agreement No.~817482 (PASQuanS).

\appendix
\section*{Appendices}
\addcontentsline{toc}{section}{Appendices}
\renewcommand{\thesubsection}{\Alph{subsection}}
In this appendix, we provide some auxiliary statements in order to keep this work largely self-contained. 
In Appendix~\ref{sec:SDPs}, we provide semidefinite programming formulations of the reconstruction procedures \eqref{eq:TrNormRec}, \eqref{eq:CTrNormRec}, \eqref{eq:dNormRec}, and \eqref{eq:CdNormRec}. 
In Appendix~\ref{sec:Khintchine}, we state a non-commutative Khintchine inequality used in the proof of our bound on the conic minimum singular value, more precisely in the proof of Lemma~\ref{lem:W} with the bound on the mean empirical width. 
Then, finally, we provide some facts about the symmetric group $S_4$ in Appendix~\ref{sec:S4}. 
These facts are also used in the derivation of the bound on the conic minimum singular value in order to bound the fourth moment of our measurements (Lemma~\ref{lem:FourthMoment}). 

\appendix
\subsection{Semidefinite programs for trace and diamond norm reconstruction}%
\label{sec:SDPs}
Our reconstructions can be implemented as semidefinite programs (SDPs) \cite{KliKueEis16}, 
which can practically be solved using standard software such as CVX \cite{cvx,GraBoy08}. 
Let us consider the reconstruction of a quantum channel mapping operators in $\L(\X)$ to operators in $\L(\Y)$. 
The minimization \eqref{eq:TrNormRec} can be rewritten as the following SDP,
\begin{equation}
	\arraycolsep=1.5pt
	\begin{array}{r l l}
		T^\ast_\eta
		=&\underset{T,X,Y}{\argmin} & \frac{1}{2}\bigl( \Tr[X] + \Tr[Y] \bigr),
		\\[.3em]
		&\st &
		\begin{pmatrix}
			X & -J(T) \\ -J(T)\ad & Y
		\end{pmatrix}
		\succeq 0 \, ,
		\\[.9em]
		&&  X,Y \in \Pos(\Y\otimes \X) \, ,
		\\
		&& \fnorm{\A(T)-y} \leq \eta \, .
	\end{array}
\end{equation}
The reconstruction \eqref{eq:CTrNormRec} is obtained by adding the constraint 
$T\ad(\1_\Y) = \1_\X$ into this SDP. 

By only changing the objective function with the spectral norm of \emph{partial} traces $\Tr_\Y$ over the output space $\Y$ of $T:\L(\X) \to \L(\Y)$, we obtain the minimization \eqref{eq:dNormRec} as the following SDP \cite{Wat12,KliKueEis16},
\begin{equation}
	\arraycolsep=1.5pt
	\begin{array}{r l l}
		T^\diamond_\eta
		=&\underset{T,X,Y}{\argmin} & \frac{1}{2}\bigl( \snorm{\Tr_\Y[X]} + \snorm{\Tr_\Y[Y]} \bigr),
		\\[.3em]
		&\st &
		\begin{pmatrix}
			X & -J(T) \\ -J(T)\ad & Y
		\end{pmatrix}
		\succeq 0 \, ,
		\\[.9em]
		&&  X,Y \in \Pos(\Y\otimes \X) \, ,
		\\
		&& \fnorm{\A(T)-y} \leq \eta \, .
	\end{array}
\end{equation}
Again, the constrained minimization \eqref{eq:CdNormRec} is obtained by adding the constraint 
$T\ad(\1_\Y) = \1_\X$ into the SDP.

\subsection{A non-commutative Khintchine inequality}\label{sec:Khintchine}
\begin{theorem}[{\cite[Remark~5.27.2]{Ver12}} with constants from {\cite[Exercise~8.6(d)]{FouRau13}}
	] \label{thm:khintchine}
	Let $B_1,\ldots,B_m$ be self adjoint $N \times N$ matrices and $\epsilon_1,\ldots,\epsilon_m\in \{-1,1\}$ uniformly and independently drawn signs (called a Rademacher sequence). 
	Then
	\begin{equation}
		\Ev \biggl[ \snormB{ \sum_{j=1}^m \epsilon_j B_j} \biggr]
		\leq \sqrt{2 \ln (2N)} \, \snormB{ \sum_{j=1}^n B_j^2}^{1/2}.
	\end{equation}
\end{theorem}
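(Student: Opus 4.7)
\textbf{Proof plan for Theorem~\ref{thm:khintchine}.} I would prove this via the matrix Laplace transform method, combined with the Rademacher matrix MGF bound. The starting point is the scalar inequality $\lambda_{\max}(X) \leq \theta^{-1} \log \Tr \exp(\theta X)$ valid for any self-adjoint $X$ and any $\theta > 0$; this follows from $\exp(\theta \lambda_{\max}(X)) \leq \Tr \exp(\theta X)$ by spectral calculus. Taking expectation over the Rademacher signs and applying Jensen's inequality to the concave logarithm gives
\begin{equation}
\Ev[ \lambda_{\max}(S) ] \leq \theta^{-1} \log \Ev \Tr \exp(\theta S), \qquad S \coloneqq \sum_{j=1}^m \epsilon_j B_j.
\end{equation}

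The first main step is to bound the matrix moment generating function. Since the signs $\epsilon_j$ are independent, one iterates the Rademacher MGF inequality $\cosh(\theta B) \preceq \exp(\theta^2 B^2 / 2)$. To turn this pointwise inequality into one for $\Tr \exp$ of a sum, I would invoke Lieb's concavity theorem (or, equivalently, the tower-of-expectations argument due to Tropp): concavity of $X \mapsto \Tr \exp(H + \log X)$ for positive $X$ lets one pull each expectation past $\Tr\exp$, yielding
\begin{equation}
\Ev \Tr \exp(\theta S) \leq \Tr \exp \Bigl( \tfrac{\theta^2}{2} \sum_{j=1}^m B_j^2 \Bigr).
\end{equation}
This is the one technical ingredient I would cite rather than reprove.

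The second main step is a doubling trick to get $\ln(2N)$ instead of $\ln(N)$: since $\snorm{X} = \lambda_{\max}(X \oplus -X)$ for self-adjoint $X$, I would replace each $B_j$ by the block-diagonal matrix $\tilde B_j \coloneqq B_j \oplus (-B_j)$ of size $2N$. The hypotheses persist with $\|\sum \tilde B_j^2\|_\infty = \|\sum B_j^2\|_\infty$, and $\snormB{\sum \epsilon_j B_j} = \lambda_{\max}\bigl( \sum \epsilon_j \tilde B_j \bigr)$. From here,
\begin{equation}
\Tr \exp \Bigl( \tfrac{\theta^2}{2} \sum_j \tilde B_j^2 \Bigr) \leq 2N \exp \Bigl( \tfrac{\theta^2}{2} \sigma^2 \Bigr), \qquad \sigma^2 \coloneqq \snormB{ \sum_j B_j^2 },
\end{equation}
so that $\Ev \snormB{S} \leq \theta^{-1} \log(2N) + \theta \sigma^2 / 2$. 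Optimizing the free parameter at $\theta = \sqrt{2 \log(2N)} / \sigma$ collapses both terms to the same value and produces the advertised bound $\sqrt{2\log(2N)} \, \sigma$.

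The only real obstacle is justifying the matrix MGF step, which rests on Lieb's concavity theorem; everything else is convex-analysis bookkeeping and a one-variable optimization. Since the result is a textbook Khintchine inequality, I would simply cite Lieb's theorem and the standard Ahlswede-Winter/Tropp machinery for this step, at the references already listed in the paper.
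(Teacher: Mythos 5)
Your proposal is correct, and it is worth noting at the outset that the paper itself offers no proof of Theorem~\ref{thm:khintchine}: the statement is imported verbatim from the cited references (Vershynin's Remark~5.27.2, with the explicit constant $\sqrt{2\ln(2N)}$ taken from Foucart--Rauhut), so there is no in-paper argument to compare against. Your matrix-Laplace-transform route --- $\lambda_{\max}(X)\leq\theta^{-1}\log\Tr\exp(\theta X)$, Jensen on the logarithm, the Rademacher MGF bound $\Ev[\exp(\theta\epsilon B)]=\cosh(\theta B)\preceq\exp(\theta^2B^2/2)$ subadditivized via Lieb's concavity theorem, the block-diagonal doubling $B_j\mapsto B_j\oplus(-B_j)$ to convert $\lambda_{\max}$ into the spectral norm and produce the $2N$ inside the logarithm, and the final optimization at $\theta=\sqrt{2\log(2N)}/\sigma$ --- is exactly the standard machinery underlying the cited sources, and all the steps check out (in particular $\sum_j\tilde B_j^2=\bigl(\sum_jB_j^2\bigr)\oplus\bigl(\sum_jB_j^2\bigr)$ preserves $\sigma^2$, and the two terms of $\theta^{-1}\log(2N)+\theta\sigma^2/2$ indeed coincide at the optimal $\theta$, summing to $\sqrt{2\log(2N)}\,\sigma$). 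Citing Lieb's theorem rather than reproving it is entirely appropriate for a textbook result of this kind. One incidental remark: the displayed inequality in the paper's statement has $\sum_{j=1}^{n}B_j^2$ where it should read $\sum_{j=1}^{m}B_j^2$; your proof correctly treats the sum as running over all $m$ matrices.
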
 

\subsection{Linear representation of the symmetric group \texorpdfstring{$S_4$}{S4}}
\label{sec:S4}
In order to bound the fourth moment of the measurement map some facts about the representation of the permutation group $S_4$ are helpful. 
In this section, they are summarized and partially derived.
Facts that we just state can, e.g., be found in the Wikis~\cite{WikiS4} and~\cite{WikiRepS4}. 

By $k_1^{j_1}k_2^{j_2}\dots k_l^{j_l}$ we denote the conjugacy class containing permutations composed of each $j_i$ disjoint cycles of sizes $\{k_i\}_{i\in [l]}$, e.g., 
$2^2\subset S_4$ are products of disjoint transpositions. 
Corresponding to each conjugacy class there is one irrep. 
They are given by the Young Frames (see e.g.~\cite{WikiRepS4})
\begin{equation}
	\begin{aligned}
		\mc F_1 &\coloneqq \yng(4) && \text{(trivial rep.)}
		\\
		\mc F_2 &\coloneqq \yng(1,1,1,1) && \text{(sign rep.).} 
		\\
		\mc F_3 &\coloneqq \yng(2,2) && \text{(degree two irreducible rep.)}
		\\
		\mc F_4 &\coloneqq \yng(3,1) && \text{(standard rep.)}
		\\
		\mc F_5 &\coloneqq \yng(2,1,1) && \!\!\!
		\begin{array}{l}
			\text{(product of standard rep.}\\ \text{\phantom{({}}and sign rep.)}
		\end{array}
	\end{aligned}
\end{equation}
We will denote the character corresponding to $\F_i$ by $\chi_i : S_4 \to \ZZ$. 
The characters are constant on the conjugacy classes. 
The sizes of the conjugacy classes and the characters of $S_4$ are (see e.g.~\cite{WikiS4})
\begin{equation}\label{eq:tab:S4}
	\begin{array}{| c " c | c | c | c | c |}
		\hline
		\text{Cycle type} \hspace{-1em}{\phantom{|^{|^|}}} & 1^4 & 2^2 & 2^1 & 4^1 & 3^1
		\\\hline
		\text{\# elements} & 1 & 3 & 6 & 6 & 8
		\\ \thickhline 
		\chi_1 & 1&1&1&1&1
		\\\hline
		\chi_2 & 1&1&-1&-1&1
		\\\hline
		\chi_3 & 2&2&0&0&-1
		\\\hline
		\chi_4 & 3&-1&1&-1&0
		\\\hline
		\chi_5 & 3&-1&-1&1&0
		\\\hline
	\end{array} 
\end{equation}
The dimension of the representation $\F_i$ (in the group algebra) will be denoted by $d_i$ (also called \emph{degree} \cite{WikiRepS4}). 
These dimensions are given by $d_i = \chi_i([1])$. 

Expanding a character $\chi$ in the group algebra yields
\begin{equation}
\begin{aligned}
\chi &= \kw{|S_4|} \sum_{\sigma\in S_4} \chi(\sigma) \langle \sigma, \argdot \rangle
\\
&= \kw{|S_4|} \sum_C \chi(C) \langle \Sigma C, \argdot \rangle \, ,
\end{aligned}
\end{equation}
where $\langle \argdot , \argdot \rangle$ is the inner product of the group algebra, $\sum_C$ denotes the sum over all conjugacy classes $C$ in $S_4$, and $\Sigma C \coloneqq \sum_{\sigma \in C} \sigma$ the sum over the conjugacy class $C$. 
The central minimal projections 
\begin{equation}
p_i = \frac{d_i}{|S_4|} \chi_i
\end{equation}
(see, e.g., Ref.\ \cite[Theorem~III.7.2]{Sim96})
are hence given by
\begin{equation} \label{eq:p_i}
	\begin{aligned}
		p_1 &= \kw{24} \left(\id +\Sigma[2,2]+\Sigma[2]+\Sigma[4]+\Sigma[3]\right) \, ,
		\\
		p_2 &= \kw{24} \left(\id +\Sigma[2,2]-\Sigma[2]-\Sigma[4]+\Sigma[3]\right) \, ,
		\\
		p_3 &= \kw{12} \left( 2\,\id + 2\, \Sigma[2,2] - \Sigma[3] \right) \, ,
		\\
		p_4 &= \kw{8} \left( 3\, \id - \Sigma[2,2] + \Sigma[2] - \Sigma[4]\right) \, ,
		\\
		p_5 &= \kw{8} \left( 3\, \id - \Sigma[2,2] - \Sigma[2] + \Sigma[4]\right) \, ,
	\end{aligned}
\end{equation}
where we have made the identification 
$\langle \sigma, \argdot \rangle \cong \sigma$. 

Now we consider the linear representation $R:S_4\to(\CC^n)^{\otimes 4}$, which is given by permuting the for tensor factors, i.e., the representation of $\sigma\in S_4$ is a unitary 
$R(\sigma)$ on $(\CC^n)^{\otimes 4}$ given by
\begin{equation}
	R(\sigma) = \sum_{i_1,i_2,i_3,i_4=1}^n 
	\ketbra{i_{\sigma(1)}, i_{\sigma(2)}, i_{\sigma(3)}, i_{\sigma(4)}}{i_1,i_2,i_3,i_4} \, .
\end{equation}
Often we write $\sigma$ instead of $R(\sigma)$. 
This representation naturally extends to a representation of the group algebra. 

As $\sum_{i=1}^5 p_i$ is a decomposition of the identity on the group algebra, 
we obtain the decomposition
\begin{equation}
	\1_{(\CC^n)^{\otimes 4}} = \sum_{i=1}^5 P_i 
\end{equation}
with $P_i = R(p_i)$. 
The dimension of the representation $\F_i$ is 
$\Tr[R(p_i)]$ and is given by the dimension $d_i(n)$ of the Schur functor corresponding to $\F_i$ when applied to a vector space of dimension $n$ \cite{WikiRepS4} 
times the degree $\chi_i(\id)$ of the irreducible representation, where
\begin{equation}\label{eq:d_i}
	\begin{aligned}
		d_1(n) &= n\, (n+1)(n+2)(n+3)/24 \, ,
		\\
		d_2(n) &= (n-3)(n-2)(n-1)\, n/24  \, ,
		\\
		d_3(n) &= (n-1)\,n^2\,(n+1)/12 \, ,
		\\ 
		d_4(n) &= (n-1)\,n\,(n+1)(n+2)/8 \, ,
		\\
		d_5(n) &= (n-2)(n-1)\,n\,(n+1)/8 \, .
	\end{aligned}
\end{equation}

\subsection{Proof of the tensor network bound Proposition~\ref{prop:TNbound}}
\label{sec:Pf_TN_bound}
We start with some preliminaries that are helpful for the proof. 
A contraction $C$ of a tensors with $K$ indices is defined by pairs of pointers $\{(k_l,k'_l)\}_{l \in [L]}$. 
For any $\mcL  \subset [L]\subset [K]$, we define the partial contraction $C_{\mcL }$ of a tensor 
$t$ to be the one given by the subset of pointer pairs $\{(k_l,k'_l)\}_{l \in \mcL }$. 
For $\mcL_1\subset [K]$ and $\mcL_2 \subset[K]$ with $\mcL_1\cap \mcL_2 = \emptyset$ we can naturally apply $C_{\mcL_2}$ to $C_{\mcL_1}(t)$ and it holds that 
$C_{\mcL_2}(C_{\mcL_1}(t)) = C_{\mcL_1}(C_{\mcL_2}(t))$ and $C(t) = C_{[K]\setminus \mcL}(C_{\mcL}(t))$. 

We will use the following facts about matrices $A \in \CC^{n_1 \times n_2}$ and $B \in \CC^{n_2\times n_3}$ with dimensions $n_1,n_2,n_3\geq 1$,
\begin{align}
\snorm{AB} &\leq \snorm{A} \snorm{B}\, , \label{eq:submultiplicativity}
\\
\snorm{A} & \leq \TwoNorm{A} = \TwoNorm{A} \, . \label{eq:snorm2fnorm}
\end{align}
For $n_1 = n_3 = 1$ one has $\snorm{A} = \norm{A}_2$ and similarly for $B$. 
We will also use the identity
\begin{equation}\label{eq:normMultiplicativity}
	\snorm{A\otimes B} = \snorm{A} \snorm{B} \, ,
\end{equation}
which holds for arbitrary matrices 
$A \in \CC^{n_1 \times n_2}$ and $B \in \CC^{n_3\times n_4}$. 

Any bipartition of the indices of a tensor $t$ yields a class of unitarily equivalent matrices. 
More specifically, a \emph{matricization} of a tensor 
$t\in \CC^{n_1\times n_2\times\dots \times n_K}$ is a matrix $A$ of which the matrix elements are given by 
\begin{equation}
	A_{(i_{\sigma(1)}, i_{\sigma(2)}, \dots , i_{\sigma(K')}), 
		(i_{\sigma(K'+1)}, \dots , i_{\sigma(K)})}
	=
	t_{i_1, i_2, \dots, i_{K}}
\end{equation} 
for some permutation $\sigma \in S_K$. 
Two such matricizations given by $\tau, \sigma \in S_K$ and the same $K' \in [K]$ are unitarily equivalent if $\{\sigma(i)\}_{i\in [K']} = \{\tau(i)\}_{i\in [K']}$, i.e., if $\sigma$ and $\tau$ yield the same bipartition of the pointers $[K]$.
For a tensor 
$t\in \CC^{n_1\times n_2\times\dots \times n_{K}}$
and a disjoint bipartition $\mcL \dunion \mcR = [K]$ of the pointer set $[K]$ 
we denote by $t_{\mcL,\mcR}$ some matricization of $t$ that comes from this bipartition. 
For any such matricization $t_{\mcL,\mcR}$ holds that 
\begin{equation}\label{eq:invarianceFnorm}
	\TwoNorm{t_{\mcL,\mcR}} = \fnorm{t} \, .
\end{equation}
A \emph{vectorization} of a tensor is a matricization that yields a vector, i.e., a matrix with one column or row. 

\begin{proof}[Proof of Proposition~\ref{prop:TNbound}]
	It is enough to prove the proposition for the case where the tensor network is closed, i.e., where $C(T) \in \CC$. 
	This is so because we can write $\fnorm{C(T)}^2$ always as a closed tensor network, where all previous tensors plus their complex conjugates occur. 
	A tensor and its conjugate have the same norm, which shows the reduction argument. 
	
	We denote the tensors of the tensor network by
	$t^j \in \CC^{n_1^j\times n_2^j\times \dots \times n_{K^j}^j}$ where $j\in [J]$ 
	and the pointer pairs defining the contraction $C$ by 
	$(\{k_l,k'_l\})_{l \in [K]}$ 
	with 
	$K= \sum_{j\in [J]} K^j$.
	
	Let us consider first the case $J=2$. 
	As there are no self-contractions, we can relabel the pointer pairs 
	$(\{k_l,k'_l\})_{l \in [K]}$ so that the $k_l$ belong to $t^1$ and the $k'_l$ to $t^2$.	
	Hence, $C(T)$ is an inner product $C(T) = \braket{t^1}{t^2}$ of matricizations of $t^1$ and $t^2$ into vectors $\ket{t^1}$ and $\ket{t^2}$. 
	Therefore, the Cauchy-Schwarz inequality and the invariance of the Frobenius norm \eqref{eq:invarianceFnorm} prove the proposition for $J=2$. 
	
	Now we consider $J\geq 3$.
	If there are contractions $\{k_l,k'_l\}$ where one of the indices belongs to $t^1$ and one to $t^J$, 
	for each such $l$ we introduce an auxiliary tensor $s^l$ as identity matrix with components 
	$s^l_{m_{k_l},m_{k'_l}} \coloneqq \delta_{m_{k_l},m_{k'_l}}$ that is contracted with $t^1$ and $t^2$ and replaces their contraction $\{k_l,k'_l\}$. 
	This corresponds to tensoring 
	$t^2\otimes t^3 \otimes \dots \otimes t^{J-1}$ with an $n_{k_l} \times n_{k_l'}$ identity matrix. 
	Denote by $s^{l_1}, s^{l_2}, \dots, s^{l_M}$ the tensors that are introduced in this way. 
	We obtain $\tilde C$ from $C$ by, the this modification, so that $\tilde C$ contracts the modified tensor network
	$\tilde T = (t^1, \tilde T', t^J)$ with	
	\begin{equation}\label{eq:Tsplit}
		\begin{aligned}
			\tilde T'&=	S \cup T'\\
			S & \coloneqq \bigl(s^{l_1}, s^{l_2}, \dots , s^{l_M}\bigr)\\
			T' & \coloneqq \bigl(t^2, t^3, \dots , t^{J-1}\bigr)
		\end{aligned}
	\end{equation}
	where $\tilde C$ is obtained from $C$ by adding the contractions 
	$\{k_l, k'_l\}_{l =L+1, \dots, K+M}$ with $s^{l_1}, s^{l_2}, \dots, s^{l_M}$ 
	to the previous ones. 
	With this construction we have $C(T) = \tilde C(\tilde T)$. 
	
	Next, we rename the pointers such that the $k_l$ of the first $K^1$ pairs 
	$(\{k_l,k'_l\})_{l \in [K^1]}$ belong to $t^1$. 	 
	As there are no contractions between $t^1$ and $t^J$, 
	we rename the pointers so that the $k'_l$ of the last $K^J$ pairs
	$(\{k_l,k'_l\})_{l \in [K]\setminus [K-K^J]}$ belong to $t^J$. 
	The remaining pointers pairs are 
	\begin{equation}
		\begin{aligned}
			\tilde{\K}' &\coloneqq \K'\cup \mc M \,,
			\\
			\K' &\coloneqq \{K^1+1, K^1+2, \dots , K-K^J\}\,,
			\\
			\mc M &\coloneqq \{K+1,K+2, L+M\} \, ,
		\end{aligned}
	\end{equation} 
	which contain no further contractions between $t^1$ and $t^J$.
	Now we have achieved that the pointer pairs connected to $t^1$ are 
	$\K^1 \coloneqq [K^1]$ and the ones of $t^J$ are 
	$\K^J \coloneqq [K]\setminus [K-K^J]$, so that they are clearly disjoint. 
	Hence, we can write
	\begin{align}
	C(T) = \tilde C_{\K^1}\Bigl( t^1 \otimes \tilde C_{\K^J}\bigl(\tilde C_{\tilde{\K}'}(\tilde T') \otimes t^J\bigr)\Bigr) \, .
	\end{align}
	In fact $\tilde C_{\K^J}\bigl(\tilde C_{\tilde{\K}'}(\tilde T') \otimes t^J\bigr) \mapsto \tilde C(T)$ is the action of the functional given by 
	$\tilde C_{\K^1}\Bigl( t^1 \otimes \argdot \Bigr)$, which, in turn, is given by a vectorization $\ket{t^1}$ of $t^1$. 
	Hence, we can obtain
	\begin{align}
	|C(T)| \leq \fnorm{t^1} \fnorm{\tilde C_{\K^J}\bigl(\tilde C_{\tilde{\K}'}(\tilde T') \otimes t^J\bigr)} \, .
	\end{align}
	Similarly, $t^J \mapsto \tilde C_{\K^J}\bigl(\tilde C_{\tilde{\K}'}(\tilde T') \otimes t^J\bigr)$ is a linear mapping, where we view 
	$\tilde C_{\tilde{\K}'}(\tilde T')$ as a linear map contracting the indices $\K^J$ and having non-contracted indices $\K^1$. 
	This yields a vectorization $\ket{t^J}$ of $t^J$ an a map is represented by a matricization $\tilde C_{\tilde{\K}'}(\tilde T')_{\K^1,\K^J}$ of $\tilde C_{\tilde{\K}'}(\tilde T')$ 
	so that
	\begin{equation}
	\begin{aligned}
	\fnorm{\tilde C_{\K^J}\bigl(\tilde C_{\tilde{\K}'}(\tilde T') \otimes t^J\bigr)} 
	&=
	\norm{\tilde C_{\tilde{\K}'}(\tilde T')_{\K^1,\K^J} \ket{t^J}}
	\\
	&\leq 
	\snorm{\tilde C_{\tilde \K'}(\tilde T')_{\K^1,\K^J}} \fnorm{t^J} 
	\end{aligned}
	\end{equation}
	Using Eq.~\eqref{eq:Tsplit} we arrive at
	\begin{equation}
		\tilde C_{\tilde \K'}(\tilde T')_{\K^1,\K^J}
		= 
		\bigl(s^{l_1}\otimes s^{l_2}\otimes \dots \otimes s^{l_M} \otimes C_{\K'}(T') \bigr)_{\K^1,\K^J}
		\, . 
	\end{equation} 
	By construction, each auxiliary tensor $s^l$ is an identity matrix with one index in $\K^1$ and one in $\K^J$ and the corresponding matricization has unit spectral norm. 
	Using Eq.~\eqref{eq:normMultiplicativity}, we obtain
	\begin{equation}
		\snorm{\tilde C_{\tilde \K'}(\tilde T')_{\K^1,\K^J}}
		=
		\snorm{C_{\K'}(T')_{{\K'}^1,{\K'}^J}}, 
	\end{equation}
	where ${\K'}^1\subset \K^1$ and ${\K'}^J\subset\K$ are those pointer pairs of $\K^1$ and $\K^2$ have no pointer to any $s^l$ (there are no contractions between $S$ and $T'$). 
	
	Iterating Lemma~\ref{lem:spectral_norm_TN_bound} and using the bound \eqref{eq:snorm2fnorm} and Eq.~\eqref{eq:invarianceFnorm} we obtain that
	\begin{equation}
		\snorm{C_{\K'}(T')_{{\K'}^1,{\K'}^J}}
		\leq 
		\fnorm{t^2} \fnorm{t^3} \dots \fnorm{t^{J-1}}  \, .
	\end{equation}
	This completes the proof. 		
\end{proof}

We use the notation for matricizations introduced right before Proposition~\ref{prop:TNbound} to state the following. 

\begin{lemma}\label{lem:spectral_norm_TN_bound}
	Let 
	$t^1\in \CC^{n_1\times n_2\times\dots \times n_{K^1}}$
	and 
	$t^2\in \CC^{n_{K^1+1}\times n_{K^2+2}\times\dots \times n_{K^1+K^2}}$
	be tensors with index pointers $\K^1 \coloneqq[K^1]$ and $\K^2 \coloneqq [K^1+K^2]\setminus [K^1]$, respectively. 
	Further, let $C_{\mc M}$ be a partial contraction over indices 
	$\mcM = \{(k_l,k'_l)\}_{l \in [L]}$ with 
	$M^1\coloneqq \{k_l\}_{l \in [L]}\subset \K^1$ 
	pointing to indices of $t^1$ and  
	$M^2\coloneqq \{k'_l\}_{l \in [L]}\subset \K^2$ 
	pointing to indices of $t^2$.
	Let $C_\mcM(T)_{L,R}$ be a matricization of $C_\mcM(s,t)$ with row indices $L$ and column indices $R$ where 
	$L\dunion M^1 \dunion M^2 \dunion R = \K^1 \dunion \K^2$. 
	Then 
	\begin{equation}
		\snorm{C_\mcM(s,t)_{L,R}} 
		\leq 
		\snorm{t^1_{L^1,R^1 \cup M^1} } 	
		\snorm{t^2_{M^2\cup L^2,R^2} } \, ,
	\end{equation}
	where 
	$L^1=\K^1 \cap L$, 
	$R^1=\K^1 \cap R$,
	$L^2=\K^2 \cap L$, and
	$R^2=\K^2 \cap R$ are the row/column indices of $t^1$ and $t^2$, respectively. 
\end{lemma}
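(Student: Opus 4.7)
The plan is to express the contracted and matricized tensor $C_\mcM(T)_{L,R}$ as an ordinary matrix product of two operators, each of which is essentially a matricization of $t^1$ or $t^2$ tensored with an identity. Submultiplicativity \eqref{eq:submultiplicativity} together with the tensor-product identity \eqref{eq:normMultiplicativity} and $\snorm{\1}=1$ will then yield the claimed bound immediately.

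Concretely, let
$A \coloneqq t^1_{L^1,\, R^1\cup M^1}$ and
$B \coloneqq t^2_{M^2\cup L^2,\, R^2}$
be the two matricizations featuring on the right-hand side. Unwinding definitions, the matrix element of $C_\mcM(T)_{L,R}$ at row $(l_1,l_2) \in L^1 \times L^2$ and column $(r_1,r_2) \in R^1 \times R^2$ equals
\begin{equation}
C_\mcM(T)_{L,R}\bigl[(l_1,l_2);(r_1,r_2)\bigr]
=\sum_{m} A\bigl[l_1;(r_1,m)\bigr]\,B\bigl[(m,l_2);r_2\bigr],
\end{equation}
where the sum runs over the joint contracted multi-index $m$ on $M^1 \leftrightarrow M^2$. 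My plan is to recognize this as the entry of the matrix product $A' B'$ with
\begin{equation}
A' \coloneqq A \otimes \1_{L^2}, \qquad B' \coloneqq \1_{R^1} \otimes B,
\end{equation}
after a trivial permutation of basis vectors. Indeed, viewing $A'$ as a map from $R^1 \times M^1 \times L^2$ to $L^1 \times L^2$ and $B'$ as a map from $R^1 \times R^2$ to $R^1 \times M^2 \times L^2$, the composition $A'B'$ contracts precisely the $M^1 \leftrightarrow M^2$ index pair while tensoring through the uncontracted indices, reproducing the displayed sum.

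From here the bound is mechanical:
\begin{equation}
\snorm{C_\mcM(T)_{L,R}}
= \snorm{A'B'}
\leq \snorm{A'}\,\snorm{B'}
= \snorm{A}\snorm{\1_{L^2}}\snorm{\1_{R^1}}\snorm{B}
= \snorm{A}\snorm{B},
\end{equation}
using \eqref{eq:submultiplicativity} for the inequality and \eqref{eq:normMultiplicativity} together with $\snorm{\1}=1$ for the equalities. Since permuting tensor factors only amounts to conjugating matricizations by permutation (hence unitary) matrices, it preserves the spectral norm; this justifies identifying $C_\mcM(T)_{L,R}$ with $A'B'$ up to an overall change of basis.

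The only delicate point, and the one I would write out carefully, is the bookkeeping of which pointer belongs to which factor in the tensor products of $A'$ and $B'$ — in particular making sure that the identities $\1_{L^2}$ and $\1_{R^1}$ are inserted on the correct sides so that the $M^1 \leftrightarrow M^2$ contraction is the one that is realized by the matrix product, and that the leftover rows and columns end up grouped as $L^1\cup L^2$ and $R^1\cup R^2$, respectively. Once that index-tracking is in place, no further estimates are needed.
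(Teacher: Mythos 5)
Your proposal is correct and is essentially identical to the paper's own proof: the paper likewise writes $C_\mcM(s,t)_{L,R} = \bigl( t^1_{L^1,R^1 \cup M^1} \otimes \id_{L^2} \bigr)\bigl( \id_{R^1} \otimes t^2_{M^2\cup L^2,R^2} \bigr)$ and concludes via submultiplicativity \eqref{eq:submultiplicativity} and the multiplicativity of the spectral norm under tensor products \eqref{eq:normMultiplicativity}. Your explicit entrywise verification of the factorization and the remark about unitary index permutations only make the index bookkeeping more transparent than in the paper.
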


\begin{proof}
	We write the matricized partially contracted tensor network as a matrix product,
	\begin{equation}
		C_\mcM(s,t)_{L,R}
		=
		\left( t^1_{L^1,R^1 \cup M^1} \otimes \id_{L^2} \right)
		\left( \id_{R_1} \otimes t^2_{M^2\cup L^2,R^2} \right),
	\end{equation}
	where $\id_{L^2}$ denotes the identity matrix with row indices given by $L^2$ and matching column indices and similarly for $\id_{R_1}$;
	e.g., $\id_{\{3,5\}}$ has the matrix components 
	$(\id_{\{3,5\}})_{(i_3,i_5), (i'_3, i'_5)} = \delta_{i_3,i'_3}\delta_{i_5,i'_5}$ for $i_j,i'_j \in [n_j]$. 
	Using Eqs.~\eqref{eq:submultiplicativity} and \eqref{eq:normMultiplicativity} finishes the proof. 
\end{proof}

\bibliographystyle{./myapsrev4-1}

\bibliography{martin,other,jensadditionalreferences}

\begin{thebibliography}{99}%
\makeatletter
\providecommand \@ifxundefined [1]{%
 \@ifx{#1\undefined}
}%
\providecommand \@ifnum [1]{%
 \ifnum #1\expandafter \@firstoftwo
 \else \expandafter \@secondoftwo
 \fi
}%
\providecommand \@ifx [1]{%
 \ifx #1\expandafter \@firstoftwo
 \else \expandafter \@secondoftwo
 \fi
}%
\providecommand \natexlab [1]{#1}%
\providecommand \enquote  [1]{``#1''}%
\providecommand \bibnamefont  [1]{#1}%
\providecommand \bibfnamefont [1]{#1}%
\providecommand \citenamefont [1]{#1}%
\providecommand \href@noop [0]{\@secondoftwo}%
\providecommand \href [0]{\begingroup \@sanitize@url \@href}%
\providecommand \@href[1]{\@@startlink{#1}\@@href}%
\providecommand \@@href[1]{\endgroup#1\@@endlink}%
\providecommand \@sanitize@url [0]{\catcode `\\12\catcode `\$12\catcode
  `\&12\catcode `\#12\catcode `\^12\catcode `\_12\catcode `\%12\relax}%
\providecommand \@@startlink[1]{}%
\providecommand \@@endlink[0]{}%
\providecommand \url  [0]{\begingroup\@sanitize@url \@url }%
\providecommand \@url [1]{\endgroup\@href {#1}{\urlprefix }}%
\providecommand \urlprefix  [0]{URL }%
\providecommand \Eprint [0]{\href }%
\providecommand \doibase [0]{https://doi.org/}%
\providecommand \selectlanguage [0]{\@gobble}%
\providecommand \bibinfo  [0]{\@secondoftwo}%
\providecommand \bibfield  [0]{\@secondoftwo}%
\providecommand \translation [1]{[#1]}%
\providecommand \BibitemOpen [0]{}%
\providecommand \bibitemStop [0]{}%
\providecommand \bibitemNoStop [0]{.\EOS\space}%
\providecommand \EOS [0]{\spacefactor3000\relax}%
\providecommand \BibitemShut  [1]{\csname bibitem#1\endcsname}%
\let\auto@bib@innerbib\@empty
\bibitem [{\citenamefont {Chuang}\ and\ \citenamefont
  {Nielsen}(1997)}]{QuantumProcessTomography}%
  \BibitemOpen
  \bibfield  {author} {\bibinfo {author} {\bibfnamefont {I.~L.}\ \bibnamefont
  {Chuang}}\ and\ \bibinfo {author} {\bibfnamefont {M.~A.}\ \bibnamefont
  {Nielsen}},\ }\bibinfo {title} {\emph {Prescription for experimental
  determination of the dynamics of a quantum black box}},\ \href {\doibase
  10.1080/09500349708231894} {\bibfield  {journal} {\bibinfo  {journal} {J.
  Mod. Opt.}\ }\textbf {\bibinfo {volume} {44}},\ \bibinfo {pages} {2455}
  (\bibinfo {year} {1997})},\ \Eprint {http://arxiv.org/abs/quant-ph/9610001}
  {quant-ph/9610001}\BibitemShut {NoStop}%
\bibitem [{\citenamefont {Mohseni}\ \emph {et~al.}(2008)\citenamefont
  {Mohseni}, \citenamefont {Rezakhani},\ and\ \citenamefont
  {Lidar}}]{PhysRevA.77.032322}%
  \BibitemOpen
  \bibfield  {author} {\bibinfo {author} {\bibfnamefont {M.}~\bibnamefont
  {Mohseni}}, \bibinfo {author} {\bibfnamefont {A.~T.}\ \bibnamefont
  {Rezakhani}}, \ and\ \bibinfo {author} {\bibfnamefont {D.~A.}\ \bibnamefont
  {Lidar}},\ }\bibinfo {title} {\emph {Quantum-process tomography: Resource
  analysis of different strategies}},\ \href {\doibase
  10.1103/PhysRevA.77.032322} {\bibfield  {journal} {\bibinfo  {journal} {Phys.
  Rev. A}\ }\textbf {\bibinfo {volume} {77}},\ \bibinfo {pages} {032322}
  (\bibinfo {year} {2008})},\ \Eprint {http://arxiv.org/abs/quant-ph/0702131}
  {quant-ph/0702131}\BibitemShut {NoStop}%
\bibitem [{\citenamefont {Kliesch}\ \emph {et~al.}(2016)\citenamefont
  {Kliesch}, \citenamefont {Kueng}, \citenamefont {Eisert},\ and\ \citenamefont
  {Gross}}]{KliKueEis16}%
  \BibitemOpen
  \bibfield  {author} {\bibinfo {author} {\bibfnamefont {M.}~\bibnamefont
  {Kliesch}}, \bibinfo {author} {\bibfnamefont {R.}~\bibnamefont {Kueng}},
  \bibinfo {author} {\bibfnamefont {J.}~\bibnamefont {Eisert}}, \ and\ \bibinfo
  {author} {\bibfnamefont {D.}~\bibnamefont {Gross}},\ }\bibinfo {title} {\emph
  {Improving compressed sensing with the diamond norm}},\ \href {\doibase
  10.1109/TIT.2016.2606500} {\bibfield  {journal} {\bibinfo  {journal} {IEEE
  Trans. Inf. Th.}\ }\textbf {\bibinfo {volume} {62}},\ \bibinfo {pages} {7445}
  (\bibinfo {year} {2016})},\ \Eprint {http://arxiv.org/abs/1511.01513}
  {arXiv:1511.01513}\BibitemShut {NoStop}%
\bibitem [{\citenamefont {{Brandao}}\ \emph {et~al.}(2016)\citenamefont
  {{Brandao}}, \citenamefont {{Harrow}},\ and\ \citenamefont
  {{Horodecki}}}]{BraHarHor12}%
  \BibitemOpen
  \bibfield  {author} {\bibinfo {author} {\bibfnamefont {F.~G.~S.~L.}\
  \bibnamefont {{Brandao}}}, \bibinfo {author} {\bibfnamefont {A.~W.}\
  \bibnamefont {{Harrow}}}, \ and\ \bibinfo {author} {\bibfnamefont
  {M.}~\bibnamefont {{Horodecki}}},\ }\bibinfo {title} {\emph {{Local random
  quantum circuits are approximate polynomial-designs}}},\ \href {\doibase
  10.1007/s00220-016-2706-8} {\bibfield  {journal} {\bibinfo  {journal}
  {Commun. Math. Phys.}\ }\textbf {\bibinfo {volume} {346}},\ \bibinfo {pages}
  {397} (\bibinfo {year} {2016})},\ \Eprint {http://arxiv.org/abs/1208.0692}
  {arXiv:1208.0692}\BibitemShut {NoStop}%
\bibitem [{\citenamefont {{Brand{\~a}o}}\ \emph
  {et~al.}(2016{\natexlab{a}})\citenamefont {{Brand{\~a}o}}, \citenamefont
  {{Harrow}},\ and\ \citenamefont {{Horodecki}}}]{BraHarHor16PRL}%
  \BibitemOpen
  \bibfield  {author} {\bibinfo {author} {\bibfnamefont {F.~G.~S.~L.}\
  \bibnamefont {{Brand{\~a}o}}}, \bibinfo {author} {\bibfnamefont {A.~W.}\
  \bibnamefont {{Harrow}}}, \ and\ \bibinfo {author} {\bibfnamefont
  {M.}~\bibnamefont {{Horodecki}}},\ }\bibinfo {title} {\emph {Efficient
  quantum pseudorandomness}},\ \href {\doibase 10.1103/PhysRevLett.116.170502}
  {\bibfield  {journal} {\bibinfo  {journal} {Phys. Rev. Lett.}\ }\textbf
  {\bibinfo {volume} {116}},\ \bibinfo {eid} {170502} (\bibinfo {year}
  {2016}{\natexlab{a}})},\ \Eprint {http://arxiv.org/abs/1605.00713}
  {arXiv:1605.00713}\BibitemShut {NoStop}%
\bibitem [{\citenamefont {{Nakata}}\ \emph {et~al.}(2017)\citenamefont
  {{Nakata}}, \citenamefont {{Hirche}}, \citenamefont {{Koashi}},\ and\
  \citenamefont {{Winter}}}]{NakHirKoa17}%
  \BibitemOpen
  \bibfield  {author} {\bibinfo {author} {\bibfnamefont {Y.}~\bibnamefont
  {{Nakata}}}, \bibinfo {author} {\bibfnamefont {C.}~\bibnamefont {{Hirche}}},
  \bibinfo {author} {\bibfnamefont {M.}~\bibnamefont {{Koashi}}}, \ and\
  \bibinfo {author} {\bibfnamefont {A.}~\bibnamefont {{Winter}}},\ }\bibinfo
  {title} {\emph {Efficient quantum pseudorandomness with nearly
  time-independent hamiltonian dynamics}},\ \href {\doibase
  10.1103/PhysRevX.7.021006} {\bibfield  {journal} {\bibinfo  {journal} {Phys.
  Rev. X}\ }\textbf {\bibinfo {volume} {7}},\ \bibinfo {eid} {021006} (\bibinfo
  {year} {2017})},\ \Eprint {http://arxiv.org/abs/1609.07021}
  {arXiv:1609.07021}\BibitemShut {NoStop}%
\bibitem [{\citenamefont {{Harrow}}\ and\ \citenamefont
  {{Mehraban}}(2018)}]{HarMeh18}%
  \BibitemOpen
  \bibfield  {author} {\bibinfo {author} {\bibfnamefont {A.}~\bibnamefont
  {{Harrow}}}\ and\ \bibinfo {author} {\bibfnamefont {S.}~\bibnamefont
  {{Mehraban}}},\ }\href@noop {} {\bibinfo {title} {\emph {{Approximate unitary
  $t$-designs by short random quantum circuits using nearest-neighbor and
  long-range gates}}},\ }\Eprint {http://arxiv.org/abs/1809.06957}
  {arXiv:1809.06957}\BibitemShut {NoStop}%
\bibitem [{\citenamefont {{Onorati}}\ \emph {et~al.}(2017)\citenamefont
  {{Onorati}}, \citenamefont {{Buerschaper}}, \citenamefont {{Kliesch}},
  \citenamefont {{Brown}}, \citenamefont {{Werner}},\ and\ \citenamefont
  {{Eisert}}}]{OnoBueKli17}%
  \BibitemOpen
  \bibfield  {author} {\bibinfo {author} {\bibfnamefont {E.}~\bibnamefont
  {{Onorati}}}, \bibinfo {author} {\bibfnamefont {O.}~\bibnamefont
  {{Buerschaper}}}, \bibinfo {author} {\bibfnamefont {M.}~\bibnamefont
  {{Kliesch}}}, \bibinfo {author} {\bibfnamefont {W.}~\bibnamefont {{Brown}}},
  \bibinfo {author} {\bibfnamefont {A.~H.}\ \bibnamefont {{Werner}}}, \ and\
  \bibinfo {author} {\bibfnamefont {J.}~\bibnamefont {{Eisert}}},\ }\bibinfo
  {title} {\emph {Mixing properties of stochastic quantum {Hamiltonians}}},\
  \href {\doibase 10.1007/s00220-017-2950-6} {\bibfield  {journal} {\bibinfo
  {journal} {Comm. Math. Phys.}\ }\textbf {\bibinfo {volume} {355}},\ \bibinfo
  {pages} {905} (\bibinfo {year} {2017})},\ \Eprint
  {http://arxiv.org/abs/1606.01914} {arXiv:1606.01914}\BibitemShut {NoStop}%
\bibitem [{\citenamefont {{Zhu}}\ \emph {et~al.}(2016)\citenamefont {{Zhu}},
  \citenamefont {{Kueng}}, \citenamefont {{Grassl}},\ and\ \citenamefont
  {{Gross}}}]{ZhuKueGra16}%
  \BibitemOpen
  \bibfield  {author} {\bibinfo {author} {\bibfnamefont {H.}~\bibnamefont
  {{Zhu}}}, \bibinfo {author} {\bibfnamefont {R.}~\bibnamefont {{Kueng}}},
  \bibinfo {author} {\bibfnamefont {M.}~\bibnamefont {{Grassl}}}, \ and\
  \bibinfo {author} {\bibfnamefont {D.}~\bibnamefont {{Gross}}},\ }\href@noop
  {} {\bibinfo {title} {\emph {{The Clifford group fails gracefully to be a
  unitary 4-design}}},\ }\Eprint {http://arxiv.org/abs/1609.08172}
  {arXiv:1609.08172}\BibitemShut {NoStop}%
\bibitem [{\citenamefont {Helsen}\ \emph {et~al.}(2018)\citenamefont {Helsen},
  \citenamefont {Wallman},\ and\ \citenamefont {Wehner}}]{HelWalWeh16}%
  \BibitemOpen
  \bibfield  {author} {\bibinfo {author} {\bibfnamefont {J.}~\bibnamefont
  {Helsen}}, \bibinfo {author} {\bibfnamefont {J.~J.}\ \bibnamefont {Wallman}},
  \ and\ \bibinfo {author} {\bibfnamefont {S.}~\bibnamefont {Wehner}},\
  }\bibinfo {title} {\emph {Representations of the multi-qubit clifford
  group}},\ \href {\doibase 10.1063/1.4997688} {\bibfield  {journal} {\bibinfo
  {journal} {J. Math. Phys.}\ }\textbf {\bibinfo {volume} {59}} (\bibinfo
  {year} {2018}),\ 10.1063/1.4997688},\ \Eprint
  {http://arxiv.org/abs/1609.08188} {arXiv:1609.08188}\BibitemShut {NoStop}%
\bibitem [{\citenamefont {{Cory}}\ \emph {et~al.}(1998)\citenamefont {{Cory}},
  \citenamefont {{Price}}, \citenamefont {{Maas}}, \citenamefont {{Knill}},
  \citenamefont {{Laflamme}}, \citenamefont {{Zurek}}, \citenamefont
  {{Havel}},\ and\ \citenamefont {{Somaroo}}}]{CorPriMaa98}%
  \BibitemOpen
  \bibfield  {author} {\bibinfo {author} {\bibfnamefont {D.~G.}\ \bibnamefont
  {{Cory}}}, \bibinfo {author} {\bibfnamefont {M.~D.}\ \bibnamefont {{Price}}},
  \bibinfo {author} {\bibfnamefont {W.}~\bibnamefont {{Maas}}}, \bibinfo
  {author} {\bibfnamefont {E.}~\bibnamefont {{Knill}}}, \bibinfo {author}
  {\bibfnamefont {R.}~\bibnamefont {{Laflamme}}}, \bibinfo {author}
  {\bibfnamefont {W.~H.}\ \bibnamefont {{Zurek}}}, \bibinfo {author}
  {\bibfnamefont {T.~F.}\ \bibnamefont {{Havel}}}, \ and\ \bibinfo {author}
  {\bibfnamefont {S.~S.}\ \bibnamefont {{Somaroo}}},\ }\bibinfo {title} {\emph
  {Experimental quantum error correction}},\ \href {\doibase
  10.1103/PhysRevLett.81.2152} {\bibfield  {journal} {\bibinfo  {journal}
  {Phys. Rev. Lett.}\ }\textbf {\bibinfo {volume} {81}},\ \bibinfo {pages}
  {2152} (\bibinfo {year} {1998})},\ \Eprint
  {http://arxiv.org/abs/quant-ph/9802018} {quant-ph/9802018}\BibitemShut
  {NoStop}%
\bibitem [{\citenamefont {Lanyon}\ \emph {et~al.}(2009)\citenamefont {Lanyon},
  \citenamefont {Barbieri}, \citenamefont {Almeida}, \citenamefont {Jennewein},
  \citenamefont {Ralph}, \citenamefont {Resch}, \citenamefont {Pryde},
  \citenamefont {O'Brien}, \citenamefont {Gilchrist},\ and\ \citenamefont
  {White}}]{LanBarAlm09}%
  \BibitemOpen
  \bibfield  {author} {\bibinfo {author} {\bibfnamefont {B.~P.}\ \bibnamefont
  {Lanyon}}, \bibinfo {author} {\bibfnamefont {M.}~\bibnamefont {Barbieri}},
  \bibinfo {author} {\bibfnamefont {M.~P.}\ \bibnamefont {Almeida}}, \bibinfo
  {author} {\bibfnamefont {T.}~\bibnamefont {Jennewein}}, \bibinfo {author}
  {\bibfnamefont {T.~C.}\ \bibnamefont {Ralph}}, \bibinfo {author}
  {\bibfnamefont {K.~J.}\ \bibnamefont {Resch}}, \bibinfo {author}
  {\bibfnamefont {G.~J.}\ \bibnamefont {Pryde}}, \bibinfo {author}
  {\bibfnamefont {J.~L.}\ \bibnamefont {O'Brien}}, \bibinfo {author}
  {\bibfnamefont {A.}~\bibnamefont {Gilchrist}}, \ and\ \bibinfo {author}
  {\bibfnamefont {A.~G.}\ \bibnamefont {White}},\ }\bibinfo {title} {\emph
  {Simplifying quantum logic using higher-dimensional {Hilbert} spaces}},\
  \href {\doibase 10.1038/nphys1150} {\bibfield  {journal} {\bibinfo  {journal}
  {Nature Phys.}\ }\textbf {\bibinfo {volume} {5}},\ \bibinfo {pages} {134}
  (\bibinfo {year} {2009})}\BibitemShut {NoStop}%
\bibitem [{\citenamefont {{Monz}}\ \emph {et~al.}(2009)\citenamefont {{Monz}},
  \citenamefont {{Kim}}, \citenamefont {{H{\"a}nsel}}, \citenamefont {{Riebe}},
  \citenamefont {{Villar}}, \citenamefont {{Schindler}}, \citenamefont
  {{Chwalla}}, \citenamefont {{Hennrich}},\ and\ \citenamefont
  {{Blatt}}}]{MonKimHan09}%
  \BibitemOpen
  \bibfield  {author} {\bibinfo {author} {\bibfnamefont {T.}~\bibnamefont
  {{Monz}}}, \bibinfo {author} {\bibfnamefont {K.}~\bibnamefont {{Kim}}},
  \bibinfo {author} {\bibfnamefont {W.}~\bibnamefont {{H{\"a}nsel}}}, \bibinfo
  {author} {\bibfnamefont {M.}~\bibnamefont {{Riebe}}}, \bibinfo {author}
  {\bibfnamefont {A.~S.}\ \bibnamefont {{Villar}}}, \bibinfo {author}
  {\bibfnamefont {P.}~\bibnamefont {{Schindler}}}, \bibinfo {author}
  {\bibfnamefont {M.}~\bibnamefont {{Chwalla}}}, \bibinfo {author}
  {\bibfnamefont {M.}~\bibnamefont {{Hennrich}}}, \ and\ \bibinfo {author}
  {\bibfnamefont {R.}~\bibnamefont {{Blatt}}},\ }\bibinfo {title} {\emph
  {Realization of the quantum {Toffoli} gate with trapped ions}},\ \href
  {\doibase 10.1103/PhysRevLett.102.040501} {\bibfield  {journal} {\bibinfo
  {journal} {Phys. Rev. Lett.}\ }\textbf {\bibinfo {volume} {102}},\ \bibinfo
  {eid} {040501} (\bibinfo {year} {2009})},\ \Eprint
  {http://arxiv.org/abs/0804.0082} {arXiv:0804.0082}\BibitemShut {NoStop}%
\bibitem [{\citenamefont {{Fedorov}}\ \emph {et~al.}(2012)\citenamefont
  {{Fedorov}}, \citenamefont {{Steffen}}, \citenamefont {{Baur}}, \citenamefont
  {{da Silva}},\ and\ \citenamefont {{Wallraff}}}]{FedSteBau12}%
  \BibitemOpen
  \bibfield  {author} {\bibinfo {author} {\bibfnamefont {A.}~\bibnamefont
  {{Fedorov}}}, \bibinfo {author} {\bibfnamefont {L.}~\bibnamefont
  {{Steffen}}}, \bibinfo {author} {\bibfnamefont {M.}~\bibnamefont {{Baur}}},
  \bibinfo {author} {\bibfnamefont {M.~P.}\ \bibnamefont {{da Silva}}}, \ and\
  \bibinfo {author} {\bibfnamefont {A.}~\bibnamefont {{Wallraff}}},\ }\bibinfo
  {title} {\emph {Implementation of a {T}offoli gate with superconducting
  circuits}},\ \href {\doibase 10.1038/nature10713} {\bibfield  {journal}
  {\bibinfo  {journal} {Nature}\ }\textbf {\bibinfo {volume} {481}},\ \bibinfo
  {pages} {170} (\bibinfo {year} {2012})},\ \Eprint
  {http://arxiv.org/abs/1108.3966} {arXiv:1108.3966}\BibitemShut {NoStop}%
\bibitem [{\citenamefont {Flammia}\ \emph {et~al.}(2012)\citenamefont
  {Flammia}, \citenamefont {Gross}, \citenamefont {Liu},\ and\ \citenamefont
  {Eisert}}]{FlaGroLiu12}%
  \BibitemOpen
  \bibfield  {author} {\bibinfo {author} {\bibfnamefont {S.~T.}\ \bibnamefont
  {Flammia}}, \bibinfo {author} {\bibfnamefont {D.}~\bibnamefont {Gross}},
  \bibinfo {author} {\bibfnamefont {Y.-K.}\ \bibnamefont {Liu}}, \ and\
  \bibinfo {author} {\bibfnamefont {J.}~\bibnamefont {Eisert}},\ }\bibinfo
  {title} {\emph {Quantum tomography via compressed sensing: error bounds,
  sample complexity and efficient estimators}},\ \href {\doibase
  10.1088/1367-2630/14/9/095022} {\bibfield  {journal} {\bibinfo  {journal}
  {New J. Phys.}\ }\textbf {\bibinfo {volume} {14}},\ \bibinfo {pages} {095022}
  (\bibinfo {year} {2012})},\ \Eprint {http://arxiv.org/abs/1205.2300}
  {arXiv:1205.2300}\BibitemShut {NoStop}%
\bibitem [{\citenamefont {{Shabani}}\ \emph {et~al.}(2011)\citenamefont
  {{Shabani}}, \citenamefont {{Kosut}}, \citenamefont {{Mohseni}},
  \citenamefont {{Rabitz}}, \citenamefont {{Broome}}, \citenamefont
  {{Almeida}}, \citenamefont {{Fedrizzi}},\ and\ \citenamefont
  {{White}}}]{ShaKosMoh11}%
  \BibitemOpen
  \bibfield  {author} {\bibinfo {author} {\bibfnamefont {A.}~\bibnamefont
  {{Shabani}}}, \bibinfo {author} {\bibfnamefont {R.~L.}\ \bibnamefont
  {{Kosut}}}, \bibinfo {author} {\bibfnamefont {M.}~\bibnamefont {{Mohseni}}},
  \bibinfo {author} {\bibfnamefont {H.}~\bibnamefont {{Rabitz}}}, \bibinfo
  {author} {\bibfnamefont {M.~A.}\ \bibnamefont {{Broome}}}, \bibinfo {author}
  {\bibfnamefont {M.~P.}\ \bibnamefont {{Almeida}}}, \bibinfo {author}
  {\bibfnamefont {A.}~\bibnamefont {{Fedrizzi}}}, \ and\ \bibinfo {author}
  {\bibfnamefont {A.~G.}\ \bibnamefont {{White}}},\ }\bibinfo {title} {\emph
  {Efficient measurement of quantum dynamics via compressive sensing}},\ \href
  {\doibase 10.1103/PhysRevLett.106.100401} {\bibfield  {journal} {\bibinfo
  {journal} {Phys. Rev. Lett.}\ }\textbf {\bibinfo {volume} {106}},\ \bibinfo
  {eid} {100401} (\bibinfo {year} {2011})},\ \Eprint
  {http://arxiv.org/abs/0910.5498} {arXiv:0910.5498}\BibitemShut {NoStop}%
\bibitem [{\citenamefont {Baldwin}\ \emph {et~al.}(2014)\citenamefont
  {Baldwin}, \citenamefont {Kalev},\ and\ \citenamefont
  {Deutsch}}]{PhysRevA.90.012110}%
  \BibitemOpen
  \bibfield  {author} {\bibinfo {author} {\bibfnamefont {C.~H.}\ \bibnamefont
  {Baldwin}}, \bibinfo {author} {\bibfnamefont {A.}~\bibnamefont {Kalev}}, \
  and\ \bibinfo {author} {\bibfnamefont {I.~H.}\ \bibnamefont {Deutsch}},\
  }\bibinfo {title} {\emph {Quantum process tomography of unitary and
  near-unitary maps}},\ \href {\doibase 10.1103/PhysRevA.90.012110} {\bibfield
  {journal} {\bibinfo  {journal} {Phys. Rev. A}\ }\textbf {\bibinfo {volume}
  {90}},\ \bibinfo {pages} {012110} (\bibinfo {year} {2014})},\ \Eprint
  {http://arxiv.org/abs/arXiv:1404.2877} {arXiv:1404.2877}\BibitemShut
  {NoStop}%
\bibitem [{\citenamefont {Kimmel}\ and\ \citenamefont {Liu}(2017)}]{KimLiu15}%
  \BibitemOpen
  \bibfield  {author} {\bibinfo {author} {\bibfnamefont {S.}~\bibnamefont
  {Kimmel}}\ and\ \bibinfo {author} {\bibfnamefont {Y.~K.}\ \bibnamefont
  {Liu}},\ }\bibinfo {title} {\emph {Phase retrieval using unitary
  2-designs}},\ in\ \href {\doibase 10.1109/SAMPTA.2017.8024414} {\emph
  {\bibinfo {booktitle} {International Conference on Sampling Theory and
  Applications (SampTA)}}}\ (\bibinfo {year} {2017})\ pp.\ \bibinfo {pages}
  {345--349},\ \Eprint {http://arxiv.org/abs/1510.08887}
  {arXiv:1510.08887}\BibitemShut {NoStop}%
\bibitem [{\citenamefont {Kueng}\ \emph {et~al.}(2017)\citenamefont {Kueng},
  \citenamefont {Rauhut},\ and\ \citenamefont {Terstiege}}]{KueRauTer15}%
  \BibitemOpen
  \bibfield  {author} {\bibinfo {author} {\bibfnamefont {R.}~\bibnamefont
  {Kueng}}, \bibinfo {author} {\bibfnamefont {H.}~\bibnamefont {Rauhut}}, \
  and\ \bibinfo {author} {\bibfnamefont {U.}~\bibnamefont {Terstiege}},\
  }\bibinfo {title} {\emph {Low rank matrix recovery from rank one
  measurements}},\ \href {\doibase 10.1016/j.acha.2015.07.007} {\bibfield
  {journal} {\bibinfo  {journal} {Appl. Comp. Harm. Anal.}\ }\textbf {\bibinfo
  {volume} {42}},\ \bibinfo {pages} {88} (\bibinfo {year} {2017})},\ \Eprint
  {http://arxiv.org/abs/1410.6913} {arXiv:1410.6913}\BibitemShut {NoStop}%
\bibitem [{\citenamefont {{Rodionov}}\ \emph {et~al.}(2014)\citenamefont
  {{Rodionov}}, \citenamefont {{Veitia}}, \citenamefont {{Barends}},
  \citenamefont {{Kelly}}, \citenamefont {{Sank}}, \citenamefont {{Wenner}},
  \citenamefont {{Martinis}}, \citenamefont {{Kosut}},\ and\ \citenamefont
  {{Korotkov}}}]{RodVeiBar14}%
  \BibitemOpen
  \bibfield  {author} {\bibinfo {author} {\bibfnamefont {A.~V.}\ \bibnamefont
  {{Rodionov}}}, \bibinfo {author} {\bibfnamefont {A.}~\bibnamefont
  {{Veitia}}}, \bibinfo {author} {\bibfnamefont {R.}~\bibnamefont {{Barends}}},
  \bibinfo {author} {\bibfnamefont {J.}~\bibnamefont {{Kelly}}}, \bibinfo
  {author} {\bibfnamefont {D.}~\bibnamefont {{Sank}}}, \bibinfo {author}
  {\bibfnamefont {J.}~\bibnamefont {{Wenner}}}, \bibinfo {author}
  {\bibfnamefont {J.~M.}\ \bibnamefont {{Martinis}}}, \bibinfo {author}
  {\bibfnamefont {R.~L.}\ \bibnamefont {{Kosut}}}, \ and\ \bibinfo {author}
  {\bibfnamefont {A.~N.}\ \bibnamefont {{Korotkov}}},\ }\bibinfo {title} {\emph
  {Compressed sensing quantum process tomography for superconducting quantum
  gates}},\ \href {\doibase 10.1103/PhysRevB.90.144504} {\bibfield  {journal}
  {\bibinfo  {journal} {Phys. Rev. B}\ }\textbf {\bibinfo {volume} {90}},\
  \bibinfo {eid} {144504} (\bibinfo {year} {2014})},\ \Eprint
  {http://arxiv.org/abs/1407.0761} {arXiv:1407.0761}\BibitemShut {NoStop}%
\bibitem [{\citenamefont {{Kabanava}}\ \emph {et~al.}(2016)\citenamefont
  {{Kabanava}}, \citenamefont {{Kueng}}, \citenamefont {{Rauhut}},\ and\
  \citenamefont {{Terstiege}}}]{KabKueRau15}%
  \BibitemOpen
  \bibfield  {author} {\bibinfo {author} {\bibfnamefont {M.}~\bibnamefont
  {{Kabanava}}}, \bibinfo {author} {\bibfnamefont {R.}~\bibnamefont {{Kueng}}},
  \bibinfo {author} {\bibfnamefont {H.}~\bibnamefont {{Rauhut}}}, \ and\
  \bibinfo {author} {\bibfnamefont {U.}~\bibnamefont {{Terstiege}}},\ }\bibinfo
  {title} {\emph {Stable low-rank matrix recovery via null space properties}},\
  \href {\doibase 10.1093/imaiai/iaw014} {\bibfield  {journal} {\bibinfo
  {journal} {Information and Inference: A Journal of the IMA}\ }\textbf
  {\bibinfo {volume} {5}},\ \bibinfo {pages} {405} (\bibinfo {year} {2016})},\
  \Eprint {http://arxiv.org/abs/1507.07184} {arXiv:1507.07184}\BibitemShut
  {NoStop}%
\bibitem [{\citenamefont {{Kueng}}\ and\ \citenamefont
  {{Jung}}(2018)}]{KueJun18}%
  \BibitemOpen
  \bibfield  {author} {\bibinfo {author} {\bibfnamefont {R.}~\bibnamefont
  {{Kueng}}}\ and\ \bibinfo {author} {\bibfnamefont {P.}~\bibnamefont
  {{Jung}}},\ }\bibinfo {title} {\emph {Robust nonnegative sparse recovery and
  the nullspace property of 0/1 measurements}},\ \href {\doibase
  10.1109/TIT.2017.2746620} {\bibfield  {journal} {\bibinfo  {journal} {IEEE
  Trans. Inf. Theory}\ }\textbf {\bibinfo {volume} {64}},\ \bibinfo {pages}
  {689} (\bibinfo {year} {2018})},\ \Eprint {http://arxiv.org/abs/1603.07997}
  {arXiv:1603.07997}\BibitemShut {NoStop}%
\bibitem [{\citenamefont {Gross}\ \emph {et~al.}(2010)\citenamefont {Gross},
  \citenamefont {Liu}, \citenamefont {Flammia}, \citenamefont {Becker},\ and\
  \citenamefont {Eisert}}]{GroLiuFla10}%
  \BibitemOpen
  \bibfield  {author} {\bibinfo {author} {\bibfnamefont {D.}~\bibnamefont
  {Gross}}, \bibinfo {author} {\bibfnamefont {Y.-K.}\ \bibnamefont {Liu}},
  \bibinfo {author} {\bibfnamefont {S.~T.}\ \bibnamefont {Flammia}}, \bibinfo
  {author} {\bibfnamefont {S.}~\bibnamefont {Becker}}, \ and\ \bibinfo {author}
  {\bibfnamefont {J.}~\bibnamefont {Eisert}},\ }\bibinfo {title} {\emph
  {Quantum state tomography via compressed sensing}},\ \href {\doibase
  10.1103/PhysRevLett.105.150401} {\bibfield  {journal} {\bibinfo  {journal}
  {Phys. Rev. Lett.}\ }\textbf {\bibinfo {volume} {105}},\ \bibinfo {pages}
  {150401} (\bibinfo {year} {2010})},\ \Eprint {http://arxiv.org/abs/0909.3304}
  {arXiv:0909.3304}\BibitemShut {NoStop}%
\bibitem [{\citenamefont {Gross}(2011)}]{Gro11}%
  \BibitemOpen
  \bibfield  {author} {\bibinfo {author} {\bibfnamefont {D.}~\bibnamefont
  {Gross}},\ }\bibinfo {title} {\emph {Recovering low-rank matrices from few
  coefficients in any basis}},\ \href {\doibase 10.1109/TIT.2011.2104999}
  {\bibfield  {journal} {\bibinfo  {journal} {IEEE Trans. Inf. Th.}\ }\textbf
  {\bibinfo {volume} {57}},\ \bibinfo {pages} {1548} (\bibinfo {year}
  {2011})},\ \Eprint {http://arxiv.org/abs/0910.1879}
  {arXiv:0910.1879}\BibitemShut {NoStop}%
\bibitem [{\citenamefont {Liu}(2011)}]{Liu11}%
  \BibitemOpen
  \bibfield  {author} {\bibinfo {author} {\bibfnamefont {Y.-K.}\ \bibnamefont
  {Liu}},\ }in\ \href
  {http://papers.nips.cc/paper/4222-universal-low-rank-matrix-recovery-from-pauli-measurements.pdf}
  {\emph {\bibinfo {booktitle} {{A}dv. {N}eural {I}nf. {P}rocess. {S}yst. {\bf
  24}}}},\ \bibinfo {editor} {edited by\ \bibinfo {editor} {\bibfnamefont
  {J.}~\bibnamefont {Shawe-Taylor}}, \bibinfo {editor} {\bibfnamefont {R.~S.}\
  \bibnamefont {Zemel}}, \bibinfo {editor} {\bibfnamefont {P.~L.}\ \bibnamefont
  {Bartlett}}, \bibinfo {editor} {\bibfnamefont {F.}~\bibnamefont {Pereira}}, \
  and\ \bibinfo {editor} {\bibfnamefont {K.~Q.}\ \bibnamefont {Weinberger}}}\
  (\bibinfo  {publisher} {Curran Associates, Inc.},\ \bibinfo {year} {2011})\
  pp.\ \bibinfo {pages} {1638--1646},\ \Eprint {http://arxiv.org/abs/1103.2816}
  {arXiv:1103.2816}\BibitemShut {NoStop}%
\bibitem [{\citenamefont {Altepeter}\ \emph {et~al.}(2003)\citenamefont
  {Altepeter}, \citenamefont {Branning}, \citenamefont {Jeffrey}, \citenamefont
  {Wei}, \citenamefont {Kwiat}, \citenamefont {Thew}, \citenamefont {O'Brien},
  \citenamefont {Nielsen},\ and\ \citenamefont {White}}]{AltBranJef03}%
  \BibitemOpen
  \bibfield  {author} {\bibinfo {author} {\bibfnamefont {J.~B.}\ \bibnamefont
  {Altepeter}}, \bibinfo {author} {\bibfnamefont {D.}~\bibnamefont {Branning}},
  \bibinfo {author} {\bibfnamefont {E.}~\bibnamefont {Jeffrey}}, \bibinfo
  {author} {\bibfnamefont {T.~C.}\ \bibnamefont {Wei}}, \bibinfo {author}
  {\bibfnamefont {P.~G.}\ \bibnamefont {Kwiat}}, \bibinfo {author}
  {\bibfnamefont {R.~T.}\ \bibnamefont {Thew}}, \bibinfo {author}
  {\bibfnamefont {J.~L.}\ \bibnamefont {O'Brien}}, \bibinfo {author}
  {\bibfnamefont {M.~A.}\ \bibnamefont {Nielsen}}, \ and\ \bibinfo {author}
  {\bibfnamefont {A.~G.}\ \bibnamefont {White}},\ }\bibinfo {title} {\emph
  {Ancilla-assisted quantum process tomography}},\ \href {\doibase
  10.1103/PhysRevLett.90.193601} {\bibfield  {journal} {\bibinfo  {journal}
  {Phys. Rev. Lett.}\ }\textbf {\bibinfo {volume} {90}},\ \bibinfo {pages}
  {193601} (\bibinfo {year} {2003})},\ \Eprint
  {http://arxiv.org/abs/quant-ph/0303038} {quant-ph/0303038}\BibitemShut
  {NoStop}%
\bibitem [{\citenamefont {Kimmel}\ \emph {et~al.}(2014)\citenamefont {Kimmel},
  \citenamefont {da~Silva}, \citenamefont {Ryan}, \citenamefont {Johnson},\
  and\ \citenamefont {Ohki}}]{KimSilRya14}%
  \BibitemOpen
  \bibfield  {author} {\bibinfo {author} {\bibfnamefont {S.}~\bibnamefont
  {Kimmel}}, \bibinfo {author} {\bibfnamefont {M.~P.}\ \bibnamefont
  {da~Silva}}, \bibinfo {author} {\bibfnamefont {C.~A.}\ \bibnamefont {Ryan}},
  \bibinfo {author} {\bibfnamefont {B.~R.}\ \bibnamefont {Johnson}}, \ and\
  \bibinfo {author} {\bibfnamefont {T.}~\bibnamefont {Ohki}},\ }\bibinfo
  {title} {\emph {Robust extraction of tomographic information via randomized
  benchmarking}},\ \href {\doibase 10.1103/PhysRevX.4.011050} {\bibfield
  {journal} {\bibinfo  {journal} {Phys. Rev. X}\ }\textbf {\bibinfo {volume}
  {4}},\ \bibinfo {pages} {011050} (\bibinfo {year} {2014})},\ \Eprint
  {http://arxiv.org/abs/1306.2348} {arXiv:1306.2348}\BibitemShut {NoStop}%
\bibitem [{\citenamefont {Roth}\ \emph {et~al.}(2018)\citenamefont {Roth},
  \citenamefont {Kueng}, \citenamefont {Kimmel}, \citenamefont {Liu},
  \citenamefont {Gross}, \citenamefont {Eisert},\ and\ \citenamefont
  {Kliesch}}]{PhysRevLett.121.170502}%
  \BibitemOpen
  \bibfield  {author} {\bibinfo {author} {\bibfnamefont {I.}~\bibnamefont
  {Roth}}, \bibinfo {author} {\bibfnamefont {R.}~\bibnamefont {Kueng}},
  \bibinfo {author} {\bibfnamefont {S.}~\bibnamefont {Kimmel}}, \bibinfo
  {author} {\bibfnamefont {Y.-K.}\ \bibnamefont {Liu}}, \bibinfo {author}
  {\bibfnamefont {D.}~\bibnamefont {Gross}}, \bibinfo {author} {\bibfnamefont
  {J.}~\bibnamefont {Eisert}}, \ and\ \bibinfo {author} {\bibfnamefont
  {M.}~\bibnamefont {Kliesch}},\ }\bibinfo {title} {\emph {Recovering quantum
  gates from few average gate fidelities}},\ \href {\doibase
  10.1103/PhysRevLett.121.170502} {\bibfield  {journal} {\bibinfo  {journal}
  {Phys. Rev. Lett.}\ }\textbf {\bibinfo {volume} {121}},\ \bibinfo {pages}
  {170502} (\bibinfo {year} {2018})},\ \Eprint
  {http://arxiv.org/abs/1803.00572} {arXiv:1803.00572}\BibitemShut {NoStop}%
\bibitem [{\citenamefont {Candes}\ \emph {et~al.}(2006)\citenamefont {Candes},
  \citenamefont {Romberg},\ and\ \citenamefont {Tao}}]{CanRomTao06}%
  \BibitemOpen
  \bibfield  {author} {\bibinfo {author} {\bibfnamefont {E.~J.}\ \bibnamefont
  {Candes}}, \bibinfo {author} {\bibfnamefont {J.}~\bibnamefont {Romberg}}, \
  and\ \bibinfo {author} {\bibfnamefont {T.}~\bibnamefont {Tao}},\ }\bibinfo
  {title} {\emph {Robust uncertainty principles: exact signal reconstruction
  from highly incomplete frequency information}},\ \href {\doibase
  10.1109/TIT.2005.862083} {\bibfield  {journal} {\bibinfo  {journal} {IEEE
  Trans. Inform. Theor.}\ }\textbf {\bibinfo {volume} {52}},\ \bibinfo {pages}
  {489} (\bibinfo {year} {2006})}\BibitemShut {NoStop}%
\bibitem [{\citenamefont {Donoho}(2006)}]{Don06}%
  \BibitemOpen
  \bibfield  {author} {\bibinfo {author} {\bibfnamefont {D.~L.}\ \bibnamefont
  {Donoho}},\ }\bibinfo {title} {\emph {{Compressed sensing}}},\ \href
  {\doibase 10.1109/TIT.2006.871582} {\bibfield  {journal} {\bibinfo  {journal}
  {IEEE Trans. Inf. Th.}\ }\textbf {\bibinfo {volume} {52}},\ \bibinfo {pages}
  {1289} (\bibinfo {year} {2006})}\BibitemShut {NoStop}%
\bibitem [{\citenamefont {{Yamamoto}}\ \emph {et~al.}(2010)\citenamefont
  {{Yamamoto}}, \citenamefont {{Neeley}}, \citenamefont {{Lucero}},
  \citenamefont {{Bialczak}}, \citenamefont {{Kelly}}, \citenamefont
  {{Lenander}}, \citenamefont {{Mariantoni}}, \citenamefont {{O'Connell}},
  \citenamefont {{Sank}}, \citenamefont {{Wang}}, \citenamefont {{Weides}},
  \citenamefont {{Wenner}}, \citenamefont {{Yin}}, \citenamefont {{Cleland}},\
  and\ \citenamefont {{Martinis}}}]{YamNeeLuc10}%
  \BibitemOpen
  \bibfield  {author} {\bibinfo {author} {\bibfnamefont {T.}~\bibnamefont
  {{Yamamoto}}}, \bibinfo {author} {\bibfnamefont {M.}~\bibnamefont
  {{Neeley}}}, \bibinfo {author} {\bibfnamefont {E.}~\bibnamefont {{Lucero}}},
  \bibinfo {author} {\bibfnamefont {R.~C.}\ \bibnamefont {{Bialczak}}},
  \bibinfo {author} {\bibfnamefont {J.}~\bibnamefont {{Kelly}}}, \bibinfo
  {author} {\bibfnamefont {M.}~\bibnamefont {{Lenander}}}, \bibinfo {author}
  {\bibfnamefont {M.}~\bibnamefont {{Mariantoni}}}, \bibinfo {author}
  {\bibfnamefont {A.~D.}\ \bibnamefont {{O'Connell}}}, \bibinfo {author}
  {\bibfnamefont {D.}~\bibnamefont {{Sank}}}, \bibinfo {author} {\bibfnamefont
  {H.}~\bibnamefont {{Wang}}}, \bibinfo {author} {\bibfnamefont
  {M.}~\bibnamefont {{Weides}}}, \bibinfo {author} {\bibfnamefont
  {J.}~\bibnamefont {{Wenner}}}, \bibinfo {author} {\bibfnamefont
  {Y.}~\bibnamefont {{Yin}}}, \bibinfo {author} {\bibfnamefont {A.~N.}\
  \bibnamefont {{Cleland}}}, \ and\ \bibinfo {author} {\bibfnamefont {J.~M.}\
  \bibnamefont {{Martinis}}},\ }\bibinfo {title} {\emph {Quantum process
  tomography of two-qubit controlled-{Z} and controlled-{NOT} gates using
  superconducting phase qubits}},\ \href {\doibase 10.1103/PhysRevB.82.184515}
  {\bibfield  {journal} {\bibinfo  {journal} {Phys. Rev. B}\ }\textbf {\bibinfo
  {volume} {82}},\ \bibinfo {eid} {184515} (\bibinfo {year} {2010})},\ \Eprint
  {http://arxiv.org/abs/1006.5084} {arXiv:1006.5084}\BibitemShut {NoStop}%
\bibitem [{\citenamefont {Kim}\ \emph {et~al.}(2014)\citenamefont {Kim},
  \citenamefont {Shi}, \citenamefont {Simmons}, \citenamefont {Ward},
  \citenamefont {Prance}, \citenamefont {Koh}, \citenamefont {Gamble},
  \citenamefont {Savage}, \citenamefont {Lagally}, \citenamefont {Friesen},
  \citenamefont {Coppersmith},\ and\ \citenamefont {Eriksson}}]{QuantumDot}%
  \BibitemOpen
  \bibfield  {author} {\bibinfo {author} {\bibfnamefont {D.}~\bibnamefont
  {Kim}}, \bibinfo {author} {\bibfnamefont {Z.}~\bibnamefont {Shi}}, \bibinfo
  {author} {\bibfnamefont {C.~B.}\ \bibnamefont {Simmons}}, \bibinfo {author}
  {\bibfnamefont {D.~R.}\ \bibnamefont {Ward}}, \bibinfo {author}
  {\bibfnamefont {J.~R.}\ \bibnamefont {Prance}}, \bibinfo {author}
  {\bibfnamefont {T.~S.}\ \bibnamefont {Koh}}, \bibinfo {author} {\bibfnamefont
  {J.~K.}\ \bibnamefont {Gamble}}, \bibinfo {author} {\bibfnamefont {D.~E.}\
  \bibnamefont {Savage}}, \bibinfo {author} {\bibfnamefont {M.~G.}\
  \bibnamefont {Lagally}}, \bibinfo {author} {\bibfnamefont {M.}~\bibnamefont
  {Friesen}}, \bibinfo {author} {\bibfnamefont {S.~N.}\ \bibnamefont
  {Coppersmith}}, \ and\ \bibinfo {author} {\bibfnamefont {M.~A.}\ \bibnamefont
  {Eriksson}},\ }\bibinfo {title} {\emph {Quantum control and process
  tomography of a semiconductor quantum dot hybrid qubit}},\ \href {\doibase
  10.1038/nature13407} {\bibfield  {journal} {\bibinfo  {journal} {Nature}\
  }\textbf {\bibinfo {volume} {511}},\ \bibinfo {pages} {70} (\bibinfo {year}
  {2014})},\ \Eprint {http://arxiv.org/abs/arXiv:1401.4416}
  {arXiv:1401.4416}\BibitemShut {NoStop}%
\bibitem [{\citenamefont {Anderson}\ \emph {et~al.}(2015)\citenamefont
  {Anderson}, \citenamefont {Sosa-Martinez}, \citenamefont {Riofrio},
  \citenamefont {Deutsch},\ and\ \citenamefont {Jessen}}]{Jessen}%
  \BibitemOpen
  \bibfield  {author} {\bibinfo {author} {\bibfnamefont {B.~E.}\ \bibnamefont
  {Anderson}}, \bibinfo {author} {\bibfnamefont {H.}~\bibnamefont
  {Sosa-Martinez}}, \bibinfo {author} {\bibfnamefont {C.~A.}\ \bibnamefont
  {Riofrio}}, \bibinfo {author} {\bibfnamefont {I.~H.}\ \bibnamefont
  {Deutsch}}, \ and\ \bibinfo {author} {\bibfnamefont {P.~S.}\ \bibnamefont
  {Jessen}},\ }\bibinfo {title} {\emph {Accurate and robust unitary
  transformation of a high-dimensional quantum system}},\ \href {\doibase
  10.1103/PhysRevLett.114.240401} {\bibfield  {journal} {\bibinfo  {journal}
  {Phys. Rev. Lett.}\ }\textbf {\bibinfo {volume} {114}},\ \bibinfo {pages}
  {240401} (\bibinfo {year} {2015})},\ \Eprint
  {http://arxiv.org/abs/arXiv:1410.3891} {arXiv:1410.3891}\BibitemShut
  {NoStop}%
\bibitem [{\citenamefont {{Merkel}}\ \emph {et~al.}(2010)\citenamefont
  {{Merkel}}, \citenamefont {{Riofr{\'{\i}}o}}, \citenamefont {{Flammia}},\
  and\ \citenamefont {{Deutsch}}}]{PhysRevA.81.032126}%
  \BibitemOpen
  \bibfield  {author} {\bibinfo {author} {\bibfnamefont {S.~T.}\ \bibnamefont
  {{Merkel}}}, \bibinfo {author} {\bibfnamefont {C.~A.}\ \bibnamefont
  {{Riofr{\'{\i}}o}}}, \bibinfo {author} {\bibfnamefont {S.~T.}\ \bibnamefont
  {{Flammia}}}, \ and\ \bibinfo {author} {\bibfnamefont {I.~H.}\ \bibnamefont
  {{Deutsch}}},\ }\bibinfo {title} {\emph {{Random unitary maps for quantum
  state reconstruction}}},\ \href {\doibase 10.1103/PhysRevA.81.032126}
  {\bibfield  {journal} {\bibinfo  {journal} {Phys. Rev. A}\ }\textbf {\bibinfo
  {volume} {81}},\ \bibinfo {eid} {032126} (\bibinfo {year} {2010})},\ \Eprint
  {http://arxiv.org/abs/0912.2101} {arXiv:0912.2101}\BibitemShut {NoStop}%
\bibitem [{\citenamefont {Blatt}\ and\ \citenamefont {Roos}(2012)}]{BlaRoo12}%
  \BibitemOpen
  \bibfield  {author} {\bibinfo {author} {\bibfnamefont {R.}~\bibnamefont
  {Blatt}}\ and\ \bibinfo {author} {\bibfnamefont {C.~F.}\ \bibnamefont
  {Roos}},\ }\bibinfo {title} {\emph {Quantum simulations with trapped ions}},\
  \href {\doibase 10.1038/nphys2252} {\bibfield  {journal} {\bibinfo  {journal}
  {Nat. Phys.}\ }\textbf {\bibinfo {volume} {8}},\ \bibinfo {pages} {277}
  (\bibinfo {year} {2012})}\BibitemShut {NoStop}%
\bibitem [{\citenamefont {{Timoney}}\ \emph {et~al.}(2008)\citenamefont
  {{Timoney}}, \citenamefont {{Elman}}, \citenamefont {{Glaser}}, \citenamefont
  {{Weiss}}, \citenamefont {{Johanning}}, \citenamefont {{Neuhauser}},\ and\
  \citenamefont {{Wunderlich}}}]{OptimalControlWunderlich}%
  \BibitemOpen
  \bibfield  {author} {\bibinfo {author} {\bibfnamefont {N.}~\bibnamefont
  {{Timoney}}}, \bibinfo {author} {\bibfnamefont {V.}~\bibnamefont {{Elman}}},
  \bibinfo {author} {\bibfnamefont {S.}~\bibnamefont {{Glaser}}}, \bibinfo
  {author} {\bibfnamefont {C.}~\bibnamefont {{Weiss}}}, \bibinfo {author}
  {\bibfnamefont {M.}~\bibnamefont {{Johanning}}}, \bibinfo {author}
  {\bibfnamefont {W.}~\bibnamefont {{Neuhauser}}}, \ and\ \bibinfo {author}
  {\bibfnamefont {C.}~\bibnamefont {{Wunderlich}}},\ }\bibinfo {title} {\emph
  {Error-resistant single-qubit gates with trapped ions}},\ \href {\doibase
  10.1103/PhysRevA.77.052334} {\bibfield  {journal} {\bibinfo  {journal} {Phys.
  Rev. A}\ }\textbf {\bibinfo {volume} {77}},\ \bibinfo {eid} {052334}
  (\bibinfo {year} {2008})},\ \Eprint {http://arxiv.org/abs/quant-ph/0612106}
  {quant-ph/0612106}\BibitemShut {NoStop}%
\bibitem [{\citenamefont {{Ohliger}}\ \emph {et~al.}(2013)\citenamefont
  {{Ohliger}}, \citenamefont {{Nesme}},\ and\ \citenamefont
  {{Eisert}}}]{OhligerTomography}%
  \BibitemOpen
  \bibfield  {author} {\bibinfo {author} {\bibfnamefont {M.}~\bibnamefont
  {{Ohliger}}}, \bibinfo {author} {\bibfnamefont {V.}~\bibnamefont {{Nesme}}},
  \ and\ \bibinfo {author} {\bibfnamefont {J.}~\bibnamefont {{Eisert}}},\
  }\bibinfo {title} {\emph {Efficient and feasible state tomography of quantum
  many-body systems}},\ \href {\doibase 10.1088/1367-2630/15/1/015024}
  {\bibfield  {journal} {\bibinfo  {journal} {New J. Phys.}\ }\textbf {\bibinfo
  {volume} {15}},\ \bibinfo {eid} {015024} (\bibinfo {year} {2013})},\ \Eprint
  {http://arxiv.org/abs/1204.5735} {arXiv:1204.5735}\BibitemShut {NoStop}%
\bibitem [{\citenamefont {Bloch}\ \emph {et~al.}(2012)\citenamefont {Bloch},
  \citenamefont {Dalibard},\ and\ \citenamefont {Nascimbene}}]{BloDalNas12}%
  \BibitemOpen
  \bibfield  {author} {\bibinfo {author} {\bibfnamefont {I.}~\bibnamefont
  {Bloch}}, \bibinfo {author} {\bibfnamefont {J.}~\bibnamefont {Dalibard}}, \
  and\ \bibinfo {author} {\bibfnamefont {S.}~\bibnamefont {Nascimbene}},\
  }\bibinfo {title} {\emph {Quantum simulations with ultracold quantum
  gases}},\ \href {\doibase 10.1038/nphys2259} {\bibfield  {journal} {\bibinfo
  {journal} {Nat. Phys.}\ }\textbf {\bibinfo {volume} {8}},\ \bibinfo {pages}
  {267} (\bibinfo {year} {2012})}\BibitemShut {NoStop}%
\bibitem [{\citenamefont {Kok}\ \emph {et~al.}(2007)\citenamefont {Kok},
  \citenamefont {Munro}, \citenamefont {Nemoto}, \citenamefont {Ralph},
  \citenamefont {Dowling},\ and\ \citenamefont {Milburn}}]{RevModPhys.79.135}%
  \BibitemOpen
  \bibfield  {author} {\bibinfo {author} {\bibfnamefont {P.}~\bibnamefont
  {Kok}}, \bibinfo {author} {\bibfnamefont {W.~J.}\ \bibnamefont {Munro}},
  \bibinfo {author} {\bibfnamefont {K.}~\bibnamefont {Nemoto}}, \bibinfo
  {author} {\bibfnamefont {T.~C.}\ \bibnamefont {Ralph}}, \bibinfo {author}
  {\bibfnamefont {J.~P.}\ \bibnamefont {Dowling}}, \ and\ \bibinfo {author}
  {\bibfnamefont {G.~J.}\ \bibnamefont {Milburn}},\ }\bibinfo {title} {\emph
  {Linear optical quantum computing with photonic qubits}},\ \href {\doibase
  10.1103/RevModPhys.79.135} {\bibfield  {journal} {\bibinfo  {journal} {Rev.
  Mod. Phys.}\ }\textbf {\bibinfo {volume} {79}},\ \bibinfo {pages} {135}
  (\bibinfo {year} {2007})},\ \Eprint
  {http://arxiv.org/abs/arXiv:quant-ph/0512071}
  {arXiv:quant-ph/0512071}\BibitemShut {NoStop}%
\bibitem [{\citenamefont {Carolan}\ \emph {et~al.}(2015)\citenamefont
  {Carolan}, \citenamefont {Harrold}, \citenamefont {Sparrow}, \citenamefont
  {Mart{\'\i}n-L{\'o}pez}, \citenamefont {Russell}, \citenamefont
  {Silverstone}, \citenamefont {Shadbolt}, \citenamefont {Matsuda},
  \citenamefont {Oguma}, \citenamefont {Itoh}, \citenamefont {Marshall},
  \citenamefont {Thompson}, \citenamefont {Matthews}, \citenamefont
  {Hashimoto}, \citenamefont {O{\textquoteright}Brien},\ and\ \citenamefont
  {Laing}}]{Carolan711}%
  \BibitemOpen
  \bibfield  {author} {\bibinfo {author} {\bibfnamefont {J.}~\bibnamefont
  {Carolan}}, \bibinfo {author} {\bibfnamefont {C.}~\bibnamefont {Harrold}},
  \bibinfo {author} {\bibfnamefont {C.}~\bibnamefont {Sparrow}}, \bibinfo
  {author} {\bibfnamefont {E.}~\bibnamefont {Mart{\'\i}n-L{\'o}pez}}, \bibinfo
  {author} {\bibfnamefont {N.~J.}\ \bibnamefont {Russell}}, \bibinfo {author}
  {\bibfnamefont {J.~W.}\ \bibnamefont {Silverstone}}, \bibinfo {author}
  {\bibfnamefont {P.~J.}\ \bibnamefont {Shadbolt}}, \bibinfo {author}
  {\bibfnamefont {N.}~\bibnamefont {Matsuda}}, \bibinfo {author} {\bibfnamefont
  {M.}~\bibnamefont {Oguma}}, \bibinfo {author} {\bibfnamefont
  {M.}~\bibnamefont {Itoh}}, \bibinfo {author} {\bibfnamefont {G.~D.}\
  \bibnamefont {Marshall}}, \bibinfo {author} {\bibfnamefont {M.~G.}\
  \bibnamefont {Thompson}}, \bibinfo {author} {\bibfnamefont {J.~C.~F.}\
  \bibnamefont {Matthews}}, \bibinfo {author} {\bibfnamefont {T.}~\bibnamefont
  {Hashimoto}}, \bibinfo {author} {\bibfnamefont {J.~L.}\ \bibnamefont
  {O{\textquoteright}Brien}}, \ and\ \bibinfo {author} {\bibfnamefont
  {A.}~\bibnamefont {Laing}},\ }\bibinfo {title} {\emph {Universal linear
  optics}},\ \href {\doibase 10.1126/science.aab3642} {\bibfield  {journal}
  {\bibinfo  {journal} {Science}\ }\textbf {\bibinfo {volume} {349}},\ \bibinfo
  {pages} {711} (\bibinfo {year} {2015})},\ \Eprint
  {http://arxiv.org/abs/arXiv:1505.01182} {arXiv:1505.01182}\BibitemShut
  {NoStop}%
\bibitem [{\citenamefont {{Russell}}\ \emph {et~al.}(2017)\citenamefont
  {{Russell}}, \citenamefont {{Chakhmakhchyan}}, \citenamefont {{O'Brien}},\
  and\ \citenamefont {{Laing}}}]{OBrienRandom}%
  \BibitemOpen
  \bibfield  {author} {\bibinfo {author} {\bibfnamefont {N.~J.}\ \bibnamefont
  {{Russell}}}, \bibinfo {author} {\bibfnamefont {L.}~\bibnamefont
  {{Chakhmakhchyan}}}, \bibinfo {author} {\bibfnamefont {J.~L.}\ \bibnamefont
  {{O'Brien}}}, \ and\ \bibinfo {author} {\bibfnamefont {A.}~\bibnamefont
  {{Laing}}},\ }\bibinfo {title} {\emph {Direct dialling of {Haar} random
  unitary matrices}},\ \href {\doibase 10.1088/1367-2630/aa60ed} {\bibfield
  {journal} {\bibinfo  {journal} {New J. Phys.}\ }\textbf {\bibinfo {volume}
  {19}},\ \bibinfo {pages} {033007} (\bibinfo {year} {2017})},\ \Eprint
  {http://arxiv.org/abs/1506.06220} {arXiv:1506.06220}\BibitemShut {NoStop}%
\bibitem [{\citenamefont {{Brand{\~a}o}}\ \emph
  {et~al.}(2016{\natexlab{b}})\citenamefont {{Brand{\~a}o}}, \citenamefont
  {{Harrow}},\ and\ \citenamefont {{Horodecki}}}]{BraHarHor16}%
  \BibitemOpen
  \bibfield  {author} {\bibinfo {author} {\bibfnamefont {F.~G.~S.~L.}\
  \bibnamefont {{Brand{\~a}o}}}, \bibinfo {author} {\bibfnamefont {A.~W.}\
  \bibnamefont {{Harrow}}}, \ and\ \bibinfo {author} {\bibfnamefont
  {M.}~\bibnamefont {{Horodecki}}},\ }\bibinfo {title} {\emph {Local random
  quantum circuits are approximate polynomial-designs}},\ \href {\doibase
  10.1007/s00220-016-2706-8} {\bibfield  {journal} {\bibinfo  {journal} {Comm.
  Math. Phys.}\ }\textbf {\bibinfo {volume} {346}},\ \bibinfo {pages} {397}
  (\bibinfo {year} {2016}{\natexlab{b}})},\ \Eprint
  {http://arxiv.org/abs/1208.0692} {arXiv:1208.0692}\BibitemShut {NoStop}%
\bibitem [{\citenamefont {Choi}(1975)}]{Cho75}%
  \BibitemOpen
  \bibfield  {author} {\bibinfo {author} {\bibfnamefont {M.-D.}\ \bibnamefont
  {Choi}},\ }\bibinfo {title} {\emph {Completely positive linear maps on
  complex matrices}},\ \href {\doibase 10.1016/0024-3795(75)90075-0} {\bibfield
   {journal} {\bibinfo  {journal} {Lin. Alg. App.}\ }\textbf {\bibinfo {volume}
  {10}},\ \bibinfo {pages} {285 } (\bibinfo {year} {1975})}\BibitemShut
  {NoStop}%
\bibitem [{\citenamefont {Jamiolkowski}(1972)}]{Jam72}%
  \BibitemOpen
  \bibfield  {author} {\bibinfo {author} {\bibfnamefont {A.}~\bibnamefont
  {Jamiolkowski}},\ }\bibinfo {title} {\emph {Linear transformations which
  preserve trace and positive semidefiniteness of operators}},\ \href {\doibase
  10.1016/0034-4877(72)90011-0} {\bibfield  {journal} {\bibinfo  {journal}
  {Rep. Math. Phys.}\ }\textbf {\bibinfo {volume} {3}},\ \bibinfo {pages} {275
  } (\bibinfo {year} {1972})}\BibitemShut {NoStop}%
\bibitem [{\citenamefont {Gilchrist}\ \emph {et~al.}(2005)\citenamefont
  {Gilchrist}, \citenamefont {Langford},\ and\ \citenamefont
  {Nielsen}}]{GilLanLie05}%
  \BibitemOpen
  \bibfield  {author} {\bibinfo {author} {\bibfnamefont {A.}~\bibnamefont
  {Gilchrist}}, \bibinfo {author} {\bibfnamefont {N.~K.}\ \bibnamefont
  {Langford}}, \ and\ \bibinfo {author} {\bibfnamefont {M.~A.}\ \bibnamefont
  {Nielsen}},\ }\bibinfo {title} {\emph {Distance measures to compare real and
  ideal quantum processes}},\ \href {\doibase 10.1103/PhysRevA.71.062310}
  {\bibfield  {journal} {\bibinfo  {journal} {Phys. Rev. A}\ }\textbf {\bibinfo
  {volume} {71}},\ \bibinfo {pages} {062310} (\bibinfo {year}
  {2005})}\BibitemShut {NoStop}%
\bibitem [{\citenamefont {Watrous}(2009)}]{Wat09}%
  \BibitemOpen
  \bibfield  {author} {\bibinfo {author} {\bibfnamefont {J.}~\bibnamefont
  {Watrous}},\ }\bibinfo {title} {\emph {Semidefinite programs for completely
  bounded norms}},\ \href {\doibase 10.4086/toc.2009.v005a011} {\bibfield
  {journal} {\bibinfo  {journal} {Theory of Computing}\ }\textbf {\bibinfo
  {volume} {5}},\ \bibinfo {pages} {217} (\bibinfo {year} {2009})},\ \Eprint
  {http://arxiv.org/abs/0901.4709} {arXiv:0901.4709}\BibitemShut {NoStop}%
\bibitem [{\citenamefont {{Ben-Aroya}}\ and\ \citenamefont
  {{Ta-Shma}}(2009)}]{BenTa09}%
  \BibitemOpen
  \bibfield  {author} {\bibinfo {author} {\bibfnamefont {A.}~\bibnamefont
  {{Ben-Aroya}}}\ and\ \bibinfo {author} {\bibfnamefont {A.}~\bibnamefont
  {{Ta-Shma}}},\ }\href@noop {} {\bibinfo {title} {\emph {{On the complexity of
  approximating the diamond norm}}},\ }\Eprint {http://arxiv.org/abs/0902.3397}
  {arXiv:0902.3397}\BibitemShut {NoStop}%
\bibitem [{\citenamefont {{Watrous}}(2012)}]{Wat12}%
  \BibitemOpen
  \bibfield  {author} {\bibinfo {author} {\bibfnamefont {J.}~\bibnamefont
  {{Watrous}}},\ }\href@noop {} {\bibinfo {title} {\emph {Simpler semidefinite
  programs for completely bounded norms}},\ }\Eprint
  {http://arxiv.org/abs/1207.5726} {arXiv:1207.5726}\BibitemShut {NoStop}%
\bibitem [{\citenamefont {Delsarte}\ \emph {et~al.}(1977)\citenamefont
  {Delsarte}, \citenamefont {Goethals},\ and\ \citenamefont
  {Seidel}}]{DelGoeSei77}%
  \BibitemOpen
  \bibfield  {author} {\bibinfo {author} {\bibfnamefont {P.}~\bibnamefont
  {Delsarte}}, \bibinfo {author} {\bibfnamefont {J.}~\bibnamefont {Goethals}},
  \ and\ \bibinfo {author} {\bibfnamefont {J.}~\bibnamefont {Seidel}},\
  }\bibinfo {title} {\emph {Spherical codes and designs}},\ \href {\doibase
  10.1007/BF03187604} {\bibfield  {journal} {\bibinfo  {journal} {Geom.
  Dedicata}\ }\textbf {\bibinfo {volume} {6}},\ \bibinfo {pages} {363}
  (\bibinfo {year} {1977})}\BibitemShut {NoStop}%
\bibitem [{\citenamefont {Renes}\ \emph {et~al.}(2004)\citenamefont {Renes},
  \citenamefont {Blume-Kohout}, \citenamefont {Scott},\ and\ \citenamefont
  {Caves}}]{RenBluRob04}%
  \BibitemOpen
  \bibfield  {author} {\bibinfo {author} {\bibfnamefont {J.~M.}\ \bibnamefont
  {Renes}}, \bibinfo {author} {\bibfnamefont {R.}~\bibnamefont {Blume-Kohout}},
  \bibinfo {author} {\bibfnamefont {A.~J.}\ \bibnamefont {Scott}}, \ and\
  \bibinfo {author} {\bibfnamefont {C.~M.}\ \bibnamefont {Caves}},\ }\bibinfo
  {title} {\emph {Symmetric informationally complete quantum measurements}},\
  \href {\doibase 10.1063/1.1737053} {\bibfield  {journal} {\bibinfo  {journal}
  {J. Math. Phys.}\ }\textbf {\bibinfo {volume} {45}},\ \bibinfo {pages} {2171}
  (\bibinfo {year} {2004})},\ \Eprint {http://arxiv.org/abs/quant-ph/0310075}
  {quant-ph/0310075}\BibitemShut {NoStop}%
\bibitem [{\citenamefont {Ambainis}\ and\ \citenamefont
  {Emerson}(2007)}]{AmbEme07}%
  \BibitemOpen
  \bibfield  {author} {\bibinfo {author} {\bibfnamefont {A.}~\bibnamefont
  {Ambainis}}\ and\ \bibinfo {author} {\bibfnamefont {J.}~\bibnamefont
  {Emerson}},\ }\bibinfo {title} {\emph {Quantum t-designs: t-wise independence
  in the quantum world}},\ in\ \href {\doibase 10.1109/CCC.2007.26} {\emph
  {\bibinfo {booktitle} {Computational Complexity, 2007. CCC '07. Twenty-Second
  Annual IEEE Conference on}}}\ (\bibinfo {year} {2007})\ pp.\ \bibinfo {pages}
  {129--140},\ \Eprint {http://arxiv.org/abs/quant-ph/0701126}
  {quant-ph/0701126}\BibitemShut {NoStop}%
\bibitem [{\citenamefont {Gross}\ \emph {et~al.}(2007)\citenamefont {Gross},
  \citenamefont {Audenaert},\ and\ \citenamefont {Eisert}}]{GroAudEis07}%
  \BibitemOpen
  \bibfield  {author} {\bibinfo {author} {\bibfnamefont {D.}~\bibnamefont
  {Gross}}, \bibinfo {author} {\bibfnamefont {K.~M.~R.}\ \bibnamefont
  {Audenaert}}, \ and\ \bibinfo {author} {\bibfnamefont {J.}~\bibnamefont
  {Eisert}},\ }\bibinfo {title} {\emph {Evenly distributed unitaries: on the
  structure of unitary designs}},\ \href {\doibase 10.1063/1.2716992}
  {\bibfield  {journal} {\bibinfo  {journal} {J. Math. Phys.}\ }\textbf
  {\bibinfo {volume} {48}},\ \bibinfo {pages} {052104} (\bibinfo {year}
  {2007})},\ \Eprint {http://arxiv.org/abs/quant-ph/0611002}
  {quant-ph/0611002}\BibitemShut {NoStop}%
\bibitem [{\citenamefont {Dankert}\ \emph {et~al.}(2009)\citenamefont
  {Dankert}, \citenamefont {Cleve}, \citenamefont {Emerson},\ and\
  \citenamefont {Livine}}]{DanCleEme09}%
  \BibitemOpen
  \bibfield  {author} {\bibinfo {author} {\bibfnamefont {C.}~\bibnamefont
  {Dankert}}, \bibinfo {author} {\bibfnamefont {R.}~\bibnamefont {Cleve}},
  \bibinfo {author} {\bibfnamefont {J.}~\bibnamefont {Emerson}}, \ and\
  \bibinfo {author} {\bibfnamefont {E.}~\bibnamefont {Livine}},\ }\bibinfo
  {title} {\emph {Exact and approximate unitary 2-designs and their application
  to fidelity estimation}},\ \href {\doibase 10.1103/PhysRevA.80.012304}
  {\bibfield  {journal} {\bibinfo  {journal} {Phys. Rev. A}\ }\textbf {\bibinfo
  {volume} {80}},\ \bibinfo {pages} {012304} (\bibinfo {year} {2009})},\
  \Eprint {http://arxiv.org/abs/quant-ph/0606161}
  {quant-ph/0606161}\BibitemShut {NoStop}%
\bibitem [{\citenamefont {Cand\`{e}s}\ and\ \citenamefont
  {Recht}(2009)}]{CanRec09}%
  \BibitemOpen
  \bibfield  {author} {\bibinfo {author} {\bibfnamefont {E.}~\bibnamefont
  {Cand\`{e}s}}\ and\ \bibinfo {author} {\bibfnamefont {B.}~\bibnamefont
  {Recht}},\ }\bibinfo {title} {\emph {Exact matrix completion via convex
  optimization}},\ \href {\doibase 10.1007/s10208-009-9045-5} {\bibfield
  {journal} {\bibinfo  {journal} {Found. Comput. Math.}\ }\textbf {\bibinfo
  {volume} {9}},\ \bibinfo {pages} {717} (\bibinfo {year} {2009})},\ \Eprint
  {http://arxiv.org/abs/0805.4471} {arXiv:0805.4471}\BibitemShut {NoStop}%
\bibitem [{\citenamefont {{Recht}}\ \emph {et~al.}(2010)\citenamefont
  {{Recht}}, \citenamefont {{Fazel}},\ and\ \citenamefont
  {{Parrilo}}}]{RecFazPar10}%
  \BibitemOpen
  \bibfield  {author} {\bibinfo {author} {\bibfnamefont {B.}~\bibnamefont
  {{Recht}}}, \bibinfo {author} {\bibfnamefont {M.}~\bibnamefont {{Fazel}}}, \
  and\ \bibinfo {author} {\bibfnamefont {P.~A.}\ \bibnamefont {{Parrilo}}},\
  }\bibinfo {title} {\emph {Guaranteed minimum-rank solutions of linear matrix
  equations via nuclear norm minimization}},\ \href {\doibase
  10.1137/070697835} {\bibfield  {journal} {\bibinfo  {journal} {SIAM Rev.}\
  }\textbf {\bibinfo {volume} {52}},\ \bibinfo {pages} {471} (\bibinfo {year}
  {2010})},\ \Eprint {http://arxiv.org/abs/0706.4138}
  {arXiv:0706.4138}\BibitemShut {NoStop}%
\bibitem [{\citenamefont {{C}and{\`e}s}\ and\ \citenamefont
  {{P}lan}(2011)}]{CanPla11}%
  \BibitemOpen
  \bibfield  {author} {\bibinfo {author} {\bibfnamefont {E.~J.}\ \bibnamefont
  {{C}and{\`e}s}}\ and\ \bibinfo {author} {\bibfnamefont {Y.}~\bibnamefont
  {{P}lan}},\ }\bibinfo {title} {\emph {{T}ight oracle bounds for low-rank
  matrix recovery from a minimal number of random measurements}},\ \href
  {\doibase 10.1109/TIT.2011.2111771} {\bibfield  {journal} {\bibinfo
  {journal} {{I}{E}{E}{E} {T}rans. {I}nform. {T}heory}\ }\textbf {\bibinfo
  {volume} {57}},\ \bibinfo {pages} {2342} (\bibinfo {year} {2011})},\ \Eprint
  {http://arxiv.org/abs/1001.0339} {arXiv:1001.0339}\BibitemShut {NoStop}%
\bibitem [{\citenamefont {Kitaev}\ \emph {et~al.}(2002)\citenamefont {Kitaev},
  \citenamefont {Shen},\ and\ \citenamefont {Vyalyi}}]{KitSheVya02}%
  \BibitemOpen
  \bibfield  {author} {\bibinfo {author} {\bibfnamefont {A.~Y.}\ \bibnamefont
  {Kitaev}}, \bibinfo {author} {\bibfnamefont {A.}~\bibnamefont {Shen}}, \ and\
  \bibinfo {author} {\bibfnamefont {M.~N.}\ \bibnamefont {Vyalyi}},\
  }\href@noop {} {\emph {\bibinfo {title} {Classical and quantum
  computation}}},\ Vol.~\bibinfo {volume} {47}\ (\bibinfo  {publisher}
  {American Mathematical Society},\ \bibinfo {year} {2002})\BibitemShut
  {NoStop}%
\bibitem [{\citenamefont {Watrous}(2011)}]{Wat11}%
  \BibitemOpen
  \bibfield  {author} {\bibinfo {author} {\bibfnamefont {J.}~\bibnamefont
  {Watrous}},\ }\href@noop {} {\bibinfo {title} {\emph {{CS 766} {T}heory of
  quantum information}},\ }\bibinfo {howpublished} {Available online at
  \url{https://cs.uwaterloo.ca/~watrous/LectureNotes.html}} (\bibinfo {year}
  {2011})\BibitemShut {NoStop}%
\bibitem [{\citenamefont {Kueng}(2015)}]{Kue15}%
  \BibitemOpen
  \bibfield  {author} {\bibinfo {author} {\bibfnamefont {R.}~\bibnamefont
  {Kueng}},\ }\bibinfo {title} {\emph {Low rank matrix recovery from few
  orthonormal basis measurements}},\ in\ \href {\doibase
  10.1109/SAMPTA.2015.7148921} {\emph {\bibinfo {booktitle} {Sampling Theory
  and Applications (SampTA), 2015 International Conference on}}}\ (\bibinfo
  {year} {2015})\ pp.\ \bibinfo {pages} {402--406}\BibitemShut {NoStop}%
\bibitem [{\citenamefont {Grant}\ and\ \citenamefont {Boyd}(2014)}]{cvx}%
  \BibitemOpen
  \bibfield  {author} {\bibinfo {author} {\bibfnamefont {M.}~\bibnamefont
  {Grant}}\ and\ \bibinfo {author} {\bibfnamefont {S.}~\bibnamefont {Boyd}},\
  }\href@noop {} {\bibinfo {title} {\emph {{CVX}: Matlab software for
  disciplined convex programming, version 2.1}},\ }\bibinfo {howpublished}
  {\url{http://cvxr.com/cvx}} (\bibinfo {year} {2014})\BibitemShut {NoStop}%
\bibitem [{\citenamefont {Grant}\ and\ \citenamefont {Boyd}(2008)}]{GraBoy08}%
  \BibitemOpen
  \bibfield  {author} {\bibinfo {author} {\bibfnamefont {M.}~\bibnamefont
  {Grant}}\ and\ \bibinfo {author} {\bibfnamefont {S.}~\bibnamefont {Boyd}},\
  }in\ \href@noop {} {\emph {\bibinfo {booktitle} {Recent advances in learning
  and control}}},\ \bibinfo {series and number} {Lecture Notes in Control and
  Information Sciences},\ \bibinfo {editor} {edited by\ \bibinfo {editor}
  {\bibfnamefont {V.}~\bibnamefont {Blondel}}, \bibinfo {editor} {\bibfnamefont
  {S.}~\bibnamefont {Boyd}}, \ and\ \bibinfo {editor} {\bibfnamefont
  {H.}~\bibnamefont {Kimura}}}\ (\bibinfo  {publisher} {Springer-Verlag
  Limited},\ \bibinfo {year} {2008})\ pp.\ \bibinfo {pages} {95--110},\
  \bibinfo {note} {\url{http://stanford.edu/~boyd/graph_dcp.html}}\BibitemShut
  {NoStop}%
\bibitem [{\citenamefont {Tropp}(2015)}]{Tro15}%
  \BibitemOpen
  \bibfield  {author} {\bibinfo {author} {\bibfnamefont {J.~A.}\ \bibnamefont
  {Tropp}},\ }\bibinfo {title} {\emph {Convex recovery of a structured signal
  from independent random linear measurements}},\ in\ \href {\doibase
  10.1007/978-3-319-19749-4_2} {\emph {\bibinfo {booktitle} {Sampling Theory, a
  Renaissance}}},\ \bibinfo {editor} {edited by\ \bibinfo {editor}
  {\bibfnamefont {E.~G.}\ \bibnamefont {Pfander}}}\ (\bibinfo  {publisher}
  {Springer},\ \bibinfo {year} {2015})\ pp.\ \bibinfo {pages} {67--101},\
  \Eprint {http://arxiv.org/abs/1405.1102} {arXiv:1405.1102}\BibitemShut
  {NoStop}%
\bibitem [{\citenamefont {{Michel}}\ \emph {et~al.}(2018)\citenamefont
  {{Michel}}, \citenamefont {{Kliesch}}, \citenamefont {{Kueng}},\ and\
  \citenamefont {{Gross}}}]{MicKliKue18}%
  \BibitemOpen
  \bibfield  {author} {\bibinfo {author} {\bibfnamefont {U.}~\bibnamefont
  {{Michel}}}, \bibinfo {author} {\bibfnamefont {M.}~\bibnamefont {{Kliesch}}},
  \bibinfo {author} {\bibfnamefont {R.}~\bibnamefont {{Kueng}}}, \ and\
  \bibinfo {author} {\bibfnamefont {D.}~\bibnamefont {{Gross}}},\ }\bibinfo
  {title} {\emph {Note on the saturation of the norm inequalities between
  diamond and nuclear norm}},\ \href {\doibase 10.1109/TIT.2018.2861887}
  {\bibfield  {journal} {\bibinfo  {journal} {IEEE Trans. Inf. Theory}\
  }\textbf {\bibinfo {volume} {64}},\ \bibinfo {pages} {7443} (\bibinfo {year}
  {2018})},\ \Eprint {http://arxiv.org/abs/1612.07931}
  {arXiv:1612.07931}\BibitemShut {NoStop}%
\bibitem [{\citenamefont {{Steffens}}\ \emph {et~al.}(2017)\citenamefont
  {{Steffens}}, \citenamefont {{Riofr{\'{\i}}o}}, \citenamefont {{McCutcheon}},
  \citenamefont {{Roth}}, \citenamefont {{Bell}}, \citenamefont {{McMillan}},
  \citenamefont {{Tame}}, \citenamefont {{Rarity}},\ and\ \citenamefont
  {{Eisert}}}]{SteRioCut16}%
  \BibitemOpen
  \bibfield  {author} {\bibinfo {author} {\bibfnamefont {A.}~\bibnamefont
  {{Steffens}}}, \bibinfo {author} {\bibfnamefont {C.~A.}\ \bibnamefont
  {{Riofr{\'{\i}}o}}}, \bibinfo {author} {\bibfnamefont {W.}~\bibnamefont
  {{McCutcheon}}}, \bibinfo {author} {\bibfnamefont {I.}~\bibnamefont
  {{Roth}}}, \bibinfo {author} {\bibfnamefont {B.~A.}\ \bibnamefont {{Bell}}},
  \bibinfo {author} {\bibfnamefont {A.}~\bibnamefont {{McMillan}}}, \bibinfo
  {author} {\bibfnamefont {M.~S.}\ \bibnamefont {{Tame}}}, \bibinfo {author}
  {\bibfnamefont {J.~G.}\ \bibnamefont {{Rarity}}}, \ and\ \bibinfo {author}
  {\bibfnamefont {J.}~\bibnamefont {{Eisert}}},\ }\bibinfo {title} {\emph
  {Experimentally exploring compressed sensing quantum tomography}},\ \href
  {\doibase 10.1088/2058-9565/aa6ae2} {\bibfield  {journal} {\bibinfo
  {journal} {Quantum Sci. Technol.}\ }\textbf {\bibinfo {volume} {2}},\
  \bibinfo {pages} {025005} (\bibinfo {year} {2017})},\ \Eprint
  {http://arxiv.org/abs/1611.01189} {arXiv:1611.01189}\BibitemShut {NoStop}%
\bibitem [{\citenamefont {{Haah}}\ \emph {et~al.}(2017)\citenamefont {{Haah}},
  \citenamefont {{Harrow}}, \citenamefont {{Ji}}, \citenamefont {{Wu}},\ and\
  \citenamefont {{Yu}}}]{HaaHarJi15}%
  \BibitemOpen
  \bibfield  {author} {\bibinfo {author} {\bibfnamefont {J.}~\bibnamefont
  {{Haah}}}, \bibinfo {author} {\bibfnamefont {A.~W.}\ \bibnamefont
  {{Harrow}}}, \bibinfo {author} {\bibfnamefont {Z.}~\bibnamefont {{Ji}}},
  \bibinfo {author} {\bibfnamefont {X.}~\bibnamefont {{Wu}}}, \ and\ \bibinfo
  {author} {\bibfnamefont {N.}~\bibnamefont {{Yu}}},\ }\bibinfo {title} {\emph
  {Sample-optimal tomography of quantum states}},\ \href {\doibase
  10.1109/TIT.2017.2719044} {\bibfield  {journal} {\bibinfo  {journal} {IEEE
  Trans Inf. Th.}\ }\textbf {\bibinfo {volume} {63}},\ \bibinfo {pages} {5628}
  (\bibinfo {year} {2017})},\ \Eprint {http://arxiv.org/abs/1508.01797}
  {arXiv:1508.01797}\BibitemShut {NoStop}%
\bibitem [{\citenamefont {{Shi}}(2002)}]{Shi02}%
  \BibitemOpen
  \bibfield  {author} {\bibinfo {author} {\bibfnamefont {Y.}~\bibnamefont
  {{Shi}}},\ }\href@noop {} {\bibinfo {title} {\emph {Both {Toffoli} and
  controlled-{NOT} need little help to do universal quantum computation}},\
  }\Eprint {http://arxiv.org/abs/quant-ph/0205115}
  {quant-ph/0205115}\BibitemShut {NoStop}%
\bibitem [{\citenamefont {{Barenco}}\ \emph {et~al.}(1995)\citenamefont
  {{Barenco}}, \citenamefont {{Bennett}}, \citenamefont {{Cleve}},
  \citenamefont {{DiVincenzo}}, \citenamefont {{Margolus}}, \citenamefont
  {{Shor}}, \citenamefont {{Sleator}}, \citenamefont {{Smolin}},\ and\
  \citenamefont {{Weinfurter}}}]{BarBenCle95}%
  \BibitemOpen
  \bibfield  {author} {\bibinfo {author} {\bibfnamefont {A.}~\bibnamefont
  {{Barenco}}}, \bibinfo {author} {\bibfnamefont {C.~H.}\ \bibnamefont
  {{Bennett}}}, \bibinfo {author} {\bibfnamefont {R.}~\bibnamefont {{Cleve}}},
  \bibinfo {author} {\bibfnamefont {D.~P.}\ \bibnamefont {{DiVincenzo}}},
  \bibinfo {author} {\bibfnamefont {N.}~\bibnamefont {{Margolus}}}, \bibinfo
  {author} {\bibfnamefont {P.}~\bibnamefont {{Shor}}}, \bibinfo {author}
  {\bibfnamefont {T.}~\bibnamefont {{Sleator}}}, \bibinfo {author}
  {\bibfnamefont {J.~A.}\ \bibnamefont {{Smolin}}}, \ and\ \bibinfo {author}
  {\bibfnamefont {H.}~\bibnamefont {{Weinfurter}}},\ }\bibinfo {title} {\emph
  {{Elementary gates for quantum computation}}},\ \href {\doibase
  10.1103/PhysRevA.52.3457} {\bibfield  {journal} {\bibinfo  {journal} {Phys.
  Rev. A}\ }\textbf {\bibinfo {volume} {52}},\ \bibinfo {pages} {3457}
  (\bibinfo {year} {1995})},\ \Eprint {http://arxiv.org/abs/quant-ph/9503016}
  {quant-ph/9503016}\BibitemShut {NoStop}%
\bibitem [{\citenamefont {Nielsen}\ and\ \citenamefont
  {Chuang}(2010)}]{NieChu10}%
  \BibitemOpen
  \bibfield  {author} {\bibinfo {author} {\bibfnamefont {M.~A.}\ \bibnamefont
  {Nielsen}}\ and\ \bibinfo {author} {\bibfnamefont {I.~L.}\ \bibnamefont
  {Chuang}},\ }\href@noop {} {\emph {\bibinfo {title} {Quantum computation and
  quantum information}}}\ (\bibinfo  {publisher} {Cambridge university press},\
  \bibinfo {year} {2010})\BibitemShut {NoStop}%
\bibitem [{\citenamefont {Kitaev}(1997)}]{Kit97}%
  \BibitemOpen
  \bibfield  {author} {\bibinfo {author} {\bibfnamefont {A.~Y.}\ \bibnamefont
  {Kitaev}},\ }\bibinfo {title} {\emph {Quantum computations: algorithms and
  error correction}},\ \href {\doibase 10.1070/RM1997v052n06ABEH002155}
  {\bibfield  {journal} {\bibinfo  {journal} {Russian Math. Surv.}\ }\textbf
  {\bibinfo {volume} {52}},\ \bibinfo {pages} {1191} (\bibinfo {year}
  {1997})}\BibitemShut {NoStop}%
\bibitem [{\citenamefont {Knill}\ \emph {et~al.}(1998)\citenamefont {Knill},
  \citenamefont {Laflamme},\ and\ \citenamefont {Zurek}}]{KniLafZur98}%
  \BibitemOpen
  \bibfield  {author} {\bibinfo {author} {\bibfnamefont {E.}~\bibnamefont
  {Knill}}, \bibinfo {author} {\bibfnamefont {R.}~\bibnamefont {Laflamme}}, \
  and\ \bibinfo {author} {\bibfnamefont {W.~H.}\ \bibnamefont {Zurek}},\
  }\bibinfo {title} {\emph {Resilient quantum computation: error models and
  thresholds}},\ \href {\doibase 10.1098/rspa.1998.0166} {\bibfield  {journal}
  {\bibinfo  {journal} {Proc. R. Soc. A}\ }\textbf {\bibinfo {volume} {454}},\
  \bibinfo {pages} {365} (\bibinfo {year} {1998})},\ \Eprint
  {http://arxiv.org/abs/quant-ph/9702058} {arXiv:quant-ph/9702058}\BibitemShut
  {NoStop}%
\bibitem [{\citenamefont {Aharonov}\ and\ \citenamefont
  {Ben-Or}(1997)}]{AhaBen97}%
  \BibitemOpen
  \bibfield  {author} {\bibinfo {author} {\bibfnamefont {D.}~\bibnamefont
  {Aharonov}}\ and\ \bibinfo {author} {\bibfnamefont {M.}~\bibnamefont
  {Ben-Or}},\ }\bibinfo {title} {\emph {Fault-tolerant quantum computation with
  constant error rate}},\ in\ \href {\doibase 10.1145/258533.258579} {\emph
  {\bibinfo {booktitle} {29th ACM Symp. on Theory of Computing (STOC)}}}\
  (\bibinfo {address} {New York},\ \bibinfo {year} {1997})\ pp.\ \bibinfo
  {pages} {176--188}\BibitemShut {NoStop}%
\bibitem [{\citenamefont {Flammia}\ and\ \citenamefont {Liu}(2011)}]{FlaLiu11}%
  \BibitemOpen
  \bibfield  {author} {\bibinfo {author} {\bibfnamefont {S.~T.}\ \bibnamefont
  {Flammia}}\ and\ \bibinfo {author} {\bibfnamefont {Y.-K.}\ \bibnamefont
  {Liu}},\ }\bibinfo {title} {\emph {Direct fidelity estimation from few
  {P}auli measurements}},\ \href {\doibase 10.1103/PhysRevLett.106.230501}
  {\bibfield  {journal} {\bibinfo  {journal} {Phys. Rev. Lett.}\ }\textbf
  {\bibinfo {volume} {106}},\ \bibinfo {pages} {230501} (\bibinfo {year}
  {2011})},\ \Eprint {http://arxiv.org/abs/1104.4695}
  {arXiv:1104.4695}\BibitemShut {NoStop}%
\bibitem [{\citenamefont {Emerson}\ \emph {et~al.}(2005)\citenamefont
  {Emerson}, \citenamefont {Alicki},\ and\ \citenamefont
  {\.{Z}yczkowski}}]{EmeAliZyc05}%
  \BibitemOpen
  \bibfield  {author} {\bibinfo {author} {\bibfnamefont {J.}~\bibnamefont
  {Emerson}}, \bibinfo {author} {\bibfnamefont {R.}~\bibnamefont {Alicki}}, \
  and\ \bibinfo {author} {\bibfnamefont {K.}~\bibnamefont {\.{Z}yczkowski}},\
  }\bibinfo {title} {\emph {Scalable noise estimation with random unitary
  operators}},\ \href {\doibase 10.1088/1464-4266/7/10/021} {\bibfield
  {journal} {\bibinfo  {journal} {J. Opt. B}\ }\textbf {\bibinfo {volume}
  {7}},\ \bibinfo {pages} {S347} (\bibinfo {year} {2005})},\ \Eprint
  {http://arxiv.org/abs/arXiv:quant-ph/0503243}
  {arXiv:quant-ph/0503243}\BibitemShut {NoStop}%
\bibitem [{\citenamefont {Wallman}\ and\ \citenamefont
  {Flammia}(2014)}]{WalFla14}%
  \BibitemOpen
  \bibfield  {author} {\bibinfo {author} {\bibfnamefont {J.~J.}\ \bibnamefont
  {Wallman}}\ and\ \bibinfo {author} {\bibfnamefont {S.~T.}\ \bibnamefont
  {Flammia}},\ }\bibinfo {title} {\emph {Randomized benchmarking with
  confidence}},\ \href {\doibase 10.1088/1367-2630/16/10/103032} {\bibfield
  {journal} {\bibinfo  {journal} {New J. Phys.}\ }\textbf {\bibinfo {volume}
  {16}},\ \bibinfo {pages} {103032} (\bibinfo {year} {2014})},\ \Eprint
  {http://arxiv.org/abs/1404.6025} {arXiv:1404.6025}\BibitemShut {NoStop}%
\bibitem [{\citenamefont {{Kueng}}\ \emph {et~al.}(2016)\citenamefont
  {{Kueng}}, \citenamefont {{Long}}, \citenamefont {{Doherty}},\ and\
  \citenamefont {{Flammia}}}]{KueLonDoh15}%
  \BibitemOpen
  \bibfield  {author} {\bibinfo {author} {\bibfnamefont {R.}~\bibnamefont
  {{Kueng}}}, \bibinfo {author} {\bibfnamefont {D.~M.}\ \bibnamefont {{Long}}},
  \bibinfo {author} {\bibfnamefont {A.~C.}\ \bibnamefont {{Doherty}}}, \ and\
  \bibinfo {author} {\bibfnamefont {S.~T.}\ \bibnamefont {{Flammia}}},\
  }\bibinfo {title} {\emph {Comparing experiments to the fault-tolerance
  threshold}},\ \href {\doibase 10.1103/PhysRevLett.117.170502} {\bibfield
  {journal} {\bibinfo  {journal} {Phys. Rev. Lett.}\ }\textbf {\bibinfo
  {volume} {117}},\ \bibinfo {eid} {170502} (\bibinfo {year} {2016})},\ \Eprint
  {http://arxiv.org/abs/1510.05653} {arXiv:1510.05653}\BibitemShut {NoStop}%
\bibitem [{\citenamefont {Mendelson}(2015)}]{Men14}%
  \BibitemOpen
  \bibfield  {author} {\bibinfo {author} {\bibfnamefont {S.}~\bibnamefont
  {Mendelson}},\ }\bibinfo {title} {\emph {Learning without concentration}},\
  \href {\doibase 10.1145/2699439} {\bibfield  {journal} {\bibinfo  {journal}
  {J. ACM}\ }\textbf {\bibinfo {volume} {62}},\ \bibinfo {pages} {21:1}
  (\bibinfo {year} {2015})},\ \Eprint {http://arxiv.org/abs/1401.0304}
  {arXiv:1401.0304}\BibitemShut {NoStop}%
\bibitem [{\citenamefont {Koltchinskii}\ and\ \citenamefont
  {Mendelson}(2015)}]{KolMen14}%
  \BibitemOpen
  \bibfield  {author} {\bibinfo {author} {\bibfnamefont {V.}~\bibnamefont
  {Koltchinskii}}\ and\ \bibinfo {author} {\bibfnamefont {S.}~\bibnamefont
  {Mendelson}},\ }\bibinfo {title} {\emph {Bounding the smallest singular value
  of a random matrix without concentration}},\ \href {\doibase
  10.1093/imrn/rnv096} {\bibfield  {journal} {\bibinfo  {journal}
  {International Mathematics Research Notices}\ ,\ \bibinfo {pages} {rnv096}}
  (\bibinfo {year} {2015})},\ \Eprint {http://arxiv.org/abs/1312.3580}
  {arXiv:1312.3580}\BibitemShut {NoStop}%
\bibitem [{\citenamefont {Goodman}\ and\ \citenamefont
  {Wallach}(2000)}]{GooWal09}%
  \BibitemOpen
  \bibfield  {author} {\bibinfo {author} {\bibfnamefont {R.}~\bibnamefont
  {Goodman}}\ and\ \bibinfo {author} {\bibfnamefont {N.~R.}\ \bibnamefont
  {Wallach}},\ }\href@noop {} {\emph {\bibinfo {title} {Representations and
  invariants of the classical groups}}},\ Vol.~\bibinfo {volume} {68}\
  (\bibinfo  {publisher} {Cambridge University Press},\ \bibinfo {year}
  {2000})\BibitemShut {NoStop}%
\bibitem [{\citenamefont {Tropp}(2012)}]{Tro12}%
  \BibitemOpen
  \bibfield  {author} {\bibinfo {author} {\bibfnamefont {J.~A.}\ \bibnamefont
  {Tropp}},\ }\href {http://www.dtic.mil/cgi-bin/GetTRDoc?AD=ADA576100} {\emph
  {\bibinfo {title} {User-friendly tools for random matrices: An
  introduction}}},\ \bibinfo {type} {Tech. Rep.}\ (\bibinfo  {institution}
  {DTIC Document},\ \bibinfo {year} {2012})\BibitemShut {NoStop}%
\bibitem [{\citenamefont {Schollw\"{o}ck}(2011)}]{Sch11}%
  \BibitemOpen
  \bibfield  {author} {\bibinfo {author} {\bibfnamefont {U.}~\bibnamefont
  {Schollw\"{o}ck}},\ }\bibinfo {title} {\emph {The density-matrix
  renormalization group in the age of matrix product states}},\ \href {\doibase
  10.1016/j.aop.2010.09.012} {\bibfield  {journal} {\bibinfo  {journal} {Ann.
  Phys.}\ }\textbf {\bibinfo {volume} {326}},\ \bibinfo {pages} {96 } (\bibinfo
  {year} {2011})},\ \bibinfo {note} {{J}anuary 2011 Special Issue},\ \Eprint
  {http://arxiv.org/abs/1008.3477} {arXiv:1008.3477}\BibitemShut {NoStop}%
\bibitem [{\citenamefont {Gharibian}\ \emph {et~al.}(2015)\citenamefont
  {Gharibian}, \citenamefont {Landau}, \citenamefont {Shin},\ and\
  \citenamefont {Wang}}]{GhaLanShi15}%
  \BibitemOpen
  \bibfield  {author} {\bibinfo {author} {\bibfnamefont {S.}~\bibnamefont
  {Gharibian}}, \bibinfo {author} {\bibfnamefont {Z.}~\bibnamefont {Landau}},
  \bibinfo {author} {\bibfnamefont {S.~W.}\ \bibnamefont {Shin}}, \ and\
  \bibinfo {author} {\bibfnamefont {G.}~\bibnamefont {Wang}},\ }\bibinfo
  {title} {\emph {Tensor network non-zero testing}},\ \href@noop {} {\bibfield
  {journal} {\bibinfo  {journal} {Quant. Inf. Comp.}\ }\textbf {\bibinfo
  {volume} {15}},\ \bibinfo {pages} {885} (\bibinfo {year} {2015})},\ \Eprint
  {http://arxiv.org/abs/1406.5279} {arXiv:1406.5279}\BibitemShut {NoStop}%
\bibitem [{\citenamefont {Chandrasekaran}\ \emph {et~al.}(2012)\citenamefont
  {Chandrasekaran}, \citenamefont {Recht}, \citenamefont {Parrilo},\ and\
  \citenamefont {Willsky}}]{ChaRecPar12}%
  \BibitemOpen
  \bibfield  {author} {\bibinfo {author} {\bibfnamefont {V.}~\bibnamefont
  {Chandrasekaran}}, \bibinfo {author} {\bibfnamefont {B.}~\bibnamefont
  {Recht}}, \bibinfo {author} {\bibfnamefont {P.}~\bibnamefont {Parrilo}}, \
  and\ \bibinfo {author} {\bibfnamefont {A.}~\bibnamefont {Willsky}},\
  }\bibinfo {title} {\emph {The convex geometry of linear inverse problems}},\
  \href {\doibase 10.1007/s10208-012-9135-7} {\bibfield  {journal} {\bibinfo
  {journal} {Found. Comput. Math.}\ }\textbf {\bibinfo {volume} {12}},\
  \bibinfo {pages} {805} (\bibinfo {year} {2012})},\ \Eprint
  {http://arxiv.org/abs/1012.0621} {arXiv:1012.0621}\BibitemShut {NoStop}%
\bibitem [{\citenamefont {{Low}}(2010)}]{Low10}%
  \BibitemOpen
  \bibfield  {author} {\bibinfo {author} {\bibfnamefont {R.~A.}\ \bibnamefont
  {{Low}}},\ }\emph {\bibinfo {title} {Pseudo-randomness and learning in
  quantum computation}},\ \href@noop {} {Ph.D. thesis},\ \bibinfo  {school}
  {University of Bristol} (\bibinfo {year} {2010}),\ \Eprint
  {http://arxiv.org/abs/1006.5227} {arXiv:1006.5227}\BibitemShut {NoStop}%
\bibitem [{\citenamefont {Knill}()}]{FermionicLinearOptics}%
  \BibitemOpen
  \bibfield  {author} {\bibinfo {author} {\bibfnamefont {E.}~\bibnamefont
  {Knill}},\ }\href@noop {} {\bibinfo {title} {\emph {Fermionic linear optics
  and matchgates}},\ }\Eprint {http://arxiv.org/abs/quant-ph/0108033}
  {quant-ph/0108033}\BibitemShut {NoStop}%
\bibitem [{\citenamefont {Cai}\ \emph {et~al.}(2010)\citenamefont {Cai},
  \citenamefont {Candes},\ and\ \citenamefont {Shen}}]{CaiCanShe10}%
  \BibitemOpen
  \bibfield  {author} {\bibinfo {author} {\bibfnamefont {J.-F.}\ \bibnamefont
  {Cai}}, \bibinfo {author} {\bibfnamefont {E.~J.}\ \bibnamefont {Candes}}, \
  and\ \bibinfo {author} {\bibfnamefont {Z.}~\bibnamefont {Shen}},\ }\bibinfo
  {title} {\emph {A singular value thresholding algorithm for matrix
  completion}},\ \href {\doibase 10.1137/080738970} {\bibfield  {journal}
  {\bibinfo  {journal} {SIAM J. Opt.}\ }\textbf {\bibinfo {volume} {20}},\
  \bibinfo {pages} {1956} (\bibinfo {year} {2010})}\BibitemShut {NoStop}%
\bibitem [{\citenamefont {{Chatterjee}}(2015)}]{Cha15}%
  \BibitemOpen
  \bibfield  {author} {\bibinfo {author} {\bibfnamefont {S.}~\bibnamefont
  {{Chatterjee}}},\ }\bibinfo {title} {\emph {Matrix estimation by universal
  singular value thresholding}},\ \href {\doibase 10.1214/14-AOS1272}
  {\bibfield  {journal} {\bibinfo  {journal} {Ann. Statist.}\ }\textbf
  {\bibinfo {volume} {43}},\ \bibinfo {pages} {177} (\bibinfo {year} {2015})},\
  \Eprint {http://arxiv.org/abs/1212.1247} {arXiv:1212.1247}\BibitemShut
  {NoStop}%
\bibitem [{\citenamefont {Boyd}\ \emph {et~al.}(2011)\citenamefont {Boyd},
  \citenamefont {Parikh}, \citenamefont {Chu}, \citenamefont {Peleato},\ and\
  \citenamefont {Eckstein}}]{BoyParChu11}%
  \BibitemOpen
  \bibfield  {author} {\bibinfo {author} {\bibfnamefont {S.}~\bibnamefont
  {Boyd}}, \bibinfo {author} {\bibfnamefont {N.}~\bibnamefont {Parikh}},
  \bibinfo {author} {\bibfnamefont {E.}~\bibnamefont {Chu}}, \bibinfo {author}
  {\bibfnamefont {B.}~\bibnamefont {Peleato}}, \ and\ \bibinfo {author}
  {\bibfnamefont {J.}~\bibnamefont {Eckstein}},\ }\bibinfo {title} {\emph
  {Distributed optimization and statistical learning via the alternating
  direction method of multipliers}},\ \href {\doibase 10.1561/2200000016}
  {\bibfield  {journal} {\bibinfo  {journal} {Found. Trends Mach. Learn.}\
  }\textbf {\bibinfo {volume} {3}},\ \bibinfo {pages} {1} (\bibinfo {year}
  {2011})}\BibitemShut {NoStop}%
\bibitem [{\citenamefont {{Kyrillidis}}\ \emph {et~al.}(2018)\citenamefont
  {{Kyrillidis}}, \citenamefont {{Kalev}}, \citenamefont {{Park}},
  \citenamefont {{Bhojanapalli}}, \citenamefont {{Caramanis}},\ and\
  \citenamefont {{Sanghavi}}}]{KyrKalPar18}%
  \BibitemOpen
  \bibfield  {author} {\bibinfo {author} {\bibfnamefont {A.}~\bibnamefont
  {{Kyrillidis}}}, \bibinfo {author} {\bibfnamefont {A.}~\bibnamefont
  {{Kalev}}}, \bibinfo {author} {\bibfnamefont {D.}~\bibnamefont {{Park}}},
  \bibinfo {author} {\bibfnamefont {S.}~\bibnamefont {{Bhojanapalli}}},
  \bibinfo {author} {\bibfnamefont {C.}~\bibnamefont {{Caramanis}}}, \ and\
  \bibinfo {author} {\bibfnamefont {S.}~\bibnamefont {{Sanghavi}}},\ }\bibinfo
  {title} {\emph {{Provable compressed sensing quantum state tomography via
  non-convex methods}}},\ \href {\doibase 10.1038/s41534-018-0080-4} {\bibfield
   {journal} {\bibinfo  {journal} {npj Quantum Information}\ }\textbf {\bibinfo
  {volume} {4}},\ \bibinfo {eid} {36} (\bibinfo {year} {2018})},\ \Eprint
  {http://arxiv.org/abs/1711.02524} {arXiv:1711.02524}\BibitemShut {NoStop}%
\bibitem [{\citenamefont {Voroninski}(2013)}]{Vor13}%
  \BibitemOpen
  \bibfield  {author} {\bibinfo {author} {\bibfnamefont {V.}~\bibnamefont
  {Voroninski}},\ }\href@noop {} {\bibinfo {title} {\emph {Quantum tomography
  from few full-rank observables}},\ }\Eprint {http://arxiv.org/abs/1309.7669}
  {arXiv:1309.7669}\BibitemShut {NoStop}%
\bibitem [{\citenamefont {{Acharya}}\ \emph {et~al.}(2016)\citenamefont
  {{Acharya}}, \citenamefont {{Kypraios}},\ and\ \citenamefont
  {{Guta}}}]{acharya2016statistically}%
  \BibitemOpen
  \bibfield  {author} {\bibinfo {author} {\bibfnamefont {A.}~\bibnamefont
  {{Acharya}}}, \bibinfo {author} {\bibfnamefont {T.}~\bibnamefont
  {{Kypraios}}}, \ and\ \bibinfo {author} {\bibfnamefont {M.}~\bibnamefont
  {{Guta}}},\ }\bibinfo {title} {\emph {Statistically efficient tomography of
  low rank states with incomplete measurements}},\ \href {\doibase
  10.1088/1367-2630/18/4/043018} {\bibfield  {journal} {\bibinfo  {journal}
  {New J. Phys.}\ }\textbf {\bibinfo {volume} {18}},\ \bibinfo {eid} {043018}
  (\bibinfo {year} {2016})},\ \Eprint {http://arxiv.org/abs/1510.03229}
  {arXiv:1510.03229}\BibitemShut {NoStop}%
\bibitem [{\citenamefont {{Acharya}}\ and\ \citenamefont
  {{Guta}}(2017)}]{acharya2016statistical_analysis}%
  \BibitemOpen
  \bibfield  {author} {\bibinfo {author} {\bibfnamefont {A.}~\bibnamefont
  {{Acharya}}}\ and\ \bibinfo {author} {\bibfnamefont {M.}~\bibnamefont
  {{Guta}}},\ }\bibinfo {title} {\emph {Statistical analysis of compressive low
  rank tomography with random measurements}},\ \href {\doibase
  10.1088/1751-8121/aa682e} {\bibfield  {journal} {\bibinfo  {journal} {J.
  Phys. A}\ }\textbf {\bibinfo {volume} {50}},\ \bibinfo {eid} {195301}
  (\bibinfo {year} {2017})},\ \Eprint {http://arxiv.org/abs/1609.03758}
  {arXiv:1609.03758}\BibitemShut {NoStop}%
\bibitem [{\citenamefont {Cand\`es}\ \emph {et~al.}(2015)\citenamefont
  {Cand\`es}, \citenamefont {Li},\ and\ \citenamefont
  {Soltanolkotabi}}]{CanLiSol14}%
  \BibitemOpen
  \bibfield  {author} {\bibinfo {author} {\bibfnamefont {E.~J.}\ \bibnamefont
  {Cand\`es}}, \bibinfo {author} {\bibfnamefont {X.}~\bibnamefont {Li}}, \ and\
  \bibinfo {author} {\bibfnamefont {M.}~\bibnamefont {Soltanolkotabi}},\
  }\bibinfo {title} {\emph {Phase retrieval from coded diffraction patterns}},\
  \href {\doibase 10.1016/j.acha.2014.09.004} {\bibfield  {journal} {\bibinfo
  {journal} {Appl. Comput. Harmon. Anal.}\ }\textbf {\bibinfo {volume} {39}},\
  \bibinfo {pages} {277 } (\bibinfo {year} {2015})},\ \Eprint
  {http://arxiv.org/abs/1310.3240} {arXiv:1310.3240}\BibitemShut {NoStop}%
\bibitem [{\citenamefont {Gross}\ \emph {et~al.}(2016)\citenamefont {Gross},
  \citenamefont {Krahmer},\ and\ \citenamefont {Kueng}}]{GroKraKue15_masked}%
  \BibitemOpen
  \bibfield  {author} {\bibinfo {author} {\bibfnamefont {D.}~\bibnamefont
  {Gross}}, \bibinfo {author} {\bibfnamefont {F.}~\bibnamefont {Krahmer}}, \
  and\ \bibinfo {author} {\bibfnamefont {R.}~\bibnamefont {Kueng}},\ }\bibinfo
  {title} {\emph {Improved recovery guarantees for phase retrieval from coded
  diffraction patterns}},\ \href {\doibase 10.1016/j.acha.2015.05.004}
  {\bibfield  {journal} {\bibinfo  {journal} {Appl. Comput. Harmon. Anal.}\
  }\textbf {\bibinfo {volume} {41}},\ \bibinfo {pages} {37} (\bibinfo {year}
  {2016})},\ \Eprint {http://arxiv.org/abs/1402.6286}
  {arXiv:1402.6286}\BibitemShut {NoStop}%
\bibitem [{\citenamefont {Ziman}(2008)}]{Zim08}%
  \BibitemOpen
  \bibfield  {author} {\bibinfo {author} {\bibfnamefont {M.}~\bibnamefont
  {Ziman}},\ }\bibinfo {title} {\emph {Process positive-operator-valued
  measure: A mathematical framework for the description of process tomography
  experiments}},\ \href {\doibase 10.1103/PhysRevA.77.062112} {\bibfield
  {journal} {\bibinfo  {journal} {Phys. Rev. A}\ }\textbf {\bibinfo {volume}
  {77}},\ \bibinfo {pages} {062112} (\bibinfo {year} {2008})},\ \Eprint
  {http://arxiv.org/abs/0802.3862} {arXiv:0802.3862}\BibitemShut {NoStop}%
\bibitem [{\citenamefont {{Vershynin}}(2012)}]{Ver12}%
  \BibitemOpen
  \bibfield  {author} {\bibinfo {author} {\bibfnamefont {R.}~\bibnamefont
  {{Vershynin}}},\ }\bibinfo {title} {\emph {Introduction to the non-asymptotic
  analysis of random matrices}},\ in\ \href@noop {} {\emph {\bibinfo
  {booktitle} {Compressed Sensing: Theory and Applications}}}\ (\bibinfo
  {publisher} {Cambridge University Press},\ \bibinfo {year} {2012})\ pp.\
  \bibinfo {pages} {210--268},\ \Eprint {http://arxiv.org/abs/1011.3027}
  {arXiv:1011.3027}\BibitemShut {NoStop}%
\bibitem [{\citenamefont {Foucart}\ and\ \citenamefont
  {Rauhut}(2013)}]{FouRau13}%
  \BibitemOpen
  \bibfield  {author} {\bibinfo {author} {\bibfnamefont {S.}~\bibnamefont
  {Foucart}}\ and\ \bibinfo {author} {\bibfnamefont {H.}~\bibnamefont
  {Rauhut}},\ }\href@noop {} {\emph {\bibinfo {title} {A mathematical
  introduction to compressive sensing}}}\ (\bibinfo  {publisher} {Springer},\
  \bibinfo {year} {2013})\BibitemShut {NoStop}%
\bibitem [{Wik(2016{\natexlab{a}})}]{WikiS4}%
  \BibitemOpen
  \href@noop {} {\bibinfo {title} {\emph {Symmetric group:s4}},\ }\bibinfo
  {howpublished} {\url{http://groupprops.subwiki.org/wiki/Symmetric_group:S4}}
  (\bibinfo {year} {2016}{\natexlab{a}}),\ \bibinfo {note} {accessed:
  2016-08-17}\BibitemShut {NoStop}%
\bibitem [{Wik(2016{\natexlab{b}})}]{WikiRepS4}%
  \BibitemOpen
  \href@noop {} {\bibinfo {title} {\emph {Linear representation theory of
  symmetric group:{S4}}},\ }\bibinfo {howpublished}
  {\url{http://groupprops.subwiki.org/wiki/Linear_representation_theory_of_symmetric_group:S4}}
  (\bibinfo {year} {2016}{\natexlab{b}}),\ \bibinfo {note} {accessed:
  2016-08-17}\BibitemShut {NoStop}%
\bibitem [{\citenamefont {Simon}(1996)}]{Sim96}%
  \BibitemOpen
  \bibfield  {author} {\bibinfo {author} {\bibfnamefont {B.}~\bibnamefont
  {Simon}},\ }\href@noop {} {\emph {\bibinfo {title} {Representations of finite
  and compact groups}}},\ \bibinfo {number} {10}\ (\bibinfo  {publisher} {Am.
  Math. Soc.},\ \bibinfo {year} {1996})\BibitemShut {NoStop}%
\end{thebibliography}%
\end{document}